\documentclass[12pt,letterpaper]{article}

\usepackage{latexsym,epsfig,array}
\usepackage{fullpage}
\usepackage{graphicx,xcolor}
\usepackage{algorithm}
\usepackage{longtable,amsmath}

\newtheorem{theorem}{Theorem}
\newtheorem{observation}{Observation}

\newtheorem{lemma}{Lemma}
\newtheorem{corollary}{Corollary}
\newtheorem{definition}{Definition}

\newenvironment{proof}{\paragraph{Proof:}}{{\hfill $\Box$}\vspace{.5pc}}


\newenvironment{blist}
  {
\begin{list}{$\bullet$}
	{
\setlength{\partopsep}{0.00in} 
\setlength{\topsep}{0.04in} 
\setlength{\itemsep}{0.04in}
\setlength{\parsep}{0.00in}
	}
  }
{\end{list}}

\newcommand{\diam}{{\mathcal{D}}}

\begin{document}

\title{Analysis of a Memory-Efficient 
  Self-Stabilizing BFS Spanning Tree Construction\footnote{
This study was partially supported by the \textsc{anr} project \textsc{descartes : ANR-16-CE40-0023} and \textsc{anr} project \textsc{estate : ANR-16  CE25-0009-03.}
}
}

\author{
Ajoy K. Datta\thanks{Department of Computer Science,
  University of Nevada, USA . Email: Ajoy.Datta@unlv.edu.} \and
St\'ephane Devismes\thanks{Universit\'e Grenoble Alpes, 
VERIMAG, UMR 5104, 
France. Email: stephane.devismes@univ-grenoble-alpes.fr. } 
\and
Colette Johnen\thanks{Universit\'e de Bordeaux, LaBRI, 
UMR 5800, France . Email:johnen@labri.fr.} 
\and
Lawrence L. Larmore\thanks{Department of Computer Science,
  University of Nevada, USA. Email: Lawrence.Larmore@unlv.edu.}
}
\date{}
\maketitle
\thispagestyle{empty}

\bibliographystyle{alpha}

\begin{abstract}
  We present results on the last topic we collaborate with
  our late  friend, Professor Ajoy Kumar Datta (1958-2019),
  who prematurely left us few months ago.

  In this work, we shed new light on a self-stabilizing wave algorithm
  proposed by Colette Johnen in 1997~\cite{J97}.  This algorithm
  constructs a BFS spanning tree in any connected rooted
  network.  Nowadays, it is still the best existing self-stabilizing
  BFS spanning tree construction in terms of memory requirement, {\em
    i.e.}, it only requires $\Theta(1)$ bits per edge.
  However, it has been proven  assuming a
  weakly fair daemon. Moreover, its stabilization time was unknown.

  Here, we study the slightly modified version of this algorithm,
  still keeping the same memory requirement. We prove the
  self-stabilization of this variant under the distributed unfair
  daemon and show a stabilization time in $O(\diam\cdot n^2)$ rounds,
  where $\diam$ is the network diameter and $n$ the number of processes.

\smallskip
\noindent {\bf Keywords:} Self-stabilization, BFS spanning tree,
distributed unfair daemon, stabilization time, round complexity.
\end{abstract}

\section{Introduction}
We consider the problem of constructing a spanning tree in a
self-stabilizing manner.
Numerous self-stabilizing spanning tree constructions have been
studied until now, {\em e.g.}, the spanning tree may be arbitrary
(see~\cite{CYH91}), {\em depth-first} (see~\cite{CollinD94}), {\em
  breadth-first} (see~\cite{CDV09}), {\em shortest-path}~\cite{CH09},
or {\em minimum}~\cite{BPRT16j}.
Deterministic solutions to these problems have been investigated in
either fully identified networks~\cite{AKY90}, or rooted
networks~\cite{CollinD94}. 

 Here, we deal with rooted connected
networks. By ``rooted" we mean that one process, called the {\em root}
and noted $r$, is distinguished from the others. All other processes
are fully anonymous.  Such networks are something said to be {\em
  semi-anonymous}.  We focus on the construction of a Breadth-First
Search (BFS) spanning tree in such a rooted connected network, {\em
  i.e.}, a spanning tree in which the (hop-)distance from any node to
the root is minimum.

The spanning tree construction is a fundamental task in communication
networks.  Indeed, spanning trees are often involved in the design of
{\em routing}~\cite{GHIJ14} and {\em
  broadcasting}~tasks~\cite{BuiDPV99}, for
example. Moreover, improving the efficiency of the underlying spanning
tree algorithm usually also implies an improvement of the overall
solution.

We consider here the {\em atomic state model} introduced by
Dijkstra~\cite{Dij74}, also called the {\em locally shared memory
  model with composite atomicity}. In this model, the daemon
assumption accepted by the algorithm is crucial since it captures the
asynchrony of the system. More generally, self-stabilizing solutions
are also discriminated according to their stabilization time (usually
in rounds) and their memory requirement.

\paragraph{Related Work.} There are many self-stabilizing BFS spanning
tree constructions in the literature, {\em
  e.g.},~\cite{CYH91,HC92,DIM93,AB97,J97,CDV09,CRV11}.
Maybe, the first one has been proposed by Chen {\em et
  al.}~\cite{CYH91}. It is proven in the atomic state model under the
central unfair daemon and no time complexity analysis is given.  The
algorithm of Huang and Chen~\cite{HC92} is proven still in the atomic
state model, yet under a distributed unfair daemon.  In~\cite{DJ16j},
the stabilization time of this algorithm is shown to be $\Theta(n)$
rounds in the worst case, where $n$ is the number of processes.
Another algorithm, implemented in the link-register model, is
given in~\cite{DIM93}. It uses unbounded process local memories.
However, it is shown in~\cite{DJ16j} that a straightforward
bounded-memory variant of this algorithm, working in the atomic state
model, achieves an optimal stabilization time in rounds, {\em i.e.},
$O(\diam)$ rounds where $\diam$ is the network diameter.
In~\cite{AB97}, Afek and Bremler design a solution for unidirectional
networks in the message-passing model, assuming bounded capacity
links. The stabilization time of this latter algorithm is $O(n)$
rounds.  The algorithm given in \cite{CRV11} has a stabilization
time~$O(\diam^2)$ rounds, assuming the atomic state model and a
distributed unfair daemon.  All aforementioned
solutions~\cite{CYH91,HC92,DIM93,AB97,CRV11} also achieve the silent
property~\cite{DolevGS96}: a silent self-stabilizing algorithm
converges within finite time to a configuration from which the value
of all its communication variables are constant.  Two other
non-silent, {\em a.k.a. talkative}, self-stabilizing BFS spanning tree
constructions have been proposed in the atomic state model.  The
algorithm in \cite{CDV09} is proven under the distributed unfair
daemon and has a stabilization time in $O(n)$ rounds.  In~\cite{J97},
the proposed solution assumes a distributed weakly fair daemon and its
stabilization time is not investigated.

Except for~\cite{J97}, in all these aforementioned algorithms, each
process has a {\em distance} variable which keeps track of the current
level of the process in the BFS tree.  Thus, these BFS spanning tree
constructions have a space complexity in $\Omega(\log(\diam))$ bits per
process.

In contrast, the solution given in~\cite{J97} does not compute any
distance value (actually, the construction is done by synchronizing
phases).
Consequently, the obtained
memory requirement only depends on local parameters, {\em i.e.},
$\Theta(\log(\delta_p))$ bits per process $p$, where $\delta_p$ the
local degree of $p$. In other word, the space complexity of this
algorithm is intrinsically $\Theta(1)$ bits per edge. Moreover, the
algorithm does not need the {\em a priori} knowledge of any
global parameter on the network such as $\diam$ or $n$. It is worth
noticing that today it is still the best self-stabilizing
BFS spanning tree construction in terms of memory requirement.

\paragraph{Contribution.} We fill the blanks in the analysis of the
memory-efficient self-stabilizing BFS spanning tree construction given
in~\cite{J97}. Precisely, we study a slightly modified (maybe simpler)
version of the algorithm. This variant still achieves a memory
requirement in $\Theta(1)$ bits of memory per edge. We prove its
self-stabilization under the distributed unfair daemon, the weakest
scheduling assumption.  Moreover, we establish a stabilization time in
$O(\diam\cdot n^2)$ rounds, where $\diam$ is the network diameter and
$n$ the number of processes.

\paragraph{Roadmap.}
The rest of the paper is organized as follows. The computational model
is described in Section~\ref{sec:model}.  A detailed description of
the algorithm is given in Section~\ref{sec:algo}. Basic properties are
proven in Section~\ref{sec:basicProperties}. In
Section~\ref{sec:fairnessProof}, we show that every execution of our
algorithm is fair. Its stabilization time in rounds is analyzed in the
Section~\ref{sec:convergence}.

\section{Model}
\label{sec:model}
The chosen computation model is an extension of Dijkstra's original 
model for rings to arbitrary graphs \cite{Dij74}.
Consider a symmetric connected graph $G(V, E)$, 
in which $V$ is a set of processes and $E$ is a set of symmetric edges. 
We use this graph to model a distributed system with $n$ processes, 
$n = |V|$. 
In the graph, the directly connected processes are called neighbors.
Each process $v$ maintains a set of neighbors, denoted as $N(v)$. 
$N[v] = N(v) \cup \{u\}$ denotes the set of closed neighbors.
A process state is defined by its variable values.
A configuration of the system is a set of process states.

\medskip
\noindent
The proposed self-stabilizing algorithm consists of a set of 
rules. 
Each rule has two parts: the guard and the action. 
The guard of a rule is a boolean function 
of the process's own state and the state 
of its neighbors. the action of a rule is a sequence of 
updating the value of the process variables.
If a rule guard on the process $v$ is verified in the 
configuration $c$, we say that $v$ is enabled at $c$.  
During a {\em computation step}, several enabled 
processes (at least one) execute a single enabled rule.
The algorithm designed ensure that at most one rule
is enabled on any process at any configuration.
An {\em execution} is a sequence of consecutive 
computations steps ($c_{1}$, $c_{2}$, ... , $c_{n}$, ... ).

\medskip
\noindent
A  set of configuration {\em A} is {\em closed} if  any 
computation step from a configuration of {\em A} reaches a 
configuration of {\em A}.
A configuration set {\em A2} is an {\em A1-attractor},
if {\em A2} is closed and all executions starting 
from a configuration of {\em A1}, has a configuration 
of {\em A2}.

\medskip
\noindent
An algorithm {\em self-stabilizes} to {\em L} if and only if 
{\em L} is an {\em A0}-attractor  ({\em A0} being the set of  
configurations).

\medskip
\noindent
\textbf{Round complexity.}
The \textit{round} notion is used to measure the time complexity. 
The first round of a computation $e = c_1, ..., c_j, ...$ is the 
minimal prefix $e_1 = c_1, ..., c_j$, such that 
every enabled process in $c_1$ either executes a rule or 
is neutralized during a computation step of $e_1$. 
A process $u$ is \textit{neutralized} during a computation step 
if $u$ is  disabled in the reached configuration.

\noindent
Let $e'$ be the suffix of $e$ such that $e = e_1 e'$.
The second round of $e$ is the first round of $e'$, and so on.

\noindent
The stabilization time  is the maximal number of rounds needed by any
computation from any configuration to reach a legitimate configuration.

\section{Algorithm Specification}
\label{sec:algo}
We present an anonymous algorithm that builds a BFS spanning tree. 
Angluin \cite{Ang80} has shown that no deterministic algorithm 
can construct a spanning tree in an anonymous (uniform) network. 
The best that can be proposed is a semi-uniform 
deterministic algorithm, as ours, in which, all processes 
except one execute the same code. 
We call $r$, the distinguished process, the $legal~root$, 
which will eventually be the root of the BFS tree. 
\noindent
$dist(u)$ denoted the distance of process $u$ to $r$ 
in the  graph.

\medskip
\noindent
The algorithm is non-silent: 
at the end of a tree construction, the legal root initiates 
a new tree construction.
The algorithm builds $0$-colored and $1$-colored BFS spanning 
tree alternately.
The color is used to distinguish the processes of the tree 
from those that are not part of the tree: 
inly the processes in the current tree have the tree color, 
named : $r\_color$. 

\medskip
\noindent
On of the difficulty to build a BFS tree without using a 
$distance$ variable is to ensure that the path of each 
process to $r$ in the tree is minimal. 
Once the system is stabilized, 
The  trees are built in phases: during the $k$th phase, all 
processes at a distance of $k$ from $r$ join the current tree (by 
choosing a process at a distance $k-1$ from $r$ as parent)
Once $r$ has detected the end of a phase it initiates the next phase. \\
\noindent
Another difficulty to  not having a ``distance'' variable is to break cycle.
A process $u$ in a cycle that does not have the $r\_color$ detects a conflict
if one of its neighbors has the $r\_color$ but also a specific status, named $Power$.
Hence cycles are broken but not branches of the $r$-tree.

\medskip
\noindent
There are two major error handling tasks: one is to break the cycles,  
the other is to remove the illegal branches.  
The illegal roots detect their abnormal 
situation and take an $Erroneous$ status. 
The children of an $Erroneous$ process become $Erroneous$ 
illegal root. Finally, the $Erroneous$ detached processes  
recover (changing their status to $Idle$).

\medskip
\noindent
We have 2 sets of rules : the rules RC1-RC6 designed 
to ensure the illegal trees destruction, 
and to break cycles are detailed in subsection 
\ref{sub:Error-handing};
the rules R1-R7 designed to ensure the tree constructions, 
presented in subsection \ref{sub:tree-const}.
The following subsection presents the shared variables.

\subsection{Shared Variables} \label{sub:struc}

Each process $v$ maintains the following variables
($X.u$ denotes the value of $X$ in$u$ and $X.Y.u$ denotes the 
value of $X$ in process $Y.u$):

\smallskip
\begin{itemize}
\item 
$TS.u$ : The parent pointer pointing to one of its neighbors or 
containing $\perp$.
$TS$ variables maintain the BFS tree structure in a distributed 
manner. 
More precisely, when the system is stabilized, 
if $u \neq r$ then  $dist(u) > dist(TS.u)$.
\item 
$P.u$ : The parent pointer pointing to one of its neighbors or
containing $\perp$.  When the system is stabilized, 
if $P.u \neq \perp$ then $P.u = TS.u$.
The variable $P$ is used to inform $P.u$ 
that the subtree construction 
rooted at $u$ is terminated
or not - if at end of a phase, $u$ has no child,
(i.e., no $u$'s neighbor has chosen $u$ as parent),
then the subtree construction rooted at $u$ is done.
\item 
$C.u$:  The color which takes value from the set $\{0, 1\}$.  
Once the system stabilizes, the processes in the current 
tree have $r\_color = C.r$ while other processes have the complement
 of $r\_color$.
\item 
$S.u$ : The status which takes value from the set 
$\{Id\-le, Working, Po\-wer, WeakE, StrongE\}$. 
$WeakE$ and $StrongE$ status are used during the error 
recovering process. Process $u$ has an $Erroneous$ 
status if it has the $WeakE$ or $StrongE$ status. \\
Only processes having  the $Power$ status  can have new children. 
Once the system stabilizes,
the processes at a distance of $k-1$ from $r$ only will acquire 
the Power status during the $k$th phase;
if the current phase is begun in $u$ subtree and not yet 
terminated, then $u$ has the $Working$ status.
$u$ has the $Idle$ status, if the tree construction has not reached it 
(i.e. $P.v = \perp$),
 or the current phase has not started or is finished in 
$u$' subtree.\\
$r$ can only  have the status $Power$ or $Working$.
$ph.u$ : The phase which takes value from the set $\{a, b\}$.
The value of $S$ does not indicate if the current phase is terminated
or not.  A process in the tree is $Idle$ when it has completed
or has not started the current phase.
In order to distinguish between these two cases, we use
the phase variable. If the phase value of an $Idle$ process 
is the same as that of its parent, then the $Idle$ process has 
finished the current phase.
Otherwise, it has not initiated the current phase.
\end{itemize}

\noindent
The root $r$ maintains the same variables, except $P$ 
and $TS$: $r$ does not
have a parent. And $S.r$ can only have the value $Power$,  
$Working$ or $StrongE$.

\medskip
\noindent
The size of $P$ and $TS$ of a process $u$ is 
$log(\delta_u)$ where $\delta_u$ is the degree of $v$.
The color, status, and phase variables have a constant 
size (total 5 bits).
Thus, the space complexity of the shared variables on $u$
is $2 \cdot \log(\delta_u)+5$  bits (i.e., $O(1)$ bits per edge).

\subsection{Recovering Rules} \label{sub:Error-handing}

A distributed system has an unpredictable initial state.
Initially, the parent pointers may point to any neighbor 
or $\perp$.
Thus, illegal trees (trees whose roots are not $r$) and
cycles (paths without a root) may exist in the initial state.


\begin{definition}[Cycle Path]
A series of processes $u_1$, $u_2$, ... $u_l$ is a cycle path if
$P.u_i = u_{i-1}$ for $1 < i \leq l$ and $P.u_1= u_l$. 
\end{definition}

\begin{figure}
\begin{center}
\scalebox{0.9}{\input{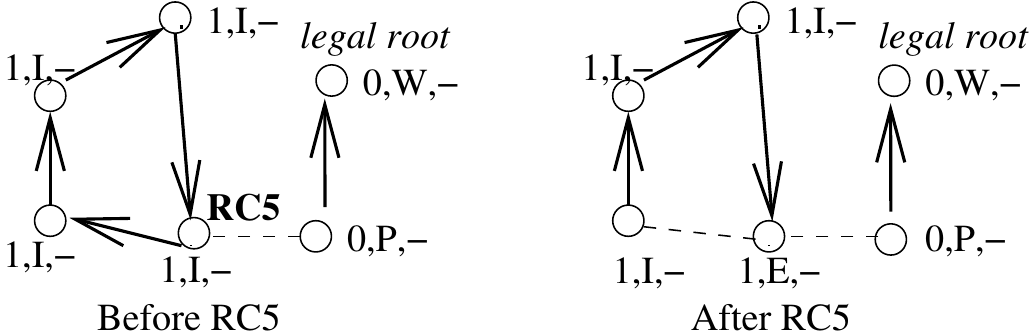_t}}
\caption{Cycle Elimination.}
\label{fig:CycleDestruction}
\end{center}
\end{figure}

\medskip
\noindent
Processes in cycle will detect their situation with the help of their neighbors.
Once a {\em conflict} detected by a process, it  becomes
an illegal Erroneous root (rule RC4 or RC5),
 hence its cycle is transformed into an illegal tree.
A process having a parent assumes that it is in the legal 
tree and its color is equal to $r\_color$ 
(even if it is inside a cycle).
Based on this assumption, it detects a conflict when a 
$Power$ neighbor does not have its color
(both cannot be inside the legal tree).
An example of such a destruction is given in
Figure~\ref{fig:CycleDestruction}.

\begin{figure}[htb]
\begin{centering}
\scalebox{0.9}{\input{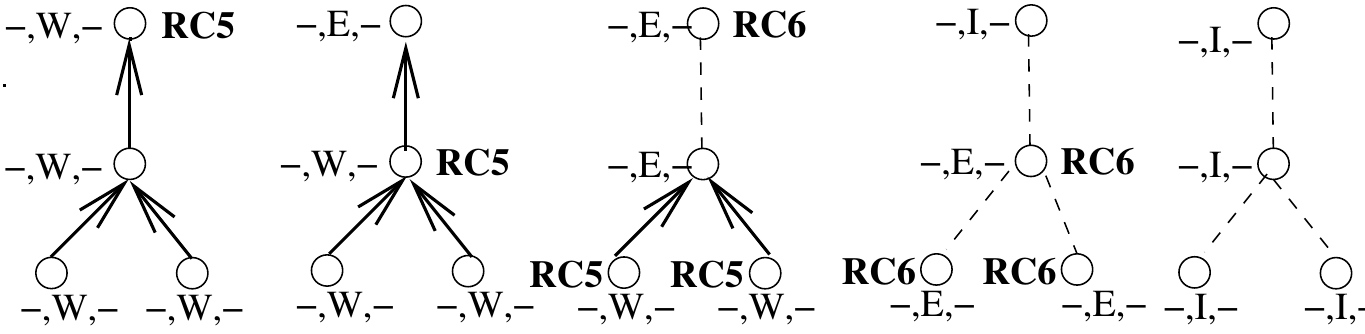_t}}
\caption{Illegal Tree Elimination.}
\label{fig:TreeDestruction}
\end{centering}
\end{figure}

\medskip
\noindent
The illegal roots detect the abnormal situation and
take $WeakE$ status by a RC5 move. 
Their children take the $WeakE$ status and quits their tree 
by a RC5 move.
Finally, the detached (i.e., without a parent and child)
$Erroneous$ processes are recovered: they change
their status by executing RC2 or RC6.
The repetition of detaching and recovering processing will correct 
all processes inside the illegal trees (see an example in
Figure~\ref{fig:TreeDestruction}).

\medskip
\noindent
If a process $u \neq r$ has in its closed neighborhood two processes 
having the Power status but not the same color 
then $u$ detects a {\em strong conflict} : the both process cannot be in 
the legal tree. Therefore, $u$ takes the $StrongE$ status (rule 
RC4).
All  $Power$ neighbors of a process having the $StrongE$ status 
verify the predicate PowerFaulty : 
they have to change their status (rule RC1, RC3, RC4, or RC5).
When the root detects a conflict it takes the $StrongE$ status (rule RC2).
The root detects a conflict if one of its neighbors has the $Power$ status 
but not the $r\_color$ or $r$ has not any child.

A {\em Faulty} process may become an illegal Erroneous root by execution the rule RC5.
A process is Faulty if it does not have the right color, 
the right status, or the right phase according to its parent's 
state.
More specifically, an  non-faulty process has not parent 
or it parent has  an $Erroneous$ status.
In the other cases, an non-faulty process, $u$ verifies
the $6$ following properties.
(1) $u$ has not an $Erroneous$ status.
(2) $u$ has the same color as $P.u$.
(3) $u$   status is $Idle$ if $P.u$ has not the $Working$ status.
(4) $u$ has the same phase value as $P.u$ if they have the same status.
(5) $u$ has the same phase as $P.u$ if $u$ has the $Power$ status.
(6) $u$ is childless, and $u$  has the same phase 
value as $P.u$ if $P.u$ has the $Power$ status. 
\smallskip

The following predicates are used to define the recovering guard rules:
\begin{blist}
\item 
$Child.u = \{ v \in N(u) | P.v = u\}$ 
children set contains the $u$'s neighbor direct descendent of $u$ in the tree under construction.
\item 
StrongConflict($u$) $\equiv$ $(S.u \neq StrongE)$ 
$\wedge$
\newline
\makebox[4cm]{}
$(\exists (w,v) \in (N[u],N[u])  ~|~ (C.v \neq C.u) \wedge (S.w= S.v = Power))$
\newline
The process $u$ has two closed neighbors $v$ and $w$ that 
 do not have the same color but both of them have the Power 
 status. Moreover $u$ has not the $StrongE$ status.

\item Conflict($u$) $\equiv$
  $[(u \neq r) \wedge (P.u \neq \perp) \wedge
(\exists v \in N(u) ~|~ (S.v = Power) \wedge
(C.v \neq C.u))]$ 
\newline
\makebox[1.5cm]{} \makebox[0.9cm]{}
 $\vee$ $~~~$ $[(u = r) \wedge (S.u \neq StrongE) \wedge
\newline
\makebox[4.5cm]{}
(\exists v \in N(u) ~|~ (S.v = Power) \wedge
((C.v \neq C.u) \vee (Child.u = \emptyset)))]$
\newline
The process $u \neq r$ has a neighbor $v$ which has not the $u$ color, and 
$v$ has the $Power$ status; moreover $u$ has a parent .
The root  has a neighbor $v$ which does not have $r\_color$, and 
$v$ has the $Power$ status or the root has not any child.


\item Detached($u$) $\equiv$
$(Child.u = \emptyset)$ $\wedge$ 
$((P.u = \perp) \vee (u = r)) \wedge (S.u \neq Power)$
$u$ has no parent or it is the root, $u$ has no child 
and it cannot gain a child.

\item StrongEReady($u$) $\equiv$ 
 $((S.u = StrongE)\, \wedge\,
(\forall v \in N(u)  ~|~ S.v \neq Power))$ 

\item PowerFaulty($u$) $\equiv$  
$(S.u = Power)\, \wedge\, (\exists v \in N(u)  ~|~ S.v = StrongE)$

\item Faulty($u$) $\equiv$ $(u \neq r) \, \wedge\,  (P.u \neq \perp)\, \wedge$
$(S.P.u \notin Erroneous)\, \wedge\,$
\newline
\makebox[1.1in]{} {\LARGE(} 
{\large [} 
$S.u \in Erroneous$
{\large ]} $\vee$
\newline
\makebox[1.2in]{}
{\large [} 
$(C.u \neq C.P.u)$ {\large ]} $\vee$
\newline
\makebox[1.2in]{} 
{\large [} $(S.P.u \neq Working)\, \wedge\, (S.u \neq Idle)$
{\large ]} $\vee$
\newline
\makebox[1.2in]{} 
{\large [}
$(S.P.u = S.u)\, \wedge\,  (ph.u \neq ph.P.u)$ 
{\large ]}  $\vee$
\newline
\makebox[1.2in]{} 
{\large [}
$(S.u = Power) \wedge\, (ph.u \neq ph.P.u)$ 
{\large ]}
$\vee$
\newline
\makebox[1.2in]{} 
{\large [} $(S.P.u = Power)\, \wedge$ 
{\large (} $(Child.u \neq \perp) \vee
(ph.u \neq ph.P.u)$ {\large )} {\large ]}  
 {\LARGE )}
 \item IllegalRoot($u$) $\equiv$ $(u \neq r)$ $\wedge$
$(P.u = \perp)\, \wedge$ 
$\neg$Detached($u$)
$u$ is not the legal root, it has not parent but it has a 
child or it has the $Power$ status (so it may get children).

\item IllegalLiveRoot($u$) $\equiv$ 
IllegalRoot($u$) $\wedge$ 
$(S.u \notin Erroneous)$
$u$ is an illegal root and its does not have 
an $Erroneous$ status.

\item IllegalChild($u$) $\equiv$ $(u \neq r)$ $\wedge$
$(P.u \neq \perp)\, \wedge$ 
$(S.P.u \in Erroneous)$
$u$  has an $Erroneous$ parent.

\item Isolated($u$) $\equiv$ 
 $(S.u \in \{WeakE, Working\}) \vee$ StrongEReady($u$)
\end{blist}



\begin{algorithm}[thbp]
\caption{Rules for recovering on $r$.}
\begin{list}{}
{\setlength{\partopsep}{0.00in} 
\setlength{\topsep}{0.00in} 
\setlength{\itemsep}{0.04in}
\setlength{\parsep}{0.00in}} 
\item{\bf RC1}~: $\neg$Conflict($r$) $\wedge$ PowerFaulty($r$) 
$\wedge$ QuietSubTree($r$)
$~\rightarrow$ $S.r := Working$;
\item{\bf RC2}~:  
Detached($r$) $\wedge$ StrongEReady($r$) 
\hspace*{2.9cm}
$~\rightarrow$ $S.r := Working$;
\item{\bf RC3}~: Conflict($r$) 
\hspace*{7cm}
$~\rightarrow$ $S.r := StrongE$;
\end{list}
\label{fig:rRulesFaulty}
\end{algorithm}

\begin{algorithm}[thbp]
\caption{Rules for recovering on $u\neq r$.}
\begin{list}{}
{\setlength{\partopsep}{0.00in} 
\setlength{\topsep}{0.00in} 
\setlength{\itemsep}{0.04in}
\setlength{\parsep}{0.00in}} 
\item{\bf RC4}~: StrongConflict($u$) 
\hspace*{3.5cm}
$~\rightarrow$  $S.u := StrongE$; $P.u := \perp$;
\item{\bf RC5}~: $\neg$StrongConflict($u$) $\wedge$
\newline
\makebox[1.1in]{}
{\large (}Conflict($u$) $\vee$
Faulty($u$) $\vee$ PowerFaulty($u$) $\vee$
\newline
\makebox[1.2in]{}
IllegalLiveRoot($u$) $\vee$
IllegalChild($u$){\large )}
\\
\hspace*{8cm}
$~\rightarrow$ $S.u := WeakE$; $P.u := \perp$;

\item{\bf RC6}~: 
Detached($u$) 
$\wedge$  Isolated($u$) $\wedge$
\newline
\makebox[1in]{} 
( $\forall v \in N(u)$ we have
$(C.v = C.u) \vee (S.v \neq Power)$ )
$~\rightarrow$  $S.u := Idle$; 
\end{list}
\label{fig:cyc-dest-prog}
\label{fig:tree-dest-prog}
\end{algorithm}

\subsection{Tree construction rules}\label{sub:tree-const}

The rules R1 to R7 have been designed to ensure 
the tree constructions.
A R1 move initiates the tree constructions;
a R2 move initiates a phase.
R4 and R5 moves propagate the phase wave from $r$ to the 
processes in tree.
processes joint the legal tree by a R3 move.
R6 and R7 move propagates backward to $r$ the ending 
of the current phase.

\begin{figure}
\begin{centering}
\scalebox{0.9}{\input{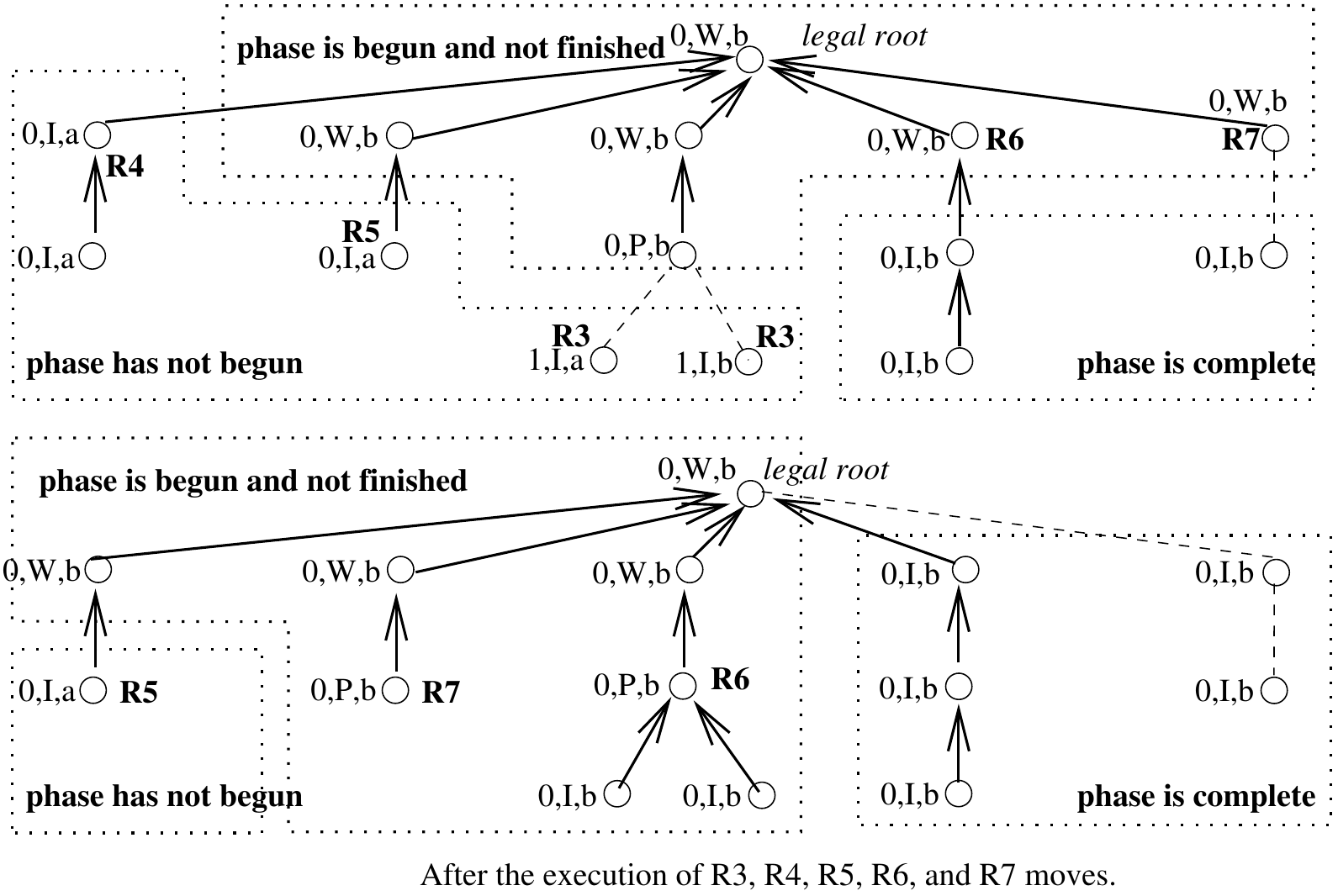_t}}
\caption{A computation step during the 3rd phase of a
 0-colored tree construction}
\label{fig:step}
\end{centering}
\end{figure}

\medskip
\noindent
In the beginning of the $k$th phase,
all processes in the tree take
$Working$ status and $r$'s phase value (by a R4 move), 
except the leaves (processes at a distance 
of $k-1$ from $r$) which take $Power$ status (by a R5 move).
All processes at a distance of $k$ from $r$ join the tree
by choosing a $Power$ status neighbor as a parent (update their 
$P$ and $TS$ variables, but they also take the phase value and
color of the new parent) by a R3 move. 
The processes with $Power$ status will finish the $k$th phase 
(change their status to $Idle$) when the current phase is over in 
their neighborhood: all their neighbors are in the tree (they 
have $r\_color$) by a R6 or R7 move. 
The $Working$ processes will finish the phase
when their children have finished the current phase (they are 
$Idle$ and have the same phase value as them) by a R6 or R7 move.
Figure~\ref{fig:step} illustrates the computation step done during the 3rd phase of a 0-colored tree construction.
A process state is represented by a triplet: its color, its 
status, its phase, and a arrow to its parent if the process has a parent.

\begin{figure}[tbh]
	\begin{centering}
		\scalebox{0.9}{\input{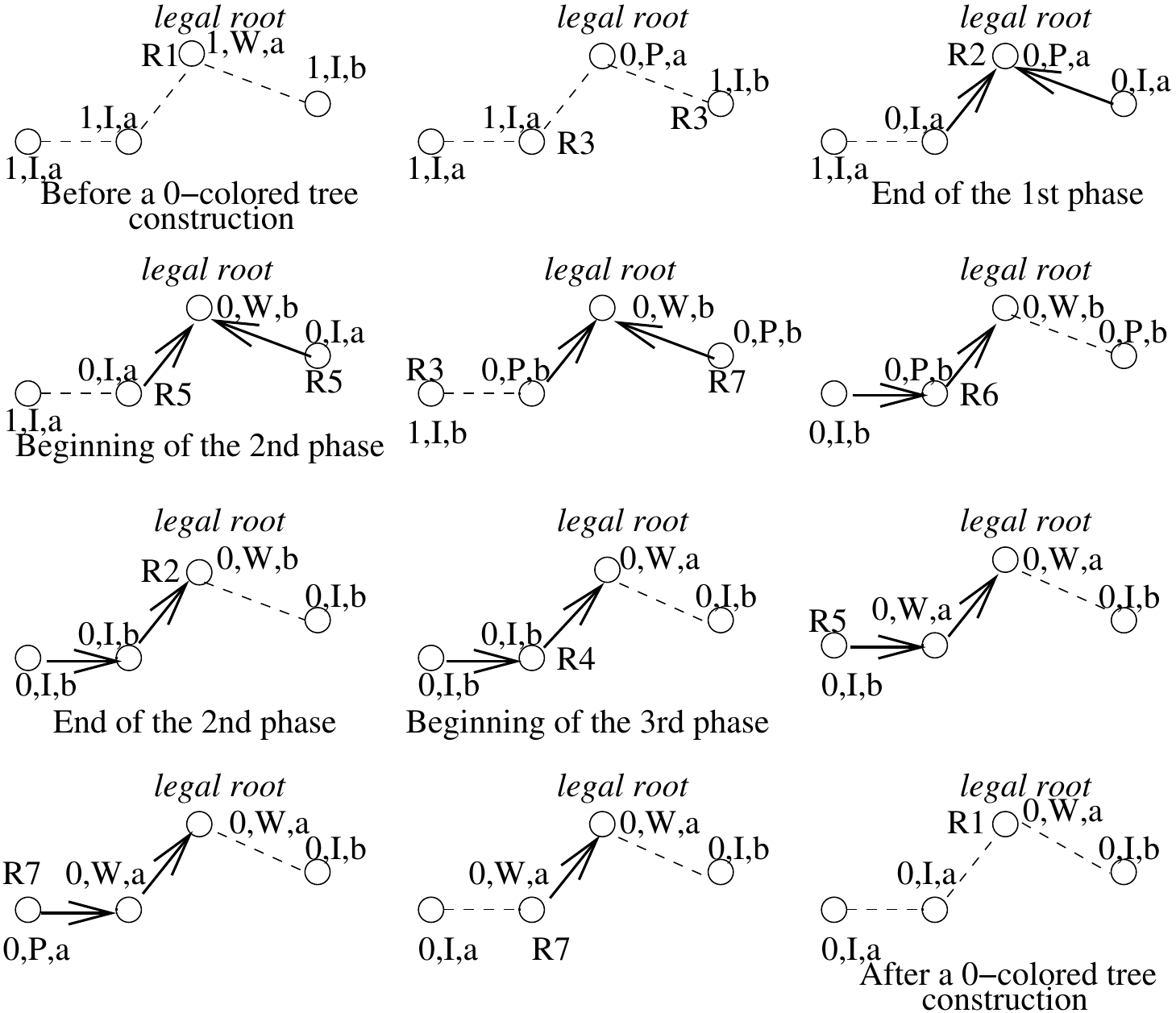_t}}
		\caption{An 0-colored tree construction}
		\label{fig:broadcasting}
	\end{centering}
\end{figure}

\noindent
The rule R1 initiates a tree construction.
$r$ changes its color and initiates the first phase 
(by taking the $Power$ status).
When the $k$th phase is over (i.e., all processes inside 
the legal tree are $Idle$ and have the same phase as  $r$'s one),
$r$ initiates the $k+1$ phase by executing R2: 
$r$ changes its phase value.
When the subtree rooted at a process, $u$ is complete 
(i.e., its Child set is empty),
$u$ sets its $P$ variable to $\perp$ by executing R7.
Thus, when $r$ becomes childless (child.$r$ = $\emptyset$),
the tree construction is complete.
The tree is stored locally in the $TS$ variables.
$r$ initiates a new tree construction, by a R1 move.
A complete 0-colored tree construction is presented in 
Figure~\ref{fig:broadcasting}.

\medskip
\noindent
We define some predicates which are used in the algorithm.
\smallskip

\begin{blist}
\item Ok($u$) $\equiv$  $\neg$StrongConflict($u$) $\wedge$ 
  $\neg$Conflict($u$) $\wedge$ 
  $\neg$PowerFaulty($u$)   $\wedge$ 
  $\neg$Faulty($u$)   $\wedge$ 
\newline  \makebox [1in]{}
 $\neg$IllegalRoot($u$) $\wedge$
 $\neg$IllegalChild($u$) 
\newline
The rules RC1, RC4, RC5, and RC3 are disabled on $u$.
\item QuietSubTree($u$) $\equiv$
{\large (} $\forall v \in Child.u, 
(S.v = Idle) \wedge (ph.v = ph.u)$ {\large )}
\newline
All  children of $u$ have finished the current phase (i.e.
they are $Idle$ and have the same phase value as $u$).

\item EndFirstPhase($u$) $\equiv$
$(S.u = Power)$ $\wedge$ QuietSubTree($u$) $\wedge$
{\large (} $\forall v \in  N(u), 
C.v = C.u$ {\large )} 
\newline
If $u$ has the $Power$ status, and all its neighbors
has its color then $u$ has terminated the current phase.

\item EndPhase($u$) $\equiv$
$(S.u = Working)$ $\wedge$ QuietSubTree($u$) 
\newline
$u$ has finished the current phase (not its first 
one so $u$ has not the $Power$ status).

\item EndLastPhase($u$) $\equiv$ 
$(Child.u = \emptyset)$ $\wedge$ {\large (}
EndFirstPhase($u$) $\vee$ EndPhase($u$) {\large )}
\newline
$u$ has finished a phase and $u$ is childless.  
The current tree construction is terminated in 
the subtree of $u$.
\item EndIntermediatePhase($u$) $\equiv$ 
$(Child.u \neq \emptyset)$ $\wedge$ 
{\large (} EndFirstPhase($u$) $\vee$ EndPhase($u$) {\large )}
\newline
$u$ has finished a phase and it still has some children.  
The tree construction is not over in its subtree.
\end{blist}

\begin{blist}
\item Connection($u$, $v$) $\equiv$ (Detached($u$) $\wedge$
(Isolated($u$) $\vee$ $(S.u = Idle)$) 
$\wedge$
\newline
\makebox[1.5in]{} 
$(v \in N(u)) \wedge
(C.v \neq C.u) \wedge
(S.v = Power)$ 
\newline
$u$ is a Detached process  that has or may take the $Idle$ status.
$v$ has $Power$ status and does not have the color of $u$.
Therefore, $u$ assumes that $v$ is a leaf of the current legal tree. 
Thus, $u$ may choose $v$ as parent and take its color.
For error-recovering purpose, $u$ verifies the predicate Ok($u$).

\item NewPhase($u$) $\equiv$ $(P.u \neq \perp)$ $\wedge$
   QuietSubTree($u$) $\wedge$
   $(S.u = Idle)$ $\wedge$
$(ph.u \neq ph.P.u)$
\newline
$v$'s parent has begun a phase, but $u$ has not. More precisely,
$u$ is an $Idle$ process, 
and $u$'s phase differs from its parent phase.
Its parent has the $Working$ status (otherwise Faulty($u$) 
or IllegalChild($u$) is  verified),
\end{blist}
\begin{algorithm}[thbp]
\caption{Rules on $r$ for the tree construction.}
\begin{list}{}
{\setlength{\partopsep}{0.00in} 
\setlength{\topsep}{0.00in} 
\setlength{\itemsep}{0.04in}
\setlength{\parsep}{0.00in}} 
\item{\bf R1}~: $Ok(r)$ $\wedge$ EndLastPhase($r$) $\wedge$
$(\forall v \in N(u) ~|~ (S.v \neq StrongE))$ 
\newline
\makebox[1.5in]{} 
$~\rightarrow~ C.r := (C.r+1) mod2$; $S.r := Power$;
\item{\bf R2}~: $Ok(r)$ $\wedge$ EndIntermediatePhase($r$) 
\newline
\makebox[1.5in]{} 
$~\rightarrow$ $r$ changes its phase value; $S.r := Working$;
\end{list}
\label{fig:rRules}
\end{algorithm}

\begin{algorithm}[thbp]
\caption{Rules on $u \neq r$ for the tree construction.}
\begin{list}{}
{\setlength{\partopsep}{0.00in} 
\setlength{\topsep}{0.00in} 
\setlength{\itemsep}{0.04in}
\setlength{\parsep}{0.00in}} 
\item{\bf R3}~: Ok($u$) $\wedge$ 
Connection($u$, $v$) 
\newline
\makebox[1.in]{}
$~\rightarrow~ C.u := C.v\,;~
ph.u := ph.v\,;~ S.u := Idle\,;~ 
P.u := v\,;~ TS.u := v$;
\item{\bf R4}~: Ok($u$) $\wedge$ 
NewPhase($u$) $\wedge~ Child.u \neq \emptyset$
$\rightarrow~ ph.u := ph.P.u\,;~ S.u := Working$;
\item{\bf R5}~: Ok($u$) $\wedge$ 
NewPhase($u$) $\wedge~ Child.u = \emptyset$ $\wedge$
$(\forall v \in N(u)$ we have $(S.v \neq StrongE))$ 
\newline
\makebox[1.in]{}
$\rightarrow~ ph.u := ph.P.u\,;~ S.u := Power$;
\item{\bf R6}~: Ok($u$)  $\wedge$ 
EndIntermediatePhase($u$) 
$\rightarrow~ S.u := Idle$;
\item{\bf R7}~: Ok($u$)  $\wedge$ $P.u \neq \perp$ $\wedge$
EndLastPhase($u$) 
$\rightarrow~ S.u := Idle\,;~ P.u := \perp$;
\end{list}
\label{fig:tree-cons-prog}
\end{algorithm}

\section{Basic properties}
\label{sec:basicProperties}
\subsection{Algorithm Liveness}

\begin{observation}
If StrongConflict($r$) is verified then Conflict($r$) is also verified.\\ 
$r$ cannot verified the predicates  Faulty, 
IllegalRoot, and IllegalChild.

\end{observation}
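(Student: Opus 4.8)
The plan is purely definitional: both assertions follow by unfolding the predicates introduced in Section~\ref{sec:algo} and performing two syntactic checks, so I would not expect any genuine difficulty. I would present it as two separate arguments, one per sentence of the statement.

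For the implication ``StrongConflict$(r) \Rightarrow$ Conflict$(r)$'', I would assume StrongConflict$(r)$. By definition this yields $S.r \neq StrongE$ together with a pair $(w,v) \in (N[r],N[r])$ such that $C.v \neq C.r$ and $S.w = S.v = Power$. The only point worth spelling out is that $v \neq r$: if $v = r$ we would have $C.v = C.r$, contradicting $C.v \neq C.r$. Hence $v$ is a genuine neighbor of $r$, i.e. $v \in N(r)$, with $S.v = Power$ and $C.v \neq C.r$. I would then instantiate the second (the $u = r$) disjunct in the definition of Conflict with this $v$: the conjuncts $(u = r)$ and $(S.u \neq StrongE)$ hold by hypothesis, and $(S.v = Power) \wedge ((C.v \neq C.r) \vee (Child.r = \emptyset))$ holds because $C.v \neq C.r$ already. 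Thus Conflict$(r)$ holds. Note that the process $w$ plays no role in this direction.

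For the second assertion, I would simply observe that each of the predicates Faulty, IllegalRoot, and IllegalChild carries $(u \neq r)$ as its leading conjunct; instantiating $u := r$ falsifies that conjunct, so $r$ can verify none of the three. The only mild subtlety in the whole observation is recognizing, in the first part, that a differently-colored witness of a strong conflict at $r$ cannot be $r$ itself, which is exactly what lets us relocate it from the closed neighborhood $N[r]$ appearing in StrongConflict to the open neighborhood $N(r)$ demanded by Conflict; everything else is immediate from the definitions. (I would also sanity-check the intended reading of $N[\cdot]$ — the closed neighborhood, despite the apparent typo $N[v]=N(v)\cup\{u\}$ in Section~\ref{sec:model} — but the argument goes through under any reasonable reading.)
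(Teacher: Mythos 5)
Your proof is correct: the paper states this observation without any proof, and your definitional unfolding (relocating the differently-colored Power witness $v$ from $N[r]$ to $N(r)$ since $C.v\neq C.r$ forces $v\neq r$, then instantiating the $u=r$ disjunct of Conflict; and noting the leading $(u\neq r)$ conjunct of Faulty, IllegalRoot, and IllegalChild) is exactly the intended, immediate justification. No gaps.
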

\begin{theorem}
\label{alwaysaprivilege}
In any configuration, at least one process is enabled.
\end{theorem}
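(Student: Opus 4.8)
The plan is to argue by contradiction. Suppose $c$ is a configuration in which no process is enabled; I will derive a contradiction. The overall shape is: first rule out every ``pathological'' local situation, because each of them triggers one of the recovery rules RC1--RC6 somewhere; once the configuration is ``clean'', exploit the fact that the construction is non-silent to exhibit an enabled process inside $r$'s tree.

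The first step is to show that no process has an Erroneous status in $c$. If $S.u = StrongE$ and $u$ has a $Power$ neighbour $w$, then PowerFaulty($w$) holds, so $w$ is enabled by RC1 (if $w=r$, after noting that either Conflict($r$) enables RC3 or QuietSubTree($r$) holds, since every child of the $Power$ process $r$ is forced $Idle$ with $r$'s phase) or by RC4/RC5 (if $w\neq r$). If $S.u=StrongE$ and $u$ has no $Power$ neighbour, then StrongEReady($u$) holds, so $u$ is enabled by RC2 or RC6 if it is $Detached$; otherwise $u$ has a parent, in which case it satisfies Faulty (via the clause $S.u\in Erroneous$) or IllegalChild and is enabled by RC5, or $u$ has a child, which is an IllegalChild and is enabled by RC4/RC5. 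The case $S.u=WeakE$ is handled the same way. Hence no Erroneous status appears in $c$, which also falsifies PowerFaulty, StrongEReady and IllegalChild everywhere. Next, StrongConflict($u$) by itself enables RC4, Conflict($u$) enables RC3 (on $r$) or RC5, Faulty($u$) enables RC5, and IllegalRoot($u$) (which now coincides with IllegalLiveRoot) enables RC5; so none of these predicates hold in $c$ either. In particular every process satisfies Ok in $c$, and the $P$-pointers decompose the network into $r$'s tree together with cycles, subtrees hanging off cycle nodes, and isolated detached processes.

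The second step is to show no process has status $Power$ in $c$. Let $v$ be such a process. Since $v$ is not Faulty, the non-Faultiness conditions force every child of $v$ to be $Idle$, childless, and to share $v$'s phase, so QuietSubTree($v$) holds automatically. If all neighbours of $v$ have colour $C.v$, then EndFirstPhase($v$) holds, and $v$ is enabled by R6 or R7 (or, if $v=r$, by R1 or R2 via EndLastPhase or EndIntermediatePhase, the StrongE side condition of R1 being satisfied since no Erroneous status remains). Otherwise $v$ has a neighbour $w$ with $C.w\neq C.v$; since no Conflict occurs (at $v$ or at $r$), $S.w\neq Power$, and then either $w$ has a parent, so Conflict($w$) holds and $w$ is enabled by RC5 (or RC3 if $w=r$), or $w$ has no parent, hence is $Detached$ with status $Idle$ or $Working$ (recall there are no illegal roots and no Erroneous statuses), and then Connection($w,v$) holds and $w$ is enabled by R3. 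Every case contradicts the hypothesis, so no process has status $Power$; in particular $S.r\neq Power$, hence $S.r=Working$, the only remaining possibility.

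Finally, I exploit non-silence by descending in $r$'s tree. Since no process is $Power$ or Faulty, a child of a $Working$ process is either $Working$ or $Idle$; an $Idle$ child whose phase differs from its parent's satisfies NewPhase (its own subtree being forced $Idle$ with its phase, so QuietSubTree holds there) and is enabled by R4 or R5, while an $Idle$ child with the same phase contributes to its parent's QuietSubTree. So, starting from the $Working$ process $r$ and repeatedly moving to a $Working$ child, either I reach an $Idle$ child with a mismatched phase (enabled), or, since $r$'s tree is finite, I reach a $Working$ node $z$ all of whose children are $Idle$ with $z$'s phase; then QuietSubTree($z$), hence EndPhase($z$), holds, and $z$ is enabled by R6 or R7 (or by R1 or R2 if $z=r$). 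This contradicts the assumption, which proves the theorem. I expect the main difficulty to be the bookkeeping in the first two steps, in particular confirming that a neighbour of a $Power$ process that does not share its colour always enables something --- a Conflict if it belongs to a (possibly parasitic) tree, a Connection via R3 if it is detached --- and handling the ``Erroneous illegal root'' subcase where the enabled process is a child of the offending node rather than the node itself; the non-silence step is then a short induction on the depth of the legal tree.
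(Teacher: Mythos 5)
Your overall strategy is the same as the paper's: an exhaustive sweep over the predicates (Erroneous statuses and the recovery guards first, then $Power$ processes, then a descent in $r$'s tree to find a $Working$ node with a quiet subtree or an $Idle$ node with a phase mismatch), merely packaged as a proof by contradiction instead of the paper's five-case analysis. Most of the bookkeeping is right, including the delicate points (QuietSubTree obtained from the non-Faultiness of children, the StrongE side conditions of R1/R5, the use of $\neg$IllegalRoot to guarantee $P.v\neq\perp$ before invoking R7). One cosmetic slip: in step~2 it is the children of $v$, not $v$ itself, whose non-Faultiness forces them to be $Idle$, childless and in $v$'s phase.

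There is, however, one step that fails as written: the sentence ``The case $S.u=WeakE$ is handled the same way.'' For $StrongE$ you dispose of a $Power$ neighbour $w$ via PowerFaulty($w$); but PowerFaulty requires a \emph{StrongE} neighbour, so a $Power$ process adjacent to a merely $WeakE$ process is not thereby enabled. The uncovered configuration is a detached $WeakE$ process $u$ with a $Power$ neighbour $w$: your StrongE template gives no enabled witness there (RC6 may be blocked by the colour condition on $w$, and neither Faulty, IllegalChild nor PowerFaulty applies). The statement still holds, and the repair is short: if every $Power$ neighbour of $u$ has $u$'s colour, the last conjunct of RC6 is satisfied and RC6 is enabled (Isolated($u$) holds since $WeakE\in\{WeakE,Working\}$); if some $Power$ neighbour $w$ has the other colour, then either StrongConflict($u$) holds and RC4 is enabled, or Connection($u,w$) holds together with Ok($u$) (Detached gives $\neg$Conflict, $\neg$Faulty, $\neg$IllegalRoot, $\neg$IllegalChild) and R3 is enabled at $u$ itself. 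This is exactly how the paper's case 4.2 concludes that a detached Erroneous process ``may execute the rule RC6 or the rule R3''; note that your step~2 treatment of the neighbours of a $Power$ process would also cover this once you let Isolated (rather than ``status $Idle$ or $Working$'') justify Connection, but as structured your step~2 presupposes the conclusion of step~1, so the patch must go into step~1.
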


\begin{proof}
We consider five possible global configurations~:
\begin{list}{}{
\setlength{\partopsep}{0.00in} 
\setlength{\topsep}{0.00in} 
\setlength{\itemsep}{0.04in}
\setlength{\parsep}{0.00in} \setlength{\leftmargin}{0.1in}}
\item {\bf 1} If a process $u1 \neq r$ verifying the predicate 
StrongConflict($u1$), Conflict($u1$), Faulty($u1$) or 
PowerFaulty($u1$) then $u1$ is enabled (rule RC4 or RC5). 
\item {\bf 2} All processes except $r$ verify the predicates 
$\neg$StrongConflict, $\neg$Conflict, $\neg$Faulty, 
and $\neg$PowerFaulty. $r$ verifies the predicate Conflict or PowerFaulty.
If $r$ verifies the predicate Conflict($r$) 
then the rule RC3 is enabled at $r$.
If $r$ verifies the predicates  $\neg$Conflict($r$) and
PowerFaulty($r$) then QuietSubTree($r$) is verified as
all children of $r$ verify the predicate  $\neg$Faulty. So
the rule RC1 is enabled at $r$. 
\item {\bf 3} All processes verify the predicates 
$\neg$StrongConflict, $\neg$Conflict, $\neg$Faulty, 
and $\neg$PowerFaulty. 
There is an illegal tree: there is at least a process $u \neq r$ 
verifying Illegal\-Root($u$).
\begin{list}{}{\setlength{\partopsep}{0.00in} 
\setlength{\topsep}{0.00in} 
\setlength{\itemsep}{0.04in}\setlength{\parsep}{0.00in}
\setlength{\leftmargin}{0.1in}} 
\item {\bf 3.1} If the  root  of an illegal tree does 
not have an $Erroneous$ status 
then this process can execute the rule RC5.
It verifies the predicate IllegalLiveRoot.
\item   {\bf 3.2} The roots of an illegal tree have 
an $Erroneous$ status. 
All children of an illegal root are enabled
(they verify the IllegalChild predicate).
\end{list}
\item {\bf 4} All processes verify the predicates 
$\neg$StrongConflict, $\neg$Conflict, $\neg$Faulty, 
$\neg$PowerFaulty, $\neg$IllegalLiveRoot, and 
$\neg$IllegalChild.  There is a process having an Erroneous status.
\begin{list}{}{\setlength{\partopsep}{0.00in} 
\setlength{\topsep}{0.00in} 
\setlength{\itemsep}{0.04in}\setlength{\parsep}{0.00in}
\setlength{\leftmargin}{0.1in}} 
\item {\bf 4.1}
 $S.r = StrongE$.  We have  $Child(r) = \emptyset$ 
 as no process verifies IllegalChild, and  no $r$ neighbor has the $Power$ status 
as all processes verify $\neg$PowerFaulty.
So, $r$ may execute the rule RC2.
\item   {\bf 4.2} 
Let $u4 \neq r$ be a node having an $Erroneous$ status. We have
$(P.u4 = \perp)$ as $u4$ is not faulty; $Child.u4 = \emptyset$ as 
there is not illegal tree. If $S.u4 = StrongE$ 
then no $u4$ neighbor has the $Power$ status 
(as all processes verify $\neg$PowerFaulty).
So, $u4$ may execute the rule RC6 or the rule R3.
\end{list}
\item {\bf 5} 
No process has an $Erroneous$ status. 
All processes verify the predicate Ok($u$).
\begin{list}{}{\setlength{\partopsep}{0.00in} 
\setlength{\topsep}{0.00in} 
\setlength{\itemsep}{0.04in}\setlength{\parsep}{0.00in}
\setlength{\leftmargin}{0.1in}} 
\item {\bf 5.1} The legal tree has a $Working$ leaf.
This leaf holds the R7 or R1 guard.
\item {\bf 5.2} There is an $Idle$ process $u2$
on the legal tree that does not have the same phase value 
as its parent's one.
This process verifies the R4 or the R5  guard
because the predicate QuietSubTree is verified by $u2$'s 
children as $u2$'s children are non-faulty and
no $u2$'s neighbor has an $Erronneous$ status.
\item {\bf 5.3} The legal tree has
a $Power$ process $u5$ having a neighbor $v5$ which does 
not have its color.
As, there is not conflict, $P.v5$ = $\perp$ and $S.v5 \neq Power$; 
as there is not illegal tree $Child.v5= \emptyset$.
$v5$ has the $Idle$ or the $Working$  status  as no
process has an $Erroneous$ status. $v5$ verifies the R3 guard.
\item {\bf 5.4} The legal tree has a $Power$ process $u5$ 
whose all neighbors have its color.
$u5$ verifies the R1, R2, R6, or R7 guard; 
because 
the predicate QuietSubTree is verified by $u5$'s 
children as $u5$'s children are non-faulty.
\item {\bf 5.5} The legal tree does not have
a $Power$ process, neither a $Working$ leaf. The legal tree  has 
$Working$ processes (at least the legal root).
Beside every branch of the legal tree ends by an $Idle$ leaf.
Thus, the legal tree has a $Working$ process 
whose all its children are $Idle$ (these children have its phase
otherwise see {\bf 5.2}).
This process verifies the R2 or R6 guard.
\end{list}
\end{list}
\end{proof}

\subsection{Faultless processes}

\begin{lemma}
\label{lem:FaultyIsaTrap}
Let $cs$ be a computation step from $c1$ reaching $c2$.
The predicate $\neg$Faulty is closed on process $u$.
Faulty($u$) is not verified in $c2$ if $u$ executes a rule during $cs$.
\end{lemma}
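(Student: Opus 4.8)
The statement has two parts: (a) closure of $\neg$Faulty($u$) along the step $cs$, and (b) that if $u$ executes a rule in $cs$ then $\neg$Faulty($u$) holds in $c2$. I would first note that (b) reduces to (a). If $u=r$ there is nothing to prove, since $r$ never satisfies Faulty (the Observation). For $u\neq r$, the rules RC4, RC5 and R7 explicitly set $P.u:=\perp$, and RC6 is enabled only when Detached($u$) holds, which for $u\neq r$ already forces $P.u=\perp$ (and RC6 leaves $P.u$ untouched); so after any of these rules $P.u=\perp$ in $c2$ and $\neg$Faulty($u$) holds vacuously. The remaining non-root rules R3, R4, R5, R6 all carry Ok($u$) in their guard, and Ok($u$) contains $\neg$Faulty($u$); hence $\neg$Faulty($u$) holds in $c1$ and (a) gives the conclusion. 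So it suffices to prove (a).

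Assume $\neg$Faulty($u$) in $c1$ and $u\neq r$. Unfolding the definition, in $c1$ one of the following holds: $P.u=\perp$; $S.P.u\in Erroneous$; or the six conjuncts (1) $S.u\notin Erroneous$, (2) $C.u=C.P.u$, (3) $S.P.u=Working\vee S.u=Idle$, (4) $S.P.u\neq S.u\vee ph.u=ph.P.u$, (5) $S.u\neq Power\vee ph.u=ph.P.u$, (6) $S.P.u\neq Power\vee(Child.u=\emptyset\wedge ph.u=ph.P.u)$ all hold. If $P.u=\perp$ in $c2$ we are done, so assume not; since R3 is the only rule that makes $P.u$ non-$\perp$, either $u$ executes R3 in $cs$, or $P.u$ is unchanged and equals some $p\neq\perp$ in both configurations. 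In the R3 case, in $c1$ we have Detached($u$) (so $S.u\neq Power$ and $Child.u=\emptyset$) and the chosen parent $v$ has $S.v=Power$, $C.v\neq C.u$, and R3 copies $v$'s color and phase into $u$ and sets $S.u:=Idle$. The key observation is that a process of status $Power$ has very few moves in a single step — it becomes Erroneous, becomes $Idle$ via R6 or R7 (changing neither its color nor its phase), stays $Power$, or (only if it is $r$) becomes $Working$ by RC1, and it never changes its color here (R1 would need $v$'s neighbors monochromatic, contradicted by $u$). Checking these possible states of $v$ in $c2$ against conjuncts (1)--(6) for $u$, and using $S.u\neq Power$ in $c1$ to see that no neighbor can connect to $u$ so that $Child.u$ stays empty, yields $\neg$Faulty($u$) in $c2$.

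It remains to treat $P.u=p$ unchanged, $p\neq\perp$; then $\neg$Faulty($u$) in $c1$ gives $S.p\in Erroneous$ in $c1$, or conjuncts (1)--(6) in $c1$. If $S.p\in Erroneous$ in $c1$: since $u$ points to $p$ in both configurations, $u\in Child.p$, so Detached($p$) fails, and the only rules that let a process leave an Erroneous status are RC6 and RC2, both of which need Detached; hence $S.p\in Erroneous$ in $c2$ and we are done. Otherwise (1)--(6) hold in $c1$ with $S.p\notin Erroneous$; if $p$ becomes Erroneous in $c2$ we are done, so assume not. First, $C.p$ is unchanged (R1 on $p=r$ needs EndLastPhase, impossible since $u\in Child.p$; R3 on $p$ needs Detached, likewise impossible) and $C.u$ is unchanged (none of $u$'s still-available moves R4, R5, R6 touches $C.u$), so conjunct (2) survives. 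For (1),(3),(4),(5),(6) I would case-split on $(S.u,S.p)$ in $c1$: conjunct (3) forces $S.u=Idle$ or $S.p=Working$, and (4)--(6) then pin down the relation between $ph.u$ and $ph.p$ in each case (in particular $ph.u=ph.p$ whenever $S.u\in\{Working,Power\}$, and also when $S.u=S.p=Idle$ or when $S.p=Power$). In each case I enumerate the moves available to $u$ (not detaching, hence limited to R4/R5 from $Idle$, R6 from $Working$ or $Power$, or standing still) and to $p$, using repeatedly that $u\in Child.p$: this disables RC6, RC2 and R3 on $p$; it makes QuietSubTree($p$) fail whenever $S.u\neq Idle$ in $c1$, disabling R6 and R7 on $p$ there; and when $p$ does turn $Idle$ via R6 while $u$ is $Idle$, the equality $ph.u=ph.p$ that conjunct (4) then needs is available — from conjunct (6) in $c1$ if $S.p=Power$, and from the guard QuietSubTree($p$) of EndPhase($p$) if $S.p=Working$. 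Also, whenever $S.u\neq Power$ in $c1$, $Child.u$ can only shrink (no neighbor can connect to $u$), which is all conjunct (6) ever needs. Under these constraints only a few combinations of simultaneous moves of $u$ and $p$ survive per case, and checking (1),(3),(4),(5),(6) against each is a routine verification.

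The main obstacle is precisely this last case analysis: everything hinges on controlling, within the single atomic step $cs$, the joint effect of the moves of $u$, of $p$, and of $u$'s neighbors (which only matter through $Child.u$). The two genuinely delicate points are (a) the R3 case, where the intended new parent $v$ may move concurrently with $u$ and one needs the tight enumeration of how a $Power$ process can evolve; and (b) the sub-case where $p$ executes R6 while $u$ stays $Idle$, where one must recover $ph.u=ph.p$ from QuietSubTree($p$). The rest is a direct check against the definition of Faulty.
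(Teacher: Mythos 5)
Your proposal is correct and follows essentially the same route as the paper's proof: a case analysis on the joint move of $u$ and its (old or new) parent, using the same key facts that recovering rules either set $P.u:=\perp$ or leave the parent $Erroneous$ (Detached being blocked by $u\in Child.p$), that QuietSubTree/NewPhase guard incompatibilities prevent $u$ and its parent from both executing tree-construction rules except in the harmless combinations, and that $Child.u$ cannot grow unless $S.u=Power$. Your reduction of the second claim to closure via Ok($u$) containing $\neg$Faulty($u$) is a small streamlining, and the only nitpick is that R3 (not just RC6 and RC2) can also take a process out of an $Erroneous$ status, but since R3 likewise requires Detached, your conclusion there stands.
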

\begin{proof} 
Faulty($r$)  is never verified. In the sequel $u \neq r$

\medskip
\noindent
If $u$ executes a recovering rule (i.e. the rule  RC4, RC5, or RC6) then 
Faulty($u$) is false in $c2$ because $P.u = \perp$.

\medskip
\noindent
If $P.u = \perp$ in $c2$ then Faulty($u$) is false. 
So in the sequel, 
we name $v$ the process $P.u$ in the configuration $c2$. 
If $v$ executes the rule  RC4, RC5, or RC3  then Faulty($u$) 
is false in $c2$ because $S.P.u \in Erroneous$ in $c2$. 
As $P.u = v$ in $c2$; in $c1$ $S.v = Power$ or $u \in$ Child($v$). 
We conclude that $v$ cannot execute RC6 or RC2 during $cs$. 
Assume that $v=r$ executes the rule  RC1 in $cs$. 
 QuietSubTree($r$) is verified in $c1$, 
 the only tree construction  rule that may execute $u$ 
  during $cs$ is R3. We conclude that
$S.u = Idle$, $ph.u = ph.r$ and $ C.u = C.r$ in $c2$; so 
in $c2$  Faulty($u$) is false. 
We establish that
If $v$ executes a recovering rule (i.e. the rule  RC1, RC2, RC3, RC5, RC4, or RC6) then 
Faulty($u$) is false in $c2$ .

\medskip
\noindent
The last case to  study is the one where only tree construction rules are 
executed by $u$ and $v$ during $cs$. 
Assume that $P.u \neq v$ in $c1$. So $u$ executes R3 during $cs$, 
we conclude that in $c1$ $S.v = Power$ and $C.v \neq C.u$;
so $v$ does not execute any rule during $cs$. 
Therefore, Faulty($u$) is false in $c2$.

\medskip
\noindent
Let us study the subcase where $P.u = v$ in $c1$ and $v$ 
 executes a tree construction rule during $cs$.
$u$ cannot 
execute any tree construction rule during $cs$, because 
QuietSubTree($v$) is verified in $c1$ by hypothesis. 
If  $v$ executes a tree construction rule during $cs$
then
Faulty($u$) is false in $c2$ in regard of tree construction guards and actions.

\medskip
\noindent
Let us study the subcase where $P.u = v$ in $c1$ and 
$u$ executes a tree construction rule during $cs$.
$v$ cannot 
execute any tree construction rule during $cs$, because 
QuietSubTree($v$) is  not verified in $c1$. 
If  $u$ executes a tree construction rule during $cs$
then Faulty($u$) is false in $c2$ in regard of tree construction guards and actions. 
We conclude that Faulty($u$) is not verified in $c2$ if $u$ executes a rule during $cs$.

\medskip
\noindent
Assume that $u$ and $v$ do not execute any rule during $cs$, if 
Faulty($u$) is not verified in $c1$ then Faulty($u$) is not verified in $c2$.

\medskip
\noindent
Therefore the predicate Faulty($u$) is closed.
\end{proof}

\subsection{InLegalTree, unRegular and influential processes}

A process $u$ verifies  inLegalTree if and only if $u$ is  
in a branch rooted at $r$ whose all its ascendants and $u$ are not faulty. 
Moreover $S.r$ should not be $StrongE$.\\
A process verifies either the predicate inLegalTree, Detached or unRegular.

\begin{definition}[inLegalTree and unRegular processes]
$~$
\newline
\hspace*{0.5cm}
A process $u$ verifies the predicate inLegalTree iff
\newline
\hspace*{2cm}
 $(u =r)$, and $(S.u \neq StrongE)$ or
 \newline
\hspace*{2cm}
 $(P.u \neq \perp)$,  inLegalTree is verified by $P.u$, 
  and $\neg$Faulty($u$)
\newline
\hspace*{0.5cm}
A process $u$ verifies the predicate unRegular iff
  $\neg$Detached($u$) and $\neg$inLegalTree($u$)
\end{definition}
\noindent
A PowerParent process $u$ has a descendant (maybe itself) that
may take the $Power$ status, by the execution of a series
of R4 and R5 moves. 
\begin{definition}[PowerParent processes]
A process $u$ verifies the predicate PowerParent iff
\begin {blist}
\item $(P.u \neq \perp)$, $(S.u = Idle)$, $(S.P.u = Working)$ 
and $(ph.u \neq ph.P.u)$, or
\item $(P.u \neq \perp)$, PowerParent($P.u$), 
$(S.u = Idle)$ and $(ph.u = ph.P.u)$
\end{blist}
\end{definition}

\noindent
Let $v$ be a process belonging to $Child(u)$ 
with $u$ verifying the predicate PowerParent.
The process $v$  verifies the predicate PowerParent 
 or the predicate Faulty.

\begin{definition}[influential processes]
A process $u$ verifies the predicate influential iff
\newline
\hspace*{2cm}
$(S.u = Power)$ or PowerParent($u$) 
\end{definition}

\noindent
Any inLegalTree process is influential after the execution 
of the rule R2 by the root.
No inLegalTree process is influential 
after the execution of the rule RC1 or RC2 by the root.
Only $r$ is influential after the execution of the rule R1 
by the root.

\begin{lemma}
\label{lem:influentialMove}
Let $cs$ be a computation step from $c1$ to reach $c2$ 
where a process $u \neq r$ executes a rule.
If $u$ is influential in $c2$ then $u$ executes the rule R5.
\end{lemma}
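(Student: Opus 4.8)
The plan is to argue by a case analysis on the single rule that $u$ executes during $cs$. Since $u \neq r$, this rule is one of RC4, RC5, RC6, R3, R4, R5, R6, R7, and for each of them except R5 I will show that $u$ is \emph{not} influential in $c2$, which proves the statement. The reduction is immediate from the definitions: influential($u$) means $S.u = Power$ or PowerParent($u$); among the rules available to $u$ only R5 assigns $S.u := Power$, while RC4 and RC5 leave $S.u$ in $Erroneous$ and R4 leaves it $Working$, none of which is compatible with influential($u$). It therefore remains to treat the rules whose action leaves $u$ with status $Idle$ in $c2$, namely RC6, R3, R6, R7, and for each of these to refute PowerParent($u$) in $c2$. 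For RC6 and R7 this is trivial, since both set $P.u := \perp$ and both clauses of PowerParent require $P.u \neq \perp$. So the real work is in the cases R3 and R6, and in both the key subtlety is that the process $P.u$ reached in $c2$ may itself move during $cs$, so its state in $c2$ must be controlled.

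For R3: the guard contains Connection($u,v$), which forces in $c1$ that $P.u = \perp$, $S.u \neq Power$, $S.v = Power$ and $C.v \neq C.u$, while the action yields in $c2$ that $P.u = v$, $S.u = Idle$, and $ph.u$ equals the $c1$-value of $ph.v$. I would then observe that, with $S.v = Power$ and the neighbour $u$ of $v$ of a different colour, EndFirstPhase($v$) and EndPhase($v$) both fail, which together with $S.v = Power$ disables at $v$ every rule except RC4 and RC5 when $v \neq r$ (giving $S.v$ $Erroneous$), or except RC1 and RC3 when $v = r$ (giving $S.v = Working$ or $S.v = StrongE$); in all cases $ph.v$ is left unchanged. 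Hence in $c2$ we have $S.P.u \in \{Power, Working, StrongE, WeakE\}$ but never $Idle$, and $ph.u = ph.P.u$. The first clause of PowerParent($u$) then fails because either $S.P.u \neq Working$ or $ph.u = ph.P.u$, and the second clause fails because it requires $S.P.u = Idle$. So $u$ is not influential in $c2$.

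For R6: its guard is Ok($u$) $\wedge$ EndIntermediatePhase($u$), so in $c1$ we have $Child(u) \neq \emptyset$ and $S.u \in \{Power, Working\}$, and the action only changes $S.u$ to $Idle$, keeping $P.u$ and $ph.u$. From $\neg$IllegalRoot($u$), using $Child(u) \neq \emptyset$, I get $P.u \neq \perp$, and from $\neg$IllegalChild($u$) that $S.P.u \notin Erroneous$; then the non-faulty properties (1)--(6) apply, and property (6) forces $S.P.u \neq Power$ (since $Child(u) \neq \emptyset$), property (3) then forces $S.P.u = Working$ (since $S.u \neq Idle$), and property (4) or (5) forces $ph.u = ph.P.u$, all in $c1$. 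As in the R3 case, $u \in Child(P.u)$ with $S.u \neq Idle$ makes QuietSubTree($P.u$) false, which disables R2, R6, R7 at $P.u$, while $S.P.u = Working$ disables R3, R4, R5, RC1, RC2, RC6 there; so in $c2$ either $P.u$ has not moved and $S.P.u = Working$ with $ph.u = ph.P.u$ still, or $P.u$ has executed RC3 (if $P.u = r$) or RC4/RC5 (otherwise) and $S.P.u$ is $Erroneous$. In both cases the first clause of PowerParent($u$) fails (on the phase equality, or on the status) and the second clause fails (it needs $S.P.u = Idle$). Hence $u$ is not influential in $c2$, which completes the case analysis.

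The step I expect to be the main obstacle is the R6 case: one must extract from Ok($u$), $\neg$Faulty($u$) and EndIntermediatePhase($u$) the precise structural facts that in $c1$ the parent $P.u$ is a $Working$ process carrying $u$'s phase, and then carefully enumerate the few rules $P.u$ can still fire in the same step, the failure of QuietSubTree($P.u$) being the crucial ingredient that rules out the phase-advancing rules R2/R4/R5 and the termination rules R6/R7 at $P.u$. Once this bookkeeping is done, unfolding the definition of PowerParent closes every branch; the R3 case is analogous but lighter, since there the parent-to-be $v$ is pinned down to have $Power$ status in $c1$ from the outset.
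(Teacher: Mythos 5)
Your proof is correct and follows essentially the same route as the paper: a case analysis on the rule executed by $u$, with the R3 and R6 cases resolved by showing that the (future) parent can only fire RC1/RC3/RC4/RC5 in the same step, so its status stays non-$Idle$ and its phase is unchanged, defeating both clauses of PowerParent. The only nitpick is that RC6's action does not set $P.u := \perp$ (its guard, via Detached with $u \neq r$, already guarantees $P.u = \perp$), but the conclusion you draw for that case is unchanged.
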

\begin{proof} 
If $u$ executes the rule RC4, RC5, RC6, or R7, in $c2$,
we have $P.u = \perp$ 
and $S.u \neq Power$, so $u$ is not influential.

\noindent
If $u$ executes the rule R6 then we have $P.u \neq \perp$ in $c1$. 
Let us name $v$, $P.u$, in $c1$. 
We have $S.v = Working$ and $ph.v = ph.u$, in $c1$. 
On $v$, only the rule RC4, RC5, or RC3 may be enabled in $c1$, 
as Child($v$) $\neq \emptyset$ 
and the predicate QuietSubTree($v$) is not verified in $c1$. 
In $c2$, we have $S.v \neq Idle$ and $ph.u = ph.v$.
So $u$ is not influential in $c2$ as $S.u = Idle$ in $c2$.

\noindent
If $u$ executes the rule R4 then  $u$ is not influential
 in $c2$ as $S.u = Working$. 

\noindent
If $u$ executes the rule R3 then we have $P.u \neq \perp$, in $c2$. 
Let us name $v$, $P.u$ in $c2$. 
We have $S.v = Power$ in $c1$. 
On $v$, only the rule RC1, RC4, RC5, or RC3 may be enabled in $c1$, 
as the predicate EndFirstPhase($v$) is not verified in $c1$. 
In $c2$, we have $S.v \neq Idle$ and $ph.u = ph.v$. 
So $u$ is not influential in $c2$ as $S.u = Idle$ in $c2$.
\end{proof}

\begin{lemma}
\label{lem:becominginfluential1}
Let $cs$ be a computation step from $c1$ reaching $c2$. 
If during $cs$, $u \neq r$ becomes influential 
then $u$ did not execute any move during $cs$.
\end{lemma}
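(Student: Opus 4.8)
The plan is to argue by contradiction, exploiting Lemma~\ref{lem:influentialMove}. Assume that $u \neq r$ executes some rule during $cs$, that $u$ is not influential in $c1$, and that $u$ is influential in $c2$. Since $u$ both executes a rule during $cs$ and is influential in $c2$, Lemma~\ref{lem:influentialMove} forces the executed rule to be R5. I will then read off the guard of R5 evaluated in $c1$ and conclude that $u$ was already influential in $c1$, contradicting the assumption and hence showing that $u$ cannot move during $cs$.

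First I would unfold the guard of R5 in $c1$. Rule R5 requires $Ok(u) \wedge NewPhase(u) \wedge (Child.u = \emptyset) \wedge (\forall v \in N(u) : S.v \neq StrongE)$; unfolding $NewPhase(u)$ gives, in $c1$, that $P.u \neq \perp$, $S.u = Idle$, $QuietSubTree(u)$, and $ph.u \neq ph.P.u$. Set $v := P.u$ as evaluated in $c1$.

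The crux is to determine $S.v$ in $c1$. From $Ok(u)$ we get $\neg IllegalChild(u)$, hence $S.v \notin Erroneous$, so $S.v \in \{Idle, Working, Power\}$. I then eliminate the two extreme values using $\neg Faulty(u)$, which holds because $Ok(u)$ does (and, since $S.v \notin Erroneous$, the guard $Faulty(u)$ is equivalent to its internal disjunction): if $S.v = Idle$, then $S.P.u = S.u = Idle$ together with $ph.u \neq ph.P.u$ satisfies the disjunct $(S.P.u = S.u) \wedge (ph.u \neq ph.P.u)$ of $Faulty(u)$, a contradiction; if $S.v = Power$, then $ph.u \neq ph.P.u$ satisfies the disjunct $(S.P.u = Power) \wedge ((Child.u \neq \perp) \vee (ph.u \neq ph.P.u))$ of $Faulty(u)$, again a contradiction. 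Hence $S.v = Working$ in $c1$.

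Putting the pieces together: in $c1$ we have $P.u \neq \perp$, $S.u = Idle$, $S.P.u = Working$, and $ph.u \neq ph.P.u$, which is exactly the first clause in the definition of $PowerParent$. So $PowerParent(u)$, and therefore $influential(u)$, already holds in $c1$ --- contradicting the hypothesis that $u$ \emph{becomes} influential during $cs$. Therefore $u$ executes no rule during $cs$. I expect no genuine obstacle here: Lemma~\ref{lem:influentialMove} carries the weight by isolating R5 as the only rule whose execution is compatible with $u$ being influential in $c2$, and what remains is routine bookkeeping with the clauses of $NewPhase$, $Faulty$, and $PowerParent$.
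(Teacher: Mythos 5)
Your proof is correct and follows essentially the same route as the paper: invoke Lemma~\ref{lem:influentialMove} to isolate R5 as the only possible move, then observe that R5 is enabled only at an already-influential process, contradicting the hypothesis. The only difference is that the paper asserts this last fact in one line, whereas you verify it explicitly by unfolding the R5 guard (NewPhase, Ok, $\neg$Faulty, $\neg$IllegalChild) to deduce $S.P.u = Working$ and hence PowerParent($u$) in $c1$ --- a worthwhile but routine elaboration.
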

\begin{proof} 
Assume that $u$ executes a move during $cs$ 
and $u$ is influential in $c2$.
According to Lemma~\ref{lem:influentialMove},  $u$ executes 
the rule R5 during $cs$.
R5 is enabled only on influential process.
So $u$ does become influential during $cs$.
\end{proof}

\begin{lemma}
\label{lem:becominginfluential3}
Let $u$  be a process with $u \neq r$. 
Let $cs$ be a computation step from $c1$ reaching $c2$. 
In $c1$, $u$ is not influential but it is in $c2$.
$r$ executes R2 during $cs$, and InLegalTree($u$) is verified in $c2$. 
\end{lemma}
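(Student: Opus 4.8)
The plan is to combine the previous two lemmas with a careful case analysis of \emph{how} a non-root process $u$ can acquire the property \texttt{influential} across one computation step $cs$. By Lemma~\ref{lem:becominginfluential1}, since $u$ is not influential in $c1$ but is in $c2$, process $u$ executes no move during $cs$. Hence the value of $u$'s own variables, in particular $S.u$ and $ph.u$, and the value of $P.u$, are the same in $c1$ and $c2$. So the change in status of $u$ (from non-influential to influential) must come entirely from a change in the state of some neighbor, and more precisely of $P.u$, since \texttt{influential} is defined recursively along the parent pointer (via \texttt{PowerParent}) or directly via $S.u = Power$ --- and $S.u$ did not change.

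First I would rule out the direct disjunct: if $S.u = Power$ held in $c2$ it would also hold in $c1$ (no move by $u$), so $u$ would already be influential in $c1$, a contradiction. Therefore \texttt{PowerParent}($u$) must hold in $c2$ but not in $c1$. Since $u$ makes no move, $P.u \ne \perp$, $S.u = Idle$ in both configurations. Looking at the two disjuncts in the definition of \texttt{PowerParent}: either (a) $S.P.u = Working$ and $ph.u \ne ph.P.u$ in $c2$, or (b) \texttt{PowerParent}($P.u$) holds in $c2$ with $ph.u = ph.P.u$ in $c2$. In case (b), I would argue inductively: if $P.u$ became \texttt{PowerParent} during $cs$ while $ph.u = ph.P.u$ stays unchanged at $u$'s end, then I push the argument up to $P.u$; since the chain is finite and terminates at $r$ (using that $u$ is in a branch rooted at $r$ --- I must first note that \texttt{PowerParent}($u$) in $c2$ forces, together with $\neg$\texttt{Faulty} along the chain, the predicate \texttt{inLegalTree}($u$), which is part of the conclusion anyway), the root $r$ must be the process whose state change triggered everything. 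In case (a), the key point is that in $c1$ we had $\neg$\texttt{PowerParent}($u$) while $P.u \ne \perp$, $S.u = Idle$ were already true; so in $c1$ the first disjunct failed, meaning $ph.u = ph.P.u$ in $c1$ \emph{or} $S.P.u \ne Working$ in $c1$. Either way the state of $P.u$ must have changed during $cs$ --- either its phase or its status --- and I would use Lemma~\ref{lem:influentialMove} (contrapositively) plus the tree-construction rules to show the only way $P.u$ can reach $S.P.u = Working$ with its phase differing from $u$'s while $u$ is quiescent is by executing R2, which forces $P.u = r$.

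The crux is therefore: the change propagates up the parent chain and the only rule that can \emph{create} a phase mismatch between a quiescent child and its parent, while leaving the child \texttt{Idle}, is R2 (a root move changing $r$'s phase) --- rules R4 and R5 resynchronize the executing process's phase with its parent, and R3 sets the child's phase equal to the new parent's, so neither introduces the required mismatch at a process that does not move. I would verify that RC-rules on $P.u$ are excluded because they would set $S.P.u$ Erroneous, making \texttt{Faulty}($u$) or \texttt{IllegalChild}($u$) hold and contradicting \texttt{inLegalTree}. Once R2 is pinned down, $r$ is identified, and since R2's guard requires \texttt{Ok}($r$) and \texttt{EndIntermediatePhase}($r$), the whole branch from $r$ down to $u$ is non-faulty in $c2$, giving \texttt{inLegalTree}($u$). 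The main obstacle I anticipate is handling the inductive case (b) cleanly: making sure the parent-chain argument terminates and that at each step the hypotheses ($S = Idle$, equal phases, no move) are correctly inherited, rather than silently assuming the grandparent also made no move --- it may well have, and that is exactly the R2 case bottoming out at $r$.
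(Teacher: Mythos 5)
Your overall skeleton does match the paper's: invoke Lemma~\ref{lem:becominginfluential1} to conclude that $u$ makes no move, deduce that PowerParent($u$) must hold in $c2$ but not in $c1$, and climb the parent chain until the change is pinned on an R2 move by $r$. However, the step you yourself call the crux is wrong as stated. You claim that R4 and R5 ``resynchronize the executing process's phase with its parent, so neither introduces the required mismatch at a process that does not move.'' An R4 move by $v = P.u$ does exactly that: its guard (Ok($v$), NewPhase($v$), with $u \in Child(v)$) forces $S.v = Idle$, QuietSubTree($v$) (hence $ph.u = ph.v$ and $S.u = Idle$ in $c1$), $ph.v \neq ph.P.v$, and, through $\neg$Faulty($v$), $S.P.v = Working$; its action then sets $ph.v := ph.P.v$ and $S.v := Working$, so in $c2$ the quiescent child $u$ faces a $Working$ parent whose phase differs from its own --- precisely the first disjunct of PowerParent($u$). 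This is in fact how the phase wave normally propagates down the tree, so ``only R2 can create the mismatch at a quiescent Idle child'' is false in general. The contrapositive of Lemma~\ref{lem:influentialMove} does not rescue you either: after R4 the parent has status $Working$, hence is not influential, so that lemma is silent about this move.

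The reason the lemma nevertheless holds is different, and it is the idea missing from your plan: if R4 were enabled at $v$ in $c1$, those same guard conditions ($S.P.v = Working$, $S.v = Idle$, $ph.v \neq ph.P.v$) mean PowerParent($v$) already held in $c1$, and since $ph.u = ph.v$ and $S.u = Idle$, PowerParent($u$) --- hence influential($u$) --- already held in $c1$, contradicting the hypothesis. This is exactly how the paper excludes R4 (a companion argument excludes R6), concluding that a non-root parent makes \emph{no} move at all, so the chain of quiescent $Idle$ processes with pairwise equal phases must terminate at $r$ (the alternatives, $P = \perp$ or a cycle, would leave $u$ non-influential in $c2$), and only an R2 move of $r$ can then create the PowerParent chain. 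Finally, your justification of the inLegalTree($u$) conclusion from ``R2's guard requires Ok($r$)'' is a non sequitur: Ok($r$) and EndIntermediatePhase($r$) constrain only $r$ and its children, not faultiness deeper in the branch; the paper instead derives this part from the fact that none of the processes on the chain executes any action during $cs$.
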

\begin{proof}
\noindent
Assume that in $c1$, $u_0$ is not influential, but it is in $c2$.
So $S.u_0 \neq Power$ in $c1$.

\medskip
\noindent
$u_0$  does not execute a rule during $cs$ 
(Lemma~\ref{lem:becominginfluential1}). 
We conclude that  PowerParent($u0$) is not verified in $c1$ but in $c2$.
As $u0$ does not execute a rule during $cs$, we must have $S.u_0 = Idle$ and $P.u_0 \neq \perp$ in $c1$.

\medskip
\noindent
\textbf{Base Case.}
Assume that $P.u_0 = u_1$ in $c1$ with $u_1\neq r$.

\noindent
Assume that $u_1$  executes a rule during $cs$.
To have PowerParent($u_0$), 
we must have $S.u_1 \in \{Working, Idle\}$ in $c2$.
As Child($u_1$) $\neq \emptyset$ in $c1$, 
$u_1$ would execute the rule R4 or R6.
If $u_1$ had executed R6 then in $c1$ we would have $Ok(u_1)$,
 $S.u_1 \in \{Working, Power\}$ and $ph.u_1 = ph.P.u_1$. Let us name $u_2$
process $P.u_1$ in $c1$. 
During $cs$, $u_2$ may only execute the rule RC1, RC2,  RC4,  or RC5.
In $c2$, we would have $P.u_1 = u_2$, $ph.u_1 = ph.u_2$ 
and $S.u_2 \neq Idle$; so PowerParent($u_1$) would not be verified in $c2$.
If $u_1$ had executed R4 then 
PowerParent($u_1$), and PowerParent($u_0$)  would be verified in $c1$: 
a contradiction with the hypothesis.
We conclude that $u_1$  does not execute a rule during $cs$. 

\noindent
We have $S.u_1 = Idle$ and $ph.u_1=ph.u_0$ 
in $c1$ otherwise  $u_0$ is influential in $c1$ or
$u_0$ is not influential in $c2$. 
Notice that 
PowerParent($u_1$) has to be verified in $c2$ to ensure that  
PowerParent($u_0$) is verified in $c2$; but PowerParent($u_1$) 
is not verified in $c1$ as PowerParent($u_0$) is not verified in $c1$.

\medskip
\noindent
\textbf{By induction}, we  establish that  a finite series of processes 
$u_0$, $u_1$, $\cdots$, $u_l$ verifying the following properties exists:
PowerParent($u_i$) is not verified, $P.u_i = u_{i+1}$, 
$S.u_i = Idle$, and $ph.u_i=ph.u_{i+1}$
for $0 \leq i < l$ in $c1$. 
We have either  $P.u_l = \perp$, $u_l = u_i$ with $i <l$, 
or $u_l = r$ in $c1$.

\noindent
First case, we have $P.u_l = \perp$ or $u_l = u_i$ with $i <l$.
According to the \textit{base case} paragraph, 
processes $u_i$ with $0 \leq i \leq l$ do not execute 
any action during $cs$. 
We conclude that $u_0$ is not influential in $c2$. 

\noindent
Last case $u_l = r$. 
During $cs$, processes $u_i$ with $0 \leq i < l$ 
do not execute any action (see \textit{base case} paragraph). 
So inLegalTree($u_i$) is verified in $c2$.
Only if $r$ executes R2, PowerParent($u_i$) is  verified 
for $0 \leq i < l$ in $c2$. 
\end{proof}

\begin{corollary}
\label{cor:preInf}
Let $cs$ be a computation step from $c1$ reaching $c2$. 
If during $cs$, $v$ becomes influential then $C.v=C.r$ in $c2$.
\end{corollary}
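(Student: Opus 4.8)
The plan is to split on whether $v = r$. If $v = r$, then $C.v = C.r$ holds trivially, so there is nothing to prove; one may optionally note that $r$ becomes influential only by acquiring status $Power$ through a R1 move, but this is not needed. So assume $v \neq r$. Then $v$ is not influential in $c1$ but influential in $c2$, hence Lemma~\ref{lem:becominginfluential3} applies and gives that inLegalTree($v$) holds in $c2$ (it also tells us that $r$ executes R2 during $cs$, and Lemma~\ref{lem:becominginfluential1} that $v$ makes no move, but neither extra fact is needed below).

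Next I would unroll the recursive definition of inLegalTree in $c2$. Since inLegalTree($v$) holds and $v \neq r$, iterating the parent pointer from $v$ in $c2$ produces a path $v = w_0, w_1, \dots, w_k = r$ with $P.w_i = w_{i+1}$ and $\neg$Faulty($w_i$) for all $0 \le i < k$, ending at $r$ with $S.r \neq StrongE$; the path is finite and reaches $r$ precisely because the definition of inLegalTree is grounded only at $r$, so no parent cycle avoiding $r$ can satisfy it. Since $S.r$ only takes values in $\{Power, Working, StrongE\}$, from $S.r \neq StrongE$ we get $S.r \notin Erroneous$ in $c2$.

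The heart of the argument is then a downward induction along this path, establishing for $i = k, k-1, \dots, 0$ that in $c2$ we have $S.w_i \notin Erroneous$ and $C.w_i = C.r$. The base case $i = k$ is the statement of the previous paragraph (with $C.w_k = C.r$ trivial). For the inductive step, fix $i < k$: from $\neg$Faulty($w_i$) with $w_i \neq r$, $P.w_i = w_{i+1} \neq \perp$, and the inductive hypothesis $S.w_{i+1} \notin Erroneous$, the large disjunction in the definition of Faulty($w_i$) must be false; its first disjunct ($S.w_i \in Erroneous$) being false yields $S.w_i \notin Erroneous$, and its second disjunct ($C.w_i \neq C.P.w_i$) being false yields $C.w_i = C.w_{i+1}$, which equals $C.r$ by the inductive hypothesis. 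Instantiating at $i = 0$ gives $C.v = C.w_0 = C.r$ in $c2$, completing the proof.

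I expect the only mildly delicate points to be justifying that the inLegalTree chain in $c2$ genuinely terminates at $r$ (immediate from the well-founded form of the recursive definition, since a parent cycle disjoint from $r$ cannot be grounded), and the careful reading of the Faulty predicate so as to extract exactly the two facts $S.w_i \notin Erroneous$ and $C.w_i = C.w_{i+1}$ from $\neg$Faulty($w_i$) together with a non-Erroneous parent; everything else is routine bookkeeping.
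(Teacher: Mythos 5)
Your proof is correct and follows exactly the route the paper intends: the corollary is stated without proof as an immediate consequence of Lemma~\ref{lem:becominginfluential3}, and your filling-in --- invoke that lemma to get inLegalTree($v$) in $c2$, then propagate $S.w_i \notin Erroneous$ and $C.w_i = C.r$ down the non-Faulty parent chain from $r$ using the structure of the Faulty predicate (with the base case resting on $S.r \neq StrongE$ and $S.r \in \{Power, Working, StrongE\}$) --- is precisely the implicit argument, including the necessary bookkeeping that the parent's status be non-Erroneous so that $\neg$Faulty really forces color equality.
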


\subsection{UnSafe processes}
\noindent
A unSafe process $u$ verifies the predicate inLegalTree, $u$ may 
have influential descendants, and $u$ may execute RC4 or RC5. 
After this move, its descendants have become unRegular.

\begin{definition}[insideLegalTree and unSafe processes]
$~$
\newline
\hspace*{0.5cm}
A process $u$  verified the predicate insideLegalTree($u$) iff 
\newline
\hspace*{2cm}
 inLegalTree($u$), $(S.u \neq Power)$, and  
 (Child.$u \neq \emptyset$).
\newline
\hspace*{0.5cm}
A process $u$ verified the predicate unSafe iff
\newline
\hspace*{2cm}
 insideLegalTree($u$)  and
$(\exists v \in N(u) ~|~ C.v \neq C.u$ and $v$ is influential$)$.
\end{definition}

\noindent
The current execution may reach a configuration where  an unSafe process $u$ may verified the 
guard of the rule RC4 or RC5, and the predicate 
insideLegalTree($u$).

\begin{lemma}
\label{lem:unSafeClosed}
$\neg$unSafe($u$) is a closed predicate.
\end{lemma}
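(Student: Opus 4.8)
The plan is to show that if $\neg$unSafe$(u)$ holds in some configuration $c1$, then it still holds in the configuration $c2$ reached by any computation step $cs$. Recall that unSafe$(u)$ $\equiv$ insideLegalTree$(u)$ $\wedge$ $(\exists v \in N(u) : C.v \neq C.u \wedge v \text{ influential})$. So the key observation is that unSafe$(u)$ can only become true in $c2$ in one of two ways: either $u$ enters the state insideLegalTree (hence inLegalTree) in $c2$ while having a neighbor of different color that is influential in $c2$, or $u$ was already insideLegalTree but acquires such a neighbor $v$ during $cs$. I would treat these two cases separately, leaning heavily on the lemmas of the previous subsection.

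\textbf{Case 1: some neighbor $v$ of $u$ becomes influential during $cs$, or changes color so that $C.v \neq C.u$ in $c2$ while $v$ is influential in $c2$.} Here I would invoke Corollary~\ref{cor:preInf}: if $v$ \emph{becomes} influential during $cs$, then $C.v = C.r$ in $c2$. Combined with Lemma~\ref{lem:becominginfluential3} (which says $r$ must execute R2 during $cs$ and InLegalTree$(v)$ holds in $c2$), and noting that R2 does not change $C.r$, every process that is influential in $c2$ but not in $c1$ has color $C.r$. The subtle point is to also pin down $C.u$ in $c2$: since $u$ is (or becomes) inLegalTree in $c2$, and inLegalTree processes track $C.r$ along the ascendant chain (by the recursive structure of the predicate together with $\neg$Faulty, since Faulty checks $C.u \neq C.P.u$), we get $C.u = C.r$ as well, contradicting $C.v \neq C.u$. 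For a neighbor $v$ that was \emph{already} influential in $c1$, Lemma~\ref{lem:influentialMove} forces $v$ to execute R5 if it moves while staying influential, and R5 does not change $C.v$; if $v$ stays influential without moving, $C.v$ is unchanged too. So the only way a neighbor $v$ with $C.v \neq C.u$, influential in $c2$, can appear is if $C.u$ itself changed during $cs$ — which I handle in Case 2.

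\textbf{Case 2: $u$ changes state during $cs$ (so insideLegalTree$(u)$ newly holds, or $C.u$ changes).} If $u$ executes one of RC4, RC5, RC6, R7 then $P.u = \perp$ in $c2$, so $u$ is neither inLegalTree nor insideLegalTree, and unSafe$(u)$ is false. If $u$ executes R3 then $S.u = Idle$ and $Child.u = \emptyset$ in $c2$ (R3 is a Connection move, which requires Detached$(u)$), hence $\neg$insideLegalTree$(u)$. If $u$ executes R4 or R5 then $S.u \in \{Working, Power\}$ in $c2$; R5 gives $S.u = Power$, violating the $S.u \neq Power$ conjunct of insideLegalTree, and R4 is fine for insideLegalTree but does not change $C.u$, so the argument of Case~1 (with $C.u$ unchanged) applies. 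If $u$ executes R6 then $S.u = Idle$ in $c2$ but $Child.u \neq \emptyset$ (EndIntermediatePhase), so insideLegalTree may hold; again $C.u$ is unchanged by R6, so Case~1 closes it. Finally, if $u$ does not move but insideLegalTree$(u)$ newly holds in $c2$, this can only be because an ascendant's state changed to make the inLegalTree chain valid, or Child$.u$ became nonempty; but then $C.u$ is still unchanged, and I again appeal to Case~1 to rule out a newly-influential differently-colored neighbor.

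\textbf{Main obstacle.} The delicate part is the interaction in Case~1 between \emph{which} processes can be influential in $c2$ and \emph{what color} $u$ has in $c2$: I must be careful that the inLegalTree predicate genuinely forces $C.u = C.r$ in $c2$ (using that each non-root link in the chain is non-Faulty, so colors agree along it up to the root), and that R2 — the only rule capable of making dormant inLegalTree processes influential — neither recolors $r$ nor recolors any process in the tree. A secondary nuisance is bookkeeping the case where $u$ and a neighbor $v$ move in the \emph{same} step; here the disabled-ness arguments from Lemmas~\ref{lem:influentialMove} and~\ref{lem:becominginfluential3} (e.g.\ that QuietSubTree or EndFirstPhase fails, so the parent can only take a recovering rule) are what prevent simultaneous inconsistent updates. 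Once those two points are nailed down, the case analysis above is routine.
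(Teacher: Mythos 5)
Your Case~1/Case~2 split leaves the decisive scenario uncovered: unSafe$(u)$ can become true in $c2$ not because a differently-colored influential neighbor \emph{appears}, but because insideLegalTree$(u)$ \emph{newly} becomes true while a neighbor $v$ that was already influential and already differently colored in $c1$ simply stays put. Your Case~1 only excludes neighbors that become influential or that change color, and your Case~2, in precisely the sub-cases where insideLegalTree$(u)$ may newly hold ($u$ executes R6 or R4, or $u$ does not move while its ascendant chain or $Child.u$ changes), sends the reader back to Case~1 ``to rule out a newly-influential differently-colored neighbor'' --- which says nothing about a pre-existing one. The paper's proof closes exactly this hole with a guard-based argument you never invoke: if insideLegalTree$(u)$ is false in $c1$ and true in $c2$, then $u$ must have executed R2 or R6 with $S.u = Power$ in $c1$ (otherwise insideLegalTree$(u)$ already held in $c1$, since $S.u\neq Power$, $Child.u\neq\emptyset$ and inLegalTree$(u)$ would all be present), and the enabling guard then forces EndFirstPhase$(u)$, i.e.\ \emph{every} neighbor of $u$ already has $u$'s color in $c1$; moreover no neighbor $w$ can recolor during $cs$, because an R3 move by $w$ would require a Power neighbor of the other color while $u$ itself is Power with the color of $r$, so StrongConflict$(w)$ holds and Ok$(w)$ fails. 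Hence in $c2$ no differently-colored neighbor of $u$ exists at all, influential or not. Without this step your plan does not go through: a process $u$ with $S.u=Power$ executing R6 next to a silent PowerParent neighbor of the other color is exactly the situation that must be excluded, and nothing in your argument excludes it.

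A secondary omission: your color-stability claim for already-influential neighbors rests on Lemma~\ref{lem:influentialMove}, which is stated only for processes different from $r$. The root can execute R1, thereby recoloring while remaining influential ($S.r := Power$), so ``an influential process that moves and stays influential does not change color'' is false for $v=r$. The paper disposes of this up front: in a step where $r$ executes R1 (or RC3), no process is insideLegalTree in the reached configuration (EndLastPhase$(r)$ gives $Child.r = \emptyset$ and no neighbor can adopt $r$ as parent during that step), so unSafe is vacuously false afterwards, and the remainder of the proof may assume $r$ keeps its color. Your proposal needs an analogous preliminary case before the rest of the analysis is sound.
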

\begin{proof}
Let $cs$ be a computation step from $c1$ to $c2$.

\noindent
If $r$ executes the rule R1 or RC3 during $cs$ then no process verifies 
the predicate unSafe in $c2$ as no process verifies the Predicate 
insideLegalTree. 

\noindent
In the sequel, we assume that $r$ does not 
execute the rule R1 or RC3 during $cs$ (i.e.  $r$ does not change its  color). 

\medskip
\noindent
\textbf{Becoming influential.} Any process becoming influential during the step $cs$
as the $r$'s color in $c1$, and in $c2$ named $r\_color$ (Corollary~\ref{cor:preInf}).

\medskip
\noindent
\textbf{Becoming insideLegalTree.}
Assume that in $c1$, insideLegalTree($u$) is not verified but it 
is verified in $c2$. 
We conclude that $u$ has executed the rule R2 or R6 during $cs$;
so in $c1$, $u$ verifies  the predicates inLegalTree($u$) and  
EndFirstPhase($u$). Therefore in $c1$, $u$ and all $u$ neighbors 
have the same color as $r$, $r\_color$.  A neighbor of $u$, 
$v \neq r$,  would have not the color $r\_color$  in $c2$ 
if $v$ has executed the rule R3 during $cs$. 
In $c1$, $v$ would verify the predicate 
Connection($v$, $w$) but also the predicate StrongConflict($v$).
We conclude that during $cs$, $v$ could not execute the rule 
R3; so any neighbor of $u$, $v$, verifies 
$(C.v = r\_color = C.u)$ in $c2$. 
So  $\neg$unSafe($u$) 
is verified in $c2$.

\medskip
\noindent
\textbf{Becoming unsafe.}
Assume that in $c1$, unSafe($u$) is not verified but 
it is verified in $c2$. 
The  paragraph "becoming insideLegalTree" establishes that 
InsideLegalTree($u$) has to be verified in $c1$ to have 
unSafe($u$) in $c2$.
Let $v$ be a neighbor of $u$  that is influential in $c2$ 
and $C.v \neq C.u$, in $c2$. Such a process exists because 
unSafe($u$) is verified in $c2$.
We have $v \neq r$ as InsideLegalTree($u$) is verified in $c2$.
The paragraph "becoming influential" establishes that
if the process $v$ becomes 
influential during $cs$ then $C.v=C.u=r\_color$ at $c2$. 
So, $v$ is influential in $c1$. Hence,
$v$ cannot execute $R3$ during $cs$, so it cannot change its color.
Either $C.v \neq C.u$ in $c1$ and unSafe($u$) is verified in $c1$ or $C.v = C.u$ in $c2$ and unSafe($u$) is not verified in $c2$. \\
We conclude that no process becomes unSafe during $cs$.
\end{proof}

\begin{corollary}
\label{cor:unSafe-preInfl}
Let $cs$ be a computation step from $c1$ reaching $c2$. 
Let $u$ be a process that is not unRegular and influential in $c1$ 
but it is in $c2$. We have
$C.u=C.r$ in $c2$.
\end{corollary}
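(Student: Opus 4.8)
If $u=r$ the claim is immediate, so assume $u\neq r$; from the hypothesis we have $\neg$unRegular($u$) in $c1$ and $u$ influential in $c2$. The plan is to split on whether $u$ is already influential in $c1$.

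\textbf{Case where $u$ is not influential in $c1$.} Then $u$ becomes influential during $cs$, and the conclusion is exactly Corollary~\ref{cor:preInf} (whose proof goes through Lemmas~\ref{lem:becominginfluential1} and~\ref{lem:becominginfluential3}: $r$ executes R2, inLegalTree($u$) holds in $c2$, and climbing the parent chain of non-faulty processes forces $C.u=C.r$ in $c2$). This case does not even use $\neg$unRegular($u$).

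\textbf{Case where $u$ is influential in $c1$.} First I would use the trichotomy "every process is inLegalTree, Detached, or unRegular" together with $\neg$unRegular($u$) to get that $u$ is inLegalTree or Detached in $c1$, and then rule out Detached: Detached($u$) forces $S.u\neq Power$ and $P.u=\perp$ (since $u\neq r$), which contradicts both disjuncts of influential($u$) because PowerParent requires $P.u\neq\perp$. Hence inLegalTree($u$) holds in $c1$, and walking up the witnessing parent chain (each process non-faulty, so consecutive colors coincide) yields $C.u=C.r$ in $c1$. It then remains to argue that neither value changes during $cs$: the only rule modifying $C.u$ is R3, but by Lemma~\ref{lem:influentialMove}, since $u$ is influential in $c2$, the only rule $u$ may execute in $cs$ is R5, which leaves $C.u$ untouched; and the only rule modifying $C.r$ is R1, after which only $r$ is influential, contradicting $u\neq r$ being influential in $c2$, so $r$ does not execute R1. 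Therefore $C.u=C.r$ persists in $c2$.

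This corollary is really just bookkeeping on top of the earlier lemmas, so I do not expect a genuine obstacle; the one point to be careful about is the behavior of $r$'s moves in the second case — specifically that an R1 move flips $C.r$ and collapses influentiality onto $r$ alone — which is what forces me to invoke the remark that only $r$ is influential after R1 in order to exclude that move, and to invoke Lemma~\ref{lem:influentialMove} to pin $u$ down to an R5 (color-preserving) move.
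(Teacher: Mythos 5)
Your proposal is correct in substance and follows essentially the paper's own route: the paper likewise reduces everything to ``influential and not unRegular in $c1$ implies inLegalTree($u$), hence $C.u = C.r = r\_color$ in $c1$,'' and then argues that neither color can change during $cs$. The only structural difference is how the ``$u$ becomes influential'' case is handled: the paper reads the hypothesis as $u$ being unRegular and influential in $c2$, and uses that together with Lemma~\ref{lem:becominginfluential3} to rule this case out by contradiction, whereas you settle it affirmatively via Corollary~\ref{cor:preInf}. Since Corollary~\ref{cor:preInf} is itself a consequence of Lemma~\ref{lem:becominginfluential3}, the two treatments are interchangeable; yours even dispenses with the unRegular-in-$c2$ part of the hypothesis, and your use of Lemma~\ref{lem:influentialMove} to keep $C.u$ fixed is an equally valid substitute for the paper's remark that R3 is disabled on an influential process.

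One step needs repair, though the fix is one line and uses facts you already have. To show that $C.r$ is unchanged you argue that an R1 move would ``collapse influentiality onto $r$ alone,'' invoking the unproven textual remark that only $r$ is influential after R1. As a blanket claim this is false: a process of an illegal tree or cycle with status $Power$ that is not activated during $cs$ remains influential in $c2$ no matter what $r$ does, so the contradiction you draw from ``$u \neq r$ influential in $c2$'' is not licensed in general. The correct argument, and the one the paper gives, is already inside your Case~2: there inLegalTree($u$) holds in $c1$ with $u \neq r$, so the $P$-chain from $u$ to $r$ yields Child.$r \neq \emptyset$, hence EndLastPhase($r$) fails and R1 is disabled in $c1$; since R1 is the only rule that modifies $C.r$, the root keeps its color. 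With that substitution your proof coincides with the paper's.
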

\begin{proof}
Let $u$ be a process that becomes unRegular and influential during $cs$.
Lemma~\ref{lem:becominginfluential3} establishes that if a process 
$u$ becomes influential during $cs$ then $u$ is not unRegular in $c2$.
So $u$ is influential and inLegalTree in $c1$. Therefore in $c1$, we have
$C.r=C.u=r\_color$.

\noindent
R1 is not enabled at $c1$ because inLegalTree($u$) is verified.
so $r$ has the same color $r\_color$ in $c1$ and in $c2$.
In $c1$, R3 is not enabled on $u$ because it is influential;
so $u$ has the same color, $r\_color$, in $c1$ and $c2$.
\end{proof}

%

\begin{lemma}
\label{lem:unSafe-preInfl}
Let $cs$ be a computation step from $c1$ reaching $c2$ 
where no unSafe process  executes RC4, RC5, or RC3. 
All unRegular and influential processes
in $c2$ were also unRegular and influential processes in $c1$.
\end{lemma}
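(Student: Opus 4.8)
I will prove the contrapositive in a pointwise fashion: assuming some process $u$ is unRegular and influential in $c2$ but not both in $c1$, I will exhibit an unSafe process that executes RC4, RC5 or RC3 during $cs$, contradicting the hypothesis. First observe $u \neq r$: since PowerParent($r$) requires $P.r\neq\perp$ it is always false, so $r$ is influential only when $S.r = Power$; but $r$ unRegular forces $\neg$inLegalTree($r$), i.e. $S.r = StrongE \neq Power$, so $r$ cannot be both. By Lemma~\ref{lem:becominginfluential3} read contrapositively, $u$ must already be influential in $c1$ (otherwise InLegalTree($u$) would hold in $c2$, contradicting unRegular($u$) in $c2$). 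If $P.u = \perp$ in $c1$ then, being influential, $S.u = Power$, so IllegalRoot($u$) and hence unRegular($u$) already hold in $c1$, and we are done; so assume $P.u\neq\perp$ in $c1$. Then $u$ is not Detached in $c1$, hence either unRegular (done) or inLegalTree in $c1$. The only surviving case is therefore: \emph{inLegalTree($u$) and influential($u$) in $c1$}, and a contradiction must be derived.

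In that case let $u = w_0$, $w_1 = P.w_0$, $\ldots$, $w_k = r$ be the $c1$ ancestor chain. All $w_i$ verify inLegalTree in $c1$, so by a downward induction from $r$ (whose status is never Erroneous, since $S.r\in\{Power,Working,StrongE\}$ and $S.r\neq StrongE$) no $w_i$ has an Erroneous status, and $\neg$Faulty($w_i$) for $i<k$. Since unRegular($u$) holds in $c2$, inLegalTree($u$) fails in $c2$, so the chain is disrupted in $c2$. As $\neg$Faulty is closed even when a process moves (Lemma~\ref{lem:FaultyIsaTrap}), the disruption is not a newly Faulty ancestor; it must be a parent-pointer change on some $w_i$, $i<k$ (a move R3, R7, RC4 or RC5), or $S.r$ becoming $StrongE$ (a move RC3). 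I rule out most of these: $w_i$ cannot execute R3 (it needs Detached($w_i$), hence $P.w_i = \perp$) nor R7 for $i\geq 1$ (it needs $Child.w_i=\emptyset$, but $w_{i-1}\in Child.w_i$); and $u = w_0$ cannot execute R3, R7, RC4 or RC5 because, being influential in $c2$, if it moves it must execute R5 (Lemma~\ref{lem:influentialMove}), which changes no parent pointer. Hence the disruption is caused by some $w_j$ with $1\leq j\leq k-1$ executing RC4 or RC5, or by $r$ executing RC3.

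Suppose $w_j$ ($1\leq j\leq k-1$) executes RC4 or RC5. Since $w_j$ is inLegalTree with $P.w_j\neq\perp$ and a non-Erroneous parent, Faulty($w_j$), IllegalLiveRoot($w_j$) and IllegalChild($w_j$) all fail, so the enabling predicate is StrongConflict($w_j$), Conflict($w_j$), or PowerFaulty($w_j$). I then claim $S.w_j\neq Power$: otherwise $w_{j-1}$ is a non-Faulty child with a $Power$ parent, so the Faulty disjunct $(S.P.w_{j-1}=Power)\wedge((Child.w_{j-1}\neq\emptyset)\vee(ph.w_{j-1}\neq ph.w_j))$ forces $Child.w_{j-1}=\emptyset$, hence $j=1$ and $w_{j-1}=u$; but then $u$ is influential with $S.P.u=Power\neq Working$, so it cannot be influential via PowerParent($u$) (the first clause needs $S.P.u=Working$, the second needs PowerParent($w_1$), impossible since $S.w_1 = Power\neq Idle$), hence $S.u=Power$, and then the Faulty disjunct $(S.P.u\neq Working)\wedge(S.u\neq Idle)$ makes $u$ Faulty, contradicting inLegalTree($u$). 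So $S.w_j\neq Power$, which excludes PowerFaulty($w_j$) and makes $w_j$ insideLegalTree (inLegalTree, non-$Power$, child $w_{j-1}$). Finally, StrongConflict($w_j$) or Conflict($w_j$) exhibits a neighbor $v$ of $w_j$ with $S.v=Power$ (so $v$ influential) and $C.v\neq C.w_j$; hence unSafe($w_j$) holds, contradicting the hypothesis. The case of $r$ executing RC3 is symmetric: $S.r\neq StrongE$ (inLegalTree) and $S.r\neq Power$ (the same argument, traced down the chain to $u$, shows a $Power$ root would make the corresponding child, ultimately $u$, Faulty or non-influential), so $S.r = Working$; as $Child.r\neq\emptyset$, $r$ is insideLegalTree, and Conflict($r$) yields a $Power$ (hence influential) neighbor of $r$ with color $\neq C.r$, so unSafe($r$) holds, again contradicting the hypothesis.

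\textbf{Main obstacle.} The delicate part is the last paragraph: showing that the process disrupting the legal-tree chain, whether a proper ancestor of $u$ (via RC4/RC5) or the root (via RC3), must have a non-$Power$ status and therefore be insideLegalTree, so that the hypothesis on unSafe processes applies. The subtle sub-case is when the disrupting process is $u$'s own parent (or $u$'s parent is $r$); ruling it out requires combining the ``$Power$-parent'' and ``parent-not-$Working$'' disjuncts of Faulty with the characterization $\mathrm{influential} \equiv (S=Power)\vee\mathrm{PowerParent}$, using crucially that $u$ itself is influential in $c1$. Everything else is routine guard bookkeeping plus the closure of $\neg$Faulty.
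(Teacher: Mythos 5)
Your proof is correct and follows essentially the same route as the paper: it ascends the $c1$ parent chain of $u$, uses the closure of $\neg$Faulty (Lemma~\ref{lem:FaultyIsaTrap}) together with Lemmas~\ref{lem:influentialMove} and~\ref{lem:becominginfluential3}, shows each relevant ancestor is a non-$Power$ insideLegalTree process, and concludes from the RC4/RC5/RC3 guards that the process disrupting the chain would be unSafe, contradicting the hypothesis. The only difference is presentational: the paper derives the contradiction by building an ever-growing chain of distinct insideLegalTree ancestors that never reaches $r$, while you take the full finite chain up to $r$ and pinpoint the disrupting move, which is the same argument phrased contrapositively.
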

\begin{proof}
Assume that $u_0$  becomes unRegular and  Influential during $cs$.

\medskip
\noindent
If $r$ is unRegular then $r$ is not influential as $S.r = StrongE$.
So $u_0$ is not $r$.
Lemma~\ref{lem:becominginfluential3} establishes that if a process 
$u_0 \neq r$ becomes influential during $cs$ then $u_0$ is not unRegular in $c2$.
So $u_0\neq r$ is influential and inLegalTree in $c1$. 
Therefore $P.u_0 \neq \perp$,
and $\neg$Faulty($u_0$) in $c1$.
According to Lemma~\ref{lem:FaultyIsaTrap}, 
we have $\neg$Faulty($u_0$) in $c2$. We name $u_1=P.u_0$ in $c1$. 
In $c1$, we have $S.u_1 \neq Power$ as $\neg$Faulty($u_0$) and 
influential($u_0$) are verified.

\medskip
\noindent
By induction, we will build an infinite series of distinct  processes $u_1$, $u_2$, 
$\cdots$ verifying
$u_i \neq r$, $P.u_i = u_{i+1}$, insideLegalTree($u_i$) 
for $i > 1$ in $c1$. 
As the number of processes is finite : we will establish a contradiction.

\medskip
\noindent
Either $u_0$ executes no rule during $cs$ or $u_0$ executes the rule R5 
(according to Lemma~\ref{lem:influentialMove}).
So $P.u_0=u_1$ in $c2$. We conclude that inLegalTree($u_1$) is true in $c1$ 
but not in $c2$. In $c1$, insideLegalTree($u_1$) is true  
(we have $S.u_1 \neq Power$ and $Child(u_1) \neq \emptyset$).
By hypothesis, $u_1$ does not execute the rule RC4, RC5 or RC3 during $cs$.

\medskip
\noindent
If $u_1 = r$ then $r$ would execute the rule RC3  during $cs$
because  inLegalTree($u_1$) is true in $c1$ 
but not in $c2$. 
We conclude that $u_1 \neq r$.
By definition of insideLegalTree($u_1$), we have  $P.u_1 \neq \perp$ (as $u_1 \neq r$)
and $\neg$Faulty($u_1$) in $c1$. Let us name $u_2=P.u_1$ in $c1$.
In $c1$, we have $S.u_2 \neq Power$ as $\neg$Faulty($u_1$) and 
$Child(u_1) \neq \emptyset$ are verified.
We also have  inLegalTree($u_2$) as  inLegalTree($u_1$) is verified.
By hypothesis, $u_1$ does not execute the rule RC4, or RC5 
during $cs$. The rule R3 or R7 are disabled at $u_1$ in $c_1$.
So $P.u_1=u_2$ in $c2$. We conclude that inLegalTree($u_2$) is true in $c1$ 
but not in $c2$. 
In $c1$, insideLegalTree($u_2$) is true  
(we have $S.u_2 \neq Power$ and $Child(u_2) \neq \emptyset$).

\medskip
\noindent
By induction, we build an infinite series of  processes $u_1$, $u_2$, 
$\cdots$ verifying
$u_i \neq r$, $P.u_i = u_{i+1}$, insideLegalTree($u_i$) 
for $i > 1$ in $c1$. 
If it exists $i \neq j$ such that $u_i = u_j$ then 
insideLegalTree($u_i$) is not verified.
So all processes are distinct in the infinite series of processes: 
there is a contradiction.
\end{proof}

\section{Proof of execution Fairness}
\label{sec:fairnessProof}

\subsection{Inside-safe executions}

\begin{definition}[inside-safe execution]
An execution where insideLegalTree processes execute only the 
rules R1 to R7 and the number of  processes verifying the 
predicate Faulty stays unchanged is an inside-safe execution.
\end{definition}

\begin{lemma}
\label{lem:inside-safe}
Any execution has an inside-safe suffix.
The predicate ($\neg$influential
$\vee$ inLegalTree) is closed along an inside-safe suffix.
\end{lemma}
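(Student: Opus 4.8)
The plan is to establish the two claims separately. For the first claim—that every execution has an inside-safe suffix—I would argue that both "bad behaviors" forbidden in an inside-safe execution can occur only finitely often along any execution. By Lemma~\ref{lem:FaultyIsaTrap}, $\neg$Faulty($u$) is closed on each process $u$, and moreover a process executing a rule cannot be Faulty in the resulting configuration; consequently the number of Faulty processes is nonincreasing along any execution, and once a process becomes non-Faulty it stays non-Faulty. Since there are $n$ processes, the count of Faulty processes can strictly decrease at most $n$ times, so after finitely many steps it is constant. For the other behavior, I would show that an insideLegalTree process can execute one of RC3, RC4, RC5 only finitely often: such a move by an insideLegalTree process is precisely the move of an unSafe process (by the discussion preceding Lemma~\ref{lem:unSafeClosed}, an insideLegalTree process enabled for RC4/RC5/RC3 verifies unSafe), and by Lemma~\ref{lem:unSafeClosed} $\neg$unSafe is closed. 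So once such a move is made, that process is no longer unSafe and cannot repeat it; again this bounds the number of such events by something finite (at most $n$, or $n$ per color epoch bounded by the finitely many Faulty-count changes). Intersecting the two "eventually stable" suffixes gives an inside-safe suffix.

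For the second claim, along an inside-safe suffix I must show $(\neg$influential $\vee$ inLegalTree$)$ is closed: if a process $u$ is influential and inLegalTree in $c1$, then it is influential or non-influential in $c2$, and if still influential I must show it is still inLegalTree. The key tool is Corollary~\ref{cor:unSafe-preInfl} / Lemma~\ref{lem:unSafe-preInfl}: in an inside-safe suffix no unSafe process executes RC4/RC5/RC3, so by Lemma~\ref{lem:unSafe-preInfl} every unRegular-and-influential process in $c2$ was already unRegular-and-influential in $c1$. Taking the contrapositive: if $u$ is influential and \emph{not} unRegular (hence inLegalTree, since every process is inLegalTree, Detached, or unRegular, and an influential process has $S.u=Power$ or PowerParent($u$), so it is not Detached) in $c1$, then in $c2$ $u$ is not simultaneously unRegular and influential; combined with $u$ not being Detached in $c2$ when influential, this forces inLegalTree($u$) in $c2$. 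I also need to handle the root: $r$ is always inLegalTree unless $S.r = StrongE$, and $S.r = StrongE$ makes $r$ non-influential, so the root satisfies the invariant trivially, and transitions of $r$ (R1, R2, RC1, RC2, RC3) must be checked against the remarks already recorded (e.g. "No inLegalTree process is influential after RC1 or RC2 by the root", "Only $r$ is influential after R1")—but note R1 and RC3 are excluded from inside-safe steps exactly when some process is insideLegalTree, which needs a small argument, or else after R1/RC3 no process is influential-and-inLegalTree-violating because the color-change wipes influence (Corollary~\ref{cor:preInf}).

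The main obstacle I anticipate is the bookkeeping in the first claim: showing the number of (insideLegalTree executes RC3/RC4/RC5) events is globally finite, not merely per-process-once-per-color-phase. The subtlety is that a process could cycle through colors (as $r$ repeatedly runs R1) and become unSafe again in a later color epoch. I would resolve this by first using the Faulty-count argument to pass to a suffix where the Faulty count is constant, then observing that within such a suffix an unSafe move by an insideLegalTree process would, via the mechanics of RC4/RC5, make a descendant unRegular and hence—tracking through Lemma~\ref{lem:FaultyIsaTrap}—either violate closure of $\neg$Faulty or else be absorbed; more cleanly, I would argue directly that in a suffix with constant Faulty count and after $\neg$unSafe has stabilized on every process (each process stabilizes $\neg$unSafe individually by Lemma~\ref{lem:unSafeClosed}, and there are finitely many processes), no insideLegalTree process ever executes RC3/RC4/RC5 again, which is exactly what inside-safe requires. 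The second, closure, claim should then be a relatively direct case analysis once Lemma~\ref{lem:unSafe-preInfl} is invoked, the only care being the root's rules and the exclusion of R1/RC3 in inside-safe steps where someone is insideLegalTree.
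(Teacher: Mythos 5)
Your proposal is correct and follows essentially the same route as the paper: the Faulty count stabilizes because $\neg$Faulty is closed (Lemma~\ref{lem:FaultyIsaTrap}), each process can execute RC3/RC4/RC5 while insideLegalTree at most once because such a move requires being unSafe, destroys unSafe, and $\neg$unSafe is closed (Lemma~\ref{lem:unSafeClosed}), and the closure of ($\neg$influential $\vee$ inLegalTree) is obtained by invoking Lemma~\ref{lem:unSafe-preInfl} since unSafe processes are insideLegalTree and hence execute only R1--R7 in an inside-safe suffix. The ``obstacle'' you anticipate (re-becoming unSafe in a later color epoch) is already ruled out by the closure of $\neg$unSafe, exactly as you conclude, so no extra bookkeeping is needed.
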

\begin{proof} 
Let $e$ be an execution.

\noindent
The predicate $\neg$Faulty($u$) is closed (Lemma
\ref{lem:FaultyIsaTrap}). So along $e$ the number of 
processes verifying the predicate Faulty can only decrease. 
We conclude that $e$ has a suffix where this number does not 
change.

\medskip
\noindent
The rules RC6, RC1 and RC2 are disabled on insideLegalTree processes.

\noindent
A  process verifying the predicate insideLegalTree cannot 
verified the predicates PowerFaulty, IllegalLiveRoot and
IllegalChild. 
Hence, an insideLegalTree process $u$ executes 
the rule RC4, RC5 or RC3 at $c1$ only if $u$ verifies 
StrongConflict or Conflict at $c1$ (i.e. $u$ is unSafe at $c1$).

\noindent
In the reached configuration $c2$, after a RC4, 
RC5 or RC3 move by $u$ we have
$P.u = \perp$ and $S.u \in Erroneous$;
so $u$ is not  unsafe in $c2$.
The predicate $\neg$Unsafe is closed (Lemma~\ref
{lem:unSafeClosed}). We conclude that any process $u$ executes at 
most one time the rule  RC4, RC5 or RC3  at a configuration where
insideLegalTree($u$) is verified.

\noindent
So $e$ has a inside-safe suffix, named $e'$.
Lemma~\ref{lem:unSafe-preInfl} establishes that the predicate
($\neg$influential
$\vee$ inLegalTree) is closed during any inside-safe execution.
\end{proof}

\noindent
The negation of the predicate
($\neg$influential($u$) $\vee$ inLegalTree($u$))
is the predicate (influential($u$) $\wedge$ unRegular($u$)).

\noindent
A process is silent  during an execution
if it does not execute any rule during this execution.

\begin{lemma}
Let $e1$ be an inside-safe execution $e1$ has a suffix 
where influential and unRegular processes are silent.
\end{lemma}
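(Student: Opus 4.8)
The goal is to show that, along an inside-safe execution $e_1$, there is a suffix after which no process that is simultaneously influential and unRegular ever executes a rule. The natural strategy is a potential-function / finiteness argument: show that the set (or a numerical measure) of influential-and-unRegular processes can only shrink along $e_1$, and that each such process can make only boundedly many moves before it stops being influential-and-unRegular permanently. Since by Lemma~\ref{lem:inside-safe} the predicate $(\neg\text{influential} \vee \text{inLegalTree})$ is closed along an inside-safe execution, once a process is unRegular (hence outside the legal tree) and influential, it can only leave that state by ceasing to be influential; and by Lemma~\ref{lem:unSafe-preInfl} no new influential-and-unRegular process is created along $e_1$ (since in an inside-safe execution no unSafe process executes RC4, RC5, or RC3). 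So the set $U$ of influential-and-unRegular processes in the first configuration of $e_1$ is a superset of the influential-and-unRegular processes in every later configuration.

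The core of the argument is then: each process $u \in U$ executes only finitely many rules while remaining influential-and-unRegular. First I would classify, using Lemma~\ref{lem:influentialMove}, which rules an influential process $u \neq r$ can execute: only R5 (the lemma says if $u$ is influential in $c_2$ it executed R5), and of course the recovering rules RC4/RC5/RC6/R7 — but those set $P.u := \perp$ and $S.u \neq Power$, making $u$ non-influential immediately afterward. So while $u$ stays influential-and-unRegular it can execute at most R5, and R5 sets $S.u := Power$. A process with $S.u = Power$ and $P.u = \perp$ (which an unRegular process with no legal-tree ascendant path has, or rather: R5 requires $P.u \neq \perp$ via NewPhase, so after R5 we have $S.u = Power$ with a parent) — here I would track the $P$-chain: after R5, $u$ is a $Power$ process whose parent is unRegular; a $Power$ process cannot execute R3, R4, R5, R6, R7 again without first losing the $Power$ status (EndFirstPhase leads to R6/R7 which set $S.u := Idle$, and then the only way back to $Power$ is another R5, which requires NewPhase and hence $P.u \neq \perp$ with a $Working$ parent). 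The point is that for $u$ to regain influence through $Power$ it needs its parent chain to cooperate, but those ancestors are themselves unRegular, fixed in number, and each constrained the same way — so an infinite descent / well-founded ordering on the unRegular subtrees applies, much as in the proof of Lemma~\ref{lem:unSafe-preInfl}.

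Concretely, I would define a measure such as: the number of influential-and-unRegular processes, ordered lexicographically with (say) the total number of R5-moves still available, or else argue directly by an induction on the structure of the unRegular "illegal trees/cycles": a process at the bottom of such a structure (a leaf, or the first repeated node of a cycle) has no descendant to feed it a NewPhase, so once it ceases to be $Power$ it is done; then peel off layers upward. Combined with the fact that these structures are not replenished (closure of $\neg$unSafe and Lemma~\ref{lem:unSafe-preInfl}) and are never re-entered from the legal tree along an inside-safe execution, after finitely many computation steps no influential-and-unRegular process is enabled, and by closure the predicate stays false; taking that point as the start of the suffix gives the claim. The main obstacle I expect is the bookkeeping in the induction: carefully establishing that a $Power$ unRegular process, after executing R6 or R7, cannot be re-activated to $Power$ infinitely often — this requires tracking the phase variables $ph.u$ versus $ph.P.u$ along the frozen unRegular $P$-chain and showing the NewPhase guard can fire only finitely often there, which is essentially the same well-foundedness exploited in Lemma~\ref{lem:unSafe-preInfl} but now applied to ancestor chains that may be cyclic rather than rooted.
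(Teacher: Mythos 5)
The essential core of your plan is exactly the paper's proof, and it already suffices: by Lemma~\ref{lem:influentialMove}, a process $u\neq r$ that is still influential after executing a rule must have executed R5, and R5 cannot be followed by another R5 (its action sets $S.u:=Power$ while its guard, through NewPhase, requires $S.u=Idle$); moreover any rule that restores $Idle$ destroys influence. Hence a process that remains influential and unRegular forever executes at most one rule along $e_1$. For every other process, the closure of $(\neg\textrm{influential}\vee\textrm{inLegalTree})$ along an inside-safe execution (Lemma~\ref{lem:inside-safe}, via Lemma~\ref{lem:unSafe-preInfl}) -- which you cite at the outset -- guarantees that once it ceases to be influential-and-unRegular it never re-enters that class, so its later moves are simply irrelevant to the lemma. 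Taking a suffix beyond the finitely many moves made by processes of the first kind gives the claim; no potential function is needed.

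Where your proposal goes astray is the entire second half: the attempt to bound, by a well-founded induction over the unRegular structures with phase bookkeeping, how often a process could regain the $Power$ status through NewPhase along its frozen parent chain. This is (i) unnecessary, because a process that loses influence while unRegular is permanently out of the class by the closure argument above, and the lemma does not require bounding its subsequent activity; and (ii) unjustified as sketched, since the infinite-descent argument of Lemma~\ref{lem:unSafe-preInfl} you propose to reuse relies crucially on insideLegalTree to make the ancestor chain acyclic and rooted at $r$, whereas on unRegular structures the $P$-chain may be a cycle -- precisely the case you defer as ``the main obstacle'' without resolving it. Relatedly, your concluding claim that ``after finitely many computation steps no influential-and-unRegular process is enabled'' is stronger than the statement and than what holds: such a process (e.g., a $Power$ process of the wrong color) may remain influential and unRegular, and even enabled, forever; the lemma only asserts it is eventually silent. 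Dropping the detour and keeping your first paragraph, completed by the observation that R5 cannot recur without an intervening influence-destroying move, yields the paper's argument.
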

\begin{proof}
An influential process is not influential after any move 
except R5 (Lemma \ref{lem:influentialMove}).
According the R5 guard and action, the executed rule 
after a R5 move by a process cannot be R5. 

\noindent
So during $e1$, a process staying forever influential 
and unRegular executes at most a single rule (R5).
\end{proof}

\begin{lemma}
Let $e2$ be the inside-safe execution suffix where 
influential and unRegular processes are silent.
$e2$ has a suffix where no process executes RC4 and RC3.
\end{lemma}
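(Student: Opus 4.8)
We are given an inside-safe execution $e2$ on whose suffix (which we may assume is all of $e2$, by replacing $e2$ by that suffix) every process that is both influential and unRegular is silent. The plan is to show that, from some point on in $e2$, no process ever executes RC4 or RC3 again. Since an execution is infinite (or we are done), and each such move permanently alters the global picture in a way that cannot recur, the natural route is to identify a bounded quantity that strictly decreases with every RC4 or RC3 move and never increases along $e2$; once it hits its minimum, no further RC4/RC3 move is possible. The two obvious candidates are: the number of processes $u$ with $S.u = StrongE$ (affected by RC3, which sets $S.r := StrongE$, and by RC4, which sets $S.u := StrongE$), and the structural fact, already extracted in Lemma~\ref{lem:inside-safe}, that every process executes RC4, RC5 or RC3 \emph{at most once} from an insideLegalTree configuration.

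**Handling RC3 (root move).** First I would argue RC3 is executed only finitely often, hence not at all on some suffix. RC3 requires $\mathrm{Conflict}(r)$, and its effect is $S.r := StrongE$. By the Observation, $r$ never becomes Faulty, IllegalRoot, or IllegalChild, so once $S.r = StrongE$ the only $r$-rules that can fire are RC2 and R1 — both of which require the root to first leave the $StrongE$ status by... wait, RC2 sets $S.r := Working$ and requires Detached($r$) $\wedge$ StrongEReady($r$), and R1 requires $Ok(r)$ which excludes Conflict($r$). So after an RC3 move the root cannot do RC3 again until it has executed RC2 (or RC1) to return to $Working$; but each such "round trip" $StrongE \to Working \to \dots$ forces the root through a Detached-and-StrongEReady configuration, which by the error-handling structure happens only finitely often along $e2$ (here I would invoke that along the inside-safe suffix the count of Faulty processes is constant and no new influential-unRegular activity occurs, so the supply of conflicts reaching $r$ is exhausted). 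I expect this to require a short separate sub-argument; it is the most delicate part, because bounding how often the root re-enters $StrongE$ is exactly the kind of liveness fact the whole section is building toward.

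**Handling RC4 (non-root moves).** For RC4, a process $u \neq r$ executes it only when $\mathrm{StrongConflict}(u)$ holds, and the action sets $S.u := StrongE$, $P.u := \perp$. After this, $u$ is Detached with $S.u = StrongE$, so $u$ can only execute RC6 (returning to $Idle$) or R3 (reconnecting) before it could possibly satisfy StrongConflict again. The key observation is that StrongConflict($u$) requires two closed neighbours with the $Power$ status but differing colours, i.e.\ it requires an influential process of each colour near $u$; once the execution is on the suffix where influential-and-unRegular processes are silent and ($\neg$influential $\vee$ inLegalTree) is closed, every influential process that survives is inLegalTree and hence (by Corollary~\ref{cor:preInf} and the fact that $r$'s colour is eventually fixed — no R1/RC3 on the suffix) carries colour $C.r$. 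So on that suffix two $Power$ neighbours of differing colours cannot both be inLegalTree, forcing at least one of them to be unRegular-and-influential, contradicting silence. Thus StrongConflict eventually never holds anywhere, and RC4 is never enabled again.

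**Assembling the proof.** Concretely I would: (i) pass to the suffix of $e2$ on which RC3 is never executed, using the root round-trip argument above (and Lemma~\ref{lem:inside-safe} to keep the Faulty-count constant); (ii) on that suffix, $r$'s colour is constant and every influential process is inLegalTree with colour $C.r$; (iii) conclude StrongConflict($u$) fails for every $u$ from some configuration on, because it would need two differently-coloured $Power$ processes, at least one unRegular-and-influential, contradicting silence; (iv) hence RC4 is never enabled on that further suffix. The intersection is the desired suffix where no process executes RC4 or RC3. The main obstacle, as noted, is step (i): showing the root re-enters the $StrongE$ status only finitely often, since RC3 can be re-enabled after an RC2 recovery; I would handle it by observing that each such recovery consumes a conflict that, on an inside-safe suffix with frozen Faulty-count and silent influential-unRegular processes, cannot be regenerated, so only finitely many remain.
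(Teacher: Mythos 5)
Your proposal does not go through as written, although the counting spirit of your opening paragraph is the right one. The main error is in the RC4 part: you claim that on the suffix ``every influential process that survives is inLegalTree'' and hence has colour $C.r$, so that StrongConflict eventually fails everywhere and RC4 is never again enabled. This misreads the hypothesis on $e2$: the suffix guarantees that influential and unRegular processes are \emph{silent}, not that they are absent. Precisely because such a process is silent, it keeps its $Power$ status and its colour forever, and it can therefore remain a StrongConflict witness for its neighbours throughout $e2$; RC4 moves can still occur on $e2$, and the lemma only asserts that there are finitely many of them, i.e.\ a suffix without them. Your argument also leans on $r$'s colour being eventually fixed (``no R1 on the suffix''), which is false: R1 fires infinitely often in normal operation, since the algorithm alternately builds $0$- and $1$-coloured trees. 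The paper's argument needs neither claim: if $u$ executes RC4, then among the two differently-coloured $Power$ processes of $N[u]$ at least one, say $v$, is not inLegalTree (all inLegalTree $Power$ processes have colour $C.r$), hence is influential and unRegular, hence silent forever; moreover $v\neq u$ (otherwise $u$ itself would be silent), so $v$ is a neighbour of $u$ that keeps the $Power$ status forever. Consequently StrongEReady($u$) never holds after the move, so $u$ (now $StrongE$ and detached) can execute neither RC6 nor R3 nor anything else: each process executes RC4 at most once, giving at most $n$ RC4 moves and hence the desired suffix.

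The RC3 part is incomplete by your own admission: you try to bound the number of times the root re-enters $StrongE$ through an unspecified ``the supply of conflicts is exhausted'' argument, worrying about RC3--RC2 round trips. The missing observation is the same as above and removes the round-trip issue entirely: the neighbour $v$ witnessing Conflict($r$) at the RC3 step (either $C.v\neq C.r$, or $Child.r=\emptyset$, in which case no non-root process is inLegalTree) is influential and unRegular, hence silent for the rest of $e2$ and permanently of $Power$ status. Therefore StrongEReady($r$) never holds after the RC3 move, RC2 can never fire, and $r$ remains $StrongE$ and disabled forever; so $r$ executes RC3 at most once along $e2$. Without this use of the silent witness, step (i) of your plan remains an unproved liveness claim, and step (iii)--(iv) rest on a premise that is simply not available.
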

\begin{proof}
Let $cs$ be a computation step of $e$ from $c1$ to $c2$.

\medskip
\noindent
Let $u$ be a process executing RC4 during $cs$.
In $c1$, two  processes of N[$u$], $v1$ and $v2$, 
have the $Power$ status and do not have the same color. 
So one of them, denoted in the sequel $v$,
verifies the predicate  (influential($v$) 
$\wedge$ unRegular($v$))  in $c1$. 
According to the definition of $e2$, 
$v$ keeps forever it status, so $u$  is 
disabled along $e$ from $c2$. 
We conclude that $e$ contains 
at most $n$ steps where a process executes RC4.

\medskip
\noindent
Assume that $r$ executes RC3 during $cs$.
{\em first case.} in $c1$, 
$r$ has a neighbor $v1$ having the Power status 
but not the $r$'s color.
{\em second case.} in $c1$, 
$r$ has a neighbor $v2$ having the Power status 
and $Child.r = \emptyset$ (i.e. no process except $r$ verifies the predicate inLegalTree). So
$v1$ or $v2$, named $v$ in the sequel, verifies the predicate  (influential($v$) 
$\wedge$ unRegular($v$))  in $c1$.
\noindent
According to the definition of $e2$, 
$v$ keeps forever its status, so $r$  is 
disabled along $e$ from $c2$. 
We conclude that $e$ contains 
at most one step where $r$ executes RC3.
\end{proof}

\begin{corollary}
\label{cor:PowerFaultyRoot}
Let $e3$ be the inside-safe execution suffix where 
influential and unRegular processes are silent, 
and no process executes RC4 and RC3.
$e3$ has suffix where the predicates $\neg$PowerFaulty and
$S.r \neq StrongE$ are closed.
\end{corollary}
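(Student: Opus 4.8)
The plan is to trace the only sources of the $StrongE$ status. By definition, $e3$ executes neither RC3 nor RC4, and a quick inspection of the rules shows that RC3 (on $r$) and RC4 (on a process $u \neq r$) are the only ones whose action assigns $StrongE$ to a status variable. So the first step is to record the auxiliary fact that along $e3$ no process ever becomes $StrongE$: for every step $cs$ from $c1$ to $c2$ of $e3$ and every process $v$, if $S.v \neq StrongE$ in $c1$ then $S.v \neq StrongE$ in $c2$. Equivalently, the set of $StrongE$ processes is non-increasing along $e3$.

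From this, closure of $S.r \neq StrongE$ is immediate: the unique rule that could perform $S.r := StrongE$ is RC3, which is never executed in $e3$, so once $S.r \neq StrongE$ holds it keeps holding — not only on some suffix but on all of $e3$.

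For closure of $\neg$PowerFaulty I would argue by contradiction. Suppose a step $cs$ from $c1$ to $c2$ of $e3$ makes PowerFaulty($u$) pass from false to true. Unfolding the predicate, $S.u = Power$ in $c2$ and there is a neighbour $v \in N(u)$ with $S.v = StrongE$ in $c2$; by the auxiliary fact, $S.v = StrongE$ already in $c1$. If $S.u$ were already $Power$ in $c1$, then PowerFaulty($u$) would hold in $c1$ (witnessed by that same $v$), a contradiction; hence $S.u \neq Power$ in $c1$ and $u$ acquires the $Power$ status during $cs$. The only rules assigning $Power$ are R1 (which applies to $r$) and R5 (which applies to $u \neq r$), and in both cases the guard contains the conjunct ``$\forall w \in N(u),\ S.w \neq StrongE$'' evaluated at $c1$; this contradicts $S.v = StrongE$ in $c1$. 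So no step of $e3$ turns $\neg$PowerFaulty from false to true, i.e. $\neg$PowerFaulty is closed along $e3$. Taking $e3$ itself as the required suffix finishes the proof.

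The argument is short and I do not expect a real obstacle; the only thing that needs care is the bookkeeping — confirming that RC3 and RC4 are genuinely the sole producers of $StrongE$, that R1 and R5 are the sole producers of $Power$, and that both $Power$-producing rules carry the ``no $StrongE$ neighbour'' guard — and keeping the $u = r$ case (handled by R1) separate from the $u \neq r$ case (handled by R5) in the last step.
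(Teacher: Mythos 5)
Your proof is correct and follows essentially the same route as the paper: first observe that no process acquires $StrongE$ along $e3$ (since RC3 and RC4 are never executed), deduce closure of $S.r \neq StrongE$, and then rule out a new PowerFaulty process by noting that R1 and R5 both require no $StrongE$ neighbour, so $S.u = Power$ must already hold in the previous configuration. The only cosmetic difference is that you phrase the PowerFaulty step as a contradiction while the paper argues directly backwards from $c2$ to $c1$; the content is identical.
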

\begin{proof}
Let $cs$ be a computation step of $e3$ from $c1$ to $c2$.

\medskip
\noindent
Assume that $S.r = StrongE$ in $c2$.
During $e3$, no process takes the status $StrongE$.
So in $c1$, we have $S.r = StrongE$. 
The predicate $S.r \neq StrongE$ is closed along $e3$.

\medskip
\noindent
Assume that PowerFaulty($u$) is verified  in $c2$.
During $e3$, no process takes the status $StrongE$.
So, $u$ has a neighbor, $v$,  having the $StrongE$ status 
in $c1$.
$u$ cannot execute the rule R1 or R5 to take the $Power$ 
status during $cs$ according to the rule guards.
We conclude that in $c1$, $S.u = Power$: PowerFaulty($u$) 
is verifies in $c1$.
\end{proof}

\begin{lemma}
Let $e3$ be the inside-safe execution suffix where influential 
and unRegular processes are silent, no process executes 
RC4 and RC3, and the predicates $\neg$PowerFaulty and
$S.r \neq StrongE$ are closed.
$e3$ has suffix where $r$ does not execute the rule RC1 and RC2.
\end{lemma}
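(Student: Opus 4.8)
The plan is to bound the number of RC1- and RC2-moves of $r$ along $e3$: I will show that $r$ performs each of these two rules at most once, so that the suffix of $e3$ starting right after the last such move (if any) has the desired property; if $r$ never executes RC1 or RC2, then all of $e3$ works.

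Consider RC1 first. Its guard contains PowerFaulty($r$), and by definition PowerFaulty($r$) implies $S.r = Power$. Suppose $r$ executes RC1 during a step $c1 \to c2$ of $e3$. The action sets $S.r := Working$, hence $S.r \neq Power$ in $c2$, so $\neg$PowerFaulty($r$) holds in $c2$. By the definition of $e3$ (see Corollary~\ref{cor:PowerFaultyRoot}), $\neg$PowerFaulty is closed along $e3$; therefore $\neg$PowerFaulty($r$) holds in every configuration of $e3$ after $c2$, and RC1 --- whose guard requires PowerFaulty($r$) --- is disabled at $r$ from $c2$ on. Hence $r$ executes RC1 at most once along $e3$. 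The argument for RC2 is symmetric: its guard contains StrongEReady($r$), which implies $S.r = StrongE$; after an RC2-move $S.r := Working$, so $S.r \neq StrongE$ in the reached configuration, and since $S.r \neq StrongE$ is closed along $e3$, RC2 stays disabled at $r$ afterwards. So $r$ executes RC2 at most once along $e3$.

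Putting the two bounds together, there is at most one step of $e3$ after which neither RC1 nor RC2 is ever again enabled at $r$; in fact, since any RC1- or RC2-move sets $S.r$ to $Working$, which is neither $Power$ nor $StrongE$ and --- by the two closures --- stays that way, $r$ executes at most one move among RC1 and RC2 together. Taking the suffix of $e3$ that begins after this last move yields the claimed suffix. I do not anticipate a genuine obstacle here; the only point needing care is to apply the closures of $\neg$PowerFaulty and of $S.r \neq StrongE$ in the right direction, namely to observe that once $r$ leaves the $Power$ (resp. $StrongE$) status it can never return to a configuration satisfying the corresponding guard --- this is exactly what Corollary~\ref{cor:PowerFaultyRoot} provides, and it also rules out $r$ oscillating between RC1- and RC2-moves.
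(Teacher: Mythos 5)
Your proof is correct and follows essentially the same route as the paper: both arguments rest on the closure of $\neg$PowerFaulty and of $S.r \neq StrongE$ (Corollary~\ref{cor:PowerFaultyRoot}) together with the fact that an RC1- or RC2-move sets $S.r := Working$, so each of these rules can fire at most once and the suffix after the last such move works. The paper merely adds a case analysis showing both predicates also hold after an R1 or R2 move of $r$, which is not needed for the stated conclusion and which your more direct "each rule permanently disables itself" argument avoids.
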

\begin{proof}
Let $cs$ be a step of $e3$ from $c1$ to reach $c2$ where
$r$ executes a rule. 
If $r$ executes RC1 or RC2 then the predicate PowerFaulty($r$) or
the predicate $S.r = StrongE$ is verified in $c1$.
In the sequel, we will prove that these both predicates are no verified after any $r$ move.

\medskip
\noindent
If $r$ executes the rule RC1, R2, or RC2 then in $c2$, we have
$S.r = Working$. So,
$\neg$PowerFaulty($r$) and $S.r \neq StrongE$ are verified 
in $c2$;
these both predicates stay verified along the execution 
$e3$ from $c2$ (Corollary~\ref{cor:PowerFaultyRoot}). 
So RC1 and RC2 are not executed along $e3$ 
from $c2$.

\medskip
\noindent
If $r$ executes the rule R1 then in $c1$, $\neg$PowerFaulty($r$) 
and $S.r \neq StrongE$ are verified. 
These both 
predicates stay verified along the execution 
$e3$ from $c2$ (Corollary~\ref{cor:PowerFaultyRoot}).
So RC1 and RC2 are not executed along $e3$ 
from $c2$.

\noindent
We conclude that $r$ executes at most one time the rule 
RC1 and the rule RC2 along $e3$.
\end{proof}

\subsection{Safe executions}

\begin{definition}[safe execution]
\label{def:safe-execution}
A safe execution is an execution where insideLegalTree processes 
execute only the rules R1 to R7, influential and unRegular 
processes are silent, the predicate $\neg$PowerFaulty is closed 
and  the rule RC1, RC4, RC3 and RC2 are never executed.
\end{definition}

A safe execution is an inside-safe execution.

\begin{corollary}
\label{cor:safe}
Any execution has a safe suffix.
\end{corollary}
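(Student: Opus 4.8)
The statement is a direct consequence of the chain of lemmas built up in this subsection, each of which refines a suffix of the execution by adding one more of the conditions listed in Definition~\ref{def:safe-execution}. The plan is to fix an arbitrary execution $e$ and peel off nested suffixes one at a time.

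First, by Lemma~\ref{lem:inside-safe}, $e$ has an inside-safe suffix, call it $e_1$. Then I would apply, in order: the lemma stating that an inside-safe execution has a suffix where influential and unRegular processes are silent (applied to $e_1$, producing a suffix $e_2$); the lemma stating that such an execution has a suffix where no process executes RC4 or RC3 (applied to $e_2$, producing $e_3$); Corollary~\ref{cor:PowerFaultyRoot} (applied to $e_3$, producing a suffix where $\neg$PowerFaulty and $S.r \neq StrongE$ are closed); and finally the lemma stating that such an execution has a suffix where $r$ executes neither RC1 nor RC2 (producing the final suffix $e_\star$). A suffix of a suffix is again a suffix of $e$, so $e_\star$ is a suffix of $e$.

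The remark that makes this chaining valid is that every property occurring in Definition~\ref{def:safe-execution} --- ``insideLegalTree processes execute only R1--R7'', ``influential and unRegular processes are silent'', ``$\neg$PowerFaulty is closed'', and ``a given rule is never executed'' --- is inherited by every suffix of an execution satisfying it (for the closure property, a predicate closed along $e'$ is closed along every suffix of $e'$; for the ``never executed'' and ``is silent'' properties this is immediate). Hence each refinement step preserves all properties accumulated in the earlier steps, and $e_\star$ simultaneously satisfies all of them. Since RC1 and RC2 are rules of $r$ only, ``$r$ executes neither RC1 nor RC2'' amounts to ``RC1 and RC2 are never executed in $e_\star$''; combined with ``no process executes RC4 or RC3'' this yields that none of RC1, RC2, RC3, RC4 is ever executed in $e_\star$. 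Therefore $e_\star$ is a safe execution in the sense of Definition~\ref{def:safe-execution}.

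The only point requiring care is the suffix-inheritance bookkeeping just described; there is no combinatorial difficulty left, as all the real work is done in the preceding lemmas. One should also observe in passing that, because $e_1$ is inside-safe, the number of Faulty processes is already constant along $e_1$, hence along $e_\star$ --- which is precisely what is needed for the accompanying remark that a safe execution is an inside-safe execution.
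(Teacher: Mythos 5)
Your proof is correct and follows exactly the route the paper intends: the corollary is stated without an explicit proof precisely because it is the chaining of Lemma~\ref{lem:inside-safe}, the two subsequent lemmas, Corollary~\ref{cor:PowerFaultyRoot}, and the final lemma on RC1/RC2, in the order you give. Your added bookkeeping that each of the accumulated properties is inherited by suffixes is the only detail the paper leaves implicit, and you handle it correctly.
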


\begin{lemma}
\label{lem:ConflitRC5}
Let $e$ be a safe execution 
$e$ has a suffix where no process executes the rule RC5 at a 
configuration where it verifies Conflict and inLegalTree.
\end{lemma}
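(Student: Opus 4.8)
The plan is to show that in a safe execution $e$ only finitely many computation steps contain a process executing RC5 at a configuration where it verifies Conflict($u$) $\wedge$ inLegalTree($u$); the wanted suffix is then the one starting right after the last such step.

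First I would unfold the structure of such a step. Suppose $u$ executes RC5 in the step from $c_1$ to $c_2$ while Conflict($u$) and inLegalTree($u$) hold in $c_1$ (so $u\neq r$). Because RC5 is guarded by $\neg$StrongConflict($u$) and Conflict($u$) provides a neighbour $v$ with $S.v=Power$ and $C.v\neq C.u$, the predicate $\neg$StrongConflict($u$) forces all $Power$ neighbours of $u$ to share the colour of $v$; in particular $u$ has no $Power$ neighbour of colour $C.u$, hence $S.u\in\{Idle,Working\}$. Moreover inLegalTree($u$) propagates upwards: the whole ancestor path of $u$ up to $r$ consists of non-Faulty, non-Erroneous processes of colour $C.r$, so $C.u=C.r$ and $S.P.u\in\{Idle,Working\}$. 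Thus $v$ has the colour opposite to $C.r$; it is influential (its status is $Power$), it is not Detached (again because its status is $Power$) and it is not inLegalTree (wrong colour); hence $v$ is unRegular, so $v$ is influential and unRegular in $c_1$.

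Next I would invoke Definition~\ref{def:safe-execution}: along a safe execution, influential and unRegular processes are silent. Hence $v$ makes no move in $e$, so $S.v$ and $C.v$ stay constant; one checks that $v$ remains influential and unRegular (its status is frozen and it never acquires a clean ancestor path to $r$), so $v$ is frozen throughout $e$ --- a permanent ``garbage $Power$ process'' adjacent to $u$. The core of the argument is then a counting step. After the RC5 move $u$ is detached with $S.u=WeakE$, hence no longer inLegalTree; to trigger the forbidden move again $u$ must first re-enter the legal tree, which forces it to execute R3 at some later configuration, hence to satisfy Ok($u$) and in particular $\neg$StrongConflict($u$). But as long as $C.u\neq C.v$ and $u$ has a $Power$ neighbour of colour $C.u$ --- e.g. a genuine leaf of the legal tree, which is coloured $C.r$ --- the pair formed by that leaf and $v$ witnesses StrongConflict($u$), so R3 is disabled on $u$; in that state only RC4 is enabled on $u$ (RC6 being blocked by $v$, whose colour differs from $C.u$), and RC4 is never executed in a safe execution, so $u$ freezes. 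A short case analysis on whether $r$ recolours (rule R1) between two occurrences, and on which colour $u$ carries at each of them, shows that $u$ can be in the forbidden situation only finitely often: it either freezes in a StrongConflict state, or re-attaches to a frozen garbage branch (and freezes there), or is kept out of the legal tree by $v$ during every colour epoch in which $v$ would blame it. Summing over the $n$ processes bounds the number of forbidden RC5 moves, and the suffix after the last one is as required.

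I expect this last counting step to be the main obstacle. The delicate point is the interaction between the possibly unbounded sequence of recolourings of $r$ (each R1 changes $C.r$ and restarts a full construction) and the frozen blamer $v$: one must argue, without circularity, that whenever $u$ is about to re-join the legal tree in the colour epoch where $v$ blames it, the guard Ok($u$) of R3 is killed by StrongConflict($u$) (because $v$ and any legal $Power$ neighbour then carry opposite colours), so $u$ can re-enter the tree only in the other epoch, where $v$ does not blame $u$; and from there the normal dismantling of the now old-coloured subtree detaches $u$ and again prevents re-entry while $C.r\neq C.v$. Carefully ruling out an infinite detach/re-join alternation for $u$, together with the sub-cases where $u$ re-attaches to $v$ itself, is the technical heart of the proof.
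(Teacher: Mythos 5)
Your identification of the frozen witness is exactly the paper's starting point: since $u$ is inLegalTree, $C.u=C.r=r\_color$, and the Conflict witness $v$ is a $Power$ neighbour of colour $\overline{r\_color}$ that is influential and unRegular, hence silent forever by the definition of a safe execution. The gap is in what you do next. The paper needs no counting over colour epochs of $r$: it observes that after $u$'s RC5 move the predicate $(C.u=\overline{r\_color})\vee(P.u=\perp)$ holds and is preserved forever, because the only way to violate it is an R3 move of $u$ towards a $Power$ neighbour $w$ of colour $r\_color$, and in such a configuration the pair $(w,v)$ witnesses StrongConflict($u$), so Ok($u$) fails and R3 is disabled. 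Consequently, at any later configuration where $u$ verifies Conflict and inLegalTree, the condition $P.u\neq\perp$ forces $C.u=\overline{r\_color}$; the Conflict witness is then a $Power$ neighbour of colour $r\_color$, and together with $v$ it yields StrongConflict($u$), so the RC5 guard is false (only RC4 is enabled). Hence each process performs the forbidden move at most once, and the suffix after the last such move over the finitely many processes proves the lemma.

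Your proposal never reaches such a conclusion: you explicitly leave the finiteness argument --- ruling out an infinite detach/re-join alternation, the interaction with R1 recolourings of $r$, the sub-case where $u$ re-attaches to $v$ itself --- as an open obstacle, so the proof is incomplete as written. Moreover, several intermediate claims in the sketch are unjustified: you assume $u$ has a ``genuine leaf of the legal tree'' of colour $C.r$ among its neighbours (no such neighbour need exist), and you conclude that $u$ ``freezes'' once only RC4 is enabled, whereas StrongConflict($u$) may cease to hold as soon as that legal neighbour loses its $Power$ status, after which $u$ can move again. The missing idea is precisely the closed invariant above: it replaces the whole epoch analysis, is insensitive to later executions of R1 by $r$, and directly gives the at-most-once bound per process that the statement requires.
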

\begin{proof}
Assume that a  process $u \neq r$ 
executes the rule RC5 during the step $cs$ from $c1$ to $c2$ along $e$;
moreover, in $c1$ we have  inLegalTree($u$).
So $u$ has a neighbor $v$ verifying $S.v = Power$ 
and $C.v = \overline{r\_color}$ ($r\_color$ being the color of $r$ in $c1$).
According to the definition of a safe-execution, $v$ is silent 
along $e$ because $v$ verifies the predicates influential and 
unRegular. 
After $u$ move, $u$ verifies $P.u = \perp$. 
Assume that along $e$ from $c2$ there is a computation step from $c3$ to $c4$ 
where $(C.u = \overline{r\_color}) \vee (P.u = \perp)$ is verified in $c3$ but not in $c4$. 
During $cs$, $u$ must have execute the rule R3 to choose as a parent a process having 
the color  $r\_color$.
In $c3$, $u$ has a neighbor $w$ 
verifying  $C.w = r\_color$  
and $S.w = Power$, $u$ verifies the predicate 
StrongConflict not  the R3 guard (because  $S.v = Power$ and $C.v = \overline{r\_color}$).
So $u$ never executes the rule R3 to set its color to $r\_color$.
We conclude that $u$ verifies forever 
$(C.u = \overline{r\_color}) \vee (P.u =\perp)$.

Assumes that $u$ verifies the predicate Conflict and inLegalTree
in the configuration $c3'$ reached along $e$ from $c2$.
We have established that $C.u = \overline{r\_color}$ in $c3'$.
So $u$ has a neighbor $w$ verifying  $C.w = r\_color$  
and $S.w = Power$; we conclude that $u$ does not verify the RC5 guard but the RC4 guard
because  $S.v = Power$ and $C.v = \overline{r\_color}$ in $c_3'$. 

We conclude that $u$ executes at most one time the rule RC5 at a 
configuration where it verifies Conflict and inLegalTree along $e$.
\end{proof}

\begin{lemma}
\label{lem:FaultyRC5}
Let $e$ be a safe execution 
$e$ has a suffix where no process executes the rule RC5 at a 
configuration where it verifies PowerFaulty or Faulty.
\end{lemma}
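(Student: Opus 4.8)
The plan is to reuse the finite-counting scheme that proved Lemma~\ref{lem:ConflitRC5}: I will show that along a safe execution $e$ only finitely many RC5 moves are performed from configurations satisfying PowerFaulty or Faulty, and then take the suffix of $e$ that starts after the last such move (if there is none, $e$ itself works).

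The key observation is that the action of RC5 sets $S.u := WeakE$ and $P.u := \perp$. Hence, in the configuration $c2$ reached by \emph{any} RC5 move of a process $u$, we have $S.u \neq Power$, so $\neg$PowerFaulty($u$) holds in $c2$; and since $\neg$PowerFaulty is closed along a safe execution (this is part of Definition~\ref{def:safe-execution}), $\neg$PowerFaulty($u$) holds in every later configuration of $e$. Symmetrically, because $u$ executed a rule during the step, Lemma~\ref{lem:FaultyIsaTrap} gives $\neg$Faulty($u$) in $c2$, and $\neg$Faulty($u$) is closed, so it holds forever afterwards. Therefore, immediately after its first RC5 move a process never again satisfies PowerFaulty or Faulty; in particular each process performs at most one RC5 move from a configuration verifying PowerFaulty or Faulty, so at most $n$ such moves occur along $e$, and the suffix past the last of them is the required one.

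I do not expect a genuine obstacle here: the whole argument rides on the two closure properties --- $\neg$PowerFaulty closed on safe executions, and $\neg$Faulty closed in general (Lemma~\ref{lem:FaultyIsaTrap}) --- together with the elementary remark that the RC5 assignment falsifies both guards. The only point deserving care is to check that a process cannot re-enter a PowerFaulty or Faulty configuration after leaving it, which is exactly what the closure properties forbid, and to note that the $\neg$PowerFaulty closure really is available in this context precisely because $e$ is assumed to be safe (otherwise one would have to argue separately that $StrongE$ statuses disappear, which is itself a consequence of RC4 and RC3 never being executed along a safe execution).
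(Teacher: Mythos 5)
Your proposal is correct and follows essentially the same route as the paper's own proof: after an RC5 move the assignment makes $\neg$PowerFaulty($u$) hold (and Lemma~\ref{lem:FaultyIsaTrap} makes $\neg$Faulty($u$) hold), and the two closure properties (the safe-execution definition for $\neg$PowerFaulty, Lemma~\ref{lem:FaultyIsaTrap} for $\neg$Faulty) then forbid re-entry, so each process makes at most one such RC5 move and the required suffix exists. No discrepancy with the paper's argument.
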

\begin{proof}
Let $cs$ be a step of $e$ from $c1$.
Let $u$ be process $u$ that executes the rule RC5 during $cs$.

\medskip
\noindent
Assume that $u$ verifies the predicate Faulty or PowerFaulty in 
$c1$. After the step $cs$, $u$ never verifies
the predicate Faulty and the predicate PowerFaulty according to 
the definition of a safe execution and Lemma 
\ref{lem:FaultyIsaTrap}.
So it never executes again RC5 at a configuration where these 
predicates are verified.
\end{proof}

\begin{lemma}
Let $e$ be a safe execution 
$e$ has a suffix where no process executes the rule RC5 at a 
configuration where it verifies  inLegalTree.
\end{lemma}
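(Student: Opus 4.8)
The plan is to reduce the statement to Lemmas~\ref{lem:ConflitRC5} and~\ref{lem:FaultyRC5} by a purely syntactic analysis of the guard of RC5. Recall that RC5 is a rule for $u \neq r$ only, and that its guard requires $\neg$StrongConflict($u$) together with at least one of Conflict($u$), Faulty($u$), PowerFaulty($u$), IllegalLiveRoot($u$), IllegalChild($u$). So it suffices to show that, in any configuration where inLegalTree($u$) holds, the only disjuncts of the RC5 guard that can possibly be satisfied are Conflict($u$) and PowerFaulty($u$); the two earlier lemmas then handle exactly those two cases, along a suitable common suffix.

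First I would eliminate the three other disjuncts under the assumption inLegalTree($u$) (with $u \neq r$, since $u = r$ cannot execute RC5). Faulty($u$) is excluded immediately: by definition, inLegalTree($u$) for $u \neq r$ entails $\neg$Faulty($u$). IllegalLiveRoot($u$) entails IllegalRoot($u$), which for $u \neq r$ requires $P.u = \perp$, contradicting the fact that inLegalTree($u$) entails $P.u \neq \perp$. The only disjunct that needs an actual argument is IllegalChild($u$), which requires $S.P.u \in Erroneous$; I claim inLegalTree($u$) forces $S.P.u \notin Erroneous$. Consider the ascendant path $u = w_0, w_1 = P.w_0, \dots, w_k = r$ witnessing inLegalTree($u$); then $\neg$Faulty($w_i$) holds for $0 \le i < k$, each such $w_i$ satisfies $w_i \neq r$ and $P.w_i = w_{i+1} \neq \perp$, and $S.r \neq StrongE$. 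A downward induction on $i$ from $k$ shows $S.w_i \notin Erroneous$: the base case uses that $S.r \in \{Power, Working, StrongE\}$, so $S.r \neq StrongE$ gives $S.r \notin Erroneous$; in the inductive step, $\neg$Faulty($w_i$) together with the three satisfied conjuncts ($w_i \neq r$, $P.w_i \neq \perp$, $S.w_{i+1} = S.P.w_i \notin Erroneous$) forces the big disjunction inside Faulty($w_i$) to be false, in particular its first disjunct $S.w_i \in Erroneous$. Taking $i = 1$ yields $S.P.u \notin Erroneous$, hence $\neg$IllegalChild($u$).

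It then remains to assemble the pieces. Let $e$ be a safe execution. By Lemmas~\ref{lem:ConflitRC5} and~\ref{lem:FaultyRC5}, $e$ has a suffix $e'$ along which no process executes RC5 at a configuration satisfying Conflict $\wedge$ inLegalTree, and no process executes RC5 at a configuration satisfying PowerFaulty (or Faulty). Suppose some computation step of $e'$ has a process $u$ executing RC5 at a configuration $c$ with inLegalTree($u$). Then $u \neq r$ and some disjunct of the guard holds at $c$; by the previous paragraph the only possibilities are Conflict($u$) or PowerFaulty($u$). The first contradicts the choice of $e'$ via Lemma~\ref{lem:ConflitRC5}, the second via Lemma~\ref{lem:FaultyRC5}. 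Hence along $e'$ no process executes RC5 at a configuration where it verifies inLegalTree, which is the required suffix.

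The only step I expect to require genuine work is the claim that inLegalTree($u$) implies $S.P.u \notin Erroneous$ (i.e., $\neg$IllegalChild($u$)); everything else is bookkeeping over the guard definitions and the two previous lemmas. If the earlier development already records this as a basic structural property of inLegalTree, the proof collapses to the assembly paragraph above.
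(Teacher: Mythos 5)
Your proof is correct and takes essentially the same route as the paper: it rules out the Faulty, IllegalLiveRoot and IllegalChild disjuncts of the RC5 guard under inLegalTree and then combines Lemmas~\ref{lem:FaultyRC5} and~\ref{lem:ConflitRC5} on a common (safe) suffix, exactly as the paper does. The only difference is that the paper simply asserts that an inLegalTree process verifies neither IllegalRoot nor IllegalChild, whereas you make explicit the ancestor-path induction showing $S.P.u \notin Erroneous$ (using $S.r \neq StrongE$ at the root and $\neg$Faulty along the branch), a detail the paper leaves implicit.
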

\begin{proof}
A process $u$ verifying the predicate inLegalTree does not 
verify the predicate IllegalRoot and does not verify 
the predicate IllegalChild.
$e$ has a suffix $e1$ where no process verifying 
PowerFaulty or Faulty executes RC5 (Lemma~\ref{lem:FaultyRC5}).
$e1$ has a suffix $e2$ where RC5 is not executed by processes 
verifying inLegalTree and Conflict (Lemma~\ref{lem:ConflitRC5}).
We conclude that in $e2$ no process verifying inLegalTree 
executes RC5.
\end{proof}

\begin{lemma}
Let $e1$ be a safe execution where no inLegalTree process 
executes the rule RC5.
$e1$ has a suffix where the predicate 
$\neg$unRegular is closed.
\end{lemma}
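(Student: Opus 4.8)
The plan is to prove that only finitely many computation steps of $e1$ can turn some process from non-unRegular to unRegular; the desired suffix is then the one after the last such step.

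First, I would pass to a convenient suffix. Along a safe execution the rules RC3 and RC4 are never executed, and since every unSafe process verifies insideLegalTree, hence inLegalTree, the assumption on $e1$ ensures that no unSafe process executes RC5; thus Lemma~\ref{lem:unSafe-preInfl} applies to every step of $e1$, so the set of processes that are simultaneously unRegular and influential can only shrink. Being finite, it is eventually equal to a fixed set $W$; on the corresponding suffix $e2$ every member of $W$ is unRegular and influential in every configuration, hence silent (by the definition of a safe execution). Note that $r\notin W$, since an unRegular root has status $StrongE$ and is therefore not influential.

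Next I would eliminate one of the two ways a process can become unRegular. Suppose $u$ is inLegalTree in some configuration $c1$ and unRegular in the successor configuration $c2$, and let $u=w_0,\,w_1=P.w_0,\,\dots,\,w_k=r$ be the ancestor chain witnessing inLegalTree($u$) in $c1$ (non-Faulty processes with $S.r\neq StrongE$). For $i\geq 1$ the process $w_i$ has the child $w_{i-1}$, so it cannot execute R7; being inLegalTree it cannot execute RC4 or RC5; and no $w_i$ can execute R3, not being Detached. Hence every parent pointer along the chain is unchanged in $c2$, and $r$ still has $S.r\neq StrongE$ in $c2$ because RC3 is forbidden. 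Since $\neg$Faulty is closed (Lemma~\ref{lem:FaultyIsaTrap}), inLegalTree then propagates back down the chain, so $u$ is still inLegalTree in $c2$ --- unless $u$ itself executed R7, which leaves $u$ Detached rather than unRegular. So no inLegalTree process ever becomes unRegular along $e1$.

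It remains to bound the steps of $e2$ in which a Detached process $u$ becomes unRegular. Such a $u$ cannot acquire a child in a single step (a neighbor would need $S.u=Power$ to execute R3 toward $u$), and inspecting the rules shows that the only move taking $u$ from Detached to unRegular is an R3-attachment to a neighbor $v$ with $S.v=Power$ and $C.v\neq C.u$ --- if $u=r$ then R1 gives $r$ status $Power$ and hence inLegalTree, and $u\neq r$ has no parent to enable R5. One checks that such a $v$ cannot move during that step (it has the neighbor $u$ of the other color), so $v$ keeps status $Power$; then either $v$ is inLegalTree, in which case the chain argument makes $u$ inLegalTree rather than unRegular, or $v$ is not inLegalTree, in which case $v$ is unRegular and therefore $v\in W$. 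So the only steps of $e2$ producing a new unRegular process are R3-attachments to members of $W$; and such an attachment is permanent, because once $u$ has $P.u=v$ with $v\in W$ silent it has $S.u=Idle$, no child, and $ph.u=ph.v$, so R3--R7 and RC6 are all disabled at $u$, the predicates Faulty($u$), PowerFaulty($u$), IllegalLiveRoot($u$), IllegalChild($u$) all fail, any neighbor making Conflict($u$) true also makes StrongConflict($u$) true (disabling RC5), and RC4 is never executed --- so $P.u=v$ and unRegular($u$) hold forever in $e2$. Hence each process attaches to $W$ at most once, only finitely many such attachments occur, and after the last one $\neg$unRegular is closed. The main obstacle is this permanence argument together with the ancestor-chain argument: without permanence a process could detach from and re-attach to $W$ unboundedly often, and without the closedness of $\neg$Faulty the chain reasoning would collapse.
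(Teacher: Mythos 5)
Your proof is correct and follows essentially the same route as the paper's: the only way a process becomes unRegular is an R3-attachment to a silent (influential and unRegular) $Power$ process, after which the attaching process is stuck (only RC4 could be enabled, and RC4 is never executed in a safe execution), so each process becomes unRegular at most once and the desired suffix exists. Your additional steps (the ancestor-chain argument showing inLegalTree processes never become unRegular, and the passage to a suffix with a fixed set $W$) merely make explicit what the paper leaves implicit.
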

\begin{proof}
Along $e1$, a process $u$ becomes unRegular 
if during the execution of the 
rule R3: $u$ chooses as parent an unRegular process $v$ having 
the status $Power$.
According to the definition of safe-execution, $v$ is silent 
along $e$.  
So, the only enabled rule at $u$ after this move is RC4. 
We conclude that the process $u$ is silent after its R3 action 
along $e1$. As a process may become unRegular at most one time along $e1$; $e1$ has a suffix where the predicate $\neg$unRegular is closed.
\end{proof}

\begin{definition}[pseudo-regular executions]
\label{def:pseudo-regular}
A pseudo regular execution is a safe execution 
where  inLegalTree processes execute only the rules R1 to R7, 
and the predicate $\neg$unRegular($u$) is closed.
\end{definition}

In  the section \ref{sec:convergence}, we established that any execution 
from a configuration of {\em A4} is pseudo-regular.
{\em A4} is an attractor reached in $16n-13$ rounds.

\begin{lemma}
Let $e2$ be a pseudo-regular execution.
$e2$ has a suffix where no process  executes rule RC5.
\end{lemma}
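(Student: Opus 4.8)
The claim is that a pseudo-regular execution $e2$ has a suffix where RC5 is never executed. Since in a pseudo-regular execution the predicates $\neg$unRegular and $\neg$influential$\vee$inLegalTree are closed, the structural landscape is already quite constrained: every process is either inLegalTree (its whole ancestor chain is non-faulty and $S.r\neq StrongE$) or Detached; no process is unRegular; and influential processes can only be inLegalTree. The goal is to argue that each of the five disjuncts of the RC5 guard --- Conflict, Faulty, PowerFaulty, IllegalLiveRoot, IllegalChild --- can trigger RC5 only finitely often, hence a common suffix exists where none can.

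The plan is to dispatch the disjuncts one by one, reusing the earlier lemmas. First, a process verifying IllegalRoot (hence IllegalLiveRoot) or IllegalChild would be unRegular or would force an ancestor to be faulty (contradicting inLegalTree of that ancestor), so by closure of $\neg$unRegular and by the structure of a pseudo-regular execution these two disjuncts are essentially unavailable --- more carefully, one argues that after the first round such a process cannot arise, using that $\neg$unRegular is closed and that inLegalTree/Detached is exhaustive. Second, Faulty and PowerFaulty as RC5-triggers are already handled by Lemma~\ref{lem:FaultyRC5}: a safe execution, hence a pseudo-regular one, has a suffix where no process executes RC5 while verifying PowerFaulty or Faulty, because $\neg$Faulty is closed (Lemma~\ref{lem:FaultyIsaTrap}) and PowerFaulty requires a $StrongE$ neighbor, which a pseudo-regular (safe) execution eventually forbids since no process takes $StrongE$. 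Third, for the Conflict disjunct, I would invoke Lemma~\ref{lem:ConflitRC5}: in a safe execution no process executes RC5 at a configuration where it verifies Conflict \emph{and} inLegalTree. Since in a pseudo-regular execution every non-Detached process is inLegalTree, and a Detached process ($P.u=\perp$) does not satisfy the first conjunct of Conflict($u$) for $u\neq r$, while Conflict($r$) does not enable RC5 on $r$ at all (it enables RC3, which is barred in a safe execution), the Conflict disjunct of RC5 is exhausted after finitely many steps.

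Putting these together: intersect the finitely-many suffixes obtained for each disjunct (there are only finitely many processes, and each executes RC5 at most once per "bad" reason), and on the common suffix the RC5 guard is never enabled anywhere, so RC5 is never executed. The one subtlety I would be careful about --- and I expect this to be the main obstacle --- is the IllegalLiveRoot / IllegalChild case, because unlike the others it is not covered by an already-proved lemma; I would need to check that in a pseudo-regular execution a process that becomes an illegal live root or an illegal child either was already counted (it was unRegular, which is closed away) or would contradict inLegalTree of its former parent, and that the transition into such a state can happen at most once per process, analogously to how the $\neg$unRegular closure lemma was proved by showing the offending process goes silent (only RC4 enabled after an R3 that joined an unRegular parent). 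Once that bookkeeping is pinned down, the rest is a routine finite-union argument.
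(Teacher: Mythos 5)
Your argument rests on a misreading of what ``pseudo-regular'' gives you. The definition (Definition~\ref{def:pseudo-regular}) says the predicate $\neg$unRegular($u$) is \emph{closed} along the execution, i.e.\ a process that is not unRegular never becomes unRegular again; it does \emph{not} say that no process is unRegular, nor that ``every non-Detached process is inLegalTree.'' A pseudo-regular execution may start with arbitrarily many unRegular processes (remnants of illegal trees and broken cycles still being dismantled), and these are precisely the processes that still execute RC5 --- via Conflict with $P.u\neq\perp$ while not inLegalTree, via IllegalLiveRoot, or via IllegalChild. So your dismissal of the Conflict case (``every non-Detached process is inLegalTree, and a Detached process cannot satisfy Conflict'') and of the IllegalLiveRoot/IllegalChild case both collapse, and the case you yourself flag as the main obstacle is exactly the content of the lemma; the reductions to Lemma~\ref{lem:FaultyRC5} and Lemma~\ref{lem:ConflitRC5} only cover the part that was already proved.

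The missing idea is the one the paper uses, and it makes the per-disjunct decomposition unnecessary: if $u$ executes RC5 in a pseudo-regular execution, then $u$ is unRegular at that moment (it is not inLegalTree, since inLegalTree processes execute only R1--R7, and no Detached process can satisfy any disjunct of the RC5 guard). After the move, $S.u\in Erroneous$ and $P.u=\perp$, so --- RC4 being excluded in a safe execution --- the only rule $u$ can execute next is RC6 or R3, and both guards require Detached($u$). At that configuration $u$ is no longer unRegular, and by closure of $\neg$unRegular it never is again, hence never executes RC5 again. Thus each process executes RC5 at most once along $e2$, which immediately yields the desired suffix. Your sketch gestures at ``analogous bookkeeping'' but does not supply this step, so as written the proposal does not establish the statement.
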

\begin{proof}
Let $u$ be a process executing RC5 along $e2$ from $c1$ to reach 
the configuration $c2$. In $c1$, unRegular($u$) is verified.
\noindent
In $c2$, $S.u \in Erroneous$ and $P.u = \perp$.
The next move of $u$, 
if it exists, is the execution of the rule RC6 or R3.
Assume that $u$ executes RC6 or R3 at the configuration $c3$. 
Detached($u$) is verified in $c3$. So
the predicate $\neg$unRegular($u$) is verified along $e2$ 
from $c3$ (by definition of $e2$).

\noindent
Along $e2$, a process $u$ executes at most one time the rule RC5.
\end{proof}

\noindent
Let $e3$ be a a pseudo-regular execution where 
no process executes the rule RC5.
$S.u \notin Erroneous$ is closed along $e3$.  So a process 
executes at most one time the rule RC6 along this execution.

\subsection{Regular executions}
\begin{definition}[regular execution]
A regular execution is a pseudo-regular execution 
where  processes execute only the rules R1 to R7, 
and the predicate $\neg$unRegular($u$) is closed.
\end{definition}

\begin{corollary}
\label{cor:regular}
Any execution has a regular suffix.
\end{corollary}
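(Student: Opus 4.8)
The plan is to concatenate the suffix refinements already carried out in Section~\ref{sec:fairnessProof}, exploiting two facts that make the argument essentially bookkeeping: each refinement only discards a finite prefix of the execution, and every property involved is a closure (trap) property, hence is automatically inherited by every suffix. The genuine content is in the preceding lemmas; nothing new needs to be invented here.

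Let $e$ be an arbitrary execution. First I would invoke Corollary~\ref{cor:safe} to obtain a safe suffix $e_0$ of $e$; by Definition~\ref{def:safe-execution}, along $e_0$ the rules RC1, RC2, RC3 and RC4 are never executed, insideLegalTree processes execute only R1--R7, influential and unRegular processes are silent, and $\neg$PowerFaulty is closed. It then remains only to eliminate RC5 and RC6. Next I would walk down the lemma chain of this section: $e_0$ has a suffix $e_1$ in which no inLegalTree process executes RC5 (obtained by combining the lemma handling the Conflict case, the lemma handling the PowerFaulty/Faulty cases, and the observation that an inLegalTree process verifies neither IllegalRoot nor IllegalChild), and $e_1$ has a suffix $e_2$ in which $\neg$unRegular is closed; by Definition~\ref{def:pseudo-regular}, $e_2$ is a pseudo-regular execution. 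Applying the lemma stating that a pseudo-regular execution has a suffix with no RC5 move, I obtain a suffix $e_3$ of $e_2$ along which RC5 is never executed; along $e_3$ the predicate $S.u \notin Erroneous$ is then closed for every $u$ (no process enters an Erroneous status, since neither RC4 nor RC5 fires, and none can leave it more than once), so every process executes RC6 at most once. Since $|V| = n$ is finite, only finitely many RC6 moves occur in $e_3$, and therefore $e_3$ has a suffix $e_4$ along which RC6 is never executed either.

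Along $e_4$, none of RC1--RC6 is ever executed, so every process executes only rules among R1--R7; furthermore $\neg$unRegular is still closed, being a trap property inherited from the pseudo-regular suffix $e_2$ of which $e_4$ is a suffix. Hence $e_4$ is a regular execution in the sense of the definition, which establishes the corollary. The one step that deserves a word of care is the passage from ``each process fires a given recovering rule at most once'' to ``from some finite prefix on, that rule is never fired again'': this is exactly where finiteness of $V$ is used, and it must be reapplied, in turn, to RC4, RC3, RC5 and RC6; but for each of these the relevant lemma has already done the per-process counting, so this presents no real obstacle.
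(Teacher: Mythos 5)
Your proposal is correct and follows essentially the same route as the paper, which states this corollary without an explicit proof precisely because it is the concatenation of the preceding suffix lemmas (safe suffix, no RC5 by inLegalTree processes, closure of $\neg$unRegular giving a pseudo-regular suffix, then no RC5 at all, and finally the remark that $S.u \notin Erroneous$ is closed so RC6 fires at most once per process, yielding a suffix with no recovering rule). Your only additions are the explicit bookkeeping the paper leaves implicit, namely that all the relevant properties are inherited by suffixes and that finiteness of $V$ turns ``at most once per process'' into ``never after some finite prefix''.
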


\noindent
During a regular execution, the move of a process $u \neq r$ 
belongs to the language: (R3 R5 (R6R4)$^*$ R7)$^*$, and
 the moves of $r$ belong to the 
language : (R1 R2$^*$)$^*$.

\begin{lemma}
Let $e$ be a regular execution where 
a process $u$ never changes its color.
The execution $e$ has a suffix where all processes of N($u$) do not 
change their color.
\end{lemma}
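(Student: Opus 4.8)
The plan is to analyze what the color-changing rules are during a regular execution and argue that in a regular execution the only way a neighbor $w \in N(u)$ of $u$ can change its color is through a rule that is ultimately tied to $u$'s own color changes, so that if $u$ is frozen on color, a bounded (indeed, eventually zero) number of such changes can occur at each neighbor. Recall that during a regular execution only the rules R1--R7 are executed. The only rule whose action modifies the $C$ variable is R3, which sets $C.w := C.v$ where $v$ is the process $w$ chooses as its new parent via the predicate $\mathrm{Connection}(w,v)$; in particular $v$ has the $Power$ status and $C.v \neq C.w$ at the moment of the move. So the entire question reduces to: how often can some $w \in N(u)$ execute R3 with a parent $v$ of the opposite color?

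\medskip
\noindent
\textbf{First I would} fix a regular suffix (Corollary~\ref{cor:regular}) and further pass, using Lemma~\ref{lem:inside-safe} and its successors, to a suffix along which $\neg$unRegular is closed and $S.r \neq StrongE$ is closed; this is legitimate since the statement only asserts the existence of a suffix. Along such a suffix, by the language characterization ``the move of a process $u \neq r$ belongs to (R3 R5 (R6R4)$^*$ R7)$^*$'', each process executes R3 at most once per ``R3 \dots R7'' block, and a block can only start after the previous one ends with R7 (which sets $P := \perp$). So the number of R3 moves a fixed neighbor $w$ makes equals the number of completed tree-attachment cycles $w$ goes through. **The key step** is then to show that in a regular execution where $u$ never changes color, each neighbor $w$ of $u$ can execute R3 \emph{with $u$ as its chosen parent} at most once more, and can execute R3 with \emph{any} parent only finitely often; the first part follows because for $w$ to choose $u$ via $\mathrm{Connection}(w,u)$ we need $S.u = Power$ and $C.u \neq C.w$, and after the R3 move $w$ has $w$'s color equal to $C.u$, so $w$ cannot re-select $u$ until $u$ has changed color --- which $u$ never does. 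The second part (R3 with an arbitrary parent $v \neq u$) is handled by observing that in a regular execution the color of any influential process is $C.r$ (Corollary~\ref{cor:preInf}, since becoming influential forces color $C.r$, and R1/RC3 --- the only rules that would change $C.r$ --- are either $r$'s R1 moves, each of which changes $C.r$; so across the finitely many R1 moves of $r$, all parents-to-be have a color determined by $C.r$), and combine this with the fact that $u$ has a fixed color $\bar{c}$ forever: once $r$'s color stabilizes to one of the two values, either it equals $\bar{c}$ (and then no $Power$ neighbor of $w$ of the opposite color to $w$ is available when $w$ would want to join, because all such would-be parents carry $C.r = \bar c$), or it differs, in which case one more full round of attachment can occur at $w$ and no further.

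\medskip
\noindent
\textbf{The main obstacle} I expect is bounding the number of R3 moves of a neighbor $w$ \emph{before} $r$'s color has stabilized: as long as $r$ keeps executing R1 (changing $C.r$), neighbors can keep re-attaching with alternating colors. So the real content is to show $r$ executes R1 only finitely often in a regular execution --- equivalently that the ``(R1 R2$^*$)$^*$'' sequence of $r$ is finite --- \emph{or} to localize the argument: even if $r$ executes R1 infinitely often, the statement hypothesizes that $u$ (a specific process) never changes color, and a neighbor $w$ that re-attaches infinitely often would, along the way, at some point have $C.w = C.r$ equal to both values infinitely often, yet be adjacent to $u$ whose color is fixed; I would argue that whenever $w$ holds the color opposite to $u$ and then later attaches to a $Power$ parent $v$ with $C.v = C.r \neq C.w$, the transient configuration in which $w$ still has the old color while a $Power$ neighbor has the new color makes $u$ (or $w$) satisfy $\mathrm{StrongConflict}$ or $\mathrm{Conflict}$, contradicting that the execution is regular (only R1--R7 fire, so no conflict predicate is ever true at an enabled process). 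Making this contradiction precise --- pinning down exactly which process sees two $Power$ neighbors of different colors, or a $Power$ neighbor of the wrong color, in the step where $w$ would flip back toward $u$'s color --- is the delicate part; once it is in hand, the conclusion is that each $w \in N(u)$ changes color at most finitely often, so a common suffix exists along which none of them does.
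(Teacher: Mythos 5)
There is a genuine gap, and it is the one you yourself flag at the end: the ``delicate part'' you leave open is precisely the content of the lemma, and the route you sketch for it does not go through. The paper's argument needs no control of $r$'s R1 moves and no conflict-based contradiction. It is a purely local blocking observation: let $co$ be the color $u$ keeps forever, and let $v\in N(u)$. If $v$ ever takes the color $\overline{co}$ (by R3, or by R1 if $v=r$), then from that point on $v$ can never satisfy EndFirstPhase($v$), because that predicate requires \emph{every} neighbor of $v$ --- in particular $u$ --- to have $v$'s color. In a regular execution the moves of $v\neq r$ follow (R3\,R5\,(R6R4)$^*$\,R7)$^*$ and those of $r$ follow (R1\,R2$^*$)$^*$; every move made while holding the $Power$ status (R6, R7, R2, R1) requires EndFirstPhase $\vee$ EndPhase, and EndPhase is impossible at a $Power$ process since it demands the $Working$ status. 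Hence after changing its color to $\overline{co}$, $v$ can make at most one further move (R5, or nothing if $v=r$) and is then silent forever. So each neighbor of $u$ changes color at most twice along $e$ (possibly once to $co$, and a final time to $\overline{co}$), which immediately yields a suffix in which no process of $N(u)$ changes color. Your worry about $r$ executing R1 infinitely often is a red herring: if $r\in N(u)$ and $r$ takes $\overline{co}$, $r$ is stuck by the same observation, and R1 moves of an $r$ that is not a neighbor of $u$ are irrelevant to the count of color changes in $N(u)$ once the above is in place.

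Two further problems with the route you propose. First, your intended contradiction rests on the claim that in a regular execution ``no conflict predicate is ever true at an enabled process''; but the definition of a regular execution only constrains which rules are \emph{executed} (R1--R7) and which predicates are closed, not which guards hold at processes the unfair daemon never selects, so the appearance of Conflict or StrongConflict at some process does not by itself contradict regularity. Second, the step ``the color of any influential process is $C.r$'' overreads Corollary~\ref{cor:preInf}, which only concerns processes at the moment they \emph{become} influential; in an arbitrary regular execution an influential process outside the legal tree need not carry $C.r$ without the additional attractor arguments of Section~\ref{sec:convergence}, which this lemma is not entitled to assume. The first half of your argument (a neighbor can re-attach \emph{to $u$} at most once more) is fine but is not the crux; the crux is R3 moves toward other parents, and there the simple EndFirstPhase obstruction, not a global analysis of $r$, is what closes the proof.
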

\begin{proof}
Assume that $u$ keep the color $co$ along $e$.
A process changes its color by the execution of 
the rule R3 or the rule R1.
Let $v$ be a process of N($u$).
$v$ never verifies EndFirstPhase($v$) 
if $C.v = \overline{co}$ along $e$.
So the execution of the rule R3 or R1 by $v$ 
to take the color $C.v = \overline{col}$ 
is the second to last or the last move of $v$.
Therefore, $v$ changes its color at most 2 times along $e$ : 
one to take the color $co$, and the last one to take the color 
$\overline{co}$.
\end{proof}

\begin{corollary}
Let $e$ be  a regular execution where 
a process $u$ never changes its color.
The execution $e$ has a suffix where no process changes its color.
\end{corollary}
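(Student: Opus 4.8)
The plan is to bootstrap the preceding lemma along a breadth-first exploration of the graph $G$, using the fact that every suffix of a regular execution is again a regular execution. This last point is immediate from the definition of a regular execution: the defining conditions (every process executes only the rules R1--R7, and $\neg$unRegular is closed) are all preserved by passing to a suffix. Hence, whenever we produce a suffix of $e$, we may legitimately re-apply the preceding lemma to it.

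First I would set up the induction on the BFS layers around $u$. For $k \geq 0$, let $V_k$ denote the set of processes at (hop-)distance at most $k$ from $u$ in $G$. I claim that for every $k$, the execution $e$ has a suffix $e_k$ along which no process of $V_k$ changes its color. The base case $k=0$ is the hypothesis: $V_0 = \{u\}$ and $u$ never changes its color along $e$, so $e_0 = e$ works. For the inductive step, assume $e_k$ is such a suffix. Since $e_k$ is a regular execution (being a suffix of $e$) and, for each $w \in V_k$, $w$ never changes its color along $e_k$, the preceding lemma applied to $e_k$ and $w$ yields a suffix $e_k^{w}$ of $e_k$ along which every process of $N(w)$ keeps a constant color. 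As $V_k$ is finite, only finitely many suffixes $e_k^{w}$ are produced; let $e_{k+1}$ be the shortest of them (equivalently, the one starting latest), so that $e_{k+1}$ is simultaneously a suffix of $e_k$ and of every $e_k^{w}$. Along $e_{k+1}$ no process of $V_k$ changes its color (it is a suffix of $e_k$), and no process of $\bigcup_{w \in V_k} N(w)$ changes its color; hence no process of $V_{k+1} = V_k \cup \bigcup_{w \in V_k} N(w)$ changes its color, completing the induction.

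To conclude, I would observe that since $G$ is connected with $n$ processes, $V_{n-1} = V$ (in fact $V_{\diam} = V$), so $e_{n-1}$ is a suffix of $e$ along which no process changes its color, which is exactly the statement. The argument is essentially routine: the only points needing a little care are (i) checking that a suffix of a regular execution is still regular, so the lemma can be reused at every stage, and (ii) the bookkeeping of taking, at each step, the common (shortest) suffix among the finitely many suffixes delivered by the lemma. Neither is a genuine obstacle; the connectedness and finiteness of $G$ guarantee that the induction terminates after at most $\diam$ (a fortiori $n-1$) stages.
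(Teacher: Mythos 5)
Your proof is correct and is precisely the argument the paper leaves implicit for this corollary: propagate the preceding lemma layer by layer over the connected graph, using that a suffix of a regular execution is again regular and that the finitely many suffixes produced are nested, so a common (latest-starting) suffix exists. Nothing further is needed.
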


\noindent
In the sequel of this section, we study regular executions where 
no process changes its color. We will establish that such 
executions do not exist.

\begin{observation}
Let $exc$ be a regular execution where no process change its 
color.
\noindent
There is no terminal configuration; so 	at least a process 
along $exc$ executes infinitely often a rule.

\noindent
The moves of any process $u \neq r$ belong 
to the language R5R7 or to the language (R6R4)$^*$ R7. 
The moves of $r$ belong to the language R2$^*$.

\noindent
Along $exc$ no process joins any Child($u$) set.
\end{observation}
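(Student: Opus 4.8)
The plan is to draw everything from two facts about regular executions already recorded just before the statement: along a regular execution the moves of $r$ form a word of $(R1\,R2^*)^*$, and the moves of a process $u\neq r$ form a factor of a word of $(R3\,R5\,(R6\,R4)^*\,R7)^*$. I would first dispatch the easy assertion: by Theorem~\ref{alwaysaprivilege} every configuration has an enabled process, so no configuration is terminal, hence $exc$ is infinite; since $V$ is finite, the pigeonhole principle gives a process that executes a rule in infinitely many steps of $exc$.

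Next I would identify which rules change a colour. For the root, the only rule among $\{R1,R2\}$ whose action touches $C.r$ is $R1$, and it always flips it; since by hypothesis no process changes its colour, $r$ never performs $R1$ along $exc$, so its move word lies in $R2^*$ (the infinite word $R2^\omega$ included). Symmetrically, among $\{R3,\dots,R7\}$ the only rule whose action touches $C.u$ is $R3$, whose guard — through $\mathrm{Connection}(u,v)$ — forces $C.v\neq C.u$ and whose action sets $C.u:=C.v$, hence always changes $C.u$; so $u\neq r$ never performs $R3$ along $exc$. From ``no $R3$'' I then get the last assertion: a process joins some $Child$ set exactly when its parent pointer is assigned a non-$\perp$ neighbour, and in a regular execution (only $R1$--$R7$ are used) the only rule doing that is $R3$ — $R7$ resets the pointer to $\perp$, the remaining rules do not touch it, and $R1,R2$ do not concern $P$ — so no parent pointer ever becomes non-$\perp$ along $exc$, every $Child$ set only shrinks, and no process ever joins one.

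Finally I would read off the shape of the move word $w$ of $u\neq r$. Since $w$ contains no $R3$ and an $R3$ separates consecutive blocks $R5\,(R6\,R4)^*\,R7$ in $(R3\,R5\,(R6\,R4)^*\,R7)^*$, the word $w$ sits inside a single such block, i.e. is a factor of $R5\,(R6\,R4)^*\,R7$. If $u$ ever executes $R5$, then — $R5$ occurring only at the head of the block — it is $u$'s first move in $exc$; $R5$ requires $Child.u=\emptyset$, which by the previous paragraph stays true, so $R6$ and $R4$ (each needing $Child.u\neq\emptyset$) are disabled at $u$ ever after, leaving only $R7$, after which $P.u=\perp$ and, $R3$ being excluded, $u$ is disabled for good; hence $w\in\{R5,\;R5\,R7\}$. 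If $u$ never executes $R5$, then — $R3$ being excluded, $u$ cannot rejoin a tree after a possible $R7$, so at most one $R7$ occurs and it occurs last — $w$ is a factor of $(R6\,R4)^*\,R7$. Either way the move word of $u$ belongs to $R5\,R7$ or to $(R6\,R4)^*\,R7$, as claimed.

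The bookkeeping is routine; the one point deserving care is the coupling used in the last paragraph, namely that ``no $R3$'' is needed twice over — once to keep $Child.u$ empty after an $R5$ (so that $R4,R6$ cannot resume) and once to keep $u$ detached after an $R7$ (so that nothing resumes) — and that both reductions are already secured by the colour-invariance hypothesis, so no circularity arises.
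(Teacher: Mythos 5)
Your proposal is correct and follows essentially the reasoning the paper intends (and leaves implicit) for this observation: non-termination via Theorem~\ref{alwaysaprivilege}, the move languages $(R1\,R2^*)^*$ and $(R3\,R5\,(R6R4)^*R7)^*$ of regular executions, and the fact that only $R1$ and $R3$ change a color while only $R3$ sets a parent pointer to a non-$\perp$ value. Your careful treatment of the move words as factors of a block (so an $R5$, if any, is the first move and is followed by at most one $R7$) is a sound filling-in of the same argument.
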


\begin{lemma}
Let $exc$ be a regular execution where where no process 
changes its color.
Let $u$ be a process executing  infinitely often a rule 
along $exc$.
$u$ has  forever a child that also executes infinitely 
often a rule.
\end{lemma}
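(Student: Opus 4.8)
The plan is to reduce the statement to an analysis of the eventual behaviour of $u$ and of the set $Child(u)$, using the three facts recorded just above: in this regular colour-stable execution $exc$ no process joins any $Child$ set, the moves of a process $w\neq r$ form a word in $R5R7\cup(R6R4)^*R7$, and the moves of $r$ form a word in $R2^*$. First I would pin down the eventual move-shape of $u$. Since R3 and R1 are the only rules that reset a parent pointer to a neighbour or change $C.r$, and both are disabled throughout $exc$ (colours are constant), once a process fires R7 it is detached and permanently disabled; likewise, once $u$ fires R5 it has $S.u=Power$ and, by the R5 guard together with ``no joins'', $Child.u=\emptyset$ forever, so its only remaining possible move is a single R7. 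Hence, if $u$ fires infinitely often then $u\neq r$ fires, after finitely many steps, $(R6R4)^\omega$ (consecutive equal moves being ruled out by the status updates of R4 and R6), and $u=r$ fires $R2^\omega$. In particular $u$ executes R6 (resp.\ R2, when $u=r$) infinitely often.

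Next I would show $Child(u)$ stabilises to a non-empty set. A process leaves $Child(u)$ only by executing R7 (the other rules that clear a parent pointer, R3, RC4, RC5, being absent), and by hypothesis none ever joins; so $Child(u)$ is non-increasing along $exc$, hence, being finite, eventually constant, equal to $C^\star=\bigcap_c Child(u)$ over all configurations $c$ of $exc$. Each R6 (resp.\ R2) is guarded by $\mathrm{EndIntermediatePhase}$, which requires $Child(u)\neq\emptyset$, and there are infinitely many such moves, so $Child(u)$ is non-empty at arbitrarily late configurations, whence $C^\star\neq\emptyset$. Any $v\in C^\star$ is a child of $u$ at \emph{every} configuration of $exc$ — this is the ``forever a child'' part — so it remains to show that such a $v$ executes a rule infinitely often.

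Fix $v\in C^\star$; I would argue $v$ executes R4 infinitely often, via the phase variable. Because $v$ never leaves $C^\star$, $v$ never executes R7; because $P.v=u\neq\perp$ throughout, $v$ never executes R3 ($\mathrm{Connection}$ needs $\mathrm{Detached}$); and if $v$ ever executed R5 it would become $Power$ with no children (no joins) and could thereafter only fire R7, which is impossible — so $v$ would stay non-$Idle$ forever, contradicting that $\mathrm{QuietSubTree}(u)$ (resp.\ $\mathrm{QuietSubTree}(r)$), which requires every child to be $Idle$, holds at each of the infinitely many configurations where $u$ (resp.\ $r$) fires R6 (resp.\ R2). Hence $ph.v$ is changed only by R4. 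At each of those infinitely many configurations $\mathrm{QuietSubTree}$ also forces $ph.v=ph.u$ (resp.\ $ph.v=ph.r$), while between two consecutive such configurations $u$ (resp.\ $r$) flips its own binary phase exactly once — via the R4 sandwiched between two R6's, resp.\ because between two consecutive R2's $r$ makes no other move and R2's action flips $ph.r$. So the values of $ph.v$ along this infinite subsequence of configurations alternate, i.e.\ $ph.v$ changes infinitely often, i.e.\ $v$ executes R4 infinitely often; this $v$ is the desired child.

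The main obstacle, and the only delicate point, is the third step: systematically ruling out every alternative by which a permanent child could refill $Child(u)$ (impossible by ``no joins'') or resynchronise its phase to $u$ without R4 — namely R3 (blocked by $P.v\neq\perp$), R7 (blocked since $v$ is permanent), and R5 (self-defeating, as above). Once these are excluded, the alternation of $ph.v$ follows immediately from the guards of R6 and R2, both of which embed $\mathrm{QuietSubTree}$, together with the phase flip in the actions of R4 and R2; and the two cases $u\neq r$ and $u=r$ merge because for $r$ the single rule R2 plays the combined role of R6 (ending the phase under a $\mathrm{QuietSubTree}$ guard) and R4 (starting the next phase by flipping $ph$).
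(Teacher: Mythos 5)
Your proposal is correct and takes essentially the same route as the paper: from the move languages $(R6R4)^{*}$ (resp.\ $R2^{*}$) you get that $ph.u$ flips infinitely often, the ``no process joins $Child(u)$'' observation gives an eventually constant non-empty child set, and the $QuietSubTree$ condition in the guards (the paper reads it off R4/R2, you off R6/R2 --- same content) forces a permanent child to change its phase, hence move, infinitely often. Your extra details (re-deriving the move shapes and pinning the child's moves down to R4) are sound but not needed beyond what the paper's argument already uses.
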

\begin{proof}
The moves of $u$ belong to the language $(R6R4)^*$ 
if $u \neq r$ or to $R2^*$. So $u$ changes its phase value 
infinitely often  (rule R4 or R2). 
$u$ cannot gain a child. From a configuration where  
the rule R4 or R2 are enabled at $u$, Child($u$) 
is not empty and QuietSubTree($u$) is verified. 
So at least a process $v$ verifies forever $P.v=u$ and 
$v$ changes its phase value infinitely often along $exc$.
\end{proof}

\begin{corollary}
\label{cor:reg-cycle}
Let $exc$ be a regular execution where no process 
changes its color.
There a cycle path whose all the processes execute infinitely 
often a rule.
\end{corollary}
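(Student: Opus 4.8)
The plan is to iterate the preceding lemma a bounded number of times and then invoke the pigeonhole principle. By the preceding observation, $exc$ has no terminal configuration, so some process, call it $u_0$, executes a rule infinitely often along $exc$. Applying the preceding lemma to $u_0$ gives a process $u_1$ that is forever a child of $u_0$ (there is a point of $exc$ after which $P.u_1 = u_0$ holds in every configuration) and that also executes a rule infinitely often. Repeating this, I would build a finite sequence $u_0, u_1, \dots, u_n$ such that, for every $k$ with $0 \le k < n$, the process $u_{k+1}$ executes a rule infinitely often and there is a suffix of $exc$ along which $P.u_{k+1} = u_k$ holds in every configuration.

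Since $|V| = n$, the $n+1$ processes $u_0, \dots, u_n$ are not all distinct; let $j$ be the least index with $u_j \in \{u_0, \dots, u_{j-1}\}$, and let $i < j$ be the index with $u_i = u_j$. By minimality of $j$, the processes $u_i, u_{i+1}, \dots, u_{j-1}$ are pairwise distinct. Each of the $j - i \le n$ conditions ``$P.u_{k+1} = u_k$ eventually holds in every configuration'', for $i \le k \le j-1$, is witnessed by a suffix of $exc$; taking the shortest among these finitely many suffixes yields a common suffix $e'$ of $exc$ along which $P.u_{k+1} = u_k$ holds for all $k$ with $i \le k \le j-1$ simultaneously. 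In particular $P.u_i = P.u_j = u_{j-1}$ along $e'$. Relabelling $w_m := u_{i+m-1}$ for $1 \le m \le l$, where $l := j - i$, we get, in every configuration of $e'$, $P.w_m = w_{m-1}$ for $1 < m \le l$ and $P.w_1 = w_l$; hence $w_1, \dots, w_l$ is a cycle path in the sense of the definition of a cycle path, and each $w_m$ executes a rule infinitely often by construction, which is the claim. (A minor sanity check: for $k \ge 1$ the process $u_k$ has a parent, hence $u_k \ne r$; since $j \ge 1$, this forces $u_i = u_j \ne r$, so every $w_m$ is a non-root process, consistently with lying on a cycle path.)

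The only delicate point, and the one I would be most careful about, is the passage from ``each pointer equality $P.u_{k+1} = u_k$ stabilizes on its own suffix'' to ``all of them are stable on one common suffix'': this is exactly why the inductive construction is stopped after $n+1$ processes before the pigeonhole step, so that only finitely many suffixes must be intersected. An unbounded descending chain of ``forever children'' would not, by itself, produce a single suffix on which a whole cycle is structurally present. Apart from this, the argument is a straightforward finite pigeonhole argument layered on top of the preceding observation and lemma, so I expect no further obstacle.
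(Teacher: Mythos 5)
Your proof is correct and follows essentially the same route the paper intends for this corollary: iterate the preceding lemma starting from a process that executes infinitely often (which exists by the preceding observation), and use finiteness of $V$ to extract a cycle of processes that all execute infinitely often. Your explicit handling of the pigeonhole step and of the common suffix on which all parent pointers are simultaneously stable just makes precise what the paper leaves implicit (and is in fact eased by the observation that no process ever joins a Child set along $exc$).
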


\begin{theorem}
Let $exc$ be a regular execution.
Along $exc$ every process changes its color infinitely often.
\end{theorem}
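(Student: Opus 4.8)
The plan is to argue by contradiction. Suppose some process $p$ changes its color only finitely often along $exc$, and let $exc_1$ be the suffix of $exc$ that begins just after the last color change of $p$. Being a suffix of a regular execution, $exc_1$ is again a regular execution, and $p$ never changes its color along it; hence the corollary stating that such an execution has a suffix along which no process changes its color yields a suffix $exc''$ of $exc_1$ — still a regular execution — along which \emph{no} process ever changes its color. We are then in the situation of the ``sequel'' above, so Corollary~\ref{cor:reg-cycle} provides a cycle path $C=(u_1,\dots,u_l)$, with all $u_i\neq r$ (since $r$ has no parent pointer) and with $P.u_{i+1}=u_i$ along $exc''$ (indices taken mod $l$), such that every $u_i$ executes a rule infinitely often in $exc''$.

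The first thing I would do is identify the first move of each $u_i$ in $exc''$. Along $exc''$ we have $u_{i+1}\in Child(u_i)$ (the cycle is stable, and no $u_i$ ever applies R7, by a standard ``first process to apply R7'' argument), so $Child(u_i)$ is never empty, which disables R5 and R7 on $u_i$; rules R1 and R2 concern only $r$; and R3 would change $C.u_i$, which is excluded along $exc''$. Hence, by the observation above, the move sequence of $u_i$ in $exc''$ must be the infinite word $R6,R4,R6,R4,\dots$ — the only possibility once R5 and R7 are ruled out and $u_i$ moves infinitely often — so $u_i$'s very first move in $exc''$ is an R6 move. The guard of R6 on $u_i$ implies EndFirstPhase($u_i$) or EndPhase($u_i$), each of which forces $S.u_i\in\{Power,Working\}$; and since $u_i$'s status cannot change before $u_i$'s first move, $S.u_i$ is $Power$ or $Working$ in the initial configuration of $exc''$.

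Now let $cs$ be the first computation step of $exc''$ during which some process of $C$ moves, and let $u_j\in C$ move during $cs$. Just before $cs$, no process of $C$ has moved, so each $u_k$ still has its initial status, namely $Power$ or $Working$; in particular $S.u_{j+1}\neq Idle$. But the move of $u_j$ during $cs$ is its first move, hence an R6 move, whose guard requires QuietSubTree($u_j$), and therefore $S.u_{j+1}=Idle$ because $u_{j+1}\in Child(u_j)$ — a contradiction. Thus no process of $C$ ever moves in $exc''$, contradicting the fact that every $u_i$ moves infinitely often. Consequently no process changes its color only finitely often, i.e.\ every process changes its color infinitely often along $exc$.

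The delicate point, which I would develop carefully, is the second paragraph: establishing that the first move of every cycle process in $exc''$ is an R6 move. This rests on reading the ``(R6R4)$^*$R7 / R5R7'' observation correctly for an infinitely-moving process, and on confirming that the processes of $C$ really do keep a nonempty $Child$ set along $exc''$ — equivalently, that the parent pointers defining $C$ stay in place and that no $u_i$ ever performs R5 or R7 — which follows from the stability of $C$ together with the ``first process to apply R7'' argument.
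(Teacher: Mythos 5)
Your overall strategy is sound and genuinely different from the paper's: the paper argues that all cycle processes must keep a status in $\{Working, Idle\}$, that statuses (and, via Faulty, phases) must agree around the cycle, and that then no cycle process is ever enabled; you instead run a ``first mover'' argument on the cycle. However, your argument has a genuine gap exactly at the step you yourself flag: the claim that the first move of \emph{every} cycle process in $exc''$ is an R6 move, hence that every $u_i$ starts with status $Power$ or $Working$. You justify this only by a literal reading of the paper's Observation that move sequences lie in R5R7 or (R6R4)$^*$R7. That Observation is stated without proof, and its literal form (in particular, that a move word can never begin with R4) is precisely what is contentious: since $exc''$ is a suffix starting in an arbitrary mid-phase configuration, a cycle process could perfectly well start with $S.u_i = Idle$, a $Working$ parent and a differing phase, in which case its first move would be R4 (whose guard, via NewPhase, needs QuietSubTree but \emph{not} a non-Idle child of the mover's parent). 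With such a process the final contradiction evaporates: the first mover of $C$ could be an R4-mover whose guard only requires its own child to be Idle with the same phase, which is consistent with a mixed initial status pattern, and you would then be forced into the phase/status bookkeeping around the cycle that is essentially the paper's proof.

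The claim can be repaired, but it needs an argument you never give. For instance: if a cycle process $u_i$ ever has R4 enabled, then $S.u_i = Idle$, $S.P.u_i = Working$ (any other parent status makes $u_i$ Faulty or IllegalChild, contradicting Ok) and $ph.u_i \neq ph.P.u_i$, so PowerParent($u_i$) holds and $u_i$ is influential; being on a parent cycle, $u_i$ is also unRegular (neither Detached nor inLegalTree); but in a safe (hence regular) execution influential and unRegular processes are silent, so $u_i$ cannot move --- contradicting that it moves infinitely often. Note that this observation in fact shows cycle processes never execute R4 at all, which immediately kills the infinite (R6R4) alternation and yields a shorter proof than either yours or the paper's. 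As written, though, your proposal outsources its pivotal step to an unproven observation read in its strongest form, so it does not yet constitute a complete proof.
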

\begin{proof}
Let $exc$ be a regular execution. Assume that no process 
changes its color along $exc$. 

\noindent
There a cycle path $u_1$, $u_2$, ... $u_l$ whose all 
the processes execute infinitely often a rule along $exc$ 
(Corollary~\ref{cor:reg-cycle}).
As any process $u_i$ is not silent along $exc$; we have
$S.u_i \in \{ Working, Idle \}$ along $exc$.
In a configuration reached by $exc$, if $S.u_i = Idle$ then 
$S.u_{i+1} = Idle$ otherwise Faulty($u_{i+1}$) would be 
verified (and $u_i$ and $u_{i+1}$ would be silent).
So in any configuration reached by $exc$, 
we have  $S.u_i = S.u_{i+1}$ with $0 < i < l$. 

\noindent
There is a contradiction as no process of the cycle path is 
enabled at a configuration where $S.u_i = S.u_{i+1}$ 
for all $0 < i < l$.
\end{proof}

\noindent
Every execution of the algorithm has a regular suffix.
During a regular execution, every process changes its color 
infinitely often (i.e. they execute R3 or R1 rule infinitely 
often). At a configuration where R1 or R3 is enabled at $u$ we have 
$\neg$unRegular($u$).


\section{Convergence Time}
\label{sec:convergence}

\begin{lemma}
The predicate $\neg$IllegalLiveRoot($u$) is closed.
\label {lem:IllegalRootisTrap}
\end{lemma}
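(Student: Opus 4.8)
The plan is to prove the statement at the level of a single computation step: if $cs$ is a step from $c_1$ to $c_2$ and IllegalLiveRoot($u$) holds in $c_2$, then it already held in $c_1$. First I record what IllegalLiveRoot($u$) unfolds to: $u \neq r$ (recall $r$ can never satisfy IllegalRoot), $P.u = \perp$, $\neg$Detached($u$), and $S.u \notin Erroneous$; moreover, since $u \neq r$ and $P.u = \perp$, the clause $\neg$Detached($u$) is equivalent to $(Child.u \neq \emptyset) \vee (S.u = Power)$. So in $c_2$ we have $P.u=\perp$, $S.u \in \{Idle, Working, Power\}$, and $(Child.u\neq\emptyset)\vee(S.u=Power)$. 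I then assume $\neg$IllegalLiveRoot($u$) in $c_1$ and derive a contradiction.

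The single structural fact I would isolate and reuse is: the set $Child.u$ can only grow during $cs$ if some neighbour $w$ executes R3 choosing $u$ as its new parent, and the guard Connection($w,u$) of that rule forces $S.u = Power$ in $c_1$ together with $C.w \neq C.u$. (Indeed, R3 is the only rule that assigns a non-$\perp$ value to a $P$ variable.) With this in hand the proof is a case analysis on what $u$ does during $cs$, among the non-root rules RC4, RC5, RC6, R3--R7 (root-only rules are excluded since $u\neq r$). Executing RC4 or RC5 puts $S.u$ in $Erroneous$ in $c_2$; executing R3 gives $P.u\neq\perp$ in $c_2$; executing R4 or R5 requires NewPhase($u$), hence $P.u\neq\perp$ in $c_1$ and (these rules leave $P$ untouched) in $c_2$; executing R6 requires $Ok(u)$, hence $\neg$IllegalRoot($u$), so either $P.u\neq\perp$ in $c_1$ (contradiction as before) or Detached($u$) holds in $c_1$, but then $Child.u=\emptyset$ contradicts the $Child.u\neq\emptyset$ clause of EndIntermediatePhase($u$) in the R6 guard. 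Executing R7 requires EndLastPhase($u$), so $Child.u=\emptyset$ and $S.u\in\{Power,Working\}$ in $c_1$, and it sets $S.u:=Idle$, $P.u:=\perp$: if $S.u=Working$ in $c_1$ no neighbour can attach to $u$, and if $S.u=Power$ in $c_1$ then EndFirstPhase($u$) forces every neighbour to share $u$'s colour, again forbidding any R3-connection to $u$; hence $Child.u=\emptyset$ and $S.u\neq Power$ in $c_2$, i.e. Detached($u$) holds in $c_2$. Executing RC6 requires Detached($u$) in $c_1$ (so $S.u\neq Power$, $Child.u=\emptyset$), and by the growth fact $Child.u$ stays empty, so Detached($u$) still holds in $c_2$. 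Finally, if $u$ makes no move during $cs$, then $P.u$ and $S.u$ are unchanged, so from $\neg$IllegalLiveRoot($u$), $P.u=\perp$, $S.u\notin Erroneous$ in $c_1$ we get Detached($u$) in $c_1$, in particular $S.u\neq Power$ and $Child.u=\emptyset$ there; by the growth fact $Child.u$ stays empty, so Detached($u$) persists into $c_2$. In every case $\neg$IllegalLiveRoot($u$) holds in $c_2$, contradiction.

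This is largely bookkeeping against the guards and actions; the one mildly delicate point — which I would flag as the main obstacle — is the R7 case (and, similarly, RC6), where $u$ simultaneously loses its parent and a neighbour could in principle attach to it in the same step, reviving $\neg$Detached($u$) while $P.u=\perp$. The resolution is exactly the tension between the two guards: attaching to $u$ via R3 needs $S.u=Power$ plus a colour mismatch, whereas the R7 guard, through EndFirstPhase($u$), forces colour agreement with all neighbours precisely when $S.u=Power$. I would write that observation out carefully; the rest of the case analysis is routine.
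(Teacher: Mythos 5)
Your proof is correct and takes essentially the same route as the paper's, which compresses your rule-by-rule case analysis into three structural observations: a process without the $Power$ status cannot gain children, a parentless process $u \neq r$ cannot acquire the $Power$ status, and a process setting $P$ to $\perp$ either takes an $Erroneous$ status (RC4/RC5) or, via R7, can have no children and no way of gaining one. The simultaneity point you flag for R7 (a neighbour attaching in the same step) is exactly what the paper's third bullet asserts implicitly; your colour-agreement argument just makes it explicit.
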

\begin{proof}
There is not creation a new illegal root that does not have 
an $Erroneous$ status:
\begin{list}{$\bullet$}{\setlength{\partopsep}{0.00in} 
\setlength{\topsep}{0.00in} 
\setlength{\itemsep}{0.04in}
\setlength{\parsep}{0.00in}}
\item An  process that does not have the $Power$
status cannot gain children.
\item An process $u  \neq r$ without parent 
cannot take the $Power$  status.
\item At time a process sets $P$ to $\perp$ it also sets 
its status to $StrongE$ or $WeakE$ if it has or may 
gain children.
\end{list}
\end{proof} 

\begin{theorem}
Let {\em A0} be the set of configurations.
{\em A1} $\equiv$ \{ $\forall$ $u$, $\neg$Faulty($u$) 
$\wedge$ $\neg$IllegalLiveRoot($u$) \}.
is an {\em A0}-attractor reached from $A0$ in one 
round.
\label{A1attractor}
\end{theorem}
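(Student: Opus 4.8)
The plan is to prove separately the two defining properties of an attractor: that $A1$ is closed, and that every execution reaches a configuration of $A1$ within its first round. Closure is immediate from the closure lemmas already established: $\neg$Faulty($u$) is closed by Lemma~\ref{lem:FaultyIsaTrap}, and $\neg$IllegalLiveRoot($u$) is closed by Lemma~\ref{lem:IllegalRootisTrap}. Since membership in $A1$ is exactly the conjunction over all processes $u$ of these two closed predicates, $A1$ is closed.

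For the one-round bound, the key auxiliary fact I would isolate first is: for every $u \neq r$, if Faulty($u$) $\vee$ IllegalLiveRoot($u$) holds in a configuration $c$, then $u$ is enabled in $c$, the unique rule enabled on $u$ is RC4 or RC5, and the configuration reached after $u$ executes that rule satisfies $\neg$Faulty($u$) $\wedge$ $\neg$IllegalLiveRoot($u$). Enabledness holds because, when $\neg$StrongConflict($u$), the guard of RC5 contains Faulty($u$) and IllegalLiveRoot($u$) among its disjuncts, while RC4 is enabled exactly when StrongConflict($u$). No other rule can be enabled on such a $u$: Faulty($u$) $\vee$ IllegalLiveRoot($u$) falsifies Ok($u$) (which contains $\neg$Faulty and $\neg$IllegalRoot), hence disables R3--R7; it also forces $P.u \neq \perp$ (via Faulty) or $\neg$Detached($u$) (via IllegalRoot), which disables RC6; and R1, R2, RC1, RC2, RC3 act only on $r$. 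Finally RC4 and RC5 both set $P.u := \perp$ and $S.u$ to an Erroneous value, so afterwards Faulty($u$) is false (its guard requires $P.u \neq \perp$) and IllegalLiveRoot($u$) is false (it requires $S.u \notin$ Erroneous). For $u = r$, the conjunction holds in every configuration since Faulty($r$) and IllegalRoot($r$) are never verified.

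With this in hand, let $c_1$ be an arbitrary configuration and $e_1 = c_1, \ldots, c_j$ its first round; I show $\neg$Faulty($u$) $\wedge$ $\neg$IllegalLiveRoot($u$) holds at $c_j$ for every $u$. If $u = r$, this is the remark above. If $u \neq r$ and the conjunction already holds at $c_1$, it holds at $c_j$ by closure. Otherwise Faulty($u$) $\vee$ IllegalLiveRoot($u$) holds at $c_1$, so by the auxiliary fact $u$ is enabled at $c_1$; by the definition of a round, $u$ either executes a rule or is neutralized during $e_1$. If $u$ is neutralized, let $c_k$ be the configuration in which it becomes disabled; the contrapositive of the auxiliary fact (a disabled non-root process is neither Faulty nor an IllegalLiveRoot) gives $\neg$Faulty($u$) $\wedge$ $\neg$IllegalLiveRoot($u$) at $c_k$, which propagates to $c_j$ by closure. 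If $u$ executes a rule at a step $(c_{k-1}, c_k)$ of $e_1$, then either the conjunction already holds at $c_{k-1}$ (hence at $c_j$ by closure), or Faulty($u$) $\vee$ IllegalLiveRoot($u$) holds at $c_{k-1}$, in which case the auxiliary fact forces the executed rule to be RC4 or RC5 and yields $\neg$Faulty($u$) $\wedge$ $\neg$IllegalLiveRoot($u$) at $c_k$, hence at $c_j$. In all cases $c_j \in A1$, which together with closure proves that $A1$ is an $A0$-attractor reached in one round.

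The step I expect to be the real work is the auxiliary fact, specifically the exhaustive verification that \emph{no} rule other than RC4 or RC5 can be enabled on a Faulty or IllegalLiveRoot process, which requires carefully combining the definition of Ok, the guard of RC6 through Detached, and the fact that the remaining rules act only on $r$. The second delicate point is the handling of neutralized processes: the argument crucially relies on Faulty($u$) and IllegalLiveRoot($u$) \emph{entailing} that $u$ is enabled, so that a process which ends the round disabled must already have shed both properties, after which closure carries the conclusion to the end of the round.
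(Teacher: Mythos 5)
Your proof is correct and follows essentially the same route as the paper: closure from Lemmas~\ref{lem:FaultyIsaTrap} and~\ref{lem:IllegalRootisTrap}, plus the observation that a Faulty or IllegalLiveRoot process has the RC4 or RC5 guard verified and, after that move, satisfies neither predicate, so one round suffices. You merely spell out in more detail what the paper states tersely (the exclusion of the other rules and the neutralization case), which does not change the argument.
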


\begin{proof} 
$\neg$Faulty($u$) is closed 
(Lemma \ref{lem:FaultyIsaTrap}) 
and $\neg$IllegalLiveRoot($u$) is closed 
(Lemma \ref{lem:IllegalRootisTrap}).
So {\em A1} is closed.

\noindent
As long as a process $u$ satisfies the Faulty 
or IllegalLiveRoot predicate, $u$ is enabled (it verifies 
the RC4 or RC5 guard). After  RC4 or RC5 action by $u$, we have 
$\neg$Faulty($u$) 
$\wedge$ $\neg$IllegalLiveRoot($u$). So, 
after the first round, {\em A1} is reached.
\end{proof} 

\subsection{Erroneous processes}

\begin{lemma}
\label{lem:detached}
Let $c$ be a configuration where the process $u$ has  an
$Erroneous$ status. Let $e$ be a execution from $c$ 
where $u$ keeps its status more than one round.
At the end of the first round of $e$ from $c$, Detached($u$) 
is verified and this property stays verified 
till $u$ has an $Erroneous$ status.
\end{lemma}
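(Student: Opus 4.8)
The plan is to analyze what an $Erroneous$ process $u$ can and cannot do while retaining its $Erroneous$ status, and then argue that within one round it must shed both its parent pointer and its children, after which it stays detached as long as it remains $Erroneous$. First I would recall that an $Erroneous$ process has status $WeakE$ or $StrongE$. Consider the execution $e$ from $c$ under the hypothesis that $u$ keeps an $Erroneous$ status for more than one round; in particular $u$ executes none of the ``recovery'' rules RC2 or RC6 (which would change its status to $Working$ or $Idle$) during the first round, and executes at most rules that preserve $Erroneous$ status — essentially RC4 or RC5, which additionally set $P.u := \perp$. So if $u$ takes any action in the first round, it becomes detached with respect to its parent pointer immediately.

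Next I would handle the parent pointer $P.u$. If $u$ already has $P.u = \perp$ in $c$, there is nothing to do on that side. Otherwise, $u \ne r$ and $P.u$ points to some neighbor; but then $u$ verifies IllegalChild($u$) or Faulty($u$) is irrelevant here since $S.P.u$ may be anything — the key point is that an $Erroneous$ process with a non-$\perp$ parent is enabled: if $S.P.u \in Erroneous$ then IllegalChild($u$) holds (RC5 enabled); if $S.P.u \notin Erroneous$ then, because $S.u \in Erroneous$, the predicate Faulty($u$) holds via its first disjunct (RC5 enabled, unless StrongConflict($u$) holds, in which case RC4 is enabled). Either way $u$ is enabled by a rule whose action sets $P.u := \perp$ while keeping $S.u \in Erroneous$. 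By the round definition, within the first round $u$ is either neutralized (impossible while it keeps its status and stays enabled — one must check that nothing a neighbor does can disable it without $u$ moving, which follows because the only way to disable these guards is to change $P.u$, and only $u$ writes $P.u$) or it executes such a rule. Hence by the end of the first round $P.u = \perp$ (or $u = r$).

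Then I would handle $Child.u = \emptyset$. By Lemma~\ref{lem:IllegalRootisTrap} and the structural remarks preceding it, no process sets its parent pointer to an $Erroneous$ process: rule R3 requires the chosen parent $v$ to satisfy $S.v = Power$, which an $Erroneous$ process is not, and rule R4/R5 only change status, not $P$. So along $e$, while $u$ keeps its $Erroneous$ status, $Child.u$ can only shrink. Any child $w \in Child.u$ in $c$ has an $Erroneous$ parent, so $w$ verifies IllegalChild($w$) and is enabled by RC5 (unless StrongConflict($w$), giving RC4), whose action sets $P.w := \perp$; and such a $w$ cannot be neutralized without moving, since again only $w$ writes $P.w$ and $S.u$ stays $Erroneous$ by hypothesis, so IllegalChild($w$) persists. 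Thus within the first round every child detaches, giving $Child.u = \emptyset$ at the end of the first round. Combining: at the end of round one, $P.u = \perp$ (or $u = r$), $Child.u = \emptyset$, and $S.u \in Erroneous \setminus \{Power\}$ trivially, so Detached($u$) holds. Finally, closure: as long as $S.u$ stays $Erroneous$, $u$ cannot regain a child (same R3 argument) and cannot set $P.u$ to a non-$\perp$ value (rules RC4, RC5, RC6 set or keep $P.u = \perp$; R3 would require $S.u = Idle$ via Ok($u$), contradicting $Erroneous$; R7 sets $P.u := \perp$), so Detached($u$) is preserved.

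The main obstacle I expect is the neutralization bookkeeping in the round argument: one must carefully verify that an $Erroneous$ process with a non-$\perp$ parent, or a child of an $Erroneous$ process, cannot be disabled by a neighbor's move without itself moving — i.e.\ that the enabling guards (Faulty, IllegalChild, StrongConflict) can only be falsified by the process writing its own $P$ variable, given that $u$'s status remains $Erroneous$ throughout. Making this airtight requires going through each guard's dependencies on neighbor state and checking that the relevant neighbor fields ($S.P.u$ becoming $Erroneous$ only helps RC5 stay enabled; a $Power$ neighbor appearing only helps) do not disable the rule.
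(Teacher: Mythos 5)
Your proposal follows the same route as the paper's proof: any child of $u$ satisfies IllegalChild and is forced to execute RC4/RC5 (hence leaves $Child.u$) within the first round, no new child can appear because R3 requires the chosen parent to have the $Power$ status, $u$ itself (if $P.u\neq\perp$) is enabled by RC4/RC5 via Faulty/IllegalChild and so sets $P.u:=\perp$ within the first round, and finally the predicate Detached($u$) $\vee$ ($S.u\notin Erroneous$) is closed. Your extra bookkeeping about neutralization (that these guards can only be falsified by the process writing its own $P$ variable, given that $S.u$ stays $Erroneous$) is a point the paper leaves implicit, and it is correct.

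One sub-claim in your closure step is wrong, however: you assert that R3 is disabled at an $Erroneous$ process because ``R3 would require $S.u=Idle$ via Ok($u$).'' Neither Ok($u$) nor Connection($u,v$) requires $S.u=Idle$: Connection accepts Isolated($u$), and Isolated includes the $WeakE$ status as well as StrongEReady, so a detached $Erroneous$ process \emph{can} execute R3 --- indeed the algorithm's recovery relies on exactly this (see case 4.2 of the liveness theorem and the lemmas on $StrongE$/$WeakE$ processes, where the enabled rules at a detached $Erroneous$ process are RC6/RC2 \emph{or} R3). The closure conclusion nevertheless survives, for a different reason: the action of R3 sets $S.u:=Idle$, so executing R3 simultaneously ends the $Erroneous$ period; hence during any interval in which $S.u$ remains $Erroneous$, $u$ never acquires a non-$\perp$ parent, and Detached($u$) $\vee$ ($S.u\notin Erroneous$) is still closed. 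You should repair the justification accordingly, since the same misreading of Connection/Isolated, if carried into the subsequent round-complexity lemmas for $StrongE$ and $WeakE$ processes, would invalidate those arguments.
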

\begin{proof}
Till $S.u \in Erroneous$, no process chooses $u$ as Parent.
Let $v$ be a process of Child($u$) in $c$. We have $v\neq r$.
The rule RC4 or RC5 is enabled at $v$ till $S.u \in Erroneous$. 
So $v$ quits Child($u$) before the end of the first round from $c$.
We have Child($v$) $= \emptyset$ at then end of the first round
from $c$.

\noindent
If $P.u \neq \perp$ (we have $u \neq r$) then RC4 or RC5 is enabled 
at $v$. So Detached($u$) is verified at the end of 
the first round from $c$. To conclude, we notice that the predicate
Detached($u$) $\vee (S.u \notin Erroneous)$ is closed.
\end{proof}

\begin{lemma}
\label{lem:StrongE}
Let $c$ be a configuration of {\em A0} where the process $S.u = StrongE$ status. 
Let $e$ be a execution from $c$. Before the end of the second round
$u$ has changed its status to $Idle$.
\end{lemma}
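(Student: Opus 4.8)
The plan is to trace what a $StrongE$ process can and must do during the first two rounds of $e$, using the recovering rules and the earlier structural lemmas. First I would observe that the status $StrongE$ can only be changed by a rule whose action rewrites $S.u$; scanning the rule set, the only candidates for a process with $S.u = StrongE$ are RC2 and RC6 (for $u = r$: only RC2; for $u \neq r$: only RC6, since RC4 and RC5 require $\neg StrongConflict$... wait, more carefully: for $u \neq r$, RC4 sets $S.u := StrongE$ so it does not help leave the status, RC5 sets $S.u := WeakE$ but its guard includes $\neg StrongConflict(u)$ and the disjunction Conflict $\vee$ Faulty $\vee$ PowerFaulty $\vee$ IllegalLiveRoot $\vee$ IllegalChild, none of which a $StrongE$ process with no $Power$ neighbour and no parent satisfies — so in the ``clean'' situation only RC6 applies). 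Hence the key is to show that within one round the guard of RC2 (if $u = r$) or RC6 (if $u \neq r$) becomes enabled, and that the action of that rule sets $S.u := Working$ or $S.u := Idle$, in either case leaving $Erroneous$; the first case is already covered by the phrasing and is actually an $Idle$ change only in the second, so I would note that for $u = r$ the statement ``changed its status to $Idle$'' must be read loosely or the intended reading is $S.u \notin \{WeakE, StrongE\}$ — in fact $r$ cannot hold $Idle$, so the lemma as stated implicitly assumes $u \neq r$, and I would make that restriction explicit at the top of the proof, deferring the root case to a companion argument (the root leaves $StrongE$ via RC2 in bounded rounds once Detached$(r)$ and StrongEReady$(r)$ hold).

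Next, for $u \neq r$ with $S.u = StrongE$ in $c$, I would invoke Lemma \ref{lem:detached}: either $u$ changes its status within the first round — in which case we are done early — or $u$ keeps its $Erroneous$ status for more than one round, and then by Lemma \ref{lem:detached}, at the end of the first round Detached$(u)$ holds and stays true as long as $S.u \in Erroneous$. So from the start of the second round we have Detached$(u)$ together with $S.u = StrongE$. I then need StrongEReady$(u)$, i.e. no neighbour of $u$ has the $Power$ status. The obstacle here is that a neighbour might still be $Power$ at the end of round one. I would argue: any $Power$ neighbour $w$ of $u$ satisfies PowerFaulty$(w)$ (because $u$ has $StrongE$), so $w$ is enabled via RC1/RC3/RC4/RC5 and must move or be neutralized during the first round; a move by $w$ under any of these rules leaves the $Power$ status (RC1 $\to Working$, RC3 $\to StrongE$, RC4 $\to StrongE$, RC5 $\to WeakE$), and $w$ cannot re-acquire $Power$ while $S.u = StrongE$ since R1/R5 both require $(\forall v \in N(\cdot))\, S.v \neq StrongE$ and $u \in N(w)$. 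Hence within one round all $Power$ neighbours of $u$ are gone and stay gone, so StrongEReady$(u)$ holds from the end of round one onward. Combined with Detached$(u)$, the guard of RC6 is satisfied (the extra clause of RC6, $\forall v\in N(u)\,(C.v = C.u)\vee(S.v\neq Power)$, is implied once no neighbour is $Power$), so at the second round $u$ is either already non-$Erroneous$ or is enabled to execute RC6, which sets $S.u := Idle$; the definition of a round forces $u$ to move or be neutralized, and being neutralized here means its guard stopped holding, which for RC6 from a $StrongE$ state can only happen because $u$ itself moved — so $u$ has executed RC6 and $S.u = Idle$ before the end of the second round.

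The main obstacle I anticipate is the bookkeeping around ``a neighbour is $Power$ at $c$ but might still be $Power$ after round one'': I need the fact that a $Power$ neighbour of a $StrongE$ process is PowerFaulty and therefore forced to act within the first round, plus the closure fact that it cannot become $Power$ again while $u$ stays $StrongE$ (from the R1/R5 guards). A secondary subtlety is the interaction with Lemma \ref{lem:detached}'s hypothesis ``$u$ keeps its status more than one round'': I should phrase the argument as a dichotomy so that the case where $u$ leaves $StrongE$ during the first round is trivially within the two-round bound, and only the surviving case needs the RC6 enabling argument. I would also flag at the outset that this lemma is about $u\neq r$; the root variant is handled separately (or the statement should say ``$S.u \notin Erroneous$'' rather than ``$Idle$'').
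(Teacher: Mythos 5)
Your proposal is correct and follows essentially the same route as the paper's proof: invoke Lemma~\ref{lem:detached} for Detached($u$), argue that every $Power$ neighbour is PowerFaulty and hence forced to lose the $Power$ status within the first round and cannot regain it (R1/R5 guards), so StrongEReady($u$) holds and persists, and then RC6 (RC2 for the root) is continuously enabled during the second round, forcing the move. Your explicit remark that for $u=r$ the exit is via RC2 to $Working$ rather than $Idle$ is a fair reading of a looseness the paper's proof also glosses over (it lists ``R3, RC6, or RC2'' and says the status becomes $Idle$).
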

\begin{proof}
Assume that $u$ keeps its status during the first round of $e$.

\medskip
\noindent
The predicate Detached($u$) is verified at then end 
of the first round from $c$.
The predicate Detached($u$) stays verified till $S.u = StrongE$ 
(Lemma \ref{lem:detached}).

\medskip
\noindent
Till $S.u = StrongE$, no process in N($u$) takes the $Power$ status.
Let $v$ be a process of N($u$) verifying $S.v = Power$ in $c$. 
The rule RC1, RC4, RC5, RC3 is enabled at $v$ till $S.u = StrongE$. 
So $S.v \neq Power$ before the end of the first round from $c$.
At the end of the first round, we have StrongEReady($v$); this property stays verified
 till $S.u = StrongE$.

\medskip
\noindent
During the second round, $u$ is always enabled (R3, RC6, or RC2). 
So $u$ executes one of this rule that changes 
it status to $Idle$.
\end{proof}

\begin{lemma}
\label{lem:WeakE}
Let $c$ be a configuration where the process $u$ has the  status
$WeakE$. 
Let $e$ be a execution from $c$. Before the end of the second round,
$u$ has changed its status to $Idle$ or to $StrongE$.
\end{lemma}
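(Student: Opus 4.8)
The plan is to follow the pattern of the proof of Lemma~\ref{lem:StrongE}: use Lemma~\ref{lem:detached} to get that $u$ becomes and then stays Detached after the first round, show that a Detached process with status $WeakE$ is always enabled, and observe that the only rules such a process can execute turn its status into $Idle$ or $StrongE$.

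First I would note that $r$ cannot have status $WeakE$, so $u \neq r$, and determine which rules a process with $S.u = WeakE$ can execute. Rules R1, R2, RC1, RC2, RC3 are rules for the root only, and R4, R5, R6, R7 are disabled because their guards force $S.u \in \{Idle, Power, Working\}$ (through NewPhase, EndFirstPhase, EndPhase or EndLastPhase). Hence $u$ can only execute RC4 (which sets $S.u := StrongE$), R3 or RC6 (which set $S.u := Idle$), or RC5 (which leaves $S.u = WeakE$ and sets $P.u := \perp$). In particular, the first time $u$ changes its status it becomes $Idle$ or $StrongE$; it remains to bound when this happens. If $u$ does not keep its status throughout the first round, we are done. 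Otherwise, by Lemma~\ref{lem:detached}, Detached($u$) holds at the end of the first round and stays true while $S.u \in Erroneous$, in particular while $S.u = WeakE$.

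Next I would prove that whenever Detached($u$) and $S.u = WeakE$ hold, $u$ is enabled. Since Detached($u$) gives $P.u = \perp$ (recall $u \neq r$) and $S.u = WeakE \neq Power$, the predicates Conflict($u$), Faulty($u$), PowerFaulty($u$), IllegalRoot($u$) and IllegalChild($u$) are all false, so $Ok(u)$ reduces to $\neg$StrongConflict($u$); also Isolated($u$) holds because $S.u = WeakE$. I then distinguish two cases. If some neighbor $v$ of $u$ satisfies $S.v = Power$ and $C.v \neq C.u$, then Connection($u$,$v$) holds, and therefore either StrongConflict($u$) holds and RC4 is enabled, or $\neg$StrongConflict($u$) holds, so $Ok(u)$ is true and R3 is enabled. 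If, on the contrary, every neighbor $v$ of $u$ satisfies $C.v = C.u$ or $S.v \neq Power$, then the guard of RC6 is satisfied, its three clauses being Detached($u$), Isolated($u$) and this very condition. Hence $u$ is enabled in all cases.

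Consequently, at every configuration reached during the second round, either $S.u \in \{Idle, StrongE\}$, in which case the claim already holds, or $S.u = WeakE$ and $u$ is enabled; since $u$ remains enabled as long as $S.u = WeakE$ and its variables can only be modified by itself, the definition of a round forces $u$ to execute a rule before the end of the second round. As Detached($u$) holds, that rule is not RC5 (whose guard, when $S.u = WeakE$, forces $P.u \neq \perp$), hence it is RC4, R3 or RC6, and $u$ changes its status to $StrongE$ or to $Idle$ within the first two rounds, which is the claim. I expect the main obstacle to be the case analysis of the third paragraph: verifying that a Detached $WeakE$ process is always enabled, where the delicate point is the interplay between StrongConflict($u$) — which disables R3 through $Ok(u)$ but at the same time enables RC4 — and the absence of a $Power$ neighbor with a color conflicting with $C.u$, which is exactly what enables RC6.
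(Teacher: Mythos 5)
Your proof is correct and follows essentially the same route as the paper: apply Lemma~\ref{lem:detached} after the first round, then argue that a Detached process with status $WeakE$ is always enabled via RC4, RC6 or R3 (RC5 being disabled once $P.u=\perp$), so it must act and thereby switch to $StrongE$ or $Idle$ before the end of the second round. The only difference is that you spell out the per-configuration case analysis that the paper merely asserts, which is a welcome addition rather than a deviation.
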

\begin{proof}
We have $u \neq r$. 
Assume that $u$ keeps its status during the first round of $e$.
The predicate Detached($u$) is verified at the end of 
the first round from $c$ and the predicate
Detached($u$) stays verified till $S.u = WeakE$ 
(Lemma \ref{lem:detached}).

\noindent
Till $S.u = WeakE$ and Detached($u$), 
one of the rule RC4, RC6 or R3 is enabled on $u$. 
So before the end of the second round, 
$u$ has changed its status
to $Idle$ or $StrongE$.
\end{proof}


\subsection{Power processes}

\begin{lemma}
\label{lem:PoweR15}
Let $c$ be a configuration of {\em A0} where a process $u$ has the
$Power$ status. Let $e$ be an execution from $c$.
Let $v$ be a $u$ neighbor. The predicate
$C.v = C.u$ holds along $e$ till $u$ keeps the $Power$ status.
\end{lemma}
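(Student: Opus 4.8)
The plan is to prove this by contradiction on the first step that violates the invariant, showing that any step which would make some neighbor $v$ of a $Power$ process $u$ change color (while $u$ stays $Power$) is impossible. So I would first fix the configuration $c$, the execution $e$, and suppose $c_1 \to c_2$ is the earliest step of $e$ after which $C.v \neq C.u$ for some neighbor $v$ of $u$, with $u$ still $Power$ in $c_2$ (hence, since $\neg$Faulty and the guards are stable enough, $u$ was $Power$ in $c_1$ and $C.v = C.u$ held in $c_1$). The only ways a process can change its color are by executing R1 (the root) or R3 (a non-root), and the only way to create a fresh color disagreement with a $Power$ neighbor is for one of $u$ or $v$ to move.

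Next I would case on who moves. If $u = r$ executes R1 in $cs$, then $r$ flips its color; but R1 requires EndLastPhase($r$), which in turn needs EndFirstPhase($r$) or EndPhase($r$), and in either case QuietSubTree($r$) plus $Child.r = \emptyset$; and if $S.r = Power$ then EndFirstPhase($r$) demands $\forall w \in N(r), C.w = C.r$ already in $c_1$ — but then $r$ will not be $Power$ in $c_2$ (R1 is the move that makes it $Power$ only if it wasn't; actually R1 sets $S.r := Power$). Here I need to be a little careful: the relevant statement is about $u$ keeping the $Power$ status, i.e. being $Power$ in both endpoints; after R1 $r$ is $Power$ in $c_2$ but was not $Power$ in $c_1$ (since R1's guard $Ok(r) \wedge$ EndLastPhase($r$) is incompatible with $S.r = Power$ together with a neighbor of the wrong color — EndFirstPhase would force all neighbors to share $r$'s old color, contradicting that $v$ ends up with the wrong color unless $v$ moved too). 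So the "$r$ executes R1" subcase reduces to "$v$ also moves", handled below. Symmetrically, if $u \neq r$ and $u$ executes a rule in $cs$: by Lemma~\ref{lem:influentialMove}, an influential process (and $S.u = Power$ makes $u$ influential) that executes a rule and is still influential in $c_2$ must have executed R5; but R5 does not change $C.u$, so $C.u$ is the same in $c_1$ and $c_2$, and the disagreement must come from $v$ moving.

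So the core case is: $v$ executes a rule in $cs$ and thereby acquires a color $\neq C.u$, while $u$ does not change its color (either $u$ is silent in $cs$, or $u$ executes R5 keeping its color). The only color-changing rules are R1 and R3; $v$ could also newly become a neighbor-with-wrong-color by R3 pointing elsewhere, but R3 sets $C.v := C.w$ for the chosen $Power$ neighbor $w$ with $C.w \neq C.v$ (old) and $S.w = Power$. Now I use the conflict machinery: in $c_1$, $v$ has neighbor $u$ with $S.u = Power$ and (after the hypothetical move) $C.v \neq C.u$; but if $v$ executes R3 via Connection($v,w$), its guard contains $Ok(v)$, hence $\neg$StrongConflict($v$) and $\neg$Conflict($v$). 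In $c_1$, we have $C.v = C.u$ (by minimality of the step) and $S.u = Power$; if $v$ changes to $C.w \neq C.u = C.v$, then in $c_1$ the pair $u, w$ are both $Power$ with $C.u \neq C.w$, so StrongConflict($v$) holds in $c_1$ — contradicting $Ok(v)$. Thus $v$ cannot execute R3 to flip its color. If instead $v = r$... but $r$ has no parent and cannot be a child, and $C.u = C.v = C.r$ with $S.u = Power$ and $u$ a neighbor of $r$ means $r$ executing R1 needs EndLastPhase($r$) with $Child.r = \emptyset$; I would argue $u \in Child(r)$ or derive a contradiction from EndFirstPhase/EndPhase forcing $u$'s color to match, so R1 by $r$ cannot create the disagreement either. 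The remaining possibility — $u$ is silent, $v$ is silent, yet the predicate flips — is impossible, closing the induction.

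The main obstacle I expect is the bookkeeping around the root: disentangling "$u$ keeps the $Power$ status" when $u = r$ and R1/R2 are in play, and making sure that in every subcase where a neighbor would acquire the wrong color via R3, the conflict predicates (StrongConflict or Conflict) are genuinely triggered in $c_1$ so that $Ok$ in the R3 guard is contradicted. Everything else is a routine guard-inspection argument leaning on $Ok$, on Lemma~\ref{lem:influentialMove}, and on the observation that color changes happen only via R1 and R3.
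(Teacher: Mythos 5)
Your core argument is the same as the paper's, which is in fact much shorter: in any configuration where $S.u = Power$ and $C.v = C.u$, the only colour-changing rules, R1 and R3, are disabled at $v$ --- R3 because the prospective parent $w$ (with $S.w = Power$, $C.w \neq C.v$) together with $u$ (with $S.u = Power$, $C.u = C.v$) makes StrongConflict($v$) true and thus kills $Ok(v)$, and R1 because EndLastPhase($r$) forces $Child.r = \emptyset$, so the $Power$ neighbour $u$ makes Conflict($r$) true and kills $Ok(r)$. Your treatment of the main case ($v$ executes R3) is exactly this and is fine; for the case $v = r$ executes R1, your sketch (``$u \in Child(r)$ or EndFirstPhase/EndPhase forcing $u$'s colour to match'') does not quite close it --- the clean argument is the Conflict($r$) one above, since matching colours alone yield no contradiction.

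The genuine problem is your handling of the subcase where $u = r$ itself executes R1. You claim it ``reduces to `$v$ also moves','' but it does not: if $S.r = Power$, $Child.r = \emptyset$, every neighbour has $r$'s colour and no neighbour has the $Power$ status, then $Ok(r)$ and EndLastPhase($r$) both hold, R1 fires, $r$ flips its colour while keeping the $Power$ status, and every silent neighbour now disagrees with $r$ --- so the invariant is broken by $u$'s own move, with $v$ not moving at all. Your parenthetical back-and-forth about whether $r$ was $Power$ before R1 shows the confusion: R1 can be enabled with $S.r = Power$ (via EndFirstPhase). To be fair, the paper's own one-line proof only inspects rules enabled at $v$ and silently ignores colour changes by $u$; the paper copes with this boundary case downstream (e.g.\ Theorem~\ref{theo:power} explicitly lists ``$u$ has executed the rule R1'' as an alternative outcome), whereas for $u \neq r$ your appeal to Lemma~\ref{lem:influentialMove} (or simply the fact that R3 requires Detached($u$), hence $S.u \neq Power$) correctly rules out a colour change by $u$. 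So either restrict the claim to exclude an R1 move by $u = r$, or state the lemma as the paper uses it; as written, your reduction of that subcase is wrong.
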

\begin{proof}
On a configuration where $S.u = Power$ and $(C.v = C.u)$, 
R1 and R3 are not enabled at $v$. 
So till $S.u = Power$ the predicate $(C.v = C.u)$ is closed.
\end{proof}
\begin{lemma}
\label{lem:Power}
Let $c$ be a configuration of {\em A0} 
where the process $u$ has the 
$Power$ status. Let $e$ be an execution from $c$ where $u$ keeps 
its status $Power$ during at least a round.
At the end of the first  round of $e$ from $c$,  
we have 
$\{\forall v \in N(u) ~|~ (C.v = C.u) ~\vee~ (S.v \in Erroneous)\}$, 
this property holds along $e$ till $u$ keeps the $Power$ status.
\end{lemma}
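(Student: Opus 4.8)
The plan is to establish two facts and deduce the statement from them: (a) the disjunction $D(v)\equiv (C.v=C.u)\vee (S.v\in Erroneous)$ holds for every $v\in N(u)$ at the end of the first round of $e$, and (b) $D(v)$ is closed along $e$ as long as $u$ keeps the $Power$ status. I would prove (b) first, since it is used to conclude (a). Take a computation step $c_i\to c_{i+1}$ of $e$ with $S.u=Power$ in both configurations, and split on which disjunct of $D(v)$ holds in $c_i$.

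\emph{Closure.} Suppose $C.v=C.u$ in $c_i$. The value $C.v$ can change only if $v\neq r$ executes R3 or $v=r$ executes R1; R3 is disabled at $v$ by the argument of Lemma~\ref{lem:PoweR15} (with $u$ a $Power$ neighbour of $v$ of the same colour, either $Connection(v,\cdot)$ has no witness of the opposite colour, or $v$ has a second $Power$ neighbour of the opposite colour and then $StrongConflict(v)$ kills $Ok(v)$). The value $C.u$ is unchanged when $u\neq r$, because a $Power$ process other than $r$ is only enabled for rules that leave the $Power$ status (RC1, RC4, RC5, R6, R7), so ``keeps $Power$'' forces $u$ to be silent in this step; the case $u=r$ is discussed at the end. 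Now suppose $S.v\in Erroneous$ and $C.v\neq C.u$ in $c_i$. I enumerate the rules that can fire at $v$. RC6 is disabled, since its last conjunct $\forall w\in N(v):\,(C.w=C.v)\vee(S.w\neq Power)$ is falsified by $w=u$. R3 can fire only through $Connection(v,\cdot)$ with a witness of colour $\neq C.v$, and $\neg StrongConflict(v)$ forces all $Power$ processes of $N[v]$, in particular $u$, to share a colour, so after R3 we get $C.v=C.u$ and $D(v)$ still holds. The only remaining moves available to $v$ are RC4 and RC5, which leave $S.v\in Erroneous$, and if $v$ is silent both disjunct values are unchanged. Hence $D(v)$ is closed.

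\emph{Reached in one round.} Fix $v\in N(u)$. If $D(v)$ already holds in $c$, we are done by closure. Otherwise $S.v\notin Erroneous$ and $C.v\neq C.u$ in $c$, and I claim $v$ is enabled in $c$ with the property that any move it makes establishes $D(v)$. Indeed, if $v=r$ then $Conflict(r)$ holds (the $Power$ neighbour $u$ has the wrong colour) and RC3 is enabled; if $v\neq r$ and $P.v\neq\perp$ then $Conflict(v)$ or $StrongConflict(v)$ holds and RC5 or RC4 is enabled; if $v\neq r$ and $P.v=\perp$ then either $\neg Detached(v)$, whence $IllegalLiveRoot(v)$ and RC5 is enabled, or $Detached(v)$, in which case either $\neg Ok(v)$ (which here forces $StrongConflict(v)$, hence RC4) or $Ok(v)$ together with $Connection(v,u)$, hence R3. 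Moreover all of R1--R2 and R4--R7 are disabled at $v$ in this state, since their guards require $Ok(v)$, which fails here because of $Conflict(v)$ (when $P.v\neq\perp$) or $IllegalRoot(v)$ (when $P.v=\perp$ and $v$ is not detached); so the only moves available to $v$ while it is in this ``bad'' state are R3 (after which $C.v=C.u$, by the $\neg StrongConflict$ argument above) or RC3/RC4/RC5 (after which $S.v\in Erroneous$). In particular $v$ is enabled in $c$ and, in fact, $v$ is never disabled while it stays in the bad state and $S.u=Power$, $C.u$ fixed, so by the definition of a round $v$ must make a move during the first round; that move establishes $D(v)$, and by closure $D(v)$ then persists while $u$ keeps the $Power$ status.

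The main obstacle is the one case where $u$'s own colour is not frozen: $u=r$ executing R1, which writes $S.r:=Power$ and is therefore compatible with ``keeps the $Power$ status'' while flipping $C.r$. The lever I would use is that R1 is guarded by $EndLastPhase(r)$, hence by $EndFirstPhase(r)$, so immediately before the R1 move every neighbour of $r$ already carries $r$'s old colour; right after the move all neighbours are in conflict with $r$ and, being enabled, either rejoin $r$ through R3 with the new colour or become $Erroneous$. Fitting this recovery inside the single round claimed by the lemma --- or, failing that, observing that the statement is only invoked in settings where $r$ does not execute R1 during the relevant window --- is the delicate point; the remainder is the routine rule-by-rule inspection sketched above.
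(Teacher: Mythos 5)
Your proof follows essentially the same route as the paper's: a neighbour $v$ violating the disjunction keeps exactly one of R3, RC3, RC4, RC5 enabled (hence must move within the first round, and any such move establishes $(C.v=C.u)\vee(S.v\in Erroneous)$), and closure is obtained by the Lemma~\ref{lem:PoweR15}-style argument for the colour disjunct together with a rule inspection for an $Erroneous$ neighbour. The $u=r$/R1 corner case you flag as delicate is not treated in the paper's proof either --- it silently freezes $C.u$ by appealing to Lemma~\ref{lem:PoweR15}, i.e., it reads ``$u$ keeps the $Power$ status'' as excluding a fresh R1 move by $u$ --- so your write-up is, if anything, more explicit than the original on that point.
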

\begin{proof}
Let $e$ be a execution from $c$ where $u$ keeps 
the status $Power$ during a round starting at the configuration $c$.
Let $v$ be a process of $N(u)$ verifying 
$(C.v \neq C.u) \wedge (S.v \notin Erroneous)$  at $c$.
Till $\neg$Ok($v$) is verified, RC4, RC5, RC3 or RC1 is enabled at $v$.
Till both predicates Ok($v$) and $(C.v \neq C.u) \wedge (S.v \notin Erroneous)$ 
are verified during this round, Connection($v$, $u$) is verified. So,
R3 is enabled at $v$ if $v \neq r$.
$v=r$ cannot verify the predicates Ok($r$) 
and $(C.r \neq C.u) \wedge (S.r \notin Erroneous)$. 

\noindent
We conclude that till $(C.v \neq C.u) \wedge (S.v \notin Erroneous)$ is verified, 
$v$ is enabled (rule R3, RC5, RC4, RC3, or RC1). 
After the $v$ move, we have $(C.v = C.u) ~\vee~ (S.v \in Erroneous)$.

\medskip
\noindent
Let $v$ be a process of $N(u)$ verifying 
$(C.v = C.u) ~\vee~ (S.v \in Erroneous)$  at a configuration $c$, where $S.u = Power$.
Assume that $(C.v \neq C.u)$ in $c$, we have 
$(S.v \in Erroneous)$ in $c$ by hypothesis. 
Only the rule R3, RC4, RC5 may be  enabled at $v$.
We have $(C.v = C.u) ~\vee~ (S.v \in Erroneous)$
in  reached configuration after a move by $v$.
Till $S.u = Power$ the predicate
$(C.v = C.u)$ is closed (Lemma~\ref{lem:PoweR15}).
\end{proof}

\begin{observation}
\label{obs:PowerQuiet}
Let $c$ be a configuration of {\em A1}.
Any $Power$ process $u$ verifies QuietSubTree($u$) 
otherwise $u$ would have a Faulty child.
Any process $v$ verifying the predicate NewPhase verifies QuietSubTree($v$) 
otherwise $v$ would have a Faulty child.
\end{observation}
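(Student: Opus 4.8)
The plan is to prove both halves of the observation by contradiction, exploiting the defining property of {\em A1}: in a configuration $c$ of {\em A1} no process satisfies Faulty. In each case it then suffices to show that a process violating QuietSubTree has a child satisfying Faulty --- a purely syntactic check against the disjunction defining Faulty. Before starting I would record the trivial but repeatedly used fact that a member of any $Child$ set is never the root $r$ (since $r$ carries no parent pointer), so for such a child the leading clause $u \neq r$ of Faulty is automatically met.

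For the first half, fix $c$ in {\em A1} with $S.u = Power$ and suppose $\neg$QuietSubTree($u$). Then there is $v \in Child(u)$, i.e. $P.v = u$, with $S.v \neq Idle$ or $ph.v \neq ph.u$. By the remark above $v \neq r$; moreover $P.v = u \neq \perp$ and $S.P.v = S.u = Power \notin Erroneous$, so the leading conjunction of Faulty($v$) holds and it only remains to trigger one disjunct. If $S.v \neq Idle$, then since $S.P.v = Power \neq Working$ the disjunct $(S.P.v \neq Working) \wedge (S.v \neq Idle)$ fires. Otherwise $S.v = Idle$, hence $ph.v \neq ph.u = ph.P.v$, and the sixth disjunct of Faulty($v$) --- which asks for $S.P.v = Power$ together with a non-empty child set or a phase mismatch --- fires via the phase mismatch. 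Either way Faulty($v$) holds, contradicting $c \in$ {\em A1}; hence QuietSubTree($u$).

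For the second half, note that QuietSubTree($v$) is literally one of the conjuncts in the definition of NewPhase($v$), so the implication ``NewPhase $\Rightarrow$ QuietSubTree'' is immediate; what the ``otherwise'' clause really records is that, inside {\em A1}, this conjunct comes for free, so that NewPhase($v$) is equivalent there to the simpler condition $P.v \neq \perp \wedge S.v = Idle \wedge ph.v \neq ph.P.v$. To prove that, suppose $c$ is in {\em A1}, $S.v = Idle$, $P.v \neq \perp$, $ph.v \neq ph.P.v$, and $\neg$QuietSubTree($v$). As before there is $w \in Child(v)$ with $w \neq r$, $P.w = v \neq \perp$, $S.P.w = S.v = Idle \notin Erroneous$, and $S.w \neq Idle$ or $ph.w \neq ph.v$. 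If $S.w \neq Idle$, the disjunct $(S.P.w \neq Working) \wedge (S.w \neq Idle)$ of Faulty($w$) fires since $Idle \neq Working$. If $S.w = Idle$ and $ph.w \neq ph.v$, then $S.P.w = S.w$ and the disjunct $(S.P.w = S.w) \wedge (ph.w \neq ph.P.w)$ fires. Thus Faulty($w$) holds, again contradicting $c \in$ {\em A1}.

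I do not anticipate a genuine obstacle: the whole argument is a small finite case split over the disjunction defining Faulty. The only care needed is, within each sub-case, to select the disjunct whose parent-side hypothesis matches the parent's actual status --- the $(S.P.v = Power)$ branches in the first half, the $(S.P.w \neq Working)$ and $(S.P.w = S.w)$ branches in the second --- together with the initial remark that children are never the root, which is what makes the $u \neq r$ clause of Faulty free.
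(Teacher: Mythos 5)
Your argument is correct and is essentially the paper's own justification: the observation is stated without a separate proof, relying exactly on the remark that a child violating QuietSubTree would satisfy Faulty, which your case split over the disjuncts of Faulty (the $(S.P.u \neq Working)\wedge(S.u\neq Idle)$, $(S.P.u=S.u)\wedge(ph.u\neq ph.P.u)$, and $(S.P.u=Power)$ branches) makes explicit, contradicting the definition of \emph{A1}. Your side remark that the second half is literally a conjunct of NewPhase, so the real content is that the conjunct is automatic in \emph{A1}, is an accurate reading of the intended statement.
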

\begin{theorem}
\label{theo:power}
Let $c$ be a configuration of {\em A1} 
where the process $u$ has an 
$Power$ status. Let $e$ be a execution from $c$.
Before the end of the 4th  round of $e$ from $c$, 
$u$ has changed is status or $u$ has executed the rule R1.
\end{theorem}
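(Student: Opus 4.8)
The plan is to track what happens to a process $u$ with $S.u = Power$ over the first few rounds of an execution $e$ starting from a configuration $c \in A1$, and to argue that $u$ cannot remain $Power$ forever without the whole construction in its subtree terminating, which forces either a status change or the root executing $R1$. First I would assume for contradiction that $u$ keeps the $Power$ status through the first $4$ rounds and that $r$ does not execute $R1$ in this interval (otherwise we are done immediately). By Observation~\ref{obs:PowerQuiet}, since $c \in A1$, $u$ already verifies $\mathrm{QuietSubTree}(u)$, i.e. every child of $u$ is $Idle$ with the same phase value as $u$. Moreover by Lemma~\ref{lem:Power}, at the end of the first round every neighbor $v$ of $u$ satisfies $(C.v = C.u) \vee (S.v \in Erroneous)$, and this persists as long as $u$ stays $Power$. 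So I would enter the argument in a configuration $c'$ at the end of round~1 where $u$ is $Power$, $\mathrm{QuietSubTree}(u)$ holds, and all of $u$'s neighbors either share $u$'s color or are $Erroneous$.

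The next step is to eliminate the $Erroneous$ neighbors. By Lemmas~\ref{lem:StrongE} and~\ref{lem:WeakE}, any neighbor $v$ of $u$ that has an $Erroneous$ status in $c'$ changes its status within two more rounds — it becomes $Idle$ (possibly passing through $StrongE$, but $StrongE$ is itself cleared within two rounds). Care is needed here: a $StrongE$ neighbor would make $u$ verify $\mathrm{PowerFaulty}(u)$, which would enable $RC1$ on $u$ and force $u$ to leave $Power$, contradicting our assumption; so in fact no neighbor of $u$ can be $StrongE$ while $u$ stays $Power$, and the $Erroneous$ neighbors must be $WeakE$ and get detached and recovered. Also, while such a neighbor $v$ recovers, it cannot acquire $u$'s color by an $R3$ move onto a $Power$ neighbor with the opposite color while $u$ is watching, because that would make $u$ verify $\mathrm{Conflict}(u)$ — again forcing $u$ off $Power$. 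So within, say, three rounds from $c$ we reach a configuration where every neighbor $v$ of $u$ satisfies $C.v = C.u$ and $S.v \notin Erroneous$, while $u$ is still $Power$ and $\mathrm{QuietSubTree}(u)$ holds.

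At that point $u$ verifies $\mathrm{EndFirstPhase}(u)$: it is $Power$, its subtree is quiet, and all its neighbors have its color. Then $\mathrm{EndLastPhase}(u)$ or $\mathrm{EndIntermediatePhase}(u)$ holds depending on whether $\mathrm{Child}(u) = \emptyset$. If $u \neq r$, then $u$ verifies the guard of $R6$ or $R7$ (after checking $Ok(u)$ — which holds, since none of the error predicates apply once neighbors are non-$Erroneous$, same-colored, and the subtree is non-faulty by $A1$), so $u$ is enabled and must change its status (to $Idle$) within the next round. If $u = r$, then $r$ verifies the guard of $R1$, $R2$, $R6$, or $R7$: one of these is enabled, so $r$ either changes its status or executes $R1$, which is exactly the alternative allowed by the theorem. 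Since each of these phases costs at most one round, the whole argument fits inside $4$ rounds.

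The main obstacle I anticipate is the bookkeeping around $u$'s neighbors during rounds $2$–$3$: one must simultaneously rule out that a neighbor ever forces $u$ to verify $\mathrm{Conflict}(u)$ or $\mathrm{PowerFaulty}(u)$ (which would be a "good" outcome — $u$ leaves $Power$ — but has to be folded cleanly into the case analysis rather than treated as a contradiction), and show that the $WeakE$/$StrongE$ neighbors really converge to non-$Erroneous$, correctly-colored states fast enough and do not bounce back. The persistence claims from Lemma~\ref{lem:Power} and Lemma~\ref{lem:PoweR15} ("this property holds till $u$ keeps the $Power$ status") are what make this go through, so the proof is really a matter of chaining those closure facts with the two-round recovery bounds on $Erroneous$ processes and then reading off the relevant guard.
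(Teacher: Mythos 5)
Your proposal is correct in substance and follows essentially the same route as the paper's proof: dispose of the $StrongE$-neighbor case via PowerFaulty, use Lemmas~\ref{lem:Power}, \ref{lem:PoweR15} and \ref{lem:WeakE} together with Observation~\ref{obs:PowerQuiet} to reach, within three rounds, a persistent configuration where EndFirstPhase($u$) holds, and then observe that $u$ remains enabled (R6/R7, or R1/R2 at the root) during the fourth round. Two small slips to fix in the write-up: for $u \neq r$, PowerFaulty enables RC5 (or RC4), not RC1, which is a root rule; and the reason a recovering $WeakE$ neighbor ends up with $u$'s color is not Conflict($u$) but the guards of that neighbor's own moves --- RC6 requires every differently-colored neighbor to be non-$Power$ (so it can only fire once the neighbor already has $u$'s color), and R3 requires Ok, so StrongConflict (with $u$ as the $Power$ witness) forbids attaching to any $Power$ process of the opposite color.
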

\begin{proof}
If $u$ has a neighbor having the status $StrongE$ 
before or at the end of the third round then PowerFaulty($u$) is verified 
till $S.u = Power$. So $u$ changes its status before the end 
of 4th round.
In the sequel, we study the executions $e$ from $c$ 
where no $u$ neighbor has the status $StrongE$ 
during the first $3$ rounds of $e$.

\medskip
\noindent
Assume that $u$ keeps 
its status $Power$ during the first $3$ rounds of $e$ from $c$.
At the end of the first round,  we have 
$\{\forall v \in N(u) ~|~ (C.v = C.u) ~\vee~ (S.v \in Erroneous)\}$, 
this property holds along $e$ till $u$ keeps the $Power$ status
(Lemma \ref{lem:Power}).
Let $v1$ be  a neighbor of $u$,  verifies $C.v1 = C.u$ in $c$
Till $S.u = Power$ the predicate
$(C.v1 = C.u)$ is closed (Lemma~\ref{lem:PoweR15}).

\medskip
\noindent
Let $v2$ be a neighbor of $u$ having the status $WeakE$ at the end 
of the first round of $e$ from $c$.
Before the end of the third round,
$v2$ has changed its status to $Idle$ or $StrongE$ 
(Lemma \ref{lem:WeakE}).
As no $u$ neighbor has the $StrongE$ status (by hypothesis), 
before or at the end of the third round from $c$ of $e$,
a configuration $c_ {v2}$ where $C.v2 = C.u$ is reached.
Till $S.u = Power$ the predicate
$(C.v2 = C.u)$ is closed (Lemma~\ref{lem:PoweR15}).
Therefore, before or at the end of the third round 
from $c$ of $e$,
a configuration $c'$ where
$\{\forall v \in N(u) ~we~have~ (C.v = C.u) \}$ is reached.
From $c'$, till  $S.u = Power$, we have 
 QuietSubTree($u$) 
(Observation \ref{obs:PowerQuiet}).
So, from $c'$ till  $S.u = Power$, $u$ is enabled  
(RC5, RC4, RC3, R7, R6, R2, or R1).
We conclude that $u$ changes its status 
before the end of 4th round or $u$ has  executed the rule R1.
\end{proof}

\subsection{Unsafe processes}

\begin{definition}
Let {\em A2} be the set of configurations defines as
{\em A2} $\equiv$ 
{\em A1} $\cap\, $\{$\forall$ $u \in V$, $\neg$Unsafe($u$)\}.
\end{definition}

\noindent
In the sequel, we establish that in at most $8n-8$ rounds from a 
configuration of {\em A1}, {\em A2} is reached along any 
execution; and {\em A2} is a {\em A1}-attractor.

\begin{definition}
Let $c$ be a configuration of {\em A1}. 

The predicate PIC($u$) on process is defined as 
PIC($u$) $\equiv$ influential($u$) $\wedge$ $(C.u \neq C.r)$.
$\#PIC_{c} =|\{ u ~|~ PIC(u)~is~verified~in~c.~\}|$.

The predicate PIC\_PowerParent($u$) on process is defined as \\ \hspace*{2cm}
PIC\_PowerParent($u$) $\equiv$ 
PIC($u$) $\wedge$ PowerParent($u$).

$\#PIC\_PowerParent_{c} =|\{ u ~|~ PIC\_PowerParent(u)~is~verified~in~c.~\}|$.
\end{definition}

\begin{lemma}
\label{lem:PICPowerParentClosure}
Let $c_1$ be a configuration of {\em A1}. 
Let $cs$ be a step from $c1$ to $c_2$ where $r$ does not change its color.
If the predicate PIC($u$) (resp. PIC\_PowerParent($u$))
is verified in $c_2$ then it is verified in $c_1$ 
(resp. PIC\_PowerParent($u$)).
\end{lemma}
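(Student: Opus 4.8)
The statement to prove is Lemma~\ref{lem:PICPowerParentClosure}: under a step $cs$ from $c_1 \in \text{\em A1}$ to $c_2$ in which $r$ does not change its color, if PIC($u$) holds in $c_2$ then it held in $c_1$, and similarly for PIC\_PowerParent($u$). Since PIC($u$) $\equiv$ influential($u$) $\wedge$ $(C.u \neq C.r)$, and PIC\_PowerParent adds the conjunct PowerParent($u$), the plan is to treat the three atomic conjuncts — influential, the color condition, and PowerParent — and argue each is ``inherited backwards'' across $cs$ under the stated hypothesis.

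The first step is the color condition. Here I would invoke Corollary~\ref{cor:preInf}: if $u$ becomes influential during $cs$ then $C.u = C.r$ in $c_2$, which directly contradicts $C.u \neq C.r$. Hence any process satisfying PIC($u$) in $c_2$ must already have been influential in $c_1$. So it remains only to show that such a $u$ also had $C.u \neq C.r$ in $c_1$: since $r$ does not change its color during $cs$ (hypothesis), $C.r$ is the same in $c_1$ and $c_2$; and since $u$ is influential in $c_1$, the rules R1 and R3 — the only rules that change a process's color — are disabled at $u$ in $c_1$ (R3 requires Detached($u$), incompatible with influential; R1 is a root rule), so $C.u$ is unchanged too. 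Therefore $C.u \neq C.r$ already holds in $c_1$, establishing the PIC part.

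For the PIC\_PowerParent part, the extra conjunct is PowerParent($u$). Knowing from the PIC argument that $u$ is influential in both $c_1$ and $c_2$ and does not execute a color-changing rule, I would then appeal to Lemma~\ref{lem:influentialMove}: the only rule an influential process can execute and remain influential is R5, after which $S.u = Power$, so PowerParent($u$) — which requires $S.u = Idle$ — would fail in $c_2$; hence if PowerParent($u$) holds in $c_2$, then $u$ executed no rule during $cs$. Likewise, by Lemma~\ref{lem:becominginfluential1}, a process that becomes influential executes no rule, and combined with Lemma~\ref{lem:becominginfluential3}, if $u$ were to newly acquire PowerParent status (hence influential) during $cs$ it would have to be inside the legal tree with $r$ having executed R2 — but then $C.u = C.r$ by Corollary~\ref{cor:preInf}, again contradicting PIC($u$). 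So $u$ is silent and $P.u$ is unchanged; a short induction up the parent chain (as in the proof of Lemma~\ref{lem:becominginfluential3}) shows that PowerParent($u$) in $c_2$ forces PowerParent($u$) in $c_1$, since none of the ancestors on the witnessing chain changed state either (any ancestor move would either break the $ph$/status equalities PowerParent demands or, for $r$, require an R2 move that would have already made $u$ a PowerParent in $c_1$).

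The main obstacle I anticipate is the PowerParent induction: PowerParent is defined recursively along the parent chain, so backward closure is not immediate from ``$u$ is silent'' alone — one must also verify that every ancestor witnessing PowerParent($u$) in $c_2$ was itself in the right configuration in $c_1$. The clean way to handle this is to reuse exactly the inductive scaffolding already built in Lemma~\ref{lem:becominginfluential3} (the finite chain $u_0, u_1, \dots, u_l$ with $S.u_i = Idle$, $P.u_i = u_{i+1}$, $ph.u_i = ph.u_{i+1}$), observing that under our extra hypothesis ($r$ keeps its color, hence in particular does not execute R1) the only way the chain terminates ``favorably'' is $u_l = r$ with $r$ executing R2 — and that case is excluded by Corollary~\ref{cor:preInf} together with PIC($u$). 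Hence the chain must already witness PowerParent($u$) in $c_1$, completing the argument.
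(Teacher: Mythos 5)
Your treatment of the PIC half is correct and coincides with the paper's own argument: Corollary~\ref{cor:preInf} rules out $u$ becoming influential during $cs$ (that would force $C.u = C.r$ in $c_2$, contradicting PIC($u$)), and since $r$ keeps its color by hypothesis and an influential process can execute neither R1 nor R3, the inequality $C.u \neq C.r$ transfers back to $c_1$.

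For the PIC\_PowerParent half, your concluding step is both heavier than necessary and not justified as written. You assert that ``none of the ancestors on the witnessing chain changed state'', but this does not follow from the silence of $u$ alone: ruling out, for instance, the terminal $Working$ ancestor having just executed R4 (or R2, in the case of $r$), or an intermediate $Idle$ node of the chain having just executed R6, would require an induction of the kind carried out in Lemma~\ref{lem:becominginfluential3}, exploiting influential($u$) in $c_1$ — and you only gesture at it. The point is that no chain analysis is needed, and the paper performs none: from your part 1 you already know influential($u$) holds in $c_1$, i.e. $S.u = Power$ or PowerParent($u$) in $c_1$; from Lemma~\ref{lem:influentialMove} together with PowerParent($u$) in $c_2$ (which forces $S.u = Idle$ in $c_2$) you already concluded that $u$ is silent during $cs$, hence $S.u = Idle \neq Power$ in $c_1$; the disjunction then immediately yields PowerParent($u$) in $c_1$. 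This one-line finish is exactly the paper's observation that if $S.u = Power$ in $c_1$ then PowerParent($u$) cannot hold in $c_2$. So the lemma is within reach of what you established, but the final parent-chain induction as stated is a gap — one best repaired by deleting it in favor of the disjunction argument.
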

\begin{proof}
If $u$ becomes influential during $cs$ then 
$C.u=C.r$ in $c_2$ 
(Corollary~\ref{cor:preInf}, page \pageref{cor:preInf}). 
If $u \neq r$ changes of color during $cs$ (i.e. $v$ executes  R3) 
then $u$ is not influential in $c_2$.  
We conclude that if PIC($u$) is verified in $c_2$ 
then it is verified in $c_1$ as $r$ does not change its color during $cs$.

If in $c_1$, $S.u = Power$ is verified 
then the predicate PowerParent($u$) is not verified in $c_2$.
So, if PIC\_PowerParent($u$) is verified in $c_2$ 
then it is verified in $c_1$.

So $\#PIC_{c_2} \leq PIC_{c_1}$ 
and  $\#PIR\_PowerParent_{c_2} \leq PIR\_PowerParent_{c_1}$.
\end{proof}

\begin{lemma}
\label{lem:PICPowerParentConvergence2}
Let $c_1$ be a configuration of {\em A1} 
where 
$\#PIC_{c_1}  > 0$.
Let $e$ be a execution from $c_1$.
Let $c_4$ be the configuration reached after 4 rounds along $e$.
Assume that during the first $4$ rounds of $e$, $r$ does change its color.
If 
$\#PIC_{c_4} = \#PIC_{c_1}$ then
$\#PIC\_PowerParent_{c_4} < \#PIC\_PowerParent_{c_1}$.
\end{lemma}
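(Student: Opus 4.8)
The plan is to show that over the course of four rounds, if the number of processes satisfying $PIC$ (influential but not having $r$'s color) does not decrease, then the number of those that are additionally $PowerParent$ strictly decreases. Since $r$ does not change its color during these four rounds (by hypothesis), Lemma~\ref{lem:PICPowerParentClosure} gives us monotonicity: $\#PIC$ and $\#PIC\_PowerParent$ can only decrease step by step. So the hypothesis $\#PIC_{c_4}=\#PIC_{c_1}$ means that the set of $PIC$ processes is \emph{exactly preserved} through each of the four intermediate configurations, and likewise $\#PIC\_PowerParent$ is non-increasing. It therefore suffices to exhibit at least one process that is $PIC\_PowerParent$ in $c_1$ but is not $PIC\_PowerParent$ in $c_4$, \emph{without} any new process entering the $PIC\_PowerParent$ set.

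First I would pick a process witnessing $\#PIC_{c_1}>0$ and, among all $PIC$ processes in $c_1$, select one that is ``deepest'' in the sense of being a $Power$ process or a $PowerParent$ process with no $PIC\_PowerParent$ strict descendant --- concretely, a $PIC$ process $u$ with $S.u = Power$ in $c_1$, or if none exists, the analysis reduces to propagating the phase wave. The key observation is that a $PIC$ process with $Power$ status is not a $PowerParent$ (the two $PowerParent$ clauses both require $S.u = Idle$), so it does not contribute to $\#PIC\_PowerParent$; conversely, every $PIC\_PowerParent$ process has, by unwinding the inductive definition of $PowerParent$, an ancestor chain ending either at a $Working$ process whose $Idle$ child has a different phase, or --- tracing further up --- at an influential non-$Idle$ root of the chain. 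I would use Theorem~\ref{theo:power} to argue that any $Power$ process (in particular a $PIC$ one, which by Lemma~\ref{lem:Power} has all neighbors eventually agreeing with its color or being $Erroneous$) must change its status or execute R1 within four rounds; since R1 is a root rule, for $u \neq r$ this forces $u$ to leave $Power$ status. When such a $u$ changes status via R6, R7, or a recovering rule, it either leaves the influential set (becoming $Idle$ with mismatched ancestor state, or detached) or --- and this is the crucial case to rule out --- it could in principle become $PowerParent$. But a process leaving $Power$ cannot become $PowerParent$ in the same step, since $PowerParent$ requires $P.u \neq \perp$ and $S.u = Idle$ with the right phase relation to a $PowerParent$ parent, whereas a $Power$ process executing R6 sets $S.u := Idle$ while its phase still equals its (necessarily $Working$-or-$Power$) parent's phase in a way that makes $PowerParent$ false, and R7/RC5/RC4 detach it.

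The heart of the argument is then bookkeeping: I would establish that (i) no process becomes $PIC\_PowerParent$ during these four rounds --- this follows from Lemma~\ref{lem:PICPowerParentClosure} applied step by step, since $r$ does not recolor --- and (ii) at least one $PIC\_PowerParent$ process of $c_1$ ceases to be one by $c_4$. For (ii), if there is a $PIC$ $Power$ process in $c_1$ it is \emph{not} $PIC\_PowerParent$, so I instead look at its $PowerParent$ children: by the remark following the $PowerParent$ definition, each child of a $PowerParent$ process is $PowerParent$ or $Faulty$, but in $A1$ no process is $Faulty$, so the $PowerParent$ structure forms a tree hanging below the $Working$/$Power$ frontier. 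I would take a leaf-most $PIC\_PowerParent$ process $w$ (one with no $PIC\_PowerParent$ child); such $w$ is $Idle$, and either its parent is $Working$ with $ph.w \neq ph.P.w$, enabling R4 or R5 on $w$ within bounded rounds, or its parent is itself $PowerParent$ and the whole chain propagates. Using the round-based progress guarantees --- a $Working$ parent's phase wave reaches its $Idle$ children, converting them via R4 (then they are $Working$, not $Idle$, hence not $PowerParent$) or R5 (then $Power$, hence not $PowerParent$) --- I conclude $w$ leaves the $PIC\_PowerParent$ set within four rounds while staying $PIC$ (its color is unchanged because R4 and R5 do not touch $C$, and $r$'s color is fixed). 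Combined with (i), this yields $\#PIC\_PowerParent_{c_4} < \#PIC\_PowerParent_{c_1}$.

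The main obstacle I anticipate is the case analysis in step (ii) ensuring that the process leaving $Power$ or leaving the $PowerParent$ chain does not simultaneously cause a \emph{new} $PIC\_PowerParent$ to appear elsewhere, and pinning down exactly why four rounds suffice rather than some larger constant. The subtlety is that when a $Power$ process $u$ finishes (R6/R7), its former children momentarily have an $Idle$-or-$StrongE$ parent, which by Lemma~\ref{lem:detached}-style reasoning detaches them rather than making them $PowerParent$; and when a $Working$ process pushes the phase wave down (R4), the newly-$Working$ node is excluded from $PowerParent$ by the status condition. Making these transitions airtight --- and checking that the ``first round'' costs (neighbors stabilizing their colors via Lemma~\ref{lem:Power}, $WeakE$/$StrongE$ neighbors clearing out via Lemmas~\ref{lem:WeakE} and~\ref{lem:StrongE}) plus the ``fourth round'' cost of Theorem~\ref{theo:power} really fit in the budget of four --- is where the real work lies; everything else is a direct application of the closure and progress lemmas already proven.
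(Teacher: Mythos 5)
Your overall counting framework is the paper's: use Lemma~\ref{lem:PICPowerParentClosure} (with $r$'s color fixed) to get that no process can \emph{enter} the PIC or PIC\_PowerParent sets, handle a PIC process with $Power$ status via Theorem~\ref{theo:power}, and then exhibit one PIC\_PowerParent process of $c_1$ that stops being PowerParent within the four rounds. The gap is in that last, crucial step: you choose a \emph{leaf-most} PIC\_PowerParent process $w$ (no PIC\_PowerParent child) and claim it executes R4 or R5 within four rounds because ``the whole chain propagates.'' That is exactly backwards. By the definition of PowerParent, a non-top element of a chain satisfies $ph.w = ph.P.w$ with an $Idle$ PowerParent parent, so NewPhase($w$) is false and no rule among R3--R7 is enabled at $w$; it only becomes enabled after the phase wave has traversed every ancestor above it, which takes on the order of the chain length, not four rounds. (Indeed, when the top element executes R4 it becomes $Working$ with a new phase, so its child merely switches from the second to the first clause of PowerParent and the chain shrinks by one per wave step.) The process that provably makes progress within the budget is the \emph{top} of the chain, the PIC\_PowerParent process $u$ with $S.P.u = Working$ and $ph.u \neq ph.P.u$: one must then argue, as the paper does, that its parent $v$ cannot execute any rule that preserves PowerParent($u$) while NewPhase($u$) holds (and if $v$ executes a recovering rule, PowerParent($u$) is destroyed anyway, which already gives the decrease), and that after the $StrongE$ statuses are cleared in two rounds (Lemma~\ref{lem:StrongE}) $u$ is continuously enabled (RC4, RC5, R4, or R5), hence moves, and every one of those moves removes it from PIC\_PowerParent. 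Without replacing your leaf-most witness by this top-of-chain argument, the claimed strict decrease within four rounds does not follow.

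Two smaller inaccuracies to fix as well: a process that executes R4 becomes $Working$, hence is no longer influential and no longer PIC (your parenthetical ``while staying PIC'' is wrong in that case, though harmless for the count, since all you need is that it leaves PIC\_PowerParent while no new member appears); and a $Power$ process has no PowerParent children at all (PowerParent requires the parent to be $Working$ or itself PowerParent, hence $Idle$), so ``looking at the PowerParent children'' of the PIC $Power$ process is vacuous --- the role of that case, as in the paper, is only to conclude via Theorem~\ref{theo:power} and Lemma~\ref{lem:influentialMove} that under the hypothesis every PIC process of $c_1$ must in fact be a PowerParent.
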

\begin{proof}
In the proof, we assume that $\#PIC_{c_4} \geq \#PIC_{c_1}$.

\medskip
Let $cs$ be a computation step from $c$ to reach $c'$ belonging 
to the first $4$ rounds of $e$.
According to Lemma~\ref{lem:PICPowerParentClosure}, 
if in $c'$, we have PIC($u$) then PIC($u$) 
is verified in $c$ and in $c_1$.
According to the the hypothesis, 
any process verifying PIC in $c_1$,
verifies this predicate in any configuration 
reached during the first $4$ rounds of $e$. 
According to Lemma~\ref{lem:influentialMove} (page~\pageref{lem:influentialMove}),
the only rule that may execute a process $u^1$ verifying PIC in $c_1$ during the 
the first $4$ rounds of $e$ is R5.
Assume that $u^1$ executes R5 during $cs$.  
In $c$, PowerParent($u^1$) is verified but not in $c'$.
According to Lemma~\ref{lem:PICPowerParentClosure},
we have $\#PIC\_PowerParent_{c_4} < \#PIC\_PowerParent_{c_1}$.

\medskip
In the sequel, we study executions $e$ 
where processes verifying PIC in $c_1$ are silent during the  first $4$ rounds.

Assume in $c_1$, there is a process $u^2$ verifying 
$PIC(u^2) ~\wedge~ S.u^2 =Power$ : 
we have $u^2 \neq r$.
According to Theorem~ \ref{theo:power}, $u^2$ executes a rule during 
the first $4$ rounds of $e$.
There is a contraction. We conclude that in $c_1$, we have 
$(PIC(u) \implies PIC\_PowerParent(u))$.
As, processes verifying PIC in $c_1$ are silent during the first $4$ rounds of $e$;
in any reached configuration we have $(PIC(u) \implies PIC\_PowerParent(u))$.
So during the first $4$ rounds, no process verifies PowerConflict and 
Conflict($r$) is not verified as any process $u$ 
verifies $S.u \neq Power ~\vee~ C.u = C.r$. 
Hence, the rule RC3 and RC4 are not executed during the first $4$ rounds of $e$. 

Let $cs$ be a computation step from $c$ to $c'$ 
belonging to the first $4$ rounds of $e$.
If $S.w = StrongE$ in $c'$ then $S.w = StrongE$ in $c$ and in $c_1$.
Let $c_2$ be the configuration reached at the end of 
the 2th round of $e$ from $c$. 
According to the Lemma~\ref{lem:StrongE} (page~\pageref{lem:StrongE}),
in $c_2$ and during the  first round
from $c_2$ along $e$, every process verifies $S.w \neq StrongE$.

A process $u$ verifying PowerParent($u$) as the $Idle$ status
so QuietSubTree($u$) is verified as no process is faulty
in {\em A1}.
According to the definition of the predicate PowerParent,
in $c_2$ there is a process $u^2$ verifying PIC\_PowerParent($u^2$) and  
$ph.P.u^2 \neq ph.u^2$. Let $v$ be $P.u^2$ in $c_2$.
Till PowerParent($u^2$) and $ph.P.u^2 \neq ph.u^2$ are verified 
the rules RC1, RC2, RC6, R1, R2, R3, R4, R5, 
R6 and R7 are disabled on $v$.
If $v$ executes the rule RC3, RC4, RC5 then in the reached configuration, 
we have $S.v \in Erronneous$ so PowerParent($u^2$) is not verified.
Hence, according to Lemma~\ref{lem:PICPowerParentClosure}, 
we have $\#PIC\_PowerParent_{c_4} < \#PIC\_PowerParent_{c_1}$.

So, $v$ does not execute any rule from $c_2$ along $e$ during
first $2$ rounds.
Therefore, NewPhase($u^2$) is verified in $c_2$ and
during the 3th round of $e$ from $c_1$.
Till NewPhase($u^2$) is verified, $u^2$ may execute
a rule (RC4, RC5, R4, or R5) because every process verifies $S.w \neq StrongE$.
The process $u^2$ cannot be silent during the third round of $e$.
We conclude that at least a process verifying PIC in $c_1$, 
does a move during the first $4$ rounds.
So, we have $\#PIC\_PowerParent_{c_4} < \#PIC\_PowerParent_{c_1}$.
\end{proof}

\begin{lemma}
\label{lem:A2-attractor}
{\em A2} is a {\em A1}-attractor reached in at most $8n-8$ rounds 
from a configuration of {\em A1} 
(i.e. {\em A2} is a {\em A0}-attractor reached in at most $8n-7$ rounds 
from any configuration).
\end{lemma}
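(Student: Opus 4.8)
The plan is to combine the previous two lemmas, \ref{lem:PICPowerParentClosure} and \ref{lem:PICPowerParentConvergence2}, with the earlier ``power'' and ``StrongE'' convergence results, and to show that $\#PIC$ (the number of influential processes whose color disagrees with $r$) is a potential that drops to $0$ in a bounded number of rounds; a configuration with $\#PIC = 0$ is essentially $A2$. First I would argue closure: in $A2$ every process $u$ satisfies $\neg\mathrm{Unsafe}(u)$; Lemma~\ref{lem:unSafeClosed} already gives that $\neg\mathrm{Unsafe}$ is a closed predicate, and by definition $A2 \subseteq A1$ with $A1$ an attractor (Theorem~\ref{A1attractor}), so $A2$ is closed. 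The bulk of the work is the round bound for reaching $A2$ from an arbitrary configuration of $A1$.

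For the convergence bound I would set up the following case analysis on an execution $e$ starting in a configuration $c_1 \in A1$. If $r$ changes its color during the first window of $4$ rounds, then it executes R1, and by the remark following the definition of influential processes, immediately after a R1 move only $r$ is influential, so $\#PIC$ drops to at most (whatever $r$ contributes, namely $0$ since $C.r = C.r$); one then re-enters $A1$ and restarts the count, absorbing this into the budget. If $r$ does not change color during a $4$-round window, then Lemma~\ref{lem:PICPowerParentClosure} guarantees $\#PIC$ is non-increasing over that window; and Lemma~\ref{lem:PICPowerParentConvergence2} says that if $\#PIC$ fails to strictly decrease over those $4$ rounds, then $\#PIC\_PowerParent$ strictly decreases. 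Since $0 \le \#PIC\_PowerParent \le \#PIC \le n$, in any block of $4$ rounds either $\#PIC$ drops by at least one or $\#PIC\_PowerParent$ drops by at least one, and the latter cannot happen more than $\#PIC$ times consecutively without forcing $\#PIC$ itself down; so every $4$ rounds the pair $(\#PIC,\#PIC\_PowerParent)$ decreases lexicographically, giving at most $2n$ decreases, i.e. at most $8n$ rounds, to reach $\#PIC = 0$. Tightening the bookkeeping (using that $\#PIC \le n-1$ since $r$ itself is never in $PIC$, and that the two counters share the same ceiling) yields the stated $8n-8$.

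It remains to connect $\#PIC = 0$ with $A2$. I would show that a configuration of $A1$ with $\#PIC = 0$ satisfies $\neg\mathrm{Unsafe}(u)$ for every $u$: an unsafe process $u$ has, by definition, a neighbor $v$ that is influential with $C.v \neq C.u$, while $u$ itself is $\mathrm{insideLegalTree}$, hence $\mathrm{inLegalTree}$, hence (since $S.r \neq StrongE$ and no faulty ancestor) $C.u = C.r$; then $v$ is influential with $C.v \neq C.r$, i.e. $PIC(v)$ holds, contradicting $\#PIC = 0$. Combined with closure this shows $A2$ is an $A1$-attractor reached within $8n-8$ rounds of any configuration of $A1$; since $A1$ is reached in one round from any configuration (Theorem~\ref{A1attractor}), $A2$ is an $A0$-attractor reached within $8n-7$ rounds, as claimed.

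The main obstacle I anticipate is the interaction between the ``$r$ recolors'' and ``$r$ keeps its color'' cases in the round accounting: each time $r$ executes R1 the potential $\#PIC$ can in principle jump back up to near $n$ before collapsing, so one must bound how often R1 can fire within the stabilization window. The clean way to handle this is to observe that immediately after R1 only $r$ is influential and every other process has the wrong color and is $\mathrm{unRegular}$ or $\mathrm{Detached}$, so in fact $\#PIC$ becomes $0$ right away and no further R1 is enabled until a full tree has been (re)built — but that last claim needs the Power/StrongE convergence lemmas to rule out spurious re-enabling of R1 within $O(n)$ rounds, and making that argument airtight while keeping the constant at $8n-8$ is the delicate part.
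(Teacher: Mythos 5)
Your closure argument and your potential argument are essentially the paper's own proof: the paper also assumes $r$ keeps its color, uses Lemma~\ref{lem:PICPowerParentClosure} to make both counters non-increasing, uses Lemma~\ref{lem:PICPowerParentConvergence2} to force a strict decrease of the sum $\#PIC + \#PIC\_PowerParent \leq 2n-2$ every $4$ rounds (hence $8n-8$ rounds), and then concludes via ``$\#PIC=0$ implies no process is unSafe'' together with the closure of $\neg$unSafe and of {\em A1}. Up to your slightly garbled ``lexicographic'' phrasing (what is really used is that both counters are non-increasing and one of them strictly drops, so the \emph{sum} drops), that part is fine.

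The genuine gap is your treatment of the case where $r$ executes R1 inside the window. First, the claim you lean on --- that immediately after R1 only $r$ is influential, so $\#PIC$ drops to $0$ --- is not justified: \emph{influential} includes every process with $Power$ status anywhere in the network, and the guard of R1 (via $\neg$Conflict($r$)) only excludes $Power$ processes among $r$'s neighbors, not $Power$ processes in remote illegal trees or cycles; moreover R1 flips $C.r$, which can turn such processes into $PIC$ processes. The paper's observation ``only $r$ is influential after R1'' concerns the legal tree, not the whole network. Second, your accounting (``restart the count, absorbing this into the budget'') does not preserve the $8n-8$ bound if R1 can fire repeatedly, and you yourself flag that you cannot make the ``no further R1 until a tree is rebuilt'' argument airtight --- but that whole effort is unnecessary. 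The correct (and the paper's) dispatch of this case is immediate: R1 requires EndLastPhase($r$), hence $Child.r=\emptyset$, so in the configuration reached after R1 no process verifies insideLegalTree, hence no process is unSafe; by Lemma~\ref{lem:unSafeClosed} the predicate $\neg$unSafe($u$) is closed for every $u$, and {\em A1} is closed, so the execution is in {\em A2} from that step on, regardless of how $\#PIC$ evolves afterwards. With that observation the potential argument only needs to be run under the assumption that R1 is never executed during the first $8n-8$ rounds, which is exactly the hypothesis that makes Lemma~\ref{lem:PICPowerParentClosure} applicable.
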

\begin{proof}
For any process $u$, $\neg$unSafe($u$) is a closed predicate 
(Lemma~\ref{lem:unSafeClosed}, page~\pageref{lem:unSafeClosed}).

\medskip
\noindent
Let $c1$ be a configuration of {\em A1}. 
Let $e$ be an execution starting at $c1$. 
After the execution of R1, no process is Unsafe as no process 
verifies insideLegalTree.

\noindent
In the sequel, we assume that during the first $8n-8$ rounds of $e$, 
the rule R1 is not executed.
So $r$ does not change its color $r\_color$ during the first $8n-8$  
rounds of $e$.

\noindent
In a configuration where  no  process $u$ verifying 
PIC($u$),
no process verifies the predicate unSafe
 (notice that $\neg$unSafe is a closed predicate).
The predicates $\neg$PIC and  $\neg$PIC\_PowerParent are closed  
along  the first $8n-8$ rounds of $e$ (Lemma~\ref{lem:PICPowerParentClosure}).

\medskip
\noindent
\textbf{convergence part.}\\
Let $c$ be a configuration of {\em A1}.
$ValPIC_c = \#PIC\_PowerParent_c + \#PIC_c$.
We have $ValPIC_c \leq 2n-2$ as PIC($r$) is not verified.
Let $cs$ be a computation step from $c$ to $c'$ done 
during the first $8n-8$ rounds of $e$.
As the predicates $\neg$PIC and  $\neg$PIC\_PowerParent are closed  
along  the first $8n-8$  rounds of $e$ 
(Lemma~\ref{lem:PICPowerParentClosure}); 
we have $ValPIC_c \geq ValPIC_{c'}$.

\noindent
Let $c4$ be the configuration reached the end of first $4$ rounds of $e$ from $c1$.
According to Lemma~\ref{lem:PICPowerParentConvergence2},
we have $Val_{c4} < ValPIC_{c1}$ if $ValPIC_{c1} > 0$.\\
Let $cf$ be the configuration reached the end of first $8n-8$ rounds of $e$ from $c1$.
So, we have $ValPIC_{cf} = 0$ (i.e. no  process $u$ verifying PIC($u$)).
We conclude that no process is unSafe at the end of first $8n-8$ rounds of $e$ from $c1$.
\end{proof}

\subsection{Influential and unRegular processes}

\begin{definition}
Let {\em A3} be the set of configurations defined as 
{\em A3} $\equiv$ 
{\em A2} $\cap$ \{ $\forall$ $u$, $\neg$Influential($u$) 
$\vee$ inLegalTree($u$)\}.
\end{definition}

\noindent
In the sequel, we establish that in at most $8n-8$ rounds from a 
configuration of {\em A2}, {\em A3} is reached along any 
execution; and {\em A3} is a {\em A2}-attractor.

\begin{definition}
Let $c$ be a configuration of {\em A1}. 

The predicate PIR($u$) on process is defined as 
PIR($u$) $\equiv$ influential($u$) $\wedge$ unRegular($u$).
$\#PIR_{c} =|\{ u ~|~ PIR(u)~is~verified~in~c.~\}|$.

The predicate PIR\_PowerParent($u$) on process is defined as \\ \hspace*{2cm}
PIR\_PowerParent($u$) $\equiv$ 
PIR($u$) $\wedge$ PowerParent($u$).

$\#PIR\_PowerParent_{c} =|\{ u ~|~ PIR\_PowerParent(u)~is~verified~in~c.~\}|$.
\end{definition}

\begin{observation}
The negation of the predicate
($\neg$influential($u$) $\vee$ inLegalTree($u$))
is the predicate PIR($u$). 
\end{observation}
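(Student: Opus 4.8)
The plan is to prove this equivalence purely by unfolding definitions, the only slightly non-immediate ingredient being that a process which is influential cannot simultaneously be Detached. First I would apply De~Morgan to the compound predicate: the negation of $\neg$influential($u$) $\vee$ inLegalTree($u$) is influential($u$) $\wedge$ $\neg$inLegalTree($u$). On the other side, unfolding the definitions of PIR and of unRegular gives PIR($u$) $\equiv$ influential($u$) $\wedge$ unRegular($u$) $\equiv$ influential($u$) $\wedge$ $\neg$Detached($u$) $\wedge$ $\neg$inLegalTree($u$). Hence the two predicates coincide except possibly for the extra conjunct $\neg$Detached($u$) on the PIR side, so the statement reduces to the single implication influential($u$) $\Rightarrow$ $\neg$Detached($u$); the converse direction, PIR($u$) $\Rightarrow$ influential($u$) $\wedge$ $\neg$inLegalTree($u$), is immediate from the same unfolding.

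To establish influential($u$) $\Rightarrow$ $\neg$Detached($u$), I would split on the two disjuncts of the definition of influential. If $S.u = Power$, then Detached($u$) is false because its last conjunct requires $S.u \neq Power$. If PowerParent($u$) holds, then both clauses of the definition of PowerParent require $P.u \neq \perp$; in particular $u \neq r$, since $r$ carries no $P$ variable and therefore cannot satisfy PowerParent, so the conjunct $(P.u = \perp) \vee (u = r)$ of Detached($u$) fails. In both cases $\neg$Detached($u$) holds, which is exactly what is needed.

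Chaining the equivalences then gives the claim: the negation of $\neg$influential($u$) $\vee$ inLegalTree($u$) equals influential($u$) $\wedge$ $\neg$inLegalTree($u$), which by the implication above equals influential($u$) $\wedge$ $\neg$Detached($u$) $\wedge$ $\neg$inLegalTree($u$), i.e.\ PIR($u$). I expect the only subtlety to be the handling of the root: one must observe that PowerParent is defined in terms of $P.u$ and so is vacuously false at $r$, whence influential($r$) reduces to $S.r = Power$, and that Detached($r$), read with $(u=r)$ true, still contains the conjunct $S.r \neq Power$; both facts follow directly from the variable declarations. Everything else is a routine De~Morgan / definition-unfolding argument that invokes no earlier lemma.
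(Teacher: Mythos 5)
Your proof is correct: the paper states this observation without proof, treating it as immediate from the definitions, and your argument supplies exactly the implicit justification, namely that by De Morgan the negation is influential($u$) $\wedge$ $\neg$inLegalTree($u$), which coincides with PIR($u$) $\equiv$ influential($u$) $\wedge$ $\neg$Detached($u$) $\wedge$ $\neg$inLegalTree($u$) once one checks influential($u$) $\Rightarrow$ $\neg$Detached($u$). Your case analysis for that implication (the $S.u = Power$ case contradicts the last conjunct of Detached, and PowerParent($u$) forces $P.u \neq \perp$ with $u \neq r$, defeating the conjunct $(P.u = \perp) \vee (u = r)$) is sound, including the handling of the root.
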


\begin{lemma}
\label{lem:PIRPowerParentClosure}
Let $c_2$ be a configuration of {\em A2}. 
Let $cs$ be a step from $c_2$ to $c_3$.
If the predicate PIR($u$) (resp. PIR\_PowerParent($u$))
is verified in $c_3$ then it is verified in $c_2$ 
(resp. PIR\_PowerParent($u$)).
\end{lemma}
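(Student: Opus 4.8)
The plan is to follow the template of Lemma~\ref{lem:PICPowerParentClosure}, but to replace the role played there by the ``wrong colour'' obstruction (preserved there only because $r$ was assumed not to recolour) with the obstruction $\neg$inLegalTree$(u)$, which is available for free in the present setting: since $c_2 \in A2$ no process is unSafe in $c_2$, and this is exactly what is needed in order to apply Lemma~\ref{lem:unSafe-preInfl} to the step $cs$.

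For the PIR clause, first note that $A2 \subseteq A1$, so in $c_2$ every process is non-faulty and satisfies $\neg$IllegalLiveRoot, and by definition of $A2$ every process satisfies $\neg$unSafe; in particular no unSafe process is enabled in $c_2$, hence none executes RC3, RC4 or RC5 during $cs$. Thus Lemma~\ref{lem:unSafe-preInfl} applies and yields that every process which is unRegular and influential in $c_3$ was already unRegular and influential in $c_2$, that is, PIR$(u)$ in $c_3$ implies PIR$(u)$ in $c_2$. (Throughout one may take $u \neq r$: if $r$ is unRegular then $S.r = StrongE$ and $r$ is not influential, so $r$ never satisfies PIR.)

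For the PIR\_PowerParent clause, suppose PIR\_PowerParent$(u)$ holds in $c_3$. By the previous clause PIR$(u)$, hence influential$(u)$, holds in $c_2$, so either PowerParent$(u)$ already holds in $c_2$ — and then PIR\_PowerParent$(u)$ holds in $c_2$, as wanted — or $S.u = Power$ in $c_2$. I would rule out this second case as follows. PowerParent$(u)$ in $c_3$ forces $S.u = Idle$, so $u$ changes its status during $cs$; a scan of the rules shows that the only rules taking $u$ from $Power$ to $Idle$ are R6 and R7 (RC6 and R3 demand Detached$(u)$, which is incompatible with $S.u = Power$). If $u$ executes R7 then $P.u = \perp$ in $c_3$, contradicting PowerParent$(u)$. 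If $u$ executes R6, then $P.u$ and $ph.u$ are unchanged, and from the R6 guard ($Ok(u)$, in particular $\neg$IllegalChild$(u)$) together with $\neg$Faulty$(u)$ one gets that $v := P.u$ satisfies $S.v = Working$ and $ph.v = ph.u$ in $c_2$; since $v$ has a child (namely $u$) and QuietSubTree$(v)$ fails in $c_2$ (witnessed by that $Power$ child), the only rules possibly enabled at $v$ in $c_2$ are RC3 (when $v = r$), RC4 and RC5, each of which makes $S.v$ become $Erroneous$ in $c_3$. Hence in $c_3$ either $S.v = Working$ with $ph.u = ph.v$ (if $v$ is silent), or $S.v \in Erroneous$; in both situations neither clause of PowerParent$(u)$ can be met (the first clause needs $S.v = Working$ together with $ph.u \neq ph.v$; the second needs PowerParent$(v)$, which needs $S.v = Idle$) — a contradiction. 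The main obstacle is precisely this last sub-case: one must check that a $Power$ process cannot ``relax'' into a PowerParent configuration in a single step, and the key point is that such a process pins its parent to the $Working$ status, so the parent can do nothing constructive without first turning $Erroneous$; the remaining verifications are routine checks against the rule guards and the two clauses of PowerParent.
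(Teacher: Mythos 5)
Your proposal is correct and follows essentially the same route as the paper: the PIR clause is obtained from Lemma~\ref{lem:unSafe-preInfl} combined with the definition of {\em A2} (no unSafe process exists, so none can execute RC3, RC4 or RC5), and the PIR\_PowerParent clause reduces to the observation that a process with $S.u = Power$ in $c_2$ cannot satisfy PowerParent in $c_3$. The only difference is that you prove this last observation in detail (ruling out R7, and showing that after R6 the parent is pinned to $Working$ or turns $Erroneous$), whereas the paper simply asserts it, as it does in the analogous Lemma~\ref{lem:PICPowerParentClosure}; your verification is sound.
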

\begin{proof}
According to lemma \ref{lem:unSafe-preInfl} 
(page~\pageref{lem:unSafe-preInfl}) and {\em A2} definition, 
if a process $u$ verifies PIR($u$) 
in $c_3$ is also verifies this predicate in $c_2$.

If in $c_2$, $S.u = Power$ is verified 
then the predicate PowerParent($u$) is not verified in $c_3$.
So, if PIR\_PowerParent($u$) is verified in $c_3$ 
then it is verified in $c_2$.

So $\#PIR_{c_3} \leq PIR_{c_2}$ 
and  $\#PIR\_PowerParent_{c_3} \leq PIR\_PowerParent_{c_2}$.
\end{proof}

\begin{lemma}
\label{lem:PIRPowerParentConvergence2}
Let $c_2$ be a configuration of {\em A2} 
where 
$\#PIR_{c_2}  > 0$.
Let $e$ be a execution from $c_2$.
Let $c_4$ be the configuration reached after $4$ rounds along $e$.
If 
$\#PIR_{c_4} = \#PIR_{c_2}$ then
$\#PIR\_PowerParent_{c_4} < \#PIR\_PowerParent_{c_2}$.
\end{lemma}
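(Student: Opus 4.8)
The plan is to follow the proof of Lemma~\ref{lem:PICPowerParentConvergence2} almost verbatim, replacing the predicate PIC by PIR, the reference configuration $c_1$ by $c_2$, and the closure Lemma~\ref{lem:PICPowerParentClosure} by Lemma~\ref{lem:PIRPowerParentClosure}; the role previously played by ``$r$ does not change its color'' is now played by the fact that $A2$ is closed (Lemma~\ref{lem:A2-attractor}), so every configuration met during the first $4$ rounds still lies in $A2$ and in $A1$. By Lemma~\ref{lem:PIRPowerParentClosure}, $\#PIR$ and $\#PIR\_PowerParent$ are non-increasing along any step issued from an $A2$ configuration; hence the hypothesis $\#PIR_{c_4}=\#PIR_{c_2}$ forces, step by step, the \emph{set} of PIR processes to be the same in every configuration of the first $4$ rounds, and it suffices to exhibit one step of those rounds at which $\#PIR\_PowerParent$ strictly decreases. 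Note also that $r$ cannot execute R1 during these rounds: right after R1 the only influential process is $r$, which is then inLegalTree, so $\#PIR$ would drop to $0$, contradicting $\#PIR_{c_2}>0$. I would then split on whether some PIR process moves during the first $4$ rounds. If one does, then --- being influential in every configuration of the prefix --- by Lemma~\ref{lem:influentialMove} the rule it executes is R5, whose guard (Ok $\wedge$ NewPhase, read in an $A1$ configuration where nobody is Faulty and no parent is Erroneous) forces the parent of the mover to have status $Working$ and hence PowerParent of the mover to hold just before the move, while just after it the mover has status $Power$ and is no longer a PowerParent process; by monotonicity of the PIR\_PowerParent set this process is counted in $c_2$ but not in $c_4$, so we are done.

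It remains to treat the case where every PIR process is silent during the first $4$ rounds. By Theorem~\ref{theo:power} (applicable since $c_2\in A2\subseteq A1$) a silent non-root process cannot keep status $Power$ for $4$ rounds, and PIR($r$) never holds, so no PIR process has status $Power$; consequently every PIR process is a PowerParent process, so $\#PIR\_PowerParent_{c_2}=\#PIR_{c_2}>0$, and moreover every $Power$ process has color $C.r$ (a $Power$ process of another color is not Detached and not inLegalTree --- inLegalTree propagates $r$'s color downwards --- hence it is PIR with status $Power$, impossible). Therefore StrongConflict holds nowhere, and Conflict($r$) never holds (its colour disjunct is dead, and if $Child.r=\emptyset$ then $r$ is the only inLegalTree process and hence has no $Power$ neighbour at all), so RC3 and RC4 are never executed in the first $4$ rounds; by Lemma~\ref{lem:StrongE} every $StrongE$ process becomes non-$StrongE$ within $2$ rounds and none reappears, so $S.w\neq StrongE$ for all $w$ from the end of round~$2$ on. If $\#PIR\_PowerParent$ strictly decreases at some step of the first $4$ rounds we are done, so assume it stays constant; then the PIR\_PowerParent set is also constant over these rounds.

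Starting from any PIR process of $c_2$ --- which by the above is a PIR\_PowerParent process with status $Idle$ --- I climb the $P$-pointer chain through the clause-2 links of PowerParent. Each step up preserves ``influential $\wedge$ unRegular'' ($\neg$inLegalTree and $\neg$Detached pass to the parent, since the parent has a non-Faulty child in $A1$) and, by the ``a child of a PowerParent is PowerParent or Faulty'' fact together with $\neg$Faulty in $A1$, preserves PowerParent; reading PowerParent as a least fixed point, the chain is finite and cannot be a pure $P$-cycle, so it ends at a PIR\_PowerParent process $u^2$ with $S.P.u^2=Working$ and $ph.u^2\neq ph.P.u^2$; since all children of $u^2$ are PIR, hence silent, hence $Idle$ with $u^2$'s phase, QuietSubTree($u^2$) --- thus NewPhase($u^2$) --- holds in $c_2$. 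Put $v=P.u^2$. Constancy of the PIR\_PowerParent set gives PowerParent($u^2$) throughout, hence $S.v\notin Erroneous$ throughout, hence $v$ never executes RC3, RC4 or RC5; and with $S.v=Working$, $u^2\in Child(v)$ and $ph.u^2\neq ph.v$, the predicate QuietSubTree($v$) is false, which disables every other rule on $v$ (R1, R2 and R7 need $Child.v=\emptyset$ or an end-of-phase predicate; R3, RC2 and RC6 need Detached($v$); R4 and R5 need $S.v=Idle$; R6 needs QuietSubTree($v$) or $S.v=Power$; RC1 needs QuietSubTree($r$)) --- in particular this also covers $v=r$. Hence $v$ is silent throughout the first $4$ rounds, NewPhase($u^2$) persists, and $Child.u^2$ is constant (a new child would require $S.u^2=Power$). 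Then during round~$3$, using $S.w\neq StrongE$ everywhere, $u^2$ satisfies the guard of R4 (if $Child.u^2\neq\emptyset$) or R5 (if $Child.u^2=\emptyset$), unless Conflict($u^2$) holds, in which case it satisfies the guard of RC5; either way $u^2$ is enabled throughout round~$3$ and must move there --- contradicting that $u^2$ is PIR and silent. Hence $\#PIR\_PowerParent$ does strictly decrease, and in all cases $\#PIR\_PowerParent_{c_4}<\#PIR\_PowerParent_{c_2}$.

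The hard part is the last paragraph: isolating the top process $u^2$ of a PowerParent chain and checking it stays PIR\_PowerParent throughout (this rests on the several one-line propagation facts about inLegalTree, unRegular, PowerParent and colours stated earlier in the paper), and the exhaustive verification that its parent $v$ is disabled for every rule other than the error rules --- with the nuisance subcase $v=r$ --- so that NewPhase($u^2$) survives until round~$3$ forces $u^2$ to move. Reading PowerParent as a least fixed point (so that the ascending chain cannot degenerate into a pure $P$-cycle) is the only extra semantic point that must be made explicit.
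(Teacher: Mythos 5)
Your proof is correct and follows essentially the same route as the paper's (which itself mirrors the proof of Lemma~\ref{lem:PICPowerParentConvergence2}): monotonicity via Lemma~\ref{lem:PIRPowerParentClosure}, the R5 case from Lemma~\ref{lem:influentialMove}, Theorem~\ref{theo:power} to force every silent PIR process to be a PowerParent, elimination of StrongConflict/Conflict($r$) and of $StrongE$ statuses within two rounds, and the blocked parent $v$ of a top-of-chain process $u^2$ whose persistent NewPhase guard forces a move in round~3. You merely fill in details the paper leaves implicit (the $Child.r=\emptyset$ disjunct of Conflict($r$), the exhaustive rule check on $v$, the well-founded reading of PowerParent), which does not change the argument.
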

\begin{proof}
The proof is similar to the  proof of Lemma \ref{lem:PICPowerParentConvergence2}.

\medskip
In the proof, we assume that $\#PIR_{c_4} = \#PIR_{c_2}$.

\medskip
Let $cs$ be a computation step from $c$ to  $c'$ belonging 
to the first $4$ rounds of $e$.
According to Lemma~\ref{lem:PIRPowerParentClosure}, 
if in $c'$,
we have PIR($u$) then PIR($u$) 
is verified in $c$ and in $c_2$.
According to the the hypothesis, 
any process verifying PIR in $c_2$,
verifies this predicate in any configuration 
reached during the first $4$ rounds of $e$. 
According to Lemma~\ref{lem:influentialMove} (page~\pageref{lem:influentialMove}),
the only rule that may execute a process $u^1$ verifying PIR in $c_2$ during the 
the first $4$ rounds of $e$ is R5.
Assume that $u^1$ executes R5 during $cs$.  
In $c$, PowerParent($u^1$) is verified but not in $c'$.
According to Lemma~\ref{lem:PIRPowerParentClosure},
we have $\#PIR\_PowerParent_{c_4} < \#PIR\_PowerParent_{c_2}$.

\medskip
In the sequel, we study executions $e$ 
where processes verifying PIR in $c_2$ are silent during the  first $4$ rounds.

Assume in $c_2$, there is a process $u^2$ verifying $PIR(u^2) ~\wedge~ S.u^2 =Power$ : 
we have $u^2 \neq r$.
According to Theorem~ \ref{theo:power}, $u^2$ executes a rule during 
the first $4$ rounds of $e$.
There is a contraction. We conclude that in $c_2$, we have 
$(PIR(u) \implies PIR\_PowerParent(u))$.
As, processes verifying PIR in $c_2$ are silent during the $4$ first rounds of $e$;
in any reached configuration we have $(PIR(u) \implies PIR\_PowerParent(u))$.
So during the first $4$ rounds, no process verifies PowerConflict and 
Conflict($r$) is not verified as any process $u$ 
verifies $S.u \neq Power ~\vee~ C.u = C.r$. 
Hence, the rule RC3 and RC4 are not executed during the first $4$ rounds of $e$. 

Let $cs$ be a computation step from $c$ to $c'$ 
done during the first $4$ rounds of $e$.
If $S.w = StrongE$ in $c'$ then $S.w = StrongE$ in $c$ and in $c_2$.
Let $c_3$ be the configuration reached at the end of 
the 2th round of $e$ from $c$. 
According to the Lemma~\ref{lem:StrongE} (page~\pageref{lem:StrongE}),
in $c_3$ and during the  first round
from $c_3$ along $e$, every process verifies $S.w \neq StrongE$.

A process $u$ verifying PowerParent($u$) as the $Idle$ status
so QuietSubTree($u$) is verified as no process is faulty
in {\em A1}.
According to the definition of the predicate PowerParent,
in $c_3$ there is a process $u^2$ verifying PIR\_PowerParent($u^2$) and  
$ph.P.u^2 \neq ph.u^2$. Let $v$ be $P.u^2$ in $c_3$.
Till PowerParent($u^2$) and $ph.P.u^2 \neq ph.u^2$ are verified 
the rules RC1, RC2, RC6, R1, R2, R3, R4, R5, 
R6 and R7 are disabled on $v$.
If $v$ executes the rule RC3, RC4, RC5 then in the reached configuration, 
we have $S.v \in Erronneous$ so PowerParent($u^2$) is not verified.
Hence, according to Lemma~\ref{lem:PIRPowerParentClosure}, 
we have $\#PIR\_PowerParent_{c_4} < \#PIR\_PowerParent_{c_2}$.

So, $v$ does not execute any rule  during 
first $2$ rounds of $e$ from $c_3$.
Therefore, NewPhase($u^2$) is verified in $c_3$ and
during the 3th round of $e$ from $c_2$.
Till NewPhase($u^2$) is verified, $u^2$ may execute
a rule (RC4, RC5, R4, or R5) because every process verifies $S.w \neq StrongE$.
The process $u^2$ cannot be silent during the third round of $e$.
We conclude that at least a process verifying PIR in $c_2$, 
does a move during the first $4$ rounds.
So, we have $\#PIR\_PowerParent_{c_4} < \#PIR\_PowerParent_{c_2}$.
\end{proof}

\begin{lemma}
\label{lem:A3-attractor}
{\em A3} is a {\em A2}-attractor reached in at most $8n-8$ rounds 
from a configuration of {\em A2}.
(i.e. {\em A3} is a {\em A0}-attractor reached in at most $16n-15$ rounds 
from any configuration).
\end{lemma}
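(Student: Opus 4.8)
The plan is to replay, almost verbatim, the argument already used for Lemma~\ref{lem:A2-attractor}, but with the pair of predicates PIC, PIC\_PowerParent replaced by their ``unRegular'' counterparts PIR, PIR\_PowerParent, and with Lemma~\ref{lem:PIRPowerParentClosure} and Lemma~\ref{lem:PIRPowerParentConvergence2} playing the roles that Lemma~\ref{lem:PICPowerParentClosure} and Lemma~\ref{lem:PICPowerParentConvergence2} played there.

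First I would establish that {\em A3} is closed. Recall (as stated just before Lemma~\ref{lem:PIRPowerParentClosure}) that the negation of $\neg$influential($u$) $\vee$ inLegalTree($u$) is precisely PIR($u$); hence a configuration lies in {\em A3} iff it lies in {\em A2} and no process verifies PIR, i.e.\ $\#PIR = 0$. Now {\em A2} is closed (Lemma~\ref{lem:unSafeClosed}), and by Lemma~\ref{lem:PIRPowerParentClosure} the quantity $\#PIR$ cannot increase along a step issued from a configuration of {\em A2}; so $\#PIR = 0$ is preserved and {\em A3} is closed. I would stress here that, unlike in the {\em A1}-to-{\em A2} analysis, no ``$r$ does not change its color'' assumption is required: inside {\em A2} there is no unSafe process, so the stability of the unRegular/influential predicates furnished by Lemma~\ref{lem:unSafe-preInfl} (hence by Lemma~\ref{lem:PIRPowerParentClosure}) holds at every step, including steps where $r$ executes R1 or R2. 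In particular R1 needs no special treatment.

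For the convergence bound I would use the potential $ValPIR_c = \#PIR_c + \#PIR\_PowerParent_c$. Since PIR($r$) is never verified -- if $r$ is unRegular then $S.r = StrongE$, so $r$ is not influential -- each term is at most $n-1$, hence $ValPIR_c \le 2n-2$ in every configuration. By Lemma~\ref{lem:PIRPowerParentClosure}, $ValPIR$ is non-increasing along every step issued from a configuration of {\em A2}, and {\em A2} is closed, so this holds throughout any execution starting in {\em A2}. By Lemma~\ref{lem:PIRPowerParentConvergence2}, over any window of four rounds starting in a configuration $c$ of {\em A2} with $\#PIR_c > 0$, either $\#PIR$ strictly decreases, or $\#PIR$ is unchanged and $\#PIR\_PowerParent$ strictly decreases; combined with the monotonicity of the two terms, $ValPIR$ strictly decreases by at least $1$ over each such window. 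As $ValPIR$ is a non-negative integer bounded by $2n-2$, after at most $4(2n-2) = 8n-8$ rounds from any configuration of {\em A2} one reaches a configuration with $\#PIR = 0$, hence (since $\#PIR\_PowerParent \le \#PIR$) with $ValPIR = 0$, i.e.\ a configuration of {\em A3}. Together with closure this shows that {\em A3} is an {\em A2}-attractor reached within $8n-8$ rounds; as attractors compose and, by Lemma~\ref{lem:A2-attractor}, {\em A2} is reached within $8n-7$ rounds from any configuration, {\em A3} is an {\em A0}-attractor reached within $8n-7 + (8n-8) = 16n-15$ rounds.

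The step I expect to require the most care is verifying that the windowing argument is legitimate regardless of the moves of $r$ during a window -- that is, that every hypothesis of Lemma~\ref{lem:PIRPowerParentClosure} and Lemma~\ref{lem:PIRPowerParentConvergence2} is indeed met at each intermediate configuration, in particular that these two lemmas, unlike their PIC-analogues, place no constraint on the color of $r$ -- and that the identity PIR($r$) $\equiv$ \emph{false} is what keeps $ValPIR$ bounded by $2n-2$ rather than by $2n$. Everything else is a direct transcription of the {\em A1}-to-{\em A2} proof.
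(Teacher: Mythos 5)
Your proposal is correct and follows essentially the same route as the paper: closure via Lemma~\ref{lem:PIRPowerParentClosure}, the potential $ValPIR_c = \#PIR_c + \#PIR\_PowerParent_c \leq 2n-2$ (since PIR($r$) never holds), a strict decrease every four rounds by Lemma~\ref{lem:PIRPowerParentConvergence2}, hence $8n-8$ rounds from {\em A2} and $16n-15$ from {\em A0} by composing with Lemma~\ref{lem:A2-attractor}. Your explicit remarks that no ``$r$ keeps its color'' hypothesis is needed here and that intermediate configurations stay in {\em A2} are points the paper leaves implicit, but the argument is the same.
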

\begin{proof}
The predicates $\neg$PIR\_PowerParent and $\neg$PIR are closed 
(Lemma~\ref{lem:PIRPowerParentClosure}).

\medskip
\noindent
The sequel of the proof is similar to the convergence part 
of proof of Lemma \ref{lem:A2-attractor}.

Let $c2$ be a configuration of {\em A2}.
$ValPIR_c = \#PIR\_PowerParent_c + \#PIR_c$.
We have $ValPIR_c \leq 2n-2$ as PIR($r$) is not verified.
Let $cs$ be a computation step from $c$ reaching $c'$ done 
during the first $8n-8$ rounds of $e$.
As the predicates $\neg$PIR and  $\neg$PIR\_PowerParent are closed  
along  $e$ (Lemma~\ref{lem:PIRPowerParentClosure}), we have
$ValPIR_c \geq ValPIR_{c'}$.

\noindent
Let $c4$ be the configuration reached the end of 
first $4$ rounds of $e$ from $c2$.
According to Lemma~\ref{lem:PIRPowerParentConvergence2},
we have $ValPIR_{c4} < ValPIR_{c2}$ if $ValPIR_{c2} > 0$.\\
Let $cf$ be the configuration reached the end of first $8n-8$  rounds of $e$ from $c2$.
So, we have $ValPIR_{cf} = 0$.
We conclude that no process verifying PIR($u$).
\end{proof}

\section{Properties of {\em A4}}

\begin{lemma}
\label{lem:A4-attractor}
Let {\em A4} be the set of configurations. 
{\em A4} $\equiv$ {\em A3} $\cap$ \{ $\forall$ $u$,  
$(S.u \neq StrongE$) \} 
is an {\em A3}-attractor reached from $A3$ in $2$ rounds
(i.e. {\em A4} is a {\em A0}-attractor reached in at most $16n-13$ rounds 
from any configuration).
\end{lemma}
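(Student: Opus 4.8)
The plan is to establish the two attractor conditions for $\mathit{A4} = \mathit{A3} \cap \{\forall u,\ S.u \neq StrongE\}$ relative to $\mathit{A3}$, and then add the two rounds on top of the bound of Lemma~\ref{lem:A3-attractor}.

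The crux is a single observation: \emph{in every configuration of $\mathit{A3}$, neither rule RC3 nor rule RC4 is enabled}. Indeed, inspecting the rules, the only actions that assign the value $StrongE$ to a status are RC3 (on $r$) and RC4 (on $u \neq r$). To prove the observation I would use the defining property of $\mathit{A3}$: every influential process --- in particular every process of status $Power$ --- is $inLegalTree$. From the definition of $inLegalTree$, a straightforward induction along the chain of $P$-pointers from such a process up to $r$ shows that it is non-$Erroneous$, carries color $C.r$, and, when it is not $r$, forces $Child.r \neq \emptyset$ (its ancestor just below $r$ is a child of $r$); moreover, if $S.r = StrongE$ then $inLegalTree(r)$ fails, so no process is $inLegalTree$ and hence (by the $\mathit{A3}$ property) no process has status $Power$ at all. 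In either case, in a configuration of $\mathit{A3}$ all $Power$-status processes carry the color $C.r$, and $r$ has a child whenever it has a $Power$-status neighbor. Consequently StrongConflict$(u)$ --- which requires two $Power$-status processes of distinct colors in $N[u]$ --- is never satisfied, and Conflict$(r)$ --- which requires a $Power$-status neighbor of $r$ that either has a color different from $C.r$ or witnesses $Child.r = \emptyset$ --- is never satisfied. Hence RC4 and RC3 are disabled throughout $\mathit{A3}$.

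Closure of $\mathit{A4}$ then follows: $\mathit{A3}$ is closed (Lemma~\ref{lem:A3-attractor}), and by the observation no computation step out of an $\mathit{A3}$-configuration creates status $StrongE$; so the predicate $\forall u,\ S.u \neq StrongE$ is stable from $\mathit{A3}$, whence $\mathit{A4}$ is closed. For convergence, fix an arbitrary configuration $c$ of $\mathit{A3}$: every process having status $StrongE$ in $c$ changes its status to $Idle$ before the end of the second round by Lemma~\ref{lem:StrongE} (applicable since $\mathit{A3} \subseteq \mathit{A0}$), while by the observation no process ever acquires status $StrongE$ along the execution, so none returns to $StrongE$ afterwards. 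Therefore the configuration reached after two rounds lies in $\mathit{A3}$ and contains no $StrongE$ process, i.e.\ it belongs to $\mathit{A4}$. This shows $\mathit{A4}$ is an $\mathit{A3}$-attractor reached within two rounds from a configuration of $\mathit{A3}$; combined with Lemma~\ref{lem:A3-attractor} (which reaches $\mathit{A3}$ within $16n-15$ rounds from any configuration) and transitivity of attractors together with additivity of round bounds, $\mathit{A4}$ is an $\mathit{A0}$-attractor reached within $16n-13$ rounds.

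I expect the main obstacle to be the bookkeeping in the induction on the $P$-chain that turns ``$inLegalTree$'' into the concrete facts ``color $C.r$'', ``non-$Erroneous$'', and ``$r$ has a child'', together with cleanly handling the degenerate situation $S.r = StrongE$, where the conflict predicates fail because the legal tree is empty rather than because of color agreement.
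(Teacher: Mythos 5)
Your proposal is correct and follows essentially the same route as the paper: both proofs hinge on showing that RC3 and RC4 (the only rules assigning $StrongE$) are disabled in every configuration of \emph{A3}, then conclude closure of $\{\forall u,\ S.u \neq StrongE\}$ and invoke Lemma~\ref{lem:StrongE} for the two-round convergence, adding this to the $16n-15$ bound of Lemma~\ref{lem:A3-attractor}. The only cosmetic difference is that you argue the conflict predicates fail because every $Power$ process, being inLegalTree in \emph{A3}, carries $C.r$ and forces $Child.r \neq \emptyset$, whereas the paper phrases the same fact contrapositively by exhibiting an influential-and-unRegular process contradicting the definition of \emph{A3}.
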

\begin{proof}
Let $c3$ be a configuration of {\em A3}. 
Let $u$ be a process verifying 
in $c3$ StrongConflict($u$). 
So two  processes of N[$u$], $v1$ and $v2$, 
has the $Power$ status and do not have the same color. 
One of them, denoted $v$,
verifies the predicate  (influential($v$) 
$\wedge$ unRegular($v$))  in $c3$.  
It is impossible according to the definition of {\em A3}.
Therefore the rule RC4 is not enabled at any process in $c3$. \\
\noindent
Assume that in $c3$, Conflict($r$) is verified.
{\em first case.} in $c3$, 
$r$ has a neighbor $v3a$ having the Power status 
but not the $r$'s color.
{\em second case.} in $c1$, 
$r$ has a neighbor $v3b$ having the Power status 
and $Child.r = \emptyset$ (i.e. no process except $r$ verifies 
the predicate inLegalTree). 
So $v3a$ or $v3b$, named $v$ in the sequel, 
verifies the predicate  
(influential($v$) $\wedge$ unRegular($v$))  in $c3$.
 It is impossible according to the definition
of {\em A3}. Therefore the rule RC3 is not enabled in $c3$. \\
\noindent
We conclude that the property
($S.u \neq StrongE$) is closed in {\em A3}.

\medskip
\noindent
According to Lemma \ref{lem:StrongE} (page~\pageref{lem:StrongE}), 
after $2$ rounds of any execution from $c3$,
a configuration of {\em A4} is reached.
\end{proof}

\begin{lemma}
\label{lem:A4-Conflit}
Let $c4$ be a configuration of {\em A4}.
In $c4$, the rule RC5 is disabled at any process verifying the predicate inLegalTree.
\end{lemma}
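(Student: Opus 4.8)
The plan is to show that, for a process $u \neq r$ with $\mathrm{inLegalTree}(u)$, none of the five disjuncts appearing in the guard of RC5 --- $\mathrm{Conflict}(u)$, $\mathrm{Faulty}(u)$, $\mathrm{PowerFaulty}(u)$, $\mathrm{IllegalLiveRoot}(u)$, $\mathrm{IllegalChild}(u)$ --- can hold in a configuration $c4$ of {\em A4}; the case $u = r$ is vacuous since RC5 is a rule of non-root processes. Three of these are essentially free: $\neg\mathrm{Faulty}(u)$ is built into the definition of $\mathrm{inLegalTree}(u)$ for $u \neq r$; $\mathrm{IllegalLiveRoot}(u)$ entails $\mathrm{IllegalRoot}(u)$, hence $P.u = \perp$, contradicting $\mathrm{inLegalTree}(u)$; and $\mathrm{PowerFaulty}(u)$ requires a neighbour with status $StrongE$, which cannot exist in a configuration of {\em A4}.

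To deal with the remaining two, I would first record two elementary consequences of $\mathrm{inLegalTree}$, each proved by induction along the $P$-chain from the process up to $r$: if $\mathrm{inLegalTree}(w)$ then $S.w \notin Erroneous$ and $C.w = C.r$. The base case $w = r$ is immediate, since $S.r \neq StrongE$ and $r$ has no $WeakE$ value. For the inductive step, $\mathrm{inLegalTree}(w)$ (with $w \neq r$) gives $P.w \neq \perp$, $\mathrm{inLegalTree}(P.w)$ and $\neg\mathrm{Faulty}(w)$; since $S.P.w \notin Erroneous$ by the induction hypothesis, the big disjunction inside $\mathrm{Faulty}(w)$ must be entirely false, which forces in particular $S.w \notin Erroneous$ and $C.w = C.P.w = C.r$. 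Applying the first fact to $P.u$ yields $S.P.u \notin Erroneous$, hence $\neg\mathrm{IllegalChild}(u)$.

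The only real obstacle is $\mathrm{Conflict}(u)$. For $u \neq r$ this predicate reduces to the existence of a neighbour $v$ with $S.v = Power$ and $C.v \neq C.u$, so the plan is to derive a contradiction from the existence of such a $v$. Since $S.v = Power$, the process $v$ is influential, so by the defining property of {\em A3} (and {\em A4}$\,\subseteq\,${\em A3}) we obtain $\mathrm{inLegalTree}(v)$, whence $C.v = C.r$ by the second auxiliary fact above; but $\mathrm{inLegalTree}(u)$ likewise gives $C.u = C.r$, contradicting $C.v \neq C.u$. Therefore $\neg\mathrm{Conflict}(u)$, all five disjuncts fail, and RC5 is disabled at every $\mathrm{inLegalTree}$ process. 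I expect the fussiest point to be the bookkeeping around the $\mathrm{Faulty}$ disjunction --- making sure the hypothesis $S.P.w \notin Erroneous$ is genuinely available at each step of the induction, so that one is entitled to conclude that the whole disjunction is false --- but this should be routine given the definition of $\mathrm{inLegalTree}$.
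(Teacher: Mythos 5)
Your proposal is correct and follows essentially the same route as the paper: rule out Faulty, PowerFaulty, IllegalLiveRoot and IllegalChild directly from the definitions of inLegalTree and {\em A4}, then refute Conflict($u$) by noting that a $Power$ neighbour $v$ would be influential, hence inLegalTree by the defining property of {\em A3}/{\em A4}, and thus could not differ in colour from $u$. The only difference is that you spell out the induction along the $P$-chain showing that inLegalTree processes are non-$Erroneous$ and share $r$'s colour, which the paper uses implicitly.
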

\begin{proof}
Assume that  the rule RC5
is enabled at $u$ a process $u$ verifying inLegalTree($u$). 
In $c4 \in A4$,  $u$ cannot verify  one of the following 
predicates : Faulty($u$), PowerFaulty($u$), IllegalLiveRoot($u$), IllegalChild($u$). 
So Conflict($u$)
is verified in $c4$.
So $u$ has a neighbor $v$ verifying $S.v = Power$ 
and $C.v = \overline{r\_color}$ ($r\_color$ being the color of $r$).
We have Influential($v$) 
and $\neg$inLegalTree($v$), in $c4$.
It is impossible according to the definition
of {\em A4}.
\end{proof}

\begin{lemma}
\label{lem:A4-pseudoRegular}
Any execution from a configuration of {\em A4} 
is a pseudo-regular execution (Definition~\ref{def:pseudo-regular}, 
page~\pageref{def:pseudo-regular}).
In {\em A4}, the predicate  $\neg$unRegular($u)$ is closed
\end{lemma}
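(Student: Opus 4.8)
The plan is to verify, for an arbitrary execution $e$ starting from a configuration of {\em A4}, each clause of the definition of a safe execution (Definition~\ref{def:safe-execution}) together with the two extra requirements of a pseudo-regular execution (Definition~\ref{def:pseudo-regular}); the second of the latter, closure of $\neg$unRegular, is also the final assertion of the lemma. Since {\em A4} is an {\em A3}-attractor (Lemma~\ref{lem:A4-attractor}), {\em A4} is closed, so every configuration visited by $e$ lies in {\em A4} $\subseteq$ {\em A3} $\subseteq$ {\em A2} $\subseteq$ {\em A1}; thus throughout $e$ we have $S.r \neq StrongE$, $\neg$Faulty($u$), $\neg$IllegalLiveRoot($u$), $\neg$Unsafe($u$), and no process is simultaneously influential and unRegular. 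In particular inLegalTree($r$) holds in every configuration of $e$, so $r$ is never unRegular, and since a $Power$-status process is influential, no $Power$-status process is ever unRegular along $e$.

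First I would dispose of the recovering rules that cannot fire. Exactly as in the proof of Lemma~\ref{lem:A4-attractor}, in any configuration of {\em A3} neither StrongConflict($u$) nor Conflict($r$) can hold (either would exhibit a process that is both influential and unRegular), so RC4 and RC3 are never enabled along $e$. Because no process has status $StrongE$ in {\em A4}, the predicate PowerFaulty is identically false along $e$ --- hence $\neg$PowerFaulty is trivially closed and RC1 is never enabled --- and StrongEReady($r$) is false, so RC2 is never enabled either. Moreover, for $u \neq r$, inLegalTree($u$) implies $P.u \neq \perp$, so RC6 is disabled at every non-root inLegalTree process, and Lemma~\ref{lem:A4-Conflit} disables RC5 at every inLegalTree process. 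Hence the only rules ever enabled at $r$ are R1 and R2, and the only rules ever enabled at a non-root inLegalTree process are among R3 to R7; a fortiori every insideLegalTree process executes only rules among R1 to R7. Together with the (vacuous) silence of influential-and-unRegular processes and the fact that RC1, RC2, RC3, RC4 are never executed, $e$ is a safe execution.

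It then remains to prove that $\neg$unRegular($u$) is closed along $e$, which upgrades $e$ to a pseudo-regular execution. I would reuse the mechanism isolated in the earlier result that a safe execution in which no inLegalTree process executes RC5 eventually makes $\neg$unRegular closed: there, a process can become unRegular only by executing R3 and choosing, as new parent, an unRegular neighbor with status $Power$. By the previous paragraph no such parent exists along $e$, so no process ever becomes unRegular. For a self-contained check one takes a step $cs\colon c_1 \to c_2$ and a non-root $u$ with $\neg$unRegular($u$) in $c_1$: if $u$ executes RC6 or R7 it ends Detached (no neighbor can acquire $u$ as a child during $cs$, as the guards make plain); if $u$ executes R3 it attaches to a neighbor $v$ with $S.v = Power$, which in {\em A4} is inLegalTree and, having a neighbor (namely $u$) of the opposite color, has every one of its rules disabled during $cs$, so $v$ is frozen and, using $\neg$Faulty($u$) in $c_2$ (Lemma~\ref{lem:FaultyIsaTrap}), $u$ becomes inLegalTree; and if $u$ keeps its parent, the only way for $u$ to drop out of the legal tree without becoming Detached is for the entire parent chain of $u$ to leave it, which is impossible since each process on that chain has a child (hence cannot execute R7), RC4 and RC5 are disabled on it, and $\neg$Faulty is closed --- so the chain would have to reach $r$, contradicting inLegalTree($r$). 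In every case $u$ is $\neg$unRegular in $c_2$, so $\neg$unRegular is closed and $e$ satisfies Definition~\ref{def:pseudo-regular}.

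The routine part is the bookkeeping of which recovering rules are disabled in {\em A4}; the one delicate point is the closure of $\neg$unRegular, and within it the ``the parent chain must reach $r$'' sub-argument and the observation that the $Power$-status parent selected by an R3 move is frozen throughout that step.
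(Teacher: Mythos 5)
Your proof is correct and takes essentially the same route as the paper: it uses the closure of \emph{A4} (no $StrongE$ status, no process both influential and unRegular) to rule out RC1--RC4, Lemma~\ref{lem:A4-Conflit} to disable RC5 on inLegalTree processes, and then the fact that a process can only become unRegular through an R3 move toward an unRegular $Power$-status parent, which is impossible in \emph{A4}. Your extra case analysis for the closure of $\neg$unRegular merely spells out details the paper's proof leaves implicit.
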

\begin{proof}
Let $c4$ be a configuration of {\em A4}.
In $c4$, no process is influential and unRegular and no process verifies the predicate
$PowerFaulty$ according to the definition of {\em A4}.
Let $cs$ be a computation step from $c4$ reaching the configuration $c5$.
In $c4$, no process verifies on of the both predicates PowerFaulty and StrongEReady.
So during $cs$, $r$ does not executes the rule  RC1 or RC2.
According to Lemma~\ref{lem:A4-attractor}, $c5 \in A4$.
So during $cs$, no process executes the rule  RC3 or RC4.\\
\noindent
According to Lemma~\ref{lem:A4-Conflit}, the rule RC5 is disabled 
on processes verifying the predicate inLegalTree.
We conclude that during $cs$, processes verifying the predicate
inLegalTree execute only the rules R1 to R7. 
Hence, any execution from $c4$ is a safe execution
(Definition~\ref{def:safe-execution}, page~\pageref{def:pseudo-regular}).

\noindent
During $cs$,  a process $u$ becomes unRegular 
if it executes the rule R3 to choose as a parent an unRegular 
process $v$ having the status $Power$.
We have Influential($v$) 
and $\neg$inLegalTree($v$), in $c4$.
It is impossible according to the definition
of {\em A4}.
We conclude that any execution from $c4$ is a pseudo-regular execution.
\end{proof}

\begin{lemma}
\label{lem:A4-unRegular}
On any configuration of {\em A4}, the predicate $Ok(r)$ 
is verified.
\end{lemma}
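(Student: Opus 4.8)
The plan is to check the six conjuncts of $Ok(r)$ one by one, using the Observation at the start of Section~\ref{sec:basicProperties} to dispose of three of them immediately: $r$ never satisfies Faulty, IllegalRoot, or IllegalChild, so $\neg$Faulty($r$), $\neg$IllegalRoot($r$) and $\neg$IllegalChild($r$) hold in every configuration. It then remains to establish $\neg$PowerFaulty($r$), $\neg$Conflict($r$) and $\neg$StrongConflict($r$) on configurations of {\em A4}. The predicate PowerFaulty($r$) demands a neighbour of $r$ with status $StrongE$, which cannot exist in {\em A4} by definition; and by the same Observation StrongConflict($r$) implies Conflict($r$). So the whole statement reduces to proving $\neg$Conflict($r$) in {\em A4}.

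The key step I would isolate first is the auxiliary claim: in any configuration of {\em A4}, every process $v$ with $S.v = Power$ satisfies $C.v = C.r$ and forces $Child.r \neq \emptyset$. The argument runs as follows. $S.v = Power$ makes $v$ influential, and since {\em A4} $\subseteq$ {\em A3}, the invariant $\neg$influential $\vee$ inLegalTree forces inLegalTree($v$); moreover a neighbour of $r$ is necessarily distinct from $r$. Because inLegalTree is an inductive predicate with base case $u = r$, there is a finite parent chain $v = w_0, w_1 = P.w_0, \dots, w_k = r$ along which every $w_i$ satisfies inLegalTree. I would then prove, by induction from the root end of the chain, that $S.w_i \notin Erroneous$ and $C.w_i = C.r$ for all $i$: the base case uses $S.r \neq StrongE$ in {\em A4} together with the model restriction $S.r \in \{Power, Working, StrongE\}$; the inductive step uses that for $i < k$ the node $w_i$ is a non-root in-tree node whose parent $w_{i+1}$ is non-$Erroneous$, so $\neg$Faulty($w_i$) forces the disjunction in the definition of Faulty to fail, which rules out both $S.w_i \in Erroneous$ (first clause) and $C.w_i \neq C.w_{i+1}$ (second clause). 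Finally $w_{k-1}$ has $P.w_{k-1} = r$, hence $w_{k-1} \in Child.r$.

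Given this claim, $\neg$Conflict($r$) is immediate: the root clause of Conflict($r$) needs $S.r \neq StrongE$ (true in {\em A4}) together with a neighbour $v$ of $r$ such that $S.v = Power$ and $(C.v \neq C.r) \vee (Child.r = \emptyset)$, but the claim kills both disjuncts, so the existential fails. Then $\neg$StrongConflict($r$) follows from the Observation (or directly, since StrongConflict($r$) likewise requires a $Power$ neighbour of $r$ whose colour differs from $C.r$), and combining all six conjuncts yields $Ok(r)$.

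The main obstacle is the auxiliary claim, and within it the step that no $Erroneous$ process sits on the parent chain of an in-tree $Power$ node: one must exclude a $WeakE$ node on the chain, which is exactly where ``no $StrongE$ anywhere in {\em A4}'' combines with the $S.P.u \notin Erroneous$ conjunct of Faulty --- an erroneous in-tree node would be forced to have an erroneous parent, propagating all the way up to an erroneous $r$, contradicting the definition of {\em A4}. Everything else is routine unfolding of the predicate definitions.
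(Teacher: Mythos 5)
Your proof is correct and follows essentially the same route as the paper: dispose of Faulty/IllegalRoot/IllegalChild by the observation, kill PowerFaulty($r$) by the absence of $StrongE$ in {\em A4}, reduce StrongConflict($r$) to Conflict($r$), and refute Conflict($r$) using the {\em A3}/{\em A4} invariant that an influential ($Power$) process must verify inLegalTree. The only difference is presentational: the paper states tersely that a $Power$ neighbour witnessing Conflict($r$) would be influential and unRegular, whereas you argue the contrapositive and spell out the non-Faulty parent-chain induction giving $C.v=C.r$ and $Child.r\neq\emptyset$, which merely fills in the detail the paper leaves implicit.
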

\begin{proof}
%
%
%
%
The predicate Faulty($r$), IllegalLiveRoot($r$) and IllegalChild($r$) 
are never verified.
The predicate PowerFaulty($r$) is not verified in $c4 \in A4$.
In $c4$, if the predicate StrongConflict($r$) is verified 
then Conflict($r$) is verified.
Assume that Conflict($r$) is verified in $c4$.
First case, in $c4$, 
$r$ has a neighbor $v1$ having the Power status 
but not the $r$'s color.
Second case, in $c4$, 
$r$ has a neighbor $v2$ having the Power status 
and $Child.r = \emptyset$ (i.e. no process except $r$ verifies 
the predicate inLegalTree). 
So $v1$ or $v2$, named $v$ in the sequel, verifies the predicate 
 (influential($v$) $\wedge$ unRegular($v$))  in $c4$. It is impossible.
We conclude that in $c4$, $Ok(r)$ is verified.
\end{proof}

\begin{lemma}
\label{lem:A4-inLegalTree}
In {\em A4}, the predicate 
$\neg unRegular(u) ~\wedge~(inLegalTree(u) \vee (S.u = Idle))$ 
is closed.
\end{lemma}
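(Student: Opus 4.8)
The plan is to check closure directly along a single computation step. Let $cs$ be a step from $c4 \in A4$ reaching $c5$; by Lemma~\ref{lem:A4-attractor} we have $c5 \in A4$, and we assume $\neg unRegular(u) \wedge (inLegalTree(u) \vee (S.u = Idle))$ holds in $c4$. By Lemma~\ref{lem:A4-pseudoRegular}, $\neg unRegular(u)$ is closed in $A4$, so it holds in $c5$; since $unRegular(u) \equiv \neg Detached(u) \wedge \neg inLegalTree(u)$, in $c5$ the process $u$ satisfies $inLegalTree(u)$ or $Detached(u)$. If $inLegalTree(u)$ holds in $c5$ we are done, so the whole difficulty reduces to: show that whenever $u$ satisfies $Detached(u)$ in $c5$, it also satisfies $S.u = Idle$ in $c5$.

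First I would dispose of the degenerate situations. If $u = r$, then $S.r \neq StrongE$ in $A4$ forces $inLegalTree(r)$, and there is nothing to prove. So assume $u \neq r$; then $Detached(u)$ in $c5$ gives $P.u = \perp$ in $c5$. Since only $u$ can write $P.u$, I split on the value of $P.u$ in $c4$.

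If $P.u = \perp$ already in $c4$, then (as $u \neq r$) $u$ is not $inLegalTree$ in $c4$, so $\neg unRegular(u)$ forces $Detached(u)$ in $c4$, and the hypothesis then yields $S.u = Idle$ in $c4$. I would then argue that the only rule that can be enabled on such a detached, idle process in $A4$ is R3: rules RC1--RC4 are disabled everywhere in $A4$ (from the proofs of Lemmas~\ref{lem:A4-attractor} and~\ref{lem:A4-pseudoRegular}); RC5 has none of its triggers $Conflict$, $Faulty$, $PowerFaulty$, $IllegalLiveRoot$, $IllegalChild$ satisfied when $P.u = \perp$ and $Detached(u)$ holds; RC6 needs $Isolated(u)$, which is false because $S.u = Idle$ and there is no $StrongE$ process; R4, R5, R7 require $P.u \neq \perp$; and R6 requires $Child.u \neq \emptyset$, which with $P.u = \perp$ would make $IllegalRoot(u)$ and hence $unRegular(u)$, a contradiction. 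Since R3 would set $P.u$ to a non-$\perp$ value, contradicting $Detached(u)$ in $c5$, the process $u$ must be silent during $cs$, so $S.u = Idle$ in $c5$.

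If instead $P.u \neq \perp$ in $c4$, then $u \neq r$ together with $\neg unRegular(u)$ forces $inLegalTree(u)$ in $c4$. By Lemma~\ref{lem:A4-pseudoRegular}, an $inLegalTree$ process executes only a rule among R1--R7 during a step of $A4$; among those, only R7 sets $P.u := \perp$ (R3 writes a non-$\perp$ value, R4, R5, R6 leave $P.u$ unchanged, and R1, R2 act on $r$). Hence $u$ executes R7 during $cs$, whose action sets $S.u := Idle$, giving $S.u = Idle$ in $c5$. The main obstacle is precisely this rule-by-rule elimination in the case $P.u = \perp$: one must combine the ``RC-rules disabled in $A4$'' facts from the earlier lemmas with the structural observations that $IllegalRoot$ forbids R6 and that $Isolated$ forbids RC6 in a detached idle state. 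Once these are in hand, the lemma reduces to the slogan ``in $A4$ the only way an $inLegalTree$ process can detach is via R7, and R7 leaves it $Idle$''.
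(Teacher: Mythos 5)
Your proposal is correct and follows essentially the same route as the paper's own proof: both rest on the closure of $\neg$unRegular from Lemma~\ref{lem:A4-pseudoRegular} and then split according to whether $u$ is in the legal tree in $c4$ (where only R7 can set $P.u$ to $\perp$, and R7 leaves $u$ $Idle$) or detached and $Idle$ in $c4$ (where only R3 can be enabled, so $u$ stays $Idle$). The only differences are presentational: you argue directly on the case $Detached(u)$ in $c5$ instead of by contraposition, and you spell out the rule-by-rule elimination that the paper asserts without detail.
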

\begin{proof}
Let $c4$ be a configuration of {\em A4}.
Let Pre($u$) be the predicate 
unRegular($u$) $\vee$ $(\neg inLegalTree(u) \wedge (S.u \neq Idle))$.
Let $cs$ be a computation step from $c4$ reaching the configuration $c5$.
Let $u$ be a process verifying Pre($u$) in
the configuration $c5$. 
We have $u \neq r$, as inLegalTree($r$) is verified in any configuration of {\em A4}.
Assume that $u$ does not verify Pre($u$) in $c4$, 
we have  $\neg$unRegular($u$).
So in $c5$, we have $\neg$unRegular($u$) 
(Lemma~\ref{lem:A4-pseudoRegular}).

First case, $inLegalTree(u)$ is verified in $c4$ but not in $c5$.
Hence, $u$ executes R7 during $cs$:  in $c5$, we have $S.u=Idle$ :
Pre($u$) is not verified in $c5$.
Second case, $\neg inLegalTree(u)$ is verified in $c4$.
  Detached($u$) is verified in $c4$, as $\neg$unRegular($u$) is verified.
  By hypothesis $S.u = Idle$ in $c4$.
  The only  R3 may be enabled on $u$ in $c4$.
  Hence, in $c5$, we have $S.u = Idle$.
  Pre($u$) is not verified in $c5$.

We conclude that if in  $c5$, Pre($u$) is verified then it is verified in $c4$.
\end{proof}

\subsection{The termination of the current tree construction in {\em A4}} 

\begin{definition}
The predicate correct($u$) is defined as 
\\ \hspace*{0.2cm}
$[  \neg unRegular(u) ~\wedge$
$(inLegalTree(u) \vee (S.u = Idle) ) ]$
$\wedge$ $((u = r) \vee (dist(u) < dist(TS.u)) $

Let {\em A5($l$)} with $ 0\leq l \leq \diam+1$ be the set of configurations 
of {\em A4} verifying $\forall$ $u \in V$ we have \\
$((dist(u) > l) \vee (C.u = C.r))$ and
 $((dist(u) > l-1) \vee correct(u))$.
\end{definition}

Any configuration where EndLastPhase($r$) is verified belong to $A5(l)$ 
with $l \in [0,\diam+1]$

\noindent
In the sequel, we establish that in at most $n(2n+3)$  
rounds from a configuration of {\em A4}, {\em A5(l)}  
is reached along any execution with $l \geq 0$. 

A tree construction is composed of phases, a root action 
is the beginning of a new phase (R2) or  the beginning of a new tree 
construction (R1). 
A phase has 3 exclusive stages: forwarding, expansion, 
and backwarding. 
The stages are characterized by  the state of processes in the 
legal tree (i.e. processes verifying the predicate inLegalTree). 
During the forwarding, some  influential processes do not have  
the $Power$ status. 
During the expansion, any influential processes have the 
$Power$ status. 
In the backwarding, there is no influential process.

\begin{lemma}
\label{lem:forwarding}
From a configuration $c4$ of {\em A4}, the forwarding stage takes 
at most $n-1$ rounds.
\end{lemma}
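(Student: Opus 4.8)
The plan is to track, along an execution starting from a configuration $c4 \in A4$, the ``frontier'' of influential processes that still lack the $Power$ status, and to show this frontier advances by at least one distance level per round until it vanishes. First I would fix notation: by Lemma~\ref{lem:A4-pseudoRegular}, every execution from $c4$ is pseudo-regular, so inLegalTree processes execute only R1--R7, the predicate $\neg$unRegular is closed, and $Ok(r)$ holds throughout (Lemma~\ref{lem:A4-unRegular}). The forwarding stage is, by the paragraph preceding the lemma, the set of configurations of $A4$ in which some influential process does not have the $Power$ status; equivalently (using the definitions of \emph{influential} and \emph{PowerParent}), there is an inLegalTree process $u$ with $S.u = Idle$ and $ph.u \neq ph.P.u$ whose chain of ancestors up to some $Working$ process all share a phase value distinct from $r$'s current phase. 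The forwarding stage ends exactly when every influential process has the $Power$ status, i.e. when the phase wave initiated by the last R2 move of $r$ has reached all the leaves.

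The key steps, in order, are: (1) Observe that once in $A4$, $r$ does not execute R1 during the forwarding stage (R1 requires EndLastPhase($r$), which fails while an influential non-$Power$ process persists in the legal tree), so $r$'s phase value $ph.r$ is constant through the stage and plays the role of the ``target'' phase. (2) Define, for a configuration $c$ in the forwarding stage, $d(c)$ to be the minimum over all inLegalTree processes $u$ with $S.u = Idle$ and $ph.u \neq ph.r$ of $dist(u)$ --- the depth of the shallowest process the wave has not yet reached. Show $d(c) \geq 1$ (since $S.r = Working$ after the R2 move that started the phase, or $S.r = Power$ if it is the first phase, and in either case $r$ itself is not a lagging $Idle$ node), and that the stage has ended once no such $u$ exists. (3) The main argument: show that in one round the value $d$ strictly increases. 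Take a lagging process $u$ at depth $d(c)$; its parent $P.u$ has $ph.P.u = ph.r$ and $S.P.u \in \{Working, Power\}$ (otherwise Faulty($u$) or IllegalChild($u$), contradicting $A4$ and Lemma~\ref{lem:FaultyIsaTrap}). If $S.P.u = Working$ then NewPhase($u$) holds, and since in $A4$ no neighbor of $u$ has $StrongE$ status and QuietSubTree($u$) holds (Observation~\ref{obs:PowerQuiet}), $u$ is enabled for R4 or R5; similarly if $P.u$ has $Power$ status, $u$ satisfies NewPhase and is enabled for R5 (or R4). Hence every such shallowest lagging process is enabled, so within one round each either executes R4/R5 --- acquiring $ph.r$ and $Working$ or $Power$ status --- or is neutralized, which here can only mean it or an ancestor moved so that the lag is resolved. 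Care must be taken that an R6/R7 move by $u$'s parent cannot re-create a lag at depth $d(c)$: a $Working$ process executing R6 requires QuietSubTree, i.e. all its children are already $Idle$ with its phase, so it cannot be the parent of a lagging $Idle$ child at this moment. Likewise R3 connections create children at depth $> d(c)$ only. Combining, after one round no lagging process remains at depth $\leq d(c)$, so $d$ increases by at least one. (4) Since $d$ is bounded by $\diam \leq n-1$ and starts at $\geq 1$, after at most $n-1$ rounds the forwarding stage terminates.

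The hard part will be step~(3): precisely ruling out all the ways the shallowest lag could ``regenerate'' at the same depth during the round --- in particular, arguing that no inLegalTree process at depth $d(c)-1$ can switch from phase $ph.r$ back to the old phase (it would have to execute R2, impossible for non-root, or R3, which only happens to Detached processes and changes depth), and that a child cannot be attached at depth $\leq d(c)$ with the wrong phase (R3 sets $ph.u := ph.v$ where $v$ is the new $Power$ parent, and by Corollary~\ref{cor:preInf} such a $v$ has $C.v = C.r$, so $v$ is influential in the legal tree at depth $d(c)-1$, forcing the child to depth $d(c)$ with phase $ph.r$ --- not a lag). Once these exclusions are in place the round-by-round progress argument is routine, and the bound $n-1$ follows immediately from $d$ ranging over $\{1,\dots,\diam\}$.
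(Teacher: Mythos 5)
Your plan has a genuine gap in step~(3), and it comes from measuring progress by the graph distance $dist(u)$ rather than by the structure of the legal tree. A configuration $c4$ of {\em A4} is reached long before the tree has any BFS shape: the legal tree at this point is arbitrary, so the tree parent of a process at graph distance $d$ may itself sit at graph distance $d$ or $d+1$. Consequently, the shallowest lagging process $u$ (shallowest in $dist$, with $S.u=Idle$ and $ph.u\neq ph.r$) may have a parent that is also lagging ($Idle$ with the same old phase, which raises no Faulty condition), in which case NewPhase($u$) is false, $u$ is simply disabled, and your frontier $d(c)$ does not advance during the round. The phase wave travels one \emph{tree} edge per round, and in {\em A4} the tree height can be as large as $n-1$ even when $\diam$ is small (think of a legal tree that is a long path in a small-diameter graph), so the invariant ``$d$ strictly increases each round'' is false and the concluding bound ``$d$ ranges over $\{1,\dots,\diam\}$'' does not give the claimed $n-1$ rounds by a valid route.

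The paper's proof sidesteps distances entirely and uses a counting argument on the predicate PowerParent: in any step where $r$ does not move, no process becomes PowerParent (via Lemma~\ref{lem:becominginfluential3} and the fact that a $Power$ process cannot become PowerParent), there are at most $n-1$ PowerParent processes in $c4$, and the \emph{topmost} PowerParent process --- the one whose parent is $Working$ with a different phase, which exists by the inductive definition of PowerParent --- satisfies NewPhase continuously and hence has R4 or R5 enabled until it moves; after its move it is no longer PowerParent. So the count strictly decreases each round, giving the $n-1$ bound regardless of the tree's shape. If you want to salvage your frontier idea, you would have to measure the frontier by tree depth (or simply by the number of remaining PowerParent processes), at which point your argument essentially collapses into the paper's; the remaining pieces of your plan (closure facts from Lemmas~\ref{lem:A4-pseudoRegular} and~\ref{lem:becominginfluential3}, absence of $StrongE$ neighbors in {\em A4}, QuietSubTree from Observation~\ref{obs:PowerQuiet}) are the right ingredients and are the ones the paper uses.
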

\begin{proof}
Let $cs$ be a computation step from $c4$ to $c4'$ 
where $r$ does not execute any rule. 
If in $c4'$, influential($u$) is verified 
then influential($u$) is verified in $c4$ 
(see Lemma \ref{lem:becominginfluential3}, 
page~\pageref{lem:becominginfluential3}).
If in $c4$, $S.u = Power$ is verified 
then the predicate PowerParent($u$) is not verified in $c4'$.
We conclude that in $c4'$, PowerParent($u$) is verified 
then PowerParent($u$) is verified in $c4$.

\noindent
In $c4$, there is at most $n-1$ processes verifying  PowerParent.
Assume that in $c4$ there is $0 < l \leq n-1$ processes 
verifying  PowerParent.

\noindent
A process $u$ verifying PowerParent($u$) as the $Idle$ status
so QuietSubTree($u$) is verified as no process is faulty
in {\em A4}.
According to the definition of the predicate PowerParent,
in $c4$ there is a process $u$ verifying  and 
PowerParent($u$) and $S.P.u = Working$.
So NewPhase($u$) is verified till  PowerParent($u$) is verified.
As in {\em A4}, no processes has the status $StrongE$, and RC4 and RC5 are disabled
on influential processes.
We conclude that R4 or R5 is enabled at $u$ till  
PowerParent($u$) is verified; so after the first round from $c4$ 
there is at most $l-1$ processes verifying  PowerParent
because $u$ does not verify  PowerParent.
Therefore, after $l$ rounds, all influential processes  have the $Power$ status.
\end{proof}
\begin{lemma}
\label{lem:backwarding}
The backwarding stage takes 
at most $n-1$ rounds.
\end{lemma}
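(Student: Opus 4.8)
The plan is to follow the pattern of the proof of Lemma~\ref{lem:forwarding}, this time with the monovariant $W_c = |\{\, u \neq r \mid S.u = Working \text{ in } c \,\}|$, which is at most $n-1$ since $r$ is excluded. I would show that, all along the backwarding stage, $W$ is non-increasing and strictly decreases at every round while it is positive; once $W = 0$ every non-root process is $Idle$, and $r$ --- which is $Working$ in {\em A4} during backwarding, being neither $Power$ (that would make it influential) nor $StrongE$ --- verifies QuietSubTree($r$), hence EndPhase($r$), so it is enabled at R2 or R1 (using Ok($r$), Lemma~\ref{lem:A4-unRegular}, and the absence of $StrongE$ neighbours for R1); this ends the phase, hence the stage. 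Since $W$ drops by at least one per round and is at most $n-1$ initially, this yields the $n-1$ bound.

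The non-increasingness rests on the fact that no process is influential during the stage and none becomes so: by Lemma~\ref{lem:becominginfluential3} a non-root process may become influential only in a step where $r$ executes R2, and an R2 or R1 move by $r$ ends the current phase. Hence R5, which is enabled only on influential processes, is never executed; and R4, which requires NewPhase($u$) together with Ok($u$), is never executed either, since $\neg$Faulty($u$) then forces $P.u$ to be $Working$, so that PowerParent($u$), and therefore influential($u$), would hold. Thus no non-root process becomes $Working$, and $W$ can only decrease (through R6 or R7 on inLegalTree processes, or through RC6 on a $Detached$ $Working$ process, which becomes $Idle$ in the first round and stays so, the colour side-condition of RC6 being vacuously met when no $Power$ process is present).

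For the strict decrease, I would use that every process of {\em A4} is $\neg$Faulty (as {\em A4} $\subseteq$ {\em A1}): by property~(3) of non-faulty processes a $Working$ non-root process has a $Working$ parent, so chaining upward --- the chain cannot stop at a non-root process, which would be an IllegalLiveRoot, forbidden in {\em A1} --- shows that the $Working$ processes span a subtree rooted at $r$ and all verify inLegalTree. Take $u \neq r$ $Working$ and deepest in this subtree. Its children are inLegalTree, hence $\neg$Erroneous, and are not $Power$ (no influential process here), so they are $Idle$; they also carry the phase of $u$, for otherwise such a child would verify PowerParent and be influential. Hence QuietSubTree($u$), so EndIntermediatePhase($u$) or EndLastPhase($u$) holds, and Ok($u$) holds as it does for every inLegalTree process of {\em A4} ($\neg$Conflict and $\neg$StrongConflict from the absence of $Power$ processes, $\neg$PowerFaulty from $S.u = Working$, $\neg$Faulty from {\em A1}, and $\neg$IllegalRoot, $\neg$IllegalChild from $P.u \neq \perp$ and $P.u$ being inLegalTree). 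So R6 or R7 is enabled at $u$, and the children of $u$ are disabled (no recovery rule applies to an inLegalTree process of {\em A4}, and no tree rule applies to an $Idle$ process whose parent is $Working$ with its phase), so $u$ is not neutralized and acts within the current round; therefore $W$ decreases by at least one.

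Putting the pieces together, $W = 0$ is reached within $n-1$ rounds, which is the end of the backwarding stage, so the stage lasts at most $n-1$ rounds. The step I expect to be the main obstacle is the monotonicity claim --- ruling out the appearance of any new influential, and hence of any new $Working$, process before the stage ends, which must be threaded carefully through Lemma~\ref{lem:becominginfluential3} and the guard analysis of R4 and R5; a secondary point of care is the disposal of the residual $Detached$ and $Erroneous$ processes and the identification of ``$W = 0$'' with the end of the stage.
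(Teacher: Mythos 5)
Your per-round progress step (a deepest $Working$ process in the legal tree has $Idle$ children carrying its phase, hence EndPhase, hence R6/R7 stays enabled until it fires, so it acts in the round and no new $Working$ process appears since R4 would require an influential process) is exactly the engine of the paper's proof; the paper just measures progress by the maximal height $h_c$ of a $Working$ process \emph{in the legal tree} instead of a count. The genuine gap is in your structural claim that every $Working$ non-root process of {\em A4} verifies inLegalTree, and in the monovariant built on it. Chaining upward through property~(3) of non-faulty processes only works while the parent is not $Erroneous$: the Faulty predicate is vacuous when $S.P.u \in Erroneous$, so a $Working$ process may sit below a $WeakE$ illegal root (this is IllegalChild, not IllegalLiveRoot, and is not excluded by {\em A1}--{\em A4}), and the chain may also close into a $P$-cycle of $Working$ processes with consistent colors and phases, which is likewise compatible with {\em A1}--{\em A4} (such processes are neither Faulty, nor influential, nor unSafe, nor $StrongE$).

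Because of this, your monovariant $W_c$, which counts \emph{all} non-root $Working$ processes, need not decrease to $0$ and need not decrease at all once the legal-tree part is finished: $Working$ processes on a cycle have no enabled rule during the backwarding stage (QuietSubTree fails for them, they are not $Idle$, and with no $Power$ process around no conflict or recovery guard holds), so $W$ can remain positive forever, and ``$W=0$'' is the wrong characterization of the end of the stage. The stage actually ends as soon as no \emph{child of $r$} is $Working$, i.e.\ as soon as the legal tree contains no $Working$ process other than $r$ --- which is why the paper's measure ($h_c$, or equivalently a count restricted to inLegalTree processes) is the right one. Your argument is repaired simply by restricting $W$ to $Working$ processes verifying inLegalTree (whose children are again inLegalTree, non-$Erroneous$, non-$Power$, hence $Idle$ with the right phase), at which point it coincides with the paper's proof; as written, however, both the strict-decrease claim and the termination criterion fail on configurations of {\em A4} containing $Working$ cycles or $Working$ descendants of $WeakE$ roots.
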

\begin{proof}
Let $c$ be a configuration of {\em A4} where no process
is influential. Until a $r$ move, no process is influential  
(Lemma \ref{lem:becominginfluential3}).
$h_{c}$ is the maximal height of a $Working$ process in 
the legal tree in $c$ (we have $h_c < n$).
If $h_c = 0$ then EndPhase($r$) is verified; so, 
it exists $ 0 \leq l \leq \diam$
such that $c \in$ {\em A5(l)}.

\noindent
Assume that $h_c >0$. Let $u$ be a $Working$ process in the legal tree. 
As no process verifies the Predicate Faulty and no process is influent, 
all children of $u$ has the $Working$ or $Idle$ status, 
and they have the same phase value as $u$. 
All $Working$ process in the legal tree at height $h_c$, 
verify the predicate EndPhase because their children have the Idle status. 
They will verify the predicate EndPhase until their next action (R6 or R7) 
so they are enabled. We conclude that 
at $c'$  the configuration reached after a single round 
from $c$. We have $h_{c} >h_{c'}$. 
Therefore, after at most $h_c \leq n-1$ rounds, a configuration 
where EndIntermediatePhase($r$) on EndLastPhase($r$) is verified is reached. 
\end{proof}

\begin{lemma}
\label{lem:r_move}
Whatever is the current configuration of {\em A4} and the execution, 
the root does an action during the first $2n+3$ rounds.
\end{lemma}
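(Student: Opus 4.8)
The plan is to first characterise exactly when $r$ is enabled in $A4$, and then to bound the number of rounds needed to reach such a configuration by following a phase through its three stages. First I would note that, by Lemma~\ref{lem:A4-unRegular}, $Ok(r)$ holds in every configuration reachable from $A4$, and that no process ever has status $StrongE$ in $A4$ (Lemma~\ref{lem:A4-attractor}); hence RC1, RC2 and RC3 are permanently disabled at $r$, so the only rules $r$ may execute are R1 and R2. Using $Ok(r)$ and the absence of $StrongE$ neighbours, R1 is enabled at $r$ iff EndLastPhase($r$) and R2 iff EndIntermediatePhase($r$), so $r$ is enabled iff EndFirstPhase($r$) $\vee$ EndPhase($r$). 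I would also record two simple facts: $r$, having no parent, can never verify PowerParent, so $r$ is influential iff $S.r = Power$; and in $A4$ we have $S.r \in \{Power, Working\}$.

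Fix an execution $e$ from a configuration $c_4 \in A4$. If $S.r = Power$ in $c_4$, then $r$ is influential and Theorem~\ref{theo:power} (applicable since $A4 \subseteq A1$) gives that within $4 \le 2n+3$ rounds $r$ changes its status or executes R1; in both cases $r$ has acted. If $S.r = Working$ and $Child.r = \emptyset$ in $c_4$, then QuietSubTree($r$) holds vacuously, so R1 is enabled; since no process can join $Child.r$ (Connection requires $S.r = Power$) and $r$ cannot be disabled without acting, $r$ acts within the first round. The remaining case, $S.r = Working$ and $Child.r \ne \emptyset$, is the heart of the argument; I would handle it via the forwarding/expansion/backwarding decomposition, assuming throughout that $r$ does not act (otherwise we are already done within $2n+3$ rounds).

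\textbf{Forwarding.} By Lemma~\ref{lem:forwarding}, after at most $n-1$ rounds a configuration $c'$ is reached in which every influential process has status $Power$; since $S.r=Working$ is unchanged, $r$ is not one of them. \textbf{Expansion.} By Lemma~\ref{lem:becominginfluential3}, as long as $r$ does not execute R2 no non-root process becomes influential, and $r$ does not become influential since $S.r=Working$; hence the set of influential processes can only shrink past $c'$. By Theorem~\ref{theo:power}, each of the finitely many $Power$ processes of $c'$ acts within $4$ rounds from $c'$ and, being inLegalTree (it is influential, hence not unRegular by $A3$, and not Detached) and non-root in the pseudo-regular execution from $c_4$ (Lemma~\ref{lem:A4-pseudoRegular}), it does so via R6 or R7 and becomes $Idle$. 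Thus at most $4$ rounds after $c'$ we reach a configuration $c''$ with no influential process. \textbf{Backwarding.} From $c'' \in A4$ with no influential process, Lemma~\ref{lem:backwarding} yields, after at most $n-1$ rounds, a configuration in which EndIntermediatePhase($r$) or EndLastPhase($r$) holds, i.e., $r$ is enabled; one checks this persists until $r$ acts (with $S.r=Working$, enabledness is just QuietSubTree($r$), an $Idle$ child of $r$ with the same phase as $r$ can execute no rule, and $Child.r$ cannot grow), so $r$ acts within one further round. Summing, $(n-1)+4+(n-1)+1 = 2n+3$.

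The hard part will be the expansion stage: I must rule out that a fresh influential process appears once all influential ones have status $Power$ (which would allow the expansion to be refuelled indefinitely) and check that a $Power$ process, once it acts, leaves the $Power$ status for good within the allotted rounds. This is precisely where Lemma~\ref{lem:becominginfluential3}, Theorem~\ref{theo:power} and the pseudo-regularity of executions from $A4$ (Lemma~\ref{lem:A4-pseudoRegular}) must be combined with care. By contrast, the ``stays enabled'' claim for $r$ in the backwarding stage is routine, following from the fact that an $Idle$ child of $r$ whose phase equals $ph.r$ satisfies neither the guard of R3, nor of R4/R5, nor of R6/R7, and that $Child.r$ cannot grow while $S.r = Working$.
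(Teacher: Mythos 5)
Your proof is correct and follows essentially the same route as the paper: the forwarding/expansion/backwarding decomposition bounded by Lemma~\ref{lem:forwarding}, Theorem~\ref{theo:power} and Lemma~\ref{lem:backwarding}, followed by the observation that EndIntermediatePhase($r$) or EndLastPhase($r$) persists until $r$ acts, giving $(n-1)+4+(n-1)+1=2n+3$ rounds. Your preliminary case analysis and the careful justification that no fresh influential process appears during the expansion stage are just more explicit versions of what the paper leaves implicit.
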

\begin{proof}
Lemma \ref{lem:forwarding} establishes that the forwarding stage takes 
at most $n-1$ rounds.
Theorem~\ref{theo:power} (page~\pageref{theo:power}) 
establishes that the expansion 
stage has a duration of at most $4$ rounds.
Lemma \ref{lem:backwarding} establishes that the backwarding stage takes 
at most $n-1$ rounds.
From any configuration after at most $2n+2$ rounds, a configuration $c$
where  EndIntermediatePhase($r$) or EndLastPhase($r$) is verified is reached.
As in a configuration of  {\em A4}, no processes has the status $StrongE$, 
and the predicate $Ok(r)$ is verified (Lemma \ref{lem:A4-unRegular}).
EndIntermediatePhase($r$) or EndLastPhase($r$)  stays verified 
until $r$ executes the rule R1 or R2. 
So $r$ does an action before the end of  the $2n+3$ rounds from any configuration of 
{\em A4}.
\end{proof}

\begin{definition}
Let $\#r\_color(c)$ be the number of processes having the $r\_color$
 in the configuration $c$.
\end{definition}
\begin{lemma}
Let $cs$ be a computation step from $c1 \in A4$ to reach $c2$ 
where $r$ does not executes R1. We have
$\#r\_color(c1) \leq \#r\_color(c2)$.
\end{lemma}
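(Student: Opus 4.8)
The plan is to follow the colour variable $C$ through the computation step $cs$ and show that the set of processes carrying $r\_color$ can only grow. First I would observe that $r$ does not change $C.r$ during $cs$: the only rule of $r$ whose action writes $C.r$ is R1, which is excluded by hypothesis, while RC1, RC2, RC3 and R2 modify only $S.r$ and $ph.r$. Hence $r\_color \,(= C.r)$ denotes the same colour in $c1$ and in $c2$, and $r$ itself is counted by $\#r\_color$ in both configurations.

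Next I would inspect the actions of all the remaining rules. Scanning R1--R7 and RC1--RC6, the only rule whose action assigns to a $C$ variable is R3 (it performs $C.u := C.v$); every other rule updates only $S$, $P$, $TS$ or $ph$. Since a variable changes only when its owner executes a rule, a process can drop out of the $r\_color$ set during $cs$ only by executing R3 and thereby switching from $r\_color$ to $\overline{r\_color}$. The heart of the argument is to rule this out in {\em A4}: if $u \neq r$ executes R3 during $cs$, then $Ok(u)$ and $\mathrm{Connection}(u,v)$ hold in $c1$, so $S.v = Power$ and $C.v \neq C.u$ in $c1$, and the action sets $C.u := C.v$. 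Because $S.v = Power$, the process $v$ is influential, so by the definition of {\em A3} (and $c1 \in A4 \subseteq A3$) the predicate $\mathrm{inLegalTree}(v)$ holds in $c1$; and an $\mathrm{inLegalTree}$ process in {\em A4} necessarily carries $C.r$. Consequently $C.v = r\_color$, so $C.u = \overline{r\_color}$ in $c1$ and $C.u = r\_color$ in $c2$, i.e. executing R3 makes $u$ \emph{join} the $r\_color$ set rather than leave it.

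Putting this together, no process leaves the $r\_color$ set during $cs$, while every process executing R3 joins it, so $\#r\_color(c2) = \#r\_color(c1) + |\{u \mid u \text{ executes R3 during } cs\}| \geq \#r\_color(c1)$. The step I expect to be the main obstacle is the auxiliary claim that an influential (in particular $Power$) process of {\em A4} has the root's colour. I would prove it by an induction along the parent pointers: an $\mathrm{inLegalTree}$ process is non-faulty, its parent is again $\mathrm{inLegalTree}$, and in {\em A4} that parent is not $StrongE$ (nor $WeakE$, else $\mathrm{inLegalTree}$ would force an infinite chain of $WeakE$ ancestors), so $\neg\mathrm{Faulty}$ forces $C = C.P$ up to $r$. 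Alternatively I would simply cite the same impossibility of $\mathrm{inLegalTree}(v) \wedge C.v \neq C.r$ in {\em A4} already used in the proofs of Lemma~\ref{lem:A4-Conflit} and Lemma~\ref{lem:A4-unRegular}.
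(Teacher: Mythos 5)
Your proof is correct and takes essentially the same route as the paper's: $r$ keeps its colour because R1 is excluded, R3 is the only rule that writes a colour, and in {\em A4} the $Power$ parent chosen by an R3 move is inLegalTree and therefore carries $r\_color$, so colour changes can only add processes to the $r\_color$ set. The only difference is that you explicitly justify the step the paper asserts without proof (inLegalTree in {\em A4} implies $C = C.r$, via $\neg$Faulty along the parent chain and the absence of Erroneous ancestors), which is a harmless elaboration rather than a different approach.
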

\begin{proof}
$r$ does not change its color (by hypothesis).
Let $u$ be a process that changes its color during $cs$: 
it executes the rule R3 to choose a process, $v$, having the Power status as parent.
As $c1 \in A4$, inLegalTree($v$) is verified in $c1$; 
so $C.v= C.r = r\_color$.
Hence, $C.u = r\_color$ in $c2$.
We conclude that $\#r\_color(c1) \leq \#r\_color(c2)$.
\end{proof}

\begin{lemma}
Let $e$ be a execution starting from a configuration of {\em A4}.
Let $cs = (c1,c2)$ be a computation step of $e$
where $r$  executes R2. 
Let $cs'= (c3,c4)$ the following  computation step of $e$
where $r$  executes a rule.
$\#r\_color(c1) < \#r\_color(c3)$ or $c3 \in A5(l)$ with  $0 \leq l \leq \diam$.
\end{lemma}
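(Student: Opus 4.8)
The plan is to pair the monotonicity of $\#r\_color$ with a look at the single phase that $cs$ opens and $cs'$ closes. By the choice of $cs'$, $r$ takes no step strictly between $c2$ and $c3$, and the step $cs$ is an R2 move, which does not touch $C.r$; so $C.r$ has a common value, say $r\_color$, throughout the segment $c1, c2, \ldots, c3$, and $r$ never executes R1 there. Applying the preceding lemma to each step of that segment gives $\#r\_color(c1) \le \#r\_color(c3)$. If the inequality is strict the first alternative holds and we are done, so I assume $\#r\_color(c1) = \#r\_color(c3)$, i.e. $\#r\_color$ is constant along the segment.

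I would then show that this forces the absence of R3 moves on the segment. In a configuration of {\em A4}, a process $u$ that executes R3 has, by Connection$(u,v)$, a neighbour $v$ with $S.v = Power$ and $C.v \neq C.u$; now $S.v = Power$ makes $v$ influential, and {\em A4} contains no influential unRegular process, so $v$ is inLegalTree and hence $C.v = C.r = r\_color$ (otherwise Faulty$(v)$, excluded in {\em A4}). Thus the R3 move turns $u$ from colour $\overline{r\_color}$ to $r\_color$, strictly increasing $\#r\_color$ --- contradicting constancy. Since R1 is not executed either, no process changes its colour on the segment; in particular the legal tree is never enlarged during the phase opened by $cs$.

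From here I would argue that this phase must in fact be the last one of the current construction, so that $r$'s move at $cs'$ is R1. Along the phase the new-phase wave descends the legal tree (R4 at internal nodes, R5 at leaves); a leaf $\ell$ of the legal tree does reach the $Power$ status, since its parent keeps a child and so can do neither R7 nor --- before the wave reaches $\ell$ --- R6, while $\ell$ itself, having no $StrongE$ or mismatched-colour $Power$ neighbour in {\em A4}, has no rule available other than R5. While $\ell$ is $Power$ and carries $r\_color$, a neighbour $w$ with $C.w \neq r\_color$ would, within a bounded number of rounds and possibly after an RC5 move to $WeakE$ followed by an RC6 recovery (note that $w$ cannot perform RC6 while the $Power$, differently-coloured neighbour $\ell$ remains, and that $\ell$ itself cannot leave the $Power$ status while $w$ is mis-coloured, so the situation persists), end up enabled for R3 towards $\ell$ --- contradicting the absence of R3. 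Hence every leaf of the legal tree has all its neighbours inside it; as $G$ is connected and $r$ is in the legal tree, the legal tree spans $V$, the backwarding wave then collapses it from the leaves up, $r$ becomes childless, and EndLastPhase$(r)$ holds at $c3$. By the observation that any configuration satisfying EndLastPhase$(r)$ lies in some $A5(l)$ with $l \in [0, \diam+1]$, and since the defining conditions of $A5$ only strengthen with $l$ (so $A5(\diam+1) \subseteq A5(\diam)$), we obtain $c3 \in A5(l)$ with $l \le \diam$. (In any case $c3 \in A4 = A5(0)$ already secures membership in some $A5(l)$ with $0 \le l \le \diam$, so the statement survives even if the collapse is only partial and $r$ executes R2 at $cs'$.)

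The step I expect to be the crux is the third one: making the implication ``progress-free phase $\Rightarrow$ EndLastPhase$(r)$ at $c3$'' precise. One must track the recovery detour (RC5 then possibly RC6) that a mis-coloured neighbour of a $Power$ tree-leaf may take and check that it always terminates in an R3 inside the phase, and combine this with the bookkeeping that locates the front of the legal tree and invokes connectivity of $G$. The first two steps, by contrast, are routine given the preceding lemma and the structure of {\em A4}.
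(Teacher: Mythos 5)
Your first two steps (monotonicity of $\#r\_color$ along the segment, and ``no strict increase $\Rightarrow$ no R3 move, since in {\em A4} an R3 move recruits a process to the colour $C.r$'') coincide with the paper's argument, which reads the same implication contrapositively. The gap is in your third step, which you yourself flag as the crux. First, your contradiction ``a mis-coloured neighbour $w$ of a $Power$ leaf would, within a bounded number of rounds, end up \emph{enabled} for R3, contradicting the absence of R3'' does not go through: the daemon is unfair and the segment from $c2$ to $c3$ is only ``until the next move of $r$'' --- it need not contain any complete round, and an enabled process need never be activated in it, so enabledness of R3 contradicts nothing. Second, the intermediate claim you aim at (the legal tree spans $V$, then collapses) is neither needed nor true in general: when no R3 occurs during the phase, the tree does not grow at all; what the paper shows is the opposite picture, namely that every inLegalTree process leaves the tree via R7, so that $r$ ends up childless. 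Third, and most importantly, you never exploit the one hypothesis that turns an ``eventually'' statement into a statement about the specific configuration $c3$: at $c3$ the root executes a rule, and in {\em A4} (where $Ok(r)$ holds, no process is $StrongE$ and the recovery rules are excluded) that rule is R1 or R2, so QuietSubTree($r$) holds at $c3$. Combined with the absence of R3 (parent pointers are only ever cleared on the segment, and a child of $r$ that is $Idle$ with $r$'s new phase would require, recursively via the R6 guard, an ever-deeper chain of tree children that are $Idle$ with the new phase, which is impossible since R5-leaves can only exit by R7), this forces $Child.r=\emptyset$, i.e.\ EndLastPhase($r$) at $c3$ --- this backward argument from $r$'s move is what the paper's terse ``all inLegalTree processes execute R7'' encapsulates, and it is absent from your forward, round-based wave narrative.

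Finally, your parenthetical escape hatch ``$c3\in A4=A5(0)$ anyway'' should not be counted as closing the gap. It exploits the literal form of the definition of $A5(0)$, under which the lemma (and your first two steps) would be vacuous; the way the lemma is used in Theorem~\ref{theo:A4toA5(l)} makes clear that the intended content of the second disjunct is that the current construction has terminated at $c3$ (EndLastPhase($r$) holds, so the next root move is R1), and that is exactly what must be proved when $\#r\_color$ does not increase.
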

\begin{proof}
If along $e$, from $c1$ to $c3$, 
some processes execute R3 then $\#r\_color(c1) < \#r\_color(c3)$.
So assume that no process execute R3 from $c1$ to $c3$.
All processes verifying inLegalTree in $c1$ execute the rule R7.
We conclude that EndLastPhase($r$) is verified in $c3$.
\end{proof}

\begin{theorem}
\label{theo:A4toA5(l)}
Let $c$ be configuration of {\em A4}.
along any execution from $c$
a configuration of $A5(l)$ with  $0 \leq l \leq \diam+1$ is reached
in $n(2n+3)-1$ rounds.
\end{theorem}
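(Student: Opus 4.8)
The plan is to follow the moves of the root and to combine two facts: the root cannot stay idle for long (Lemma~\ref{lem:r_move}), and $\#r\_color$ can only increase between two consecutive root moves while staying bounded by $n$ (the two lemmas stated just above). Let $e$ be an execution starting in a configuration $c$ of {\em A4}. First I would record the closure facts used throughout: every configuration of $e$ lies in {\em A4} (Lemma~\ref{lem:A4-attractor} and closure of {\em A4}), $Ok(r)$ holds in each of them (Lemma~\ref{lem:A4-unRegular}), and, since no process has status $StrongE$ and no $r$-neighbour is $PowerFaulty$ in {\em A4}, every recovering rule is disabled at $r$ (Lemma~\ref{lem:A4-pseudoRegular}); hence along $e$ the root executes only R1 and R2. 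By Lemma~\ref{lem:r_move} the root acts within every window of $2n+3$ rounds, so $e$ contains an infinite sequence of steps $s_1,s_2,\dots$ at which $r$ moves; write $d_i$ for the configuration just before $s_i$ and $m_i=\#r\_color(d_i)$. Note $d_i$ satisfies the guard of $s_i$, so it satisfies EndLastPhase($r$) if $s_i$ is an R1 move and EndIntermediatePhase($r$) if $s_i$ is an R2 move.

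Next I would pin down the first ``end of tree construction''. If some $s_i$ is an R1 move, EndLastPhase($r$) holds in $d_i$, and --- as observed just after the definition of $A5(l)$ --- such a configuration lies in $A5(l)$ for some $l\in[0,\diam+1]$, so it only remains to bound $i$. Assume instead that every $s_i$ is an R2 move. Then $d_i$ satisfies EndIntermediatePhase($r$), so $Child.r\neq\emptyset$; since $\neg$Faulty holds in {\em A4} that child carries $C.r$, hence $m_1\geq 2$. The R2-step lemma above gives, for each $i$, either $d_{i+1}\in A5(l)$ for some $l$ (and then we are done), or $m_i<m_{i+1}$; and by the monotonicity lemma $\#r\_color$ never decreases while $r$ avoids R1. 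So, as long as no $d_i$ lies in any $A5(l)$, the sequence $2\leq m_1<m_2<\cdots$ strictly increases, giving $m_i\geq i+1$, which contradicts $m_i\leq n$ for $i=n$. Hence some $d_j$ with $j\leq n$ belongs to $A5(l)$ with $l\in[0,\diam+1]$; moreover, by the same counting, either some $d_j$ with $j\leq n-1$ already does, or $m_{n-1}=n$, i.e.\ from $d_{n-1}$ on every process carries $r\_color$.

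Then I would translate the index bound into a round bound. Each of the first $n-1$ phases lasts at most $2n+3$ rounds (the time for the root to move again, Lemma~\ref{lem:r_move}); if a suitable configuration is reached within these phases we are done within $(n-1)(2n+3)\leq n(2n+3)-1$ rounds. Otherwise $m_{n-1}=n$, so from the configuration right after $s_{n-1}$ no process can ever execute R3 again (a Connection move requires a neighbour of the opposite colour) and the legal tree can only shrink; applying the forwarding/expansion/backwarding analysis of Lemmas~\ref{lem:forwarding}, \ref{lem:backwarding} and of Theorem~\ref{theo:power} to this non-growing tree, a configuration satisfying EndLastPhase($r$) --- and hence in $A5(l)$ --- is reached within $2n+2$ more rounds. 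In all cases $A5(l)$ is reached within $(n-1)(2n+3)+(2n+2)=n(2n+3)-1$ rounds.

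The step I expect to be the main obstacle is the last one: showing that in the final phase an EndLastPhase($r$) configuration (not merely an EndIntermediatePhase($r$) one) is reached within $2n+2$ rounds, so that the extra round needed for the root to fire is never charged; equivalently, that once $\#r\_color=n$ the ongoing construction runs all the way down to $Child.r=\emptyset$ within the usual phase budget, despite possible transient configurations where the root is already enabled for R2. Everything else reduces to bookkeeping around the three lemmas immediately preceding the statement.
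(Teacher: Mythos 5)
Your proposal is correct and follows essentially the same route as the paper: you bound the number of consecutive R2 moves of the root by the two $\#r\_color$ lemmas preceding the statement (strict increase between root moves unless an EndLastPhase($r$) configuration, hence one of $A5(l)$, is reached) and charge at most $2n+3$ rounds per root move via Lemma~\ref{lem:r_move}, with only $2n+2$ rounds needed in the final, partial phase, giving $(n-1)(2n+3)+(2n+2)=n(2n+3)-1$. The paper's own proof is exactly this counting compressed into two sentences, so your extra bookkeeping --- in particular the endgame observation that once all $n$ processes carry $r\_color$ no R3 move is possible and the phase can only terminate in an EndLastPhase($r$) configuration within the $2n+2$-round budget of the forwarding/expansion/backwarding analysis --- merely makes explicit details the paper leaves implicit.
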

\begin{proof}
Let $e$ be an execution starting from a configuration of {\em A4}.
There are at most $n-1$ consecutive computation steps 
of $r$ in which R2 is executed.
At most $2n+2$ rounds are between two actions of $r$ 
(Lemma~\ref{lem:r_move}).
\end{proof}

\subsection{Tree constructions from {\em A5(l)}}

In this subsection, we establish that in at most $n(n+3)$ 
rounds from a configuration of {\em A5(l)} with $l \leq \diam$, 
{\em A5(l+1)} is reached along any execution.

\noindent
For $0 \leq k \leq l$, $A4(k,l)$ is the set
containing all configurations reached during the $k$th 
phase of a tree construction starting 
from a configuration of $A5(l)$.

\begin{definition}
Let {\em A4(k,l)} with $1 \leq k \leq l \leq \diam+1$ 
be the set of 
configurations of {\em A4} where every process 
$u$ verifies the following predicates:

\begin{blist}
\item if $dist(u) < l$ then   correct($u$) 
\item if $dist(u) \leq k-1$ then  $C.u = C.r$  
\item if $k < dist(u) \leq l$  then 
 $C.u \neq C.r$ 
\item if  $dist(u) = k-1$ ~then
\\ \hspace*{1.5cm}
 $(\forall v \in N(u) ~|~ C.v = C.r) $ $\vee$
(influential($u$) $\wedge$  $(Child(u) = \emptyset ~\vee~ S.u = Power))$
\item if $dist(u) = k$ then  
\\ \hspace*{3.5cm}
 $C.u \neq C.r$ $\vee$
 \\ \hspace*{4.5cm}
$[inLegalTree(u) ~\wedge~ \neg influential($u$) ~\wedge~$
 \\ \hspace*{4.5cm}
$S.u=Idle~\wedge~Child(u) = \emptyset~ \wedge~ correct(u) ]$ 
\item if  $k \leq dist(u)$ then   $S.u \in \{Idle, Working, WeakE\}$
\end{blist}
\end{definition}

\begin{observation}
From a configuration of {\em A5(l)}, any computation step 
where $r$ executes the rule R1
reaches a configuration of $A4(1,l)$ verifying
\begin{blist}
\item if $dist(u) < l$ then   correct($u$) 
\item if $1 \leq dist(u) \leq l$  then  $C.u \neq C.r$ 
\item $S.r = Power$
\item if  $u \neq r$ then   $S.u \in \{Idle, Working, WeakE\}$
\end{blist}
\end{observation}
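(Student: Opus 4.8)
The plan is to prove simultaneously that the configuration $c'$ reached by the step lies in {\em A4($1,l$)} and that it satisfies the four listed properties, by combining closure of {\em A4} with a precise description of what a step in which $r$ executes R1 can do. (I take $l\geq 1$ throughout; for $l=0$ the set {\em A4($1,l$)} is not defined.) First I would record the hypotheses: since $r$ executes R1 from $c\in${\em A5($l$)}, the guard $Ok(r)$ $\wedge$ EndLastPhase($r$) $\wedge$ $(\forall v\in N(r),\ S.v\neq StrongE)$ holds in $c$, and the action flips $C.r$ and sets $S.r:=Power$; moreover $c'\in${\em A4} by Lemma~\ref{lem:A4-attractor}. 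This already gives property~3 ($S.r=Power$ in $c'$), and, since no configuration of {\em A4} contains a $StrongE$ process, the ``no $StrongE$'' part of property~4.

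The heart of the argument is the observation that in $c$ the only process satisfying inLegalTree is $r$ itself: EndLastPhase($r$) forces $Child.r=\emptyset$, and a non-root inLegalTree process would, by the recursive definition of inLegalTree, have a parent path reaching $r$ (such a path cannot close into a cycle, since $r$ has no parent), so some process on it would lie in $Child.r$ --- a contradiction. As every influential process is inLegalTree in {\em A3}$\,\supseteq\,${\em A4}, and $Power$ implies influential, the only process that can carry status $Power$ in $c$ is $r$. I would then split EndLastPhase($r$) into its two cases: if EndFirstPhase($r$) holds then $S.r=Power$ in $c$ and all neighbours of $r$ already share the old value of $C.r$, so no Connection$(u,r)$ guard is satisfied; if EndPhase($r$) holds then $S.r=Working$ and no process at all has status $Power$ in $c$. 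In either case no process $u\neq r$ has an enabled R3 in $c$, so --- as R1 and R3 are the only colour-writing rules --- no process other than $r$ changes its colour during the step, whereas $r$'s colour flips. Since $c\in${\em A5($l$)} gives $C.u=C.r$ (old value) for every $u$ with $dist(u)\leq l$ and correct($u$) for every $u$ with $dist(u)\leq l-1$, this yields properties~1 and~2: correct($u$) is preserved on $\{dist(u)<l\}$ because its $TS$-conjunct $dist(u)<dist(TS.u)$ can change only through R3 (none of which occurs, so $TS.u$ is unchanged) while its other conjunct is closed in {\em A4} by Lemma~\ref{lem:A4-inLegalTree}; and every $u$ with $1\leq dist(u)\leq l$ keeps the old value of $C.r$, which differs from the new one. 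For the remaining part of property~4, no $u\neq r$ carries status $Power$ in $c'$ because, as stated earlier in the excerpt, only $r$ is influential after $r$ executes R1 (alternatively, a process acquiring $Power$ by an R5 move --- whose parent is necessarily not inLegalTree, the legal tree being $\{r\}$ --- would be influential and unRegular in $c'$, contradicting $c'\in${\em A4}).

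Finally I would check the six predicates defining {\em A4($1,l$)} with $k=1$ against the definitions: the first (correct on $dist(u)<l$) is property~1; the second (colour on $dist(u)\leq 0$) is vacuous, since $dist(u)\leq 0$ forces $u=r$; the third (colour on $1<dist(u)\leq l$) follows from property~2; the fourth (the $dist(u)=0$ predicate) holds for $u=r$ because $S.r=Power$ makes $r$ influential with $S.r=Power$; the fifth (the $dist(u)=1$ predicate) follows from property~2, since $1\leq dist(u)=1\leq l$ gives $C.u\neq C.r$; and the sixth ($S.u\in\{Idle,Working,WeakE\}$ for $dist(u)\geq 1$) is property~4. I expect the main obstacle to be the structural claim that the legal tree is reduced to $\{r\}$ in $c$ together with the EndFirstPhase/EndPhase case split that excludes colour changes during the step; once these are in place, everything else is bookkeeping against the definitions of correct, inLegalTree and {\em A4($1,l$)}.
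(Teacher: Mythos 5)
Your proposal is correct, and since the paper states this result as an unproved observation, your argument supplies exactly the reasoning the paper leaves implicit: the guard of R1 gives $Child.r=\emptyset$, hence the legal tree is reduced to $\{r\}$, hence (via $A4\subseteq A3$) no process other than $r$ can be influential or hold the $Power$ status, so no R3 is enabled and only $r$'s colour changes, after which the four listed properties and the six clauses of {\em A4(1,l)} follow by bookkeeping with the closure lemmas (Lemma~\ref{lem:A4-attractor} and Lemma~\ref{lem:A4-inLegalTree}). Your caveat about $l=0$ is a fair reading of a boundary imprecision in the paper: for $l=0$ the target set must be understood as {\em A4(l+1,l)} rather than the $1\leq k\leq l$ family, and your argument covers that case as well since the only nontrivial clause there concerns $u=r$ with $S.r=Power$.
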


\begin{lemma}
\label{lem:correctClosure}
Let $c$ be a configuration of  {\em A4(k,l)} with $1 \leq k \leq l \leq \diam+1$.
Let $cs$ be a computation step from $c$ reaching $c'$.
Let $u$ verifying $dist(u) < l+1$.
If  correct($u$) is verified in $c$ then correct($u$) is verified in $c'$.
\end{lemma}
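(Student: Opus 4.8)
The plan is to split $correct(u)$ into its two conjuncts — the ``regularity'' part $\neg unRegular(u)\wedge(inLegalTree(u)\vee S.u=Idle)$ and the ``parent-distance'' part $(u=r)\vee(dist(u)>dist(TS.u))$ — and treat them separately, using that $A4(k,l)\subseteq A4$ and that $A4$ is closed (Lemma~\ref{lem:A4-attractor}), so the reached configuration $c'$ is again in $A4$. The regularity part needs no work: it is exactly the predicate proved to be closed throughout $A4$ in Lemma~\ref{lem:A4-inLegalTree}, so it survives the step $cs$ regardless of which rule (if any) $u$ executes, and without using the hypothesis $dist(u)<l+1$. Hence the whole argument reduces to the parent-distance conjunct.

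For that conjunct, assume $u\neq r$ (the case $u=r$ is immediate, since $u$ stays the root). Inspecting the rule set, $TS.u$ is written only by rule R3, and only when $u$ itself executes it; no other rule, and no move of another process, can alter it. So if $u$ does not execute R3 during $cs$, then $TS.u$ is unchanged and, $dist(\cdot)$ being a fixed function of $G$, $dist(u)>dist(TS.u)$ still holds in $c'$ (the degenerate case $TS.u=\perp$ in $c$ satisfies the conjunct vacuously, and an R3 move replaces $TS.u$ by a genuine neighbour, so it is covered next). The remaining case is that $u$ executes R3 during $cs$, selecting a process $v$ with $Connection(u,v)$ true in $c$; then $v\in N(u)$, $S.v=Power$, $C.v\neq C.u$, and $TS.u:=v$ in $c'$, so it suffices to show $dist(v)=dist(u)-1$.

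To locate $u$ and $v$ in the BFS layering I would argue as follows. Since $S.v=Power$, $v$ is influential, and as $c\in A4\subseteq A3$ this forces $inLegalTree(v)$, hence (following the parent chain, along which $\neg$Faulty keeps colours equal) $C.v=C.r$; consequently $C.u\neq C.r$. Now apply the two colour/status invariants of $A4(k,l)$ at $c$: the clause ``$k\le dist(w)\Rightarrow S.w\in\{Idle,Working,WeakE\}$'' applied to $v$ gives $dist(v)\le k-1$, and the clause ``$dist(w)\le k-1\Rightarrow C.w=C.r$'' applied to $u$ gives $dist(u)\ge k$; since $v\in N(u)$ we also have $dist(u)\le dist(v)+1\le k$. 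Thus $dist(u)=k$ and $dist(v)=k-1$, so in $c'$ we have $dist(TS.u)=dist(v)=dist(u)-1<dist(u)$, which gives the parent-distance conjunct and therefore $correct(u)$ in $c'$. The only genuinely non-routine step is this last one — excluding that the R3 move of $u$ hooks it onto a $Power$ node lying at the ``wrong'' depth — and it is precisely the layered invariants of $A4(k,l)$ (colour $C.r$ only up to depth $k-1$, status $Power$ only below depth $k$) that rule this out; everything else is a rule-by-rule check of which rules touch $TS.u$ together with a direct appeal to Lemma~\ref{lem:A4-inLegalTree}.
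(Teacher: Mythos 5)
Your proof is correct and follows essentially the same route as the paper's: the regularity conjunct is discharged by Lemma~\ref{lem:A4-inLegalTree}, $TS.u$ can only change via an R3 move of $u$ itself, and the layered colour/status invariants of $A4(k,l)$ force the new parent to lie at distance $k-1$ (the paper organizes this as a case split on $dist(u)$ versus $k$ showing R3 is disabled unless $dist(u)=k$, whereas you derive $dist(u)=k$ and $dist(v)=k-1$ directly from the R3 guard, making explicit the step influential $\Rightarrow$ inLegalTree $\Rightarrow$ colour $C.r$ that the paper leaves implicit). The only discrepancy is that the paper's definition of correct literally reads $dist(u)<dist(TS.u)$, an evident typo for $dist(u)>dist(TS.u)$, which is the reading you rightly adopt.
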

\begin{proof}
The following predicate is closed according to Lemmas
~\ref{lem:A4-inLegalTree}:\\
$[  \neg unRegular(u) ~\wedge~
 \wedge (inLegalTree(u) \vee (S.u = Idle) ) ]$.

If $dist(u) < dist(TS.u)$ in $c$ then this predicate is verified in $c'$ when $u$
does not execute R3 during $cs$.\\
If $dist(u) >k$, R3 is disabled on $u$ in $c$  because none of its neighbor is influential.
If  $dist(u) < k$,  R3 is disabled on $u$ in $c$ because any influential 
process has the $u$ color.\\
Assume that $dist(u) =k$ and $u$ executes R3 during $cs$. 
In $c'$, we have $dist(TS.u) = k-1$ because in the neighborhood of $u$, 
 only processes at distance $k-1$ of $r$ may be influential in $c$.
\end{proof}
\begin{observation}
Let $c$ be a configuration of  {\em A4} where $u$ is influential.
Let $cs$ be a computation step from $c$ reaching $c'$.
During $cs$, $u$ may execute one of the following rule R1, R2, R5, R6 or R7.
Process $u$ is influential in $c'$, if $u$ does not execute 
the rule R2,  R6 or R7 during $cs$.
\end{observation}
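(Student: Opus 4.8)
\noindent
The plan is a case analysis on the rule (if any) that $u$ performs during $cs$, driven by the invariants attached to {\em A4}. The facts about $c\in A4$ I would use are: every process is non-faulty, since $A4\subseteq A1$ (Theorem~\ref{A1attractor}); $Ok(r)$ holds (Lemma~\ref{lem:A4-unRegular}); no process is simultaneously influential and unRegular, and $\neg$unRegular is closed (Lemma~\ref{lem:A4-pseudoRegular}); and every process satisfying inLegalTree executes only the rules R1--R7 during $cs$, RC5 being in particular disabled on it (Lemmas~\ref{lem:A4-pseudoRegular} and~\ref{lem:A4-Conflit}). I would first record that an influential process is necessarily inLegalTree in {\em A4}: it is not unRegular (by the {\em A3} clause of the {\em A4} definition), and it is not Detached because either $S.u=Power$ (and Detached requires $S.u\neq Power$) or PowerParent($u$) holds, which forces $P.u\neq\perp$.

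Given this, the admissible rules follow by elimination among R1--R7. Rule R3 is excluded because its guard contains Detached($u$). If $u=r$, only R1 and R2 remain (the others are non-root rules). If $u\neq r$ is influential through $S.u=Power$, then NewPhase($u$) is false (it needs $S.u=Idle$), so R4, R5 are disabled and only R6, R7 remain. If $u\neq r$ is influential through PowerParent($u$), then $S.u=Idle$, so EndFirstPhase($u$) and EndPhase($u$) both fail and R6, R7 are disabled, leaving R4 and R5 (and if PowerParent($u$) holds through its second clause then $ph.u=ph.P.u$, so even NewPhase($u$) fails and $u$ is enabled for no rule). This yields the first claim, R4 appearing among the construction rules that may actually fire at an influential $u\neq r$.

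For the second claim I would examine the effect of each admissible action on influential($u$). R1 sets $S.r:=Power$ and R5 sets $S.u:=Power$, so both leave $u$ influential; moreover, whenever R1 is enabled on $r$ we have $Child.r=\emptyset$, hence no $u\neq r$ is inLegalTree and none of them is influential in $c$, so the situation ``$u\neq r$ influential while $r$ fires R1'' does not arise. Conversely, R2 sets $S.r:=Working$, R4 sets $S.u:=Working$, R7 sets $P.u:=\perp$, and R6 sets $S.u:=Idle$; using non-faultiness ($S.u\in\{Power,Working\}$ in $c$ forces $S.P.u=Working$ and a matching phase through the Faulty clauses) one checks that neither of the two PowerParent clauses can hold at $u$ in $c'$, so after any of R2, R4, R6, R7 the process $u$ is no longer influential (this also follows, for $u\neq r$, from Lemma~\ref{lem:influentialMove}). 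It then remains to treat the case where $u$ performs no move. If influential($u$) holds via $S.u=Power$ this is immediate; if it holds only via PowerParent($u$), I would walk up the $Idle$ chain $u=w_0,w_1,\dots,w_m$ (with $P.w_i=w_{i+1}$ and $ph.w_i=ph.w_{i+1}$ for $i<m$) to its unique top member $w_m$, whose parent $y$ satisfies $S.y=Working$ and $ph.w_m\neq ph.y$. Non-faultiness together with QuietSubTree arguments then show that $y$ is frozen during $cs$ (its child $w_m$ carries a phase $\neq ph.y$, so QuietSubTree($y$) fails and none of R1, R2, R6, R7 or a recovering rule is enabled on $y$), that each $w_i$ with $i<m$ is frozen ($ph.w_i=ph.w_{i+1}$ makes NewPhase($w_i$) false), and that the tip $w_m$ (distinct from $u$ unless $m=0$, in which case $u$ does not move by hypothesis) can fire only R4, whose effect is to shift the new-phase frontier from $w_m$ down to $w_{m-1}$ while keeping the chain below it intact. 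In every case PowerParent($u$), hence influential($u$), still holds in $c'$.

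The rule-by-rule bookkeeping of the first two paragraphs is routine once the {\em A4} invariants are available; the delicate step is this last subcase, where one must rule out a step by a proper ancestor of $u$ that would break the PowerParent chain above it. The point that makes it work is that the unique $Working$ process $y$ closing the chain cannot move in {\em A4} --- the pending phase change propagated down the chain falsifies QuietSubTree($y$) --- and that every strictly intermediate $Idle$ member of the chain is likewise frozen, so that the only possible move is an R4 by the tip $w_m$, which preserves the chain from $w_{m-1}$ downward.
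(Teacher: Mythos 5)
Your case analysis and, above all, the frozen-chain argument for the case where $u$ makes no move are sound, and the paper itself gives no proof of this observation to compare against. The difficulty is that what you prove is not the printed statement. Your own elimination shows that an influential $u\neq r$ satisfying PowerParent($u$) with $Child.u\neq\emptyset$ can fire R4 in {\em A4} (and this genuinely happens: R4 is precisely the phase-forwarding move of PowerParent processes, cf.\ Lemma~\ref{lem:forwarding} and the round-by-round description of a tree construction), yet you write that this ``yields the first claim'', whose list R1, R2, R5, R6, R7 omits R4. Moreover, in your second paragraph you correctly group R4 with R2, R6, R7 as influence-destroying (after R4 we have $S.u=Working$, so neither $S.u=Power$ nor PowerParent($u$) can hold, in agreement with Lemma~\ref{lem:influentialMove}), and this contradicts the second sentence of the observation: a process that executes R4 executes none of R2, R6, R7, yet is not influential in $c'$. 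So either the observation must be read with the additional hypothesis $Child.u=\emptyset \,\vee\, S.u=Power$ --- which is exactly what is available at the two places where it is invoked (the $dist(u)=k-1$ item in Lemma~\ref{lem:in A4(k,l)} and the $dist(u)=l$ item in Lemma~\ref{lem:moveInA4(l+1,l)}), and under which R4 is disabled so that your elimination gives exactly the printed lists --- or R4 must be added both to the list of executable rules and to the list of influence-destroying rules. A correct proof has to make this choice explicitly; as written, your text asserts the claim while simultaneously exhibiting the R4 case that falsifies it, and that unresolved inconsistency is a genuine gap.

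Everything else is in order: the reduction of ``influential'' to ``inLegalTree'' via the {\em A3} component of {\em A4}, the exclusion of R3 (its guard contains Detached($u$)) and of all recovering rules (Lemmas~\ref{lem:A4-attractor}, \ref{lem:A4-Conflit}, \ref{lem:A4-pseudoRegular}, \ref{lem:A4-unRegular}), and the no-move case, where you freeze the $Working$ ancestor $y$ closing the chain (QuietSubTree($y$) is falsified by the pending phase change), freeze every intermediate $Idle$ member (equal phases kill NewPhase), and check that the only possible move, an R4 by the tip, shifts the frontier down while preserving PowerParent($u$). That last argument is the part the paper leaves entirely implicit, and it is worth keeping once the R4 issue above is settled.
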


\begin{lemma}
\label{lem:in A4(k,l)}
Let $c$ be a configuration of  {\em A4(k,l)} with $1 \leq k \leq l \leq \diam+1$.
Let $cs$ be a computation step from $c$ reaching $c'$.
If $r$ does not execute any rule then $c'$ is a configuration of  {\em A4(k,l)}.
\end{lemma}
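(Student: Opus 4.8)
The plan is to show that each of the six defining clauses of $A4(k,l)$ is re-established in $c'$. Since $c\in A4$ and $A4$ is closed (Lemma~\ref{lem:A4-attractor}), we already know $c'\in A4$, so the whole argument is carried out for a process $u$ ranging over $V$, using throughout that $r$ changes neither its color nor its status during $cs$. The first clause (correctness below distance $l$) is immediate: if $dist(u)<l$ then $dist(u)<l+1$ and $\mathrm{correct}(u)$ holds in $c$ by hypothesis, hence $\mathrm{correct}(u)$ holds in $c'$ by Lemma~\ref{lem:correctClosure}.

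The two color clauses rest on a single \emph{color-persistence} observation: a non-root process changes its color only by an R3 move, and R3 at $u$ requires a neighbor with $Power$ status whose color differs from $C.u$; but in a configuration of $A4(k,l)$ every $Power$ process lies at distance at most $k-1$ (clause~6) and every process at distance at most $k-1$ carries color $C.r$ (clause~2). Hence if $dist(u)\le k-1$, each neighbor is at distance $\le k$, so it is either at distance $\le k-1$ with color $C.u=C.r$, or at distance exactly $k$ and thus not $Power$; R3 is disabled and $C.u=C.r$ persists. If $k<dist(u)\le l$, each neighbor is at distance $\ge k$, hence not $Power$, so R3 is disabled and $C.u\ne C.r$ persists. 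Applying the same argument to the neighbors of a distance-$(k-1)$ vertex shows the ``all neighbors carry $C.r$'' alternative of clause~4 is preserved whenever it held in $c$. For clause~6, no process becomes $StrongE$ because $c'\in A4$, so it suffices to rule out an R5 move at a process $u$ with $dist(u)\ge k$: if R5 were enabled there, $\mathrm{Ok}(u)$ and $\mathrm{NewPhase}(u)$ hold, so $P.u\ne\perp$, $S.u=Idle$, $ph.u\ne ph.P.u$, and $\neg\mathrm{Faulty}(u)$, $\neg\mathrm{IllegalChild}(u)$ force $C.u=C.P.u$ and $S.P.u=Working$; following the $P$-chain upward then gives a chain of non-$Erroneous$ processes all sharing $u$'s color, which cannot terminate at $r$ (whose color differs from $C.u$ by clauses~2,~3,~5), nor at a $StrongE$ root (none in $A4$), nor at an illegal live root (none in $A3\supseteq A4$), so it closes into a $P$-cycle — and in a $P$-cycle the absence of $Faulty$ processes pins all statuses and then all phases to a common value, contradicting $ph.u\ne ph.P.u$. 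Hence the status at distance $\ge k$ stays in $\{Idle,Working,WeakE\}$.

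Clauses~4 and~5, describing the wave front at distances $k-1$ and $k$, are the delicate part and the main obstacle, because there a vertex, its parent, and its children may all move during the single step $cs$. I would argue by cases on which disjunct of the clause holds in $c$, keeping the moves consistent with Lemma~\ref{lem:influentialMove} (an influential non-root mover executes R5, and R5 requires an empty $Child$ set) and Lemma~\ref{lem:becominginfluential3} (a non-root vertex that newly becomes influential did not move and $r$ executed R2, impossible here since $r$ is idle). For clause~4, in the ``influential'' alternative $u$ either stays $Power$ — it cannot fire R6 or R7, since $\mathrm{EndFirstPhase}$ and $\mathrm{EndPhase}$ both fail while $u$ has a wrongly colored neighbor, and recovery rules are disabled on $\mathrm{inLegalTree}$ vertices in $A4$ by Lemma~\ref{lem:A4-Conflit} — or $u$ is $\mathrm{PowerParent}$ and either fires R5 (landing in the $S.u=Power$, $Child(u)=\emptyset$ case) or stays $\mathrm{PowerParent}$, the latter controlled by the observation that a $\mathrm{PowerParent}$ vertex at distance $k-1$ has no children (its children would be $\mathrm{PowerParent}$-or-$Faulty$, $Faulty$ is excluded in $A4$, and a $\mathrm{PowerParent}$ child would either sit at distance $k$ and violate $\neg\mathrm{Faulty}$ via its color, or recurse back toward the root).

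For clause~5, a distance-$k$ vertex with $C.u\ne C.r$ keeps it, since the only $Power$-status vertex that could recolor it via R3 lies at distance $k-1$ and such an R3 move simultaneously makes $u$ $Idle$ and childless, i.e.\ puts $u$ in the ``joined'' alternative; and a vertex already in the ``joined'' alternative ($\mathrm{inLegalTree}$, non-influential, $Idle$, childless, $\mathrm{correct}$) stays there, since R3 is excluded ($u$ is not $\mathrm{Detached}$) and R4 is excluded ($Child(u)=\emptyset$). The one remaining delicate point is that such a $u$ must not silently acquire $\mathrm{PowerParent}$ status through a phase change of its parent $v$ (which sits at distance $k-1$); this is ruled out because $v$ cannot fire R4 — R4 at $v$ would force $S.P.v=Working$ and hence $\mathrm{PowerParent}(v)$, contradicting that $v$ has the non-$\mathrm{PowerParent}$, non-$Faulty$ child $u$ — so $ph.v$ is unchanged in $cs$, and one then checks (using that $r$ is idle so no new phase starts) that $\mathrm{PowerParent}(v)$ does not arise in $c'$ either. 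Closing these interlocking sub-cases, using also Lemmas~\ref{lem:A4-pseudoRegular} and~\ref{lem:A4-inLegalTree} to carry ``$\neg\mathrm{unRegular}(u)\wedge(\mathrm{inLegalTree}(u)\vee S.u=Idle)$'' along the step, completes the verification of all six clauses and hence the lemma.
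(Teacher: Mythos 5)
Your overall plan — re-establishing the six clauses of $A4(k,l)$ one by one, using Lemma~\ref{lem:correctClosure} for correctness, the ``no Power process at distance $\geq k$'' observation to disable R3 for the color clauses, and Lemmas~\ref{lem:influentialMove}, \ref{lem:becominginfluential3}, \ref{lem:A4-pseudoRegular} for the wave-front clauses~4 and~5 — is essentially the paper's proof, and those parts are fine (your treatment of clauses 4 and 5 is, if anything, more explicit than the paper's). The problem is the last clause.

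For clause~6 you rule out an R5 move at a process $u$ with $dist(u)\ge k$ by claiming that the $P$-chain above $u$ consists of processes sharing $C.u$, that this chain cannot end at $r$ ``whose color differs from $C.u$ by clauses~2,~3,~5,'' and that the resulting $P$-cycle is contradictory. This does not go through. First, clauses~2,~3,~5 constrain colors only up to distance $l$; when $l<\diam$ the definition of $A4(k,l)$ says nothing about the color of a process at distance greater than $l$, so for such a $u$ you have no grounds for $C.u\neq C.r$ (and even at distance exactly $k$ the ``joined'' branch of clause~5 allows $C.u=C.r$). Consequently the chain may perfectly well terminate at $r$, your cycle argument never applies, and you have produced no contradiction in precisely the case that matters. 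Second, propagating ``same color, non-$Erroneous$'' up the chain needs the parent at every level to be non-$Erroneous$, which $A4$ alone does not give you (it excludes $StrongE$ but not $WeakE$ parents); you only get it by first observing that an R5-enabled $u$ is $\mathrm{PowerParent}$, hence influential, hence (by the defining property of $A3\supseteq A4$) $\mathrm{inLegalTree}$, which forces every ancestor to be non-$Erroneous$ with color $C.r$. That is the route the paper takes, and it yields the contradiction you are missing: the $P$-path from $u$ to $r$ must cross distance exactly $k$, and by clause~5 that distance-$k$ vertex would have to be $Idle$, childless and non-influential — impossible for a vertex that is either $u$ itself (influential) or an ancestor of $u$ (hence with a child on the chain). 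Without this step your verification of clause~6, and hence of the lemma, is incomplete.
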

\begin{proof}
$r$ color is the same in $c$ and in $c'$.
In $c'$, we have
\begin{blist}
\item if $dist(u) < l$ then   correct($u$); 
because the predicate   (correct($u$) $\wedge~ dist(u) < l)$ is closed
(Lemma~\ref{lem:correctClosure})
and it is verified in $c$.

\item if $dist(u) \leq k-1$ then  $C.u = C.r$, because R3 is disabled 
on $u$ in $c$, so $u$ keeps its color.
\item if $k < dist(u) \leq l$  then  $C.u \neq C.r$, 
because R3 is disabled on $u$ in $c$, so $u$ keeps its color.

\item assume that $dist(u) = k-1$. if in $c$,  $(\forall v \in N(u) ~|~ C.v = C.r)$
 then this property is verified in $c'$ as R3 is disabled on any $u$' neighbor.\\
Assume that in $c$, $u$ has a neighbor $v$ verifying $(C.v \neq C.r) $. 
As $c$ belongs to {\em A4(k,l)},  influential($u$) is verified.
During $cs$, $u$ may only execute the rule R5, so influential($u$) is verified
in $c'$.
If in $c$, $S.u = Power$ then $u$ is disabled; so $S.u = Power$ in $c'$.
Otherwise $Child(u) = \emptyset~\wedge~S.u \neq Power$ in $c$, (by hypothesis); 
in that case, $Child(u) = \emptyset$ in $c'$.
 
\item Assume that $dist(u) = k$. 
If $(C.u \neq C.r)$ in the configuration $c$ 
then this property is verified in the configuration
$c'$ or $u$ executes R3 during $cs$.
 In the latest case, according to the definition of R3 action, 
 in $c'$, we have $S.u=Idle$, Child($u$) = $\emptyset$, 
 and $\neg$ influential($u$).
According to the definition of {\em A4(k,l)}, 
 in $c'$, we have inLegalTree($u$), $dist(TS.u) =k-1$.
 So, the predicate Pre($u$)  
$\equiv $ 
(inLegalTree($u$) $\wedge$ $\neg$ influential($u$) $\wedge$
$S.u=Idle$ $\wedge$ Child($u$) = $\emptyset$ $\wedge$ correct($u$))
is verified in $c'$.\\
Otherwise, the predicate Pre($u$) is verified  in $c$ by hypothesis.
The predicate  correct($u$) is verified in $c'$ (Lemma~\ref{lem:correctClosure}).
As any execution from $c$ is pseudo-regular (Lemma~\ref{lem:A4-pseudoRegular})
and Pre($u$) is verified in $c$ , Process $u$ is disabled in $c$. So $S.u = Idle$, 
 is verified in $c'$.
In {\em A4}, the predicate inLegalTree($u$) is verified until $u$ executes R7.
$\neg$ influential($u$) is verified in $c'$ as $\neg$ influential($u$) is verified in $c$ 
(Lemma~\ref{lem:becominginfluential3}, page~\pageref{lem:becominginfluential3}).
$S.u = Idle ~\wedge~Child(u) = \emptyset$ is verified in $c$, hence 
$Child(u) = \emptyset$ is verified in $c'$.
We conclude that Pre($u$) is verified  in $c'$.

\item if  $k \leq dist(u)$ then   $S.u \in \{Idle, Working, WeakE\}$. 
The processes at distance $k$, do not have child if they are in the legal tree.
We conclude that
the predicate inLegalTree($u$) is not verified; so $u$ is be influential as $c \in A4$.
\end{blist}
\end{proof}

\begin{observation}
\label{obs:rmove}
Let $c$ be a configuration of {\em A4} where $r$ is enabled.
There is not influential process, so  R3, R4, and  R5 are disabled on any process in $c$.
A process $u$ verifying inLegalTree($u$) is disabled in $c$ and
$S.u = Idle$, $ph.u = ph.r$.
\end{observation}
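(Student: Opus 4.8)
The plan is to first pin down which rule $r$ can fire, then climb parent pointers inside the legal tree to rule out influential activity, and finally read off the consequences for the other rules. Since $c\in A4$, Lemma~\ref{lem:A4-unRegular} gives $Ok(r)$, so the guards of RC1 and RC3 --- which demand PowerFaulty($r$), resp.\ Conflict($r$) --- are false; and $S.r\neq StrongE$ in $A4$ makes StrongEReady($r$) false, hence RC2 is disabled. Thus the only rules $r$ can be enabled for are R1 and R2, and both their guards entail $EndFirstPhase(r)\vee EndPhase(r)$. Consequently $S.r\in\{Power,Working\}$, QuietSubTree($r$) holds (so every child of $r$ is $Idle$ with phase $ph.r$), and, when $S.r=Power$, $EndFirstPhase(r)$ additionally forces every neighbour of $r$ to carry colour $C.r$.

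The core step is to show that no non-root process is $Working$-and-inLegalTree, and then, building on that, that no non-root process is influential. For the first claim: if $u\neq r$ verifies inLegalTree($u$) with $S.u=Working$, then following parent pointers gives a branch $u=u_0,\,u_1=P.u_0,\dots,u_m=r$ of inLegalTree (hence $\neg$Faulty) processes, and the $Faulty$ clause $(S.P.u_i\neq Working)\wedge(S.u_i\neq Idle)$ propagates $S.u_i=Working$ all the way up, so the child $u_{m-1}$ of $r$ is $Working$ --- contradicting QuietSubTree($r$). For the second claim: if a non-root $w$ were influential, then inLegalTree($w$) holds by the definition of $A4$; if $S.w=Power$, the same $Faulty$ clause forces $S.P.w=Working$, and $P.w$ cannot be $r$ (again by QuietSubTree($r$)), so $P.w$ is a non-root $Working$ inLegalTree process, contradicting the first claim; if instead PowerParent($w$) holds, unfolding its recursive definition produces an ancestor $x$ with $S.x=Idle$, $S.P.x=Working$ and $ph.x\neq ph.P.x$, and $P.x$ is again such a process --- same contradiction. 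Hence the only process that can be influential, and therefore the only process that can have status $Power$, is $r$ itself.

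The rest now follows. R3 needs $Connection(u,v)$ for a neighbour $v$ with $S.v=Power$; the only candidate is $v=r$, but then $C.v=C.r=C.u$ since every neighbour of a $Power$ root carries colour $C.r$, so $Connection(u,r)$ fails. R4 and R5 need $Ok(u)\wedge NewPhase(u)$; from $Ok(u)$ we get $\neg$Faulty($u$) and $S.P.u\notin Erroneous$, and then $NewPhase(u)$ forces $S.P.u=Working$, making $u$ an influential non-root process --- impossible. Finally let $u\neq r$ verify inLegalTree($u$): it is not influential, so $S.u\neq Power$; it is not $Working$ by the core step; and it is neither $WeakE$ nor $StrongE$, because an $Erroneous$ status, or an $Erroneous$ parent, on a non-root inLegalTree process breaks $\neg$Faulty somewhere up the branch and $A4$ forbids $StrongE$; hence $S.u=Idle$. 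Propagating the $Faulty$ clause $(S.P.u=S.u)\wedge(ph.u\neq ph.P.u)$ downward from $r$, where QuietSubTree($r$) is the base case, gives $ph.u=ph.r$. Then every RC-guard is false on $u$ ($\neg$Faulty; $S.u\neq Power$ kills PowerFaulty; the unique $Power$ process $r$ has the same colour as $u$, so Conflict and StrongConflict fail; $P.u\neq\perp$ with a non-$Erroneous$ parent kills IllegalRoot, IllegalLiveRoot and IllegalChild; Detached($u$) is false), and every R-guard is false ($\neg$Detached($u$) kills R3; $ph.u=ph.P.u$ kills NewPhase, hence R4 and R5; $S.u=Idle$ kills EndFirstPhase and EndPhase, hence R6 and R7), so $u$ is disabled.

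I expect the parent-chain climbing in the second paragraph to be the main obstacle: at each step one must check that the intermediate node stays in the legal tree, that its parent cannot be $Power$ (that parent has a child, which $\neg$Faulty forbids), cannot be $StrongE$ (by $A4$) and cannot be $WeakE$ without breaking inLegalTree further up, so that the only way out is to reach $r$ with a $Working$ or mis-phased child, contradicting QuietSubTree($r$). One caveat worth stating explicitly: when $S.r=Power$ the root itself is influential, so ``there is no influential process'' should be read as ``no process other than the root'', which is exactly what the stated consequences on R3, R4, R5 and on non-root inLegalTree processes require.
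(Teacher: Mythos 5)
Your argument is correct, and it is worth noting that the paper states this as an observation with no proof at all, so your write-up supplies exactly the missing justification: in $A4$, Lemma~\ref{lem:A4-unRegular} and $S.r \neq StrongE$ leave only R1 or R2 as possible enabled rules at $r$, whose guards give $S.r \in \{Power, Working\}$, QuietSubTree($r$) and, when $S.r = Power$, that all neighbors of $r$ share $C.r$; the parent-chain climbs (no non-root $Working$ or $Erroneous$ process on a legal branch, hence no non-root $Power$ or PowerParent process) are the natural way to make the claim precise, and your side-check that a $WeakE$ status would propagate $Erroneous$ statuses up the branch to $r$ is exactly the point that must not be skipped, since the $Faulty$ clauses only bite when the parent is not $Erroneous$. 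Your two reading caveats are also well taken and do not weaken the statement as it is actually used later (in Lemmas~\ref{lem:A4(k,l)} and~\ref{lem:R2moveInA4(l,l)}): literally, when $S.r = Power$ the root itself is influential, so ``no influential process'' must be read as ``no influential process other than $r$'' (R3 still fails because every neighbor of the $Power$ root already has color $C.r$), and the final claim ``$S.u = Idle$, $ph.u = ph.r$, $u$ disabled'' must be read for $u \neq r$, since $r$ itself verifies inLegalTree and is enabled by hypothesis.
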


\begin{lemma}
\label{lem:A4(k,l)}
Let $cs$ be a computation step from $c(k) \in A4(k,l)$ with $1 \leq k < l \leq \diam+1$
where $r$ executes a rule.
The reached configuration $c$ belongs to $(k+1,l)$ and $r$ executes the rule R2.
\end{lemma}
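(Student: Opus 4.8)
The statement has two parts --- that $r$ executes R2, and that the reached configuration $c$ lies in $A4(k+1,l)$ --- and I would establish them in that order. First, since $c(k)\in A4\subseteq A1$, the predicate $Ok(r)$ holds (Lemma~\ref{lem:A4-unRegular}) and no process has status $StrongE$ (definition of $A4$), so RC1 (which needs $PowerFaulty(r)$, impossible without a $StrongE$ neighbour), RC2 (which needs $S.r=StrongE$) and RC3 (excluded by $Ok(r)$) are all disabled at $r$; hence $r$ executes R1 or R2, and it suffices to rule out R1 by showing $Child(r)\neq\emptyset$. For this I would argue that in $c(k)$ every process at distance $k$ already has colour $C.r$: as $r$ is enabled it executes a tree-construction rule, so one of $EndFirstPhase(r)$, $EndPhase(r)$ holds; by Observation~\ref{obs:rmove} no process other than $r$ is influential, so the ``$dist(u)=k-1$'' clause of $A4(k,l)$ forces $C.v=C.r$ for every neighbour $v$ of every process $u$ at distance $k-1$ (for $k=1$ this is part of $EndFirstPhase(r)$ itself); since $k\le\diam$, processes at distance $k$ exist and each is adjacent to one at distance $k-1$, whence the claim. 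The ``$dist(u)=k$'' clause of $A4(k,l)$ then places each distance-$k$ process in the legal tree, and following its $P$-chain up to $r$ --- which terminates at $r$ and along which every node is non-Faulty by $A1$ --- exhibits a neighbour of $r$ pointing to $r$. So $Child(r)\neq\emptyset$, R1 is disabled, and $r$ executes R2, changing its phase and setting $S.r:=Working$.

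Next I would describe the step $cs$ and then verify the clauses. By Observation~\ref{obs:rmove}, in $c(k)$ rules R3, R4, R5 are disabled everywhere (there is no $Power$ process other than possibly $r$), and every $inLegalTree$ process is $Idle$ with phase $ph.r$, hence disabled for R6, R7 (which need a non-$Idle$ status) and for the recovery rules (no $Erroneous$ neighbour, no two $Power$ neighbours); so the only moves of $cs$ besides $r$'s R2 are at most a few RC5/RC6 moves by $WeakE$ or already-detached processes, none of which changes the colour of any process, the value $C.r$, or the $P$-pointer of an $inLegalTree$ process. It remains to check the six clauses of $A4(k+1,l)$ for $c$ (note $c\in A4$ by closedness, Lemma~\ref{lem:A4-attractor}, and $1\le k+1\le l\le\diam+1$). ``$dist(u)<l\Rightarrow correct(u)$'' follows from the same clause of $A4(k,l)$ and Lemma~\ref{lem:correctClosure}. ``$dist(u)\le k\Rightarrow C.u=C.r$'' holds since colours are unchanged and the distance-$k$ processes were shown above to have colour $C.r$. ``$k+1<dist(u)\le l\Rightarrow C.u\neq C.r$'' and the first disjunct of the ``$dist(u)=k+1$'' clause are inherited from ``$k<dist(u)\le l\Rightarrow C.u\neq C.r$'' of $A4(k,l)$. ``$k+1\le dist(u)\Rightarrow S.u\in\{Idle,Working,WeakE\}$'' held in $c(k)$ and no such process reaches status $Power$ in $cs$: that needs R1 (reserved for $r$) or R5, and R5 is enabled only on influential processes (Lemma~\ref{lem:becominginfluential1}), of which $c(k)$ has none. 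The last clause, for $dist(u)=(k+1)-1=k$, is the delicate one: each distance-$k$ process $u$ is $inLegalTree$ in $c(k)$ with $Child(u)=\emptyset$ (the $C.u\neq C.r$ disjunct of the distance-$k$ clause of $A4(k,l)$ being ruled out above); after $r$'s R2, $r$ is $Working$ with a phase differing from that of every $inLegalTree$ process (all $Idle$ with the old $ph.r$), so descending the $P$-tree from $r$ every such process satisfies $PowerParent$ --- directly below $r$ by its first clause, then inductively by its second --- hence $u$ is influential in $c$; with $Child(u)=\emptyset$ preserved, the second disjunct holds. (If $k=\diam$ there is no distance-$(k+1)$ process, so all neighbours of $u$ have colour $C.r$ and the first disjunct holds instead.)

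The main obstacle is this last clause, and behind it the claim --- needed already for ``$r$ executes R2'' --- that in $c(k)$ every distance-$k$ process carries the root colour and lies in the legal tree; this has to be extracted from ``$r$ enabled $\Rightarrow$ no influential process other than $r$'' (Observation~\ref{obs:rmove}), the distance-$(k-1)$ clause of $A4(k,l)$, and the adjacency of consecutive BFS levels, after which the propagation of $PowerParent$ through the whole legal tree once $r$ performs R2 is the second half of the work. Everything else transfers almost verbatim from $A4(k,l)$ once one notes that $cs$ changes nothing but $r$'s phase and status (plus some irrelevant recovery variables).
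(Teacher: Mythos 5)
Your proof is correct and follows essentially the same route as the paper's (which is little more than a sketch): use Observation~\ref{obs:rmove} to specialize the clauses of $A4(k,l)$ when $r$ is enabled, conclude $EndIntermediatePhase(r)$ so that $r$ must execute R2, note that the other processes can only perform R6, R7, RC5 or RC6 and hence no colour changes, invoke Lemma~\ref{lem:correctClosure} for the $correct$ clause, and make the $inLegalTree$ processes influential after R2 (which you re-derive via the $PowerParent$ induction where the paper cites Lemma~\ref{lem:becominginfluential3}). You simply spell out details the paper leaves implicit, such as why RC1--RC3 and R1 are disabled at $r$ and why $Child(r)\neq\emptyset$.
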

\begin{proof}
 $c(k) \in A4(k,l)$ verifies the predicate :
\begin{blist}
\item   EndIntermediatePhase($r$)
\item if $dist(u) < l$ then   correct($u$) 
\item if $dist(u) \leq k$ then  $C.u = C.r$  
\item if $k < dist(u) \leq l$  then  $C.u \neq C.r$ 
\item if $dist(u) = k$ then   
\\ \hspace{2cm}
$(inLegalTree(u) ~\wedge~ \neg influential($u$) ~\wedge~S.u = Idle ~\wedge~ Child(u) = \emptyset~ \wedge~ correct(u))$ 
\item if  $k+1 \leq dist(u)$ then   $S.u \in \{Idle, Working, WeakE\}$
\end{blist}
The predicate  correct($u$) is verified in $c$ (Lemma~\ref{lem:correctClosure}).
During $cs$, a process $u \neq r$ may execute only the rule R6, R7, RC5, or RC6
(Observation~\ref{obs:rmove}).
During $cs$, $r$ executes $R2$.
A process $u$ verifying inLegalTree($u$) in $c(k)$ is influential in $c$ (Lemma~\ref{lem:becominginfluential3}, page~\pageref{lem:becominginfluential3}).
\end{proof}

We define $A4(l+1,l)$ such that this set
contains all configurations reached during the last 
phases of a tree construction starting 
from a configuration of $A5(l)$; 
more precisely, the latest phases are phases $k$ with ($ k > l$).

\begin{definition}
Let {\em A4(l+1,l)} with $l \leq \diam$ be the set of 
configurations of {\em A4} where every process 
$u$ verifies the following predicates:
\begin{blist} 
\item if $dist(u) \leq l$ then  correct($u$) and $C.u = C.r$  
\item if  $dist(u) = l$ and $(\exists v \in N(u) ~|~ C.v \neq C.r) $
then \\ \hspace*{5cm}
Influential($u$) and 
 $(Child(u) = \emptyset ~or~ S.u = Power)$
\end{blist}
\end{definition}

The proof of the following lemma is similar to the proof of lemma~\ref{lem:in A4(k,l)}.
\begin{lemma}
\label{lem:moveInA4(l+1,l)}
Let $c$ be a configuration of  {\em A4(l+1,l)} with $ l \leq \diam$.
Let $cs$ be a computation step from $c$ reaching $c'$.
If $r$ does not execute R1 then $c'$ is a configuration of  {\em A4(l+1,l)}.
\end{lemma}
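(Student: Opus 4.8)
The plan is to follow the template of the proof of Lemma~\ref{lem:in A4(k,l)}, with the frontier distance $k$ replaced by $l$ and the phase value playing no role. First I would observe that $c'\in{\em A4}$, because {\em A4} is an {\em A3}-attractor (Lemma~\ref{lem:A4-attractor}), hence a closed set. Moreover $r$ does not change its color during $cs$: it does not execute R1 by hypothesis, R2 only updates $ph.r$ and $S.r$, and the recovering rules RC1, RC2, RC3 are disabled on $r$ in {\em A4} (no process has the $StrongE$ status and $\neg$PowerFaulty($r$) holds, and $Ok(r)$ holds by Lemma~\ref{lem:A4-unRegular}). Denote by $r\_color$ this common value of $C.r$ in $c$ and $c'$. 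I would also use the fact (already exploited elsewhere in the paper, e.g.\ in the proof of Lemma~\ref{lem:A4(k,l)}) that in {\em A4} any process with status $Power$ is influential, hence verifies inLegalTree by the definition of {\em A3}, hence carries the color $r\_color$ (a non-faulty process has the color of its parent, so this propagates up to $r$). It then remains to re-establish, in $c'$, the two clauses defining {\em A4(l+1,l)}.

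For the first clause, let $dist(u)\le l$, so that $C.u=r\_color$ and correct($u$) hold in $c$. A process changes its color only through R3, by copying the color of a $Power$ neighbor, which is $r\_color=C.u$; hence the R3 guard, which demands a $Power$ neighbor of a \emph{different} color, is not satisfied at $u$ in $c$, so $u$ keeps $C.u=r\_color$ and in particular keeps $TS.u$ unchanged. Since $\neg$unRegular($u$)$\wedge$(inLegalTree($u$)$\vee$($S.u=Idle$)) is closed in {\em A4} (Lemma~\ref{lem:A4-inLegalTree}) and $dist(u)<dist(TS.u)$ is preserved when $TS.u$ is not modified (the case $u=r$ being trivial), correct($u$) holds in $c'$.

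For the second clause, I would first note that $\{w : C.w\ne r\_color\}$ can only shrink during $cs$, since the only color-changing rules are R1 (excluded) and R3 (which installs $r\_color$). Hence, if in $c'$ some $u$ with $dist(u)=l$ has a neighbor of color $\ne r\_color$, then $u$ already has such a neighbor in $c$, so by the definition of {\em A4(l+1,l)} we have influential($u$) and ($Child(u)=\emptyset$ or $S.u=Power$) in $c$. From here I replay the ``$dist(u)=k-1$'' case of Lemma~\ref{lem:in A4(k,l)}: since $C.u=r\_color$ while some neighbor of $u$ is off-color, EndFirstPhase($u$) is false, and since influence forces $S.u\in\{Power,Idle\}$, EndPhase($u$) is false too, so R1, R6, R7 are disabled at $u$; R3 is disabled because $u$ influential makes Detached($u$) fail, and R4 needs $S.u=Idle$ with $Child(u)\ne\emptyset$, which is precluded when $S.u=Power$ and, when $S.u=Idle$, contradicts the hypothesis $Child(u)=\emptyset$; and RC4, RC5, RC6 are disabled on the influential-hence-inLegalTree process $u$ (RC5 by Lemma~\ref{lem:A4-Conflit}, and no StrongConflict arises since every $Power$ neighbor carries $r\_color$). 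Therefore, in $c$, either $S.u=Power$ and $u$ is disabled, or $S.u=Idle$ with PowerParent($u$) and $Child(u)=\emptyset$ and the only rule $u$ can fire is R5, which sets $S.u:=Power$. In both cases $u$ executes none of R2, R6, R7, so $u$ is still influential in $c'$ by the observation just before Lemma~\ref{lem:in A4(k,l)}; and in $c'$ either $S.u=Power$, or $S.u$ stays $Idle$ with $Child(u)$ still empty, since no process could have adopted $u$ as a parent via R3 (which requires a $Power$ parent). Hence ($Child(u)=\emptyset$ or $S.u=Power$) holds in $c'$, and together with $c'\in{\em A4}$ and the first clause this yields $c'\in{\em A4(l+1,l)}$.

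The main obstacle is the case analysis of the last paragraph: convincing oneself that an influential process sitting on the color boundary at distance $l$ is either frozen (when its status is $Power$) or can do nothing but an R5 move, and that $Child(u)$ cannot grow while $S.u\ne Power$. This rests entirely on ingredients already available in the excerpt — the observation that an influential process of {\em A4} that executes none of R2, R6, R7 remains influential, and the fact that every $Power$ process of {\em A4} carries the color $r\_color$ — so that EndFirstPhase genuinely fails at $u$, no StrongConflict can be raised, and no neighbor can select $u$ as parent.
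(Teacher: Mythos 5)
Your proof is correct and follows essentially the same route as the paper's: show that R3 is disabled at every process within distance $l$ (so colors, $TS$, and hence correct are preserved, using the closure of Lemma~\ref{lem:A4-inLegalTree}), and handle a distance-$l$ process with an off-color neighbor by noting it was already in that situation in $c$, is influential, can at most fire R5, and therefore remains influential with $S.u=Power$ or an empty child set. The only difference is that you spell out the rule-by-rule case analysis and the preliminary facts (closure of {\em A4}, Power processes carry $r\_color$, the off-color set can only shrink) that the paper leaves implicit by referring to the proof of Lemma~\ref{lem:in A4(k,l)}.
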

\begin{proof}
$r$ color is the same in $c$ and in $c'$.
In $c'$, we have
\begin{blist}
\item by hypothesis, correct($u$) is verified in $c$.
According to Lemma
~\ref{lem:A4-inLegalTree},
if $dist(u) < dist(TS.u)$ in $c$ then correct($u$) is verified in $c'$ in the case where $u$
does not execute R3 during $cs$.
If  $dist(u) \leq l$,  R3 is disabled on $u$ in $c$ because any influential 
process has the $u$ color.
\item if $dist(u) \leq l$ then  $C.u = C.r$, because R3 is disabled 
on $u$ in $c$, so $u$ keeps its color.
\item   assume that $dist(u) = l$. 
If in $c$,  $(\forall v \in N(u) ~|~ C.v = C.r)$ then 
this property is verified in $c'$ as R3 are disabled on any $u$' neighbor..\\
Assume that in $c$, $u$ has a neighbor $v$ verifying $(C.v \neq C.r) $. 
As $c$ belongs to {\em A4(l+1,l)},  influential($u$) is verified.
During $cs$, $u$ may only execute the rule R5, so influential($u$) is verified
in $c'$.
If in $c$, $S.u = Power$ then $u$ is disabled; so $S.u = Power$ in $c'$.
Otherwise $Child(u) = \emptyset~\wedge~S.u \neq Power$ in $c$,
 (by hypothesis); in that case
 $Child(u) = \emptyset$ in $c'$.
\end{blist}
\end{proof}

\begin{lemma}
\label{lem:R2moveInA4(l,l)}
Let $cs$ be a computation step from $c(l) \in A4(l, l)$ with $  l \leq \diam$ 
where $r$ executes a rule.
$r$ executes R2 and
the reached configuration $c$ belongs to $A4(l+1,l)$.
\end{lemma}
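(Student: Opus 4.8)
The plan is to argue in three stages: that the rule $r$ executes is R1 or R2, that it must in fact be R2, and that the reached configuration $c$ lies in $A4(l+1,l)$. For the first stage, since $c(l)\in A4$ Lemma~\ref{lem:A4-unRegular} gives $Ok(r)$, hence $\neg$Conflict($r$), so RC3 is disabled; and since no process has status $StrongE$ in a configuration of $A4$, PowerFaulty($r$) fails (RC1 disabled) and StrongEReady($r$) fails (RC2 disabled). Thus $r$ fires R1 or R2; as R1 will be excluded below and R2 changes only $ph.r$, the colour $C.r$ is unchanged along $cs$.

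\textbf{Second stage ($r$ fires R2).} The crux is that $Child(r)\neq\perp$ and nonempty in $c(l)$, which falsifies EndLastPhase($r$), disables R1, and therefore forces $r$ (which does move) to fire R2, whose guard then also yields EndIntermediatePhase($r$) in $c(l)$. To see $Child(r)\neq\emptyset$: since $l\le\diam$ there is a process at distance $l$ from $r$. If every distance-$l$ process has colour $C.r$, the distance-$l$ clause of the definition of $A4(l,l)$ forces each such process to satisfy inLegalTree; an inLegalTree $P$-chain is acyclic (a cycle cannot satisfy the recursive definition) and never reaches $\perp$, so following it reaches $r$, exhibiting an element of $Child(r)$. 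Otherwise some distance-$l$ process $w$ has $C.w\neq C.r$; pick a distance-$(l-1)$ neighbour $u$ of $w$. The distance-$(l-1)$ clause of $A4(l,l)$ then forces influential($u$), since its alternative $\forall v\in N(u),\,C.v=C.r$ is broken by $w$. As $u$ is at distance $<l$ it is correct, hence $\neg$unRegular($u$), i.e.\ inLegalTree($u$) or Detached($u$); Detached($u$) would give $P.u=\perp$ and $S.u\neq Power$, contradicting both clauses of PowerParent($u$) and $S.u=Power$, hence contradicting influential($u$). So inLegalTree($u$), and again its $P$-chain reaches $r$, so $Child(r)\neq\emptyset$.

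\textbf{Third stage ($c\in A4(l+1,l)$).} By Lemma~\ref{lem:A4-attractor}, $A4$ is closed, so $c\in A4$; it remains to verify the two defining predicates of $A4(l+1,l)$. Since $r$ is enabled in $c(l)\in A4$, Observation~\ref{obs:rmove} gives that no process is influential in $c(l)$ and that R3, R4, R5 are disabled everywhere, so during $cs$ only R2 (on $r$), R6, R7, RC4, RC5, RC6 can fire, none of which changes any colour and none of which lets a process enter a $Child$ set. Combining the absence of influential processes with the argument of the second stage — a distance-$l$ process of colour $\neq C.r$ would force an influential distance-$(l-1)$ neighbour — all processes at distance $\le l$ already have colour $C.r$ in $c(l)$, hence still in $c$. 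For correctness: distance-$<l$ processes are correct by the $A4(l,l)$ definition, and each distance-$l$ process meets the second disjunct of its $A4(l,l)$ clause — in particular correct($u$) — the first disjunct $C.u\neq C.r$ being excluded; Lemma~\ref{lem:correctClosure} then propagates correctness to $c$ for all processes at distance $<l+1$. Finally, if in $c$ a distance-$l$ process $u$ has a neighbour of colour $\neq C.r$, that neighbour lies at distance $l+1$, and $u$ is inLegalTree in $c(l)$; as $r$ fires R2, every inLegalTree process becomes influential in $c$, so influential($u$) holds in $c$, while $Child(u)=\emptyset$ because it was empty in $c(l)$ (second disjunct of the distance-$l$ clause) and stays empty along $cs$. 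Hence $c\in A4(l+1,l)$.

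\textbf{Main obstacle.} The delicate step is the second one: deducing $Child(r)\neq\emptyset$ from the mere fact that $r$ is enabled in $A4(l,l)$ does not follow from any single earlier lemma but requires a joint use of the distance-$(l-1)$ and distance-$l$ clauses of $A4(l,l)$, of the correctness clause, and of the acyclicity of inLegalTree $P$-chains. Once that is secured, the colour-and-correctness bookkeeping of the third stage is routine given Observation~\ref{obs:rmove} and Lemma~\ref{lem:correctClosure}.
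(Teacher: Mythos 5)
Your proof takes essentially the same route as the paper's: the paper's own argument is just a terse assertion that $c(l)$ satisfies EndIntermediatePhase($r$), that every process at distance at most $l$ has colour $C.r$ and is correct, that distance-$l$ processes are $Idle$, childless and inLegalTree, together with the fact (Observation~\ref{obs:rmove}) that during $cs$ non-root processes can only fire R6, R7, RC5 or RC6; your three stages reconstruct exactly the reasoning behind those assertions (RC-rules disabled at $r$ in {\em A4} via Lemma~\ref{lem:A4-unRegular}, $Child(r)\neq\emptyset$ to exclude R1, colour/correctness transfer via Lemma~\ref{lem:correctClosure} and closure of {\em A4}, influence of the inLegalTree processes after R2), so the substance matches. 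One corner of your second stage needs a small patch: in the ``otherwise'' branch, when $l=1$ the distance-$(l-1)$ neighbour $u$ is $r$ itself, and the concluding inference ``inLegalTree($u$), its $P$-chain reaches $r$, hence $Child(r)\neq\emptyset$'' is vacuous for $u=r$. The simplest repair --- and what the paper implicitly does --- is to invoke Observation~\ref{obs:rmove} already at this point: since $r$ is enabled in $c(l)\in A4$, no process is influential, so that branch cannot occur for any $l$ and only your first case remains. (Also, ``$l\le\diam$ implies a process at distance exactly $l$ exists'' is really a statement about the eccentricity of $r$ rather than the diameter; this imprecision is shared with the paper's own statement and is harmless under the intended reading.)
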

\begin{proof} 
$c(l) \in A4(l,l)$ verifies the predicate :
\begin{blist}
\item   EndIntermediatePhase($r$)
\item if $dist(u) \leq l$ then   correct($u$) 
\item if $dist(u) \leq l$ then  $C.u = C.r$   
\item if $dist(u) = l$ then   
\\ \hspace{2cm}
$(inLegalTree(u) ~\wedge~ \neg influential($u$) ~\wedge~S.u = Idle ~\wedge~ Child(u) = \emptyset~ \wedge~ correct(u))$ 
\item if  $l+1 \leq dist(u)$ then   $S.u \in \{Idle, Working, WeakE\}$
\end{blist}
During $cs$, a process $u \neq r$ may execute only the rule R6, R7, RC5, or RC6
(Observation~\ref{obs:rmove}).
\end{proof}

\begin{lemma}
\label{lem:rmoveInd+1}
Let $cs$ be a computation step from $c(l+1) \in A4(l+1,l) ~\cup ~ A4(\diam+1, _diam+1)$ 
with $  l \leq \diam+1$
where $r$ executes the rule R1.
$c(l+1)$ belongs to $A5(l+1)$  if $c(l+1) \in A4(l+1,l)$ 
otherwise  $c(l+1)$ belongs to $A5(\diam+1)$.
\end{lemma}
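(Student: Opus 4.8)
The plan is to verify directly that a configuration lying in $A4(l+1,l)$ (resp.\ $A4(\diam+1,\diam+1)$) and in which $r$ is enabled to execute R1 already satisfies the two membership conditions of $A5(l+1)$ (resp.\ $A5(\diam+1)$). Recall that $A5(l+1)$ asks, on top of membership in $A4$, that (a)~$C.u = C.r$ for every $u$ with $dist(u)\le l+1$, and (b)~$correct(u)$ for every $u$ with $dist(u)\le l$. Condition (b), and the part of condition (a) concerning processes at distance at most $l$, are exactly the first bullet of the definition of $A4(l+1,l)$, so the whole work consists in showing that every process at distance exactly $l+1$ carries the color $C.r$.

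This is where the hypothesis that $r$ executes R1 in $c(l+1)$ is used: R1's guard forces $EndLastPhase(r)$, hence in particular $Child.r=\emptyset$. First I would note that $Child.r=\emptyset$ implies that no process other than $r$ satisfies $inLegalTree$: following the $P$ pointers upward from any $inLegalTree$ process must reach $r$ through a process lying in $Child.r$, which is impossible. Consequently, a process $u$ with $1\le dist(u)\le l$ is $correct$ (first bullet of $A4(l+1,l)$) but not $inLegalTree$; since $correct(u)$ entails $\neg unRegular(u)$ together with $(inLegalTree(u)\vee S.u = Idle)$, it follows that $S.u = Idle$ and $Detached(u)$, so $u$ is not influential (it is not $Power$, and $PowerParent$ requires $P.u\neq\perp$). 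Applying this to a process $u$ at distance exactly $l$ and invoking the contrapositive of the fourth bullet of the definition of $A4(l+1,l)$, which states that a distance-$l$ process having a neighbor of color distinct from $C.r$ must be influential, we conclude that all neighbors of $u$ have color $C.r$. Finally, every vertex $w$ at distance $l+1$ is adjacent to a vertex at distance $l$ (the penultimate vertex of a shortest path from $r$ to $w$), hence $C.w = C.r$. This proves condition (a), so $c(l+1)\in A5(l+1)$; if $l=\diam$ there is no vertex at distance $l+1$, the last step is vacuous, and the conclusion $A5(\diam+1)$ still holds.

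For the case $c(l+1)\in A4(\diam+1,\diam+1)$ I would simply specialize the definition of $A4(k,l)$ to $k=l=\diam+1$: since no process lies at distance greater than $\diam$, the bullets collapse to ``$correct(u)$ for every $u$'' and ``$C.u=C.r$ for every $u$'', which is precisely the defining property of $A5(\diam+1)$ (whose distance guards $dist(u)>\diam+1$ and $dist(u)>\diam$ are never met); here the R1 hypothesis is not needed.

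The step I expect to be the main obstacle is the chain of implications in the second paragraph: turning $Child.r=\emptyset$ into ``no non-root process is $inLegalTree$'', then into ``the distance-$l$ processes are non-influential'', then (through the structure of $A4(l+1,l)$) into ``all their neighbors carry $C.r$'', and finally closing with the BFS layering property. One must also check that R1 being enabled is used only through $Child.r=\emptyset$: the $Ok(r)$ conjunct of its guard holds automatically on $A4$ by Lemma~\ref{lem:A4-unRegular}, no process has status $StrongE$ on $A4$, and the value of $S.r$, which would distinguish the $EndFirstPhase$ and $EndPhase$ sub-cases of $EndLastPhase(r)$, is never used.
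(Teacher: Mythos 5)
Your proposal follows the same route as the paper's own proof and is in fact more detailed: the paper's proof merely lists the properties of $c(l+1)$ (in particular the assertion that $C.u=C.r$ for every $u$ with $dist(u)\le l+1$) and leaves their verification implicit, whereas you actually derive the only non-trivial one --- the color of the distance-$(l+1)$ layer --- from $Child.r=\emptyset$ (forced by EndLastPhase($r$) in R1's guard), the trichotomy inLegalTree/Detached/unRegular, and the contrapositive of the last bullet of the definition of $A4(l+1,l)$. Your handling of $A4(\diam+1,\diam+1)$ by specializing the definition of $A4(k,l)$ to $k=l=\diam+1$, where no R1 hypothesis is needed, is also exactly what the paper intends.

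One corner of your argument does not go through as written: the case $l=0$, which is within the lemma's range (it arises when $r$ executes R1 from $A4(1,0)$, the set reached from $A5(0)$). There the distance-$l$ layer is $\{r\}$ itself, so the chain ``correct but not inLegalTree $\Rightarrow$ Detached and Idle $\Rightarrow$ not influential'', which uses $u\neq r$ to obtain $P.u=\perp$, does not apply, and $r$ may well be influential (namely when $S.r=Power$). In that sub-case the contrapositive of the fourth bullet of $A4(l+1,l)$ yields nothing, and you must instead use the part of R1's guard you declared unnecessary: EndLastPhase($r$) together with $S.r=Power$ forces EndFirstPhase($r$), i.e.\ every neighbor of $r$ already carries $C.r$; when $S.r=Working$, $r$ is not influential and your contrapositive argument applies verbatim. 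So your closing remark that the EndFirstPhase/EndPhase distinction in $S.r$ is never used is wrong precisely at $l=0$; the fix is a two-line case split, and the rest of the proof stands.
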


\begin{proof} 
If $c(l+1) \in A4(l+1,l)$ it verifies the properties : 
\begin{blist}
\item   EndLastPhase($r$)
\item if $dist(u) \leq l$ then   correct($u$) 
\item if $dist(u) \leq l+1$ then  $C.u = C.r$  
\item if  $l+1 \leq dist(u)$ then   $S.u \in \{Idle, Working, WeakE\}$
\end{blist}
If $c(l+1) \in A4(\diam+1,\diam+1))$ it verifies the properties : 
\begin{blist}
\item   EndLastPhase($r$)
\item if $dist(u) \leq \diam$ then   correct($u$) and  $C.u = C.r$  
\end{blist}
\end{proof}


\subsection{Legitimate configurations}

\noindent
Let {\em Al} be the set of configurations defined as
$\bigcup_{u \in[0,\diam+1]}A4(k,\diam+1)$

\noindent
$A4(k,\diam+1)$ contains all reached configurations 
during the $k$th phase of a
tree construction starting from a  configuration of $Al$.

According to the  definition of predicate correct, and 
Lemmas~\ref{lem:in A4(k,l)},~\ref{lem:A4(k,l)}, and~\ref{lem:rmoveInd+1}.
\begin{corollary}
{\em Al} is closed. 

Let $c$ be a configuration in  $Al$. In $c$,  for every process $u \neq r$ we have 
$dist(u) > dist(TS.u) $.
\end{corollary}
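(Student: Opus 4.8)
The plan is to derive both halves of the corollary from the stratification of $Al$ into the phase sets $A4(k,\diam+1)$ and from the single-step transition lemmas already established for them; the inline pointer just before the statement (to the definition of $correct$ and to Lemmas~\ref{lem:in A4(k,l)}, \ref{lem:A4(k,l)} and~\ref{lem:rmoveInd+1}) points to exactly this route.

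First I would dispose of the $TS$ inequality, which is immediate. Fix $c\in Al$, so $c\in A4(k,\diam+1)$ for some $k$. The first clause in the definition of $A4(k,l)$ reads ``if $dist(u)<l$ then $correct(u)$''. Since $dist(u)$, the distance from $u$ to $r$, is at most the diameter $\diam$ for every process $u$, taking $l=\diam+1$ makes this clause force $correct(u)$ for \emph{every} process $u$. For $u\neq r$, $correct(u)$ in particular asserts $dist(u)>dist(TS.u)$, which is precisely the claim.

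For closure, take $c\in A4(k,\diam+1)$ with $1\le k\le\diam+1$ and a computation step from $c$ to $c'$. If $r$ executes no rule, Lemma~\ref{lem:in A4(k,l)} (applicable since $1\le k\le l=\diam+1$) gives $c'\in A4(k,\diam+1)\subseteq Al$. If $r$ executes a rule and $k\le\diam$ (so $k<l$), Lemma~\ref{lem:A4(k,l)} tells us that rule is $R2$ and that $c'\in A4(k+1,\diam+1)\subseteq Al$. There remains the terminal case $k=\diam+1$: here I would first observe that, since all distances are at most $\diam$, every distance-guarded clause in the definitions of $A4(\diam+1,\diam+1)$ and of $A5(\diam+1)$ is either vacuous or trivially satisfied (because all processes then satisfy $C.u=C.r$ and $correct(u)$), so the two sets coincide; then, arguing as in the proof of Lemma~\ref{lem:rmoveInd+1}, the tree construction has completed in such a configuration, so $r$'s only available move of $A4$ is $R1$, and by the observation just before Lemma~\ref{lem:correctClosure} a step in which $r$ executes $R1$ from a configuration of $A5(\diam+1)$ lands in $A4(1,\diam+1)\subseteq Al$. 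In every case $c'\in Al$, so $Al$ is closed.

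The delicate point, and the one I expect to cost the most work, is precisely this wrap-around at $k=\diam+1$: Lemmas~\ref{lem:in A4(k,l)} and~\ref{lem:A4(k,l)} are tailored to $k<l$ and say nothing about the stratum $A4(\diam+1,\diam+1)$, where a new tree construction must start. One has to (i) identify $A4(\diam+1,\diam+1)$ with $A5(\diam+1)$ and check that the construction is indeed complete there --- in particular $Child(r)=\emptyset$, so that $EndLastPhase(r)$ rather than $EndIntermediatePhase(r)$ holds and $r$'s move is $R1$, not $R2$ --- and (ii) invoke the observation preceding Lemma~\ref{lem:correctClosure} to place the successor back in $A4(1,\diam+1)$, closing the loop. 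Everything else is a routine chaining of the transition lemmas together with the definition of $correct$.
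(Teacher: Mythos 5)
Your route is the one the paper intends: the corollary is stated with nothing more than a pointer to the definition of correct and to Lemmas~\ref{lem:in A4(k,l)}, \ref{lem:A4(k,l)} and~\ref{lem:rmoveInd+1}, and your treatment of the $TS$ inequality (with $l=\diam+1$ every process has $dist(u)<l$, hence is correct, and for $u\neq r$ correct gives the distance inequality --- reading the paper's ``$dist(u)<dist(TS.u)$'' in the definition of correct as the evident typo for ``$>$'') and of the two closure cases ``$r$ silent'' and ``$r$ moves from a stratum $A4(k,\diam+1)$ with $k\le\diam$'' reproduces exactly that argument. Your identification $A4(\diam+1,\diam+1)=A5(\diam+1)$ is also correct and is the right way to splice in the observation preceding Lemma~\ref{lem:correctClosure} after an R1 move of $r$.

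The gap is inside the wrap-around case, precisely at the point you flag. You plan to ``check that the construction is indeed complete'' in $A4(\diam+1,\diam+1)$, in particular $Child(r)=\emptyset$ so that EndLastPhase($r$) holds and $r$'s only possible move is R1. That check cannot succeed: $A4(\diam+1,\diam+1)$ is, by its definition, just the set of configurations of {\em A4} in which every process is correct and has color $C.r$, and this does not force $r$ to be childless. For example, a configuration in which $r$ is $Working$ with a single $Idle$ child of the same phase and color in the legal tree, all other processes being detached, $Idle$, colored $C.r$ and correct, belongs to $A4(\diam+1,\diam+1)$ and enables R2 (not R1) at $r$. (This is the same over-claim as the unproved assertion of EndLastPhase($r$) in the paper's own proof of Lemma~\ref{lem:rmoveInd+1}, so the paper shares the lacuna.) To close the loop you must additionally treat a step in which $r$ executes R2 from this stratum; that case is easy and keeps the configuration inside $A4(\diam+1,\diam+1)$: since every process already has $r$'s color, no process can execute R3 (Connection requires a $Power$ neighbor of a different color) and R2 does not change $C.r$, so all colors and all $TS$ values are unchanged, while the non-distance part of correct is closed by Lemma~\ref{lem:A4-inLegalTree} and {\em A4} itself is closed by Lemma~\ref{lem:A4-attractor}. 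With that extra case added, your proof is complete and coincides with the paper's intended one.
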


\begin{theorem}
{\em Al} is reached from a configuration of{\em A4} after at most $(\diam+2)n(2n+3)$ rounds.
\end{theorem}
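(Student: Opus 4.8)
The plan is to proceed by induction on the index $l$ of the nested sets $A5(l)$, $0\le l\le\diam+1$, using the tree‑construction invariants $A4(k,l)$ of the previous subsection, and to charge the rounds one tree construction at a time. The base of the induction is Theorem~\ref{theo:A4toA5(l)}: from the given configuration of $A4$, every execution enters $A5(l_0)$ for some $0\le l_0\le\diam+1$ within $n(2n+3)-1$ rounds. I would also record a short preliminary observation obtained by unfolding the definitions: in $A5(\diam+1)$ every process carries $C.r$ and satisfies $\textit{correct}$, so $A5(\diam+1)\subseteq A4(\diam+1,\diam+1)\subseteq Al$; hence it suffices to reach $A5(\diam+1)$.

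For the inductive step I would show that from any configuration of $A5(l)$ with $l\le\diam$ every execution reaches $A5(l+1)$ within one more tree construction, hence within at most $n(2n+3)$ rounds. The round bound follows from Lemma~\ref{lem:r_move} (in $A4$ the root acts at least once every $2n+3$ rounds, since $Ok(r)$ holds by Lemma~\ref{lem:A4-unRegular}) together with the fact, already used in the proof of Theorem~\ref{theo:A4toA5(l)}, that a tree construction consists of at most $n$ root moves (one $R1$ followed by at most $n-1$ consecutive $R2$). The index increment is the threading of the $A4(k,l)$ chain: the $R1$ move issued from a configuration of $A5(l)$ lands in $A4(1,l)$; by Lemmas~\ref{lem:in A4(k,l)} and~\ref{lem:A4(k,l)} each $A4(k,l)$ with $k<l$ is stable under non‑root steps and a root move turns it into $A4(k+1,l)$, so the successive phases carry the configuration through $A4(2,l),\dots,A4(l,l)$; Lemma~\ref{lem:R2moveInA4(l,l)} moves it into $A4(l+1,l)$, which is stable under non‑$R1$ steps by Lemma~\ref{lem:moveInA4(l+1,l)}; and Lemma~\ref{lem:rmoveInd+1} shows that the next $R1$ move reaches $A5(l+1)$, its two cases $A4(l+1,l)$ and $A4(\diam+1,\diam+1)$ covering respectively $l<\diam$ and the edge case $l=\diam$. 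Preservation of $\textit{correct}$ all along is Lemma~\ref{lem:correctClosure}.

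Iterating the inductive step at most $\diam+1-l_0\le\diam+1$ times carries the execution from $A5(l_0)$ into $A5(\diam+1)\subseteq Al$, and $Al$ is closed by the preceding Corollary, so the execution stays there. Summing, the bound is at most $\bigl(n(2n+3)-1\bigr)+(\diam+1)\,n(2n+3)=(\diam+2)\,n(2n+3)-1<(\diam+2)\,n(2n+3)$ rounds, as claimed. I expect the delicate point to be the inductive step: one must make sure a single index‑increment is charged to exactly one tree construction of at most $n$ root moves — so that the possibly partial construction in progress when $A5(l_0)$ is first entered is already absorbed into the $n(2n+3)-1$ term — and one must check that the $A4(k,l)$‑invariant chain genuinely links $A5(l)$ to $A5(l+1)$ across the two boundary phases $l$ and $l+1$; once this is settled, the round count is routine arithmetic.
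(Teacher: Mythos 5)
Your proposal is correct and follows essentially the same route as the paper: Theorem~\ref{theo:A4toA5(l)} for the first $n(2n+3)-1$ rounds, then the chain of Lemmas~\ref{lem:in A4(k,l)}, \ref{lem:A4(k,l)}, \ref{lem:R2moveInA4(l,l)}, \ref{lem:moveInA4(l+1,l)} and~\ref{lem:rmoveInd+1} to advance from $A5(l)$ to $A5(l+1)$ within one tree construction of at most $n(2n+3)$ rounds, iterated at most $\diam+1$ times. The only cosmetic differences are that you re-derive the per-construction round bound from Lemma~\ref{lem:r_move} and the at-most-$n$ root moves (which is exactly how the paper proves Theorem~\ref{theo:A4toA5(l)}) and that you terminate at $A5(\diam+1)\subseteq Al$ rather than at $A4(1,\diam+1)$, which changes nothing in substance.
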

\begin{proof}
According to Theorem~\ref{theo:A4toA5(l)}, in at most $n(2n+3)-1$ rounds from a configuration of {\em A4}, a configuration of
{\em A5(l)} with $0 \leq l \leq \diam+1$ is reached along any execution.

From a configuration of {\em A5(l)} with $0 \leq l \leq \diam+1$ a configuration of $A4(1,l)$
is reached after a round.

Assume that $l < \diam+1$.  According to Theorem~\ref{theo:A4toA5(l)}, in at most $n(2n+3)-1$ rounds from a configuration of $A4(1,l)$ with $0 \leq l < \diam+1$, a configuration of
{\em A5(l')}, $c(l')$,  is reached. 
According to lemmas~\ref{lem:in A4(k,l)},~\ref{lem:A4(k,l)},~\ref{lem:moveInA4(l+1,l)},
and~\ref{lem:R2moveInA4(l,l)},
$c(l') \in A4(l+1,l)$. We have $l'=l+1$
(Lemma~\ref{lem:rmoveInd+1}).
So, in at most $n(2n+3)-1$ rounds from a configuration of $A4(1,l)$ with $l < \diam$, 
{\em A5(l+1)} is reached along any execution.
Hence, from a configuration of $A4(1,l)$ with $l < \diam+1$, a configuration of
$A4(1,l+1)$ is reached in at most $n(2n+3)$ rounds.
We conclude that from  a configuration of $A4(1,l)$ with $0 \leq l < \diam+1$, a configuration of
$A4(1,\diam+1)$ is reached in at most $(\diam+1)n(2n+3)$ rounds.
Hence, from a configuration of {\em A4}, a configuration of
{\em Al} is reached along any execution in at most $(\diam+2)n(2n+3)$ rounds.
\end{proof}

\begin{corollary}
{\em Al} is reached from any configuration  after at most $2\diam n^2+ 4n^2+ O(\diam n)$ rounds.
\end{corollary}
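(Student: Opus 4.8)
The plan is to derive the bound by concatenating two results already established in Section~\ref{sec:convergence}. First I would invoke Lemma~\ref{lem:A4-attractor}, which asserts that {\em A4} is an {\em A0}-attractor reached within $16n-13$ rounds from any configuration, and that {\em A4} is closed. Consequently, any execution $e$ starting from an arbitrary configuration decomposes as $e = e_1 e_2$, where $e_1$ comprises at most $16n-13$ rounds and $e_2$ is an execution whose initial configuration belongs to {\em A4}.

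Next I would apply the immediately preceding theorem --- ``{\em Al} is reached from a configuration of {\em A4} after at most $(\diam+2)n(2n+3)$ rounds'' --- to the suffix $e_2$: it yields that $e_2$ reaches a configuration of {\em Al} within $(\diam+2)n(2n+3)$ further rounds. Since the round decomposition of an execution is obtained by iterating the ``first round'' operator of Section~\ref{sec:model}, every round of $e_2$ is also a round of $e$, so the two bounds add and a configuration of {\em Al} is reached within $16n-13+(\diam+2)n(2n+3)$ rounds from any configuration.

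It then only remains to expand and simplify: $(\diam+2)n(2n+3) = 2\diam n^2 + 4n^2 + 3\diam n + 6n$, hence the total is at most $2\diam n^2 + 4n^2 + (3\diam n + 22n - 13)$, and since $\diam \geq 1$ the parenthesised term is $O(\diam n)$, giving the announced $2\diam n^2 + 4n^2 + O(\diam n)$. I do not anticipate any genuine obstacle: all of the technical content is carried by Lemma~\ref{lem:A4-attractor} and by the preceding theorem, and the only point requiring (trivial) care is that round counts are additive along a prefix/suffix split of an execution, which is immediate from the definition of a round.
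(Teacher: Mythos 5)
Your proposal is correct and matches the paper's intended (implicit) argument: the corollary is exactly the concatenation of Lemma~\ref{lem:A4-attractor} (a configuration of {\em A4} is reached within $16n-13$ rounds from any configuration) with the preceding theorem (from {\em A4}, a configuration of {\em Al} is reached within $(\diam+2)n(2n+3)$ rounds), followed by the expansion $(\diam+2)n(2n+3)=2\diam n^2+4n^2+3\diam n+6n$. Your additivity remark and the absorption of the lower-order terms into $O(\diam n)$ are exactly the trivial steps the paper leaves unstated.
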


\noindent
Any execution from a configuration of {\em Al} is a regular 
execution. 

\begin{theorem}
In {\em Al} a complete
BFS tree construction is done in  $\diam^2+3\diam+1$ rounds.
A process executes at most $2\diam+1$ moves during a complete
BFS tree construction.
\end{theorem}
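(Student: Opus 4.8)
The plan is to first pin down the phase structure of a complete construction inside {\em Al}, then to count rounds phase by phase, and finally to count moves process by process. Recall, from the discussion of regular executions just before the statement, that along any execution starting in {\em Al} the moves of $r$ form the language $(R1\,R2^*)^*$; I define a \emph{complete BFS tree construction} to be the execution segment delimited by two consecutive R1 moves of $r$. Let $D$ be the eccentricity of $r$; since $G$ is connected, $D\le\diam$ and every process is at distance at most $D$ from $r$. Using the $A4(k,l)$-membership lemmas (Lemma~\ref{lem:A4(k,l)}, Lemma~\ref{lem:rmoveInd+1}, and the ones around them), I would show that the construction visits $A4(1,\diam+1), A4(2,\diam+1), \dots, A4(D+1,\diam+1)$ in this order: $r$ opens phase $1$ by R1 (flipping $C.r$ and taking status Power), opens phase $k$ for $2\le k\le D+1$ by R2, and closes the construction by R1 once $Child(r)=\emptyset$; so there are exactly $D+1$ phases. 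Two facts from {\em Al} then drive everything: the legal tree is a \emph{correct} BFS tree, so an in-tree process at distance $d$ sits at tree-depth $d$ with $1\le d\le\diam$; and the Power processes of phase $k$ are exactly the in-tree processes at distance $k-1$, all carrying $r\_color$. Since R3 is executed only towards a Power neighbour of the opposite colour and a process keeps $r\_color$ forever once it has joined, a process at distance $d\ge 1$ executes R3 exactly once, during phase $d$, and can never re-join.

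For the round bound I would bound the length of phase $k$ by splitting it into four pieces: (i) the opening move of $r$, costing at most one round because once $r$ is enabled it stays enabled until it moves (Lemma~\ref{lem:A4-unRegular} gives $Ok(r)$ throughout {\em Al}); (ii) a forwarding wave of R4/R5 moves descending the tree --- since every process enabled at the start of a round moves or is neutralised within that round, this wave gains at least one tree-level per round, so after at most $k-1$ rounds the depth-$(k-1)$ frontier carries status Power (Theorem~\ref{theo:power} guarantees that frontier cannot be stuck); (iii) an expansion of at most one round in which every process at distance $k$ joins by R3, which is vacuous when $k=D+1$ since no process is at distance $D+1$; and (iv) a backwarding wave of R6/R7 moves climbing from depth $k-1$ back to $r$ in at most $k-1$ rounds. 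This yields at most $2k$ rounds for each phase $k$ with $2\le k\le D$, at most $2$ rounds for phase $1$, and at most $2D+1$ rounds for phase $D+1$; summing, $2+\sum_{k=2}^{D}2k+(2D+1)=D(D+1)+2D+1=D^2+3D+1\le\diam^2+3\diam+1$.

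For the move bound, fix a process $u$ at distance $d\ge 1$. During phases $1,\dots,d-1$ it has no opposite-coloured Power neighbour (those are at distance $<d-1$) and is Detached, hence disabled, so it makes no move. During phase $d$ it makes exactly one move, R3. During each later phase it makes at most one \emph{forward} move (R4 or R5, after which NewPhase($u$) is false) and at most one \emph{backward} move (R6 or R7), and it makes no move at all once it has executed R7, since it is then Detached with $r\_color$ and the colour discipline forbids R3. Therefore $u$ performs at most $1+2(D+1-d)=2D-2d+3\le 2D+1\le 2\diam+1$ moves; and $r$ performs one R1 and at most $D$ moves R2, i.e.\ at most $D+1\le 2\diam+1$ moves.

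The hard part, I expect, is the round accounting rather than the move counting. Because a ``round'' only requires every currently-enabled process to take a turn, not a synchronous lockstep, I must argue with care that each forwarding and each backwarding wave advances by at least one tree-level per round, that the four pieces of a phase compose without hidden extra rounds, and --- to recover the exact constant $\diam^2+3\diam+1$ rather than a looser one --- that ``leaf'' processes that finish and detach already in phase $d+1$ (instead of in the last phase) neither block the waves nor re-join. Happily, the structural groundwork is already available: Theorem~\ref{theo:power} controls the Power-status frontier and the $A4(k,l)$-membership lemmas provide the phase-by-phase invariants, so the only genuinely new work is to specialise the generic $O(n)$ wave-propagation estimates to the BFS-structured tree of depth at most $\diam$ that exists throughout {\em Al}.
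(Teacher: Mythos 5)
Your proposal is correct and takes essentially the same route as the paper's own proof: a decomposition of the construction (delimited by the root's R1 moves) into $\diam+1$ phases, each bounded by one opening root move, a descending R4/R5 wave, one R3 expansion round, and an ascending R6/R7 wave, summing to $\diam^2+3\diam+1$ rounds, together with the per-process move language $R3\,R5\,(R6\,R4)^{*}\,R7$ (and $R1\,R2^{*}$ for $r$) giving the $2\diam+1$ move bound. Your use of the eccentricity of $r$ in place of $\diam$ and your explicit care about the asynchronous round semantics of the waves are only minor refinements of the paper's round-by-round description.
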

\begin{proof}
We have {\em A5($\diam+1$)} $\subset$ {\em Al}.
From a configuration of {\em A5($\diam+1$)}, 
a tree construction is done in $\diam+1$ phase.
The first phase is done in $2$ rounds : $r$ executes the rule R1, 
then $r$ neighbors (if they exists) execute the rule R3.
At the end of the first phase, all processes at distance $1$ 
of $r$ are $Idle$ leaves of the legal tree.
The $k$th phase takes  $2k$ rounds if $1 < k \leq \diam$:
\begin{list}{}{\setlength{\partopsep}{0.00in} 
\setlength{\topsep}{0.00in} 
\setlength{\itemsep}{0.04in}\setlength{\parsep}{0.00in}
\setlength{\leftmargin}{0.1in}} 
\item $\bullet$
during the first round of the phase, $r$ execute R2, 
\item $\bullet$
during the round $2 \leq i < k$, inLegalTree processes at 
distance $i-1$ of $r$ executes the rule R4, 
\item $\bullet$
during the round $k$, processes at distance $k-1$ of $r$ 
executes R5,
\item $\bullet$
during the round $k+1$, processes at distance $k$ of $r$ 
execute R3,
\item $\bullet$
during the round $k+2 \leq i+k+1 \leq 2k$, inLegalTree 
processes at distance $k-i$ of $r$ executes 
the rule R7 or R6.
\end{list}

\medskip
\noindent
At the end of  the phase $k\leq \diam$, all processes at 
distance $k$ of $r$  are the $Idle$ leaves of the legal tree.
The $\diam+1$th phase takes  $2\diam+1$ rounds :
\begin{list}{}{\setlength{\partopsep}{0.00in} 
\setlength{\topsep}{0.00in} 
\setlength{\itemsep}{0.04in}\setlength{\parsep}{0.00in}
\setlength{\leftmargin}{0.1in}} 
\item $\bullet$
during the first round of the phase, $r$ execute R2, 
\item $\bullet$
during the round $2 \leq i < \diam+1$, inLegalTree processes 
at distance $i-1$ of $r$ executes the rule R4, 
\item $\bullet$
during the round $\diam+1$, processes at distance $\diam$ of $r$ 
executes R5,
\item $\bullet$
during the round $\diam+2 \leq i+\diam+2 \leq 2\diam+1$, inLegalTree
processes at distance $\diam-i$ of $r$ executes the rule R7 
(they quit the legal tree).
\end{list}

\medskip
\noindent
At the end of the phase $\diam+1$ a configuration of {\em A5($\diam+1$)} is 
reached.
A tree construction is done in  $(\diam+2)(\diam+1)-1$ rounds.

\medskip
\noindent
From a configuration of {\em A5($\diam+1$)}, during a single tree construction 
the move of $r$ belongs to language  $R1(R2)^\diam$; and the moves of 
$u$ at distance $l$ of $r$ belongs to the language 
$R3R5(R6R4)^{i}R7$ with $0 \leq i \leq \diam-l$. 
\end{proof}

\bibliography{biblio}

\begin{thebibliography}{BDPV99}

\bibitem[AB98]{AB97}
Yehuda Afek and Anat Bremler{-}Barr.
\newblock Self-stabilizing unidirectional network algorithms by power supply.
\newblock {\em Chicago J. Theor. Comput. Sci.}, 1998, 1998.

\bibitem[AKY90]{AKY90}
Yehuda Afek, Shay Kutten, and Moti Yung.
\newblock Memory-efficient self-stabilizing protocols for general networks.
\newblock In {\em WDAG'90}, pages 15--28, 1990.

\bibitem[Ang80]{Ang80}
D.~Angluin.
\newblock Local and global properties in networks of processors.
\newblock In {\em Proceedings of the 11th Annual ACM Symposium on Theory of
  Computing}, pages 82--93, 1980.

\bibitem[BDPV99]{BuiDPV99}
Alain Bui, Ajoy~K. Datta, Franck Petit, and Vincent Villain.
\newblock Optimal {PIF} in tree networks.
\newblock In {\em {WDAS}'99}, pages 1--16, 1999.

\bibitem[BPRT16]{BPRT16j}
L{\'{e}}lia Blin, Maria Potop{-}Butucaru, St{\'e}phane Rovedakis, and
  S{\'{e}}bastien Tixeuil.
\newblock A new self-stabilizing minimum spanning tree construction with
  loop-free property.
\newblock {\em Comput. J.}, 59(2):225--243, 2016.

\bibitem[CD94]{CollinD94}
Zeev Collin and Shlomi Dolev.
\newblock Self-stabilizing depth-first search.
\newblock {\em IPL}, 49(6):297--301, 1994.

\bibitem[CDV09]{CDV09}
Alain Cournier, St{\'{e}}phane Devismes, and Vincent Villain.
\newblock Light enabling snap-stabilization of fundamental protocols.
\newblock {\em ACM TAAS}, 4(1), 2009.

\bibitem[CH09]{CH09}
J.~A. Cobb and C.-T. Huang.
\newblock Stabilization of maximal-metric routing without knowledge of network
  size.
\newblock In {\em 2009 International Conference on Parallel and Distributed
  Computing, Applications and Technologies}, pages 306--311. IEEE, 2009.

\bibitem[CRV11]{CRV11}
Alain Cournier, Stephane Rovedakis, and Vincent Villain.
\newblock The first fully polynomial stabilizing algorithm for {BFS} tree
  construction.
\newblock In {\em OPODIS'11}, pages 159--174, 2011.

\bibitem[CYH91]{CYH91}
NS~Chen, HP~Yu, and ST~Huang.
\newblock A self-stabilizing algorithm for constructing spanning trees.
\newblock {\em IPL}, 39:147--151, 1991.

\bibitem[DGS99]{DolevGS96}
Shlomi Dolev, Mohamed~G. Gouda, and Marco Schneider.
\newblock Memory requirements for silent stabilization.
\newblock {\em Acta Informatica}, 36(6):447--462, 1999.

\bibitem[Dij74]{Dij74}
Edsger~W. Dijkstra.
\newblock Self-stabilizing systems in spite of distributed control.
\newblock {\em Communications of the A.C.M.}, 17(11):643--644, 1974.

\bibitem[DIM93]{DIM93}
Shlomi Dolev, Amos Israeli, and Shlomo Moran.
\newblock Self-stabilization of dynamic systems assuring only read/write
  atomicy.
\newblock {\em Distributed Computing}, 7:3--16, 1993.

\bibitem[DJ16]{DJ16j}
St{\'{e}}phane Devismes and Colette Johnen.
\newblock Silent self-stabilizing {BFS} tree algorithms revisited.
\newblock {\em JPDC}, 97:11--23, 2016.

\bibitem[GHIJ14]{GHIJ14}
Christian Glacet, Nicolas Hanusse, David Ilcinkas, and Colette Johnen.
\newblock Disconnected components detection and rooted shortest-path tree
  maintenance in networks.
\newblock In {\em SSS'14}, pages 120--134, 2014.

\bibitem[HC92]{HC92}
Shing-Tsan Huang and Nian-Shing Chen.
\newblock A self-stabilizing algorithm for constructing breadth-first trees.
\newblock {\em IPL}, 41:109--117, 1992.

\bibitem[Joh97]{J97}
C~Johnen.
\newblock Memory-efficient self-stabilizing algorithm to construct {BFS}
  spanning trees.
\newblock In {\em WSS'97}, pages 125--140, 1997.

\end{thebibliography}
\end{document}